\newtheorem*{game}{Game}
\newtheorem{fact}{Fact}
\newtheorem{lemma}{Lemma}
\newtheorem{belief}{Belief}
\newtheorem{theorem}{Theorem}
\newtheorem{corollary}{Corollary}
\newtheorem{assumption}{Assumption}
\newtheorem{observation}{Observation}
\newtheorem{proposition}{Proposition}
\newtheorem{keyquestion}{Question}
\begin{document}

\title{Personalized Pricing Through Strategic User Profiling in Social Networks}

\author{Qinqi~Lin,
        Lingjie~Duan,~\IEEEmembership{Senior Member,~IEEE,} Jianwei~Huang,~\IEEEmembership{Fellow,~IEEE}
\IEEEcompsocitemizethanks{\IEEEcompsocthanksitem This paper is published in IEEE/ACM Transactions on Networking, October 2024 [\href{https://ieeexplore.ieee.org/document/10574176}{DOI: 10.1109/TNET.2024.3410976}]. Part of the results appeared in WiOpt 2021 Conference~\cite{lin2021personalized}.

Qinqi Lin is with the School of Science and Engineering, Shenzhen Institute of Artificial Intelligence and Robotics for Society, The Chinese University of Hong Kong, Shenzhen, Shenzhen 518172, China (e-mail: qinqilin@link.cuhk.edu.cn).

Lingjie Duan is with the Engineering Systems and Design Pillar, Singapore University of Technology and Design, Singapore 487372 (e-mail: lingjie\_duan@sutd.edu.sg).  

Jianwei Huang is with the School of Science and Engineering, Shenzhen Institute of Artificial Intelligence and Robotics for Society, The Chinese University of Hong Kong, Shenzhen, Shenzhen 518172, China (e-mail: jianweihuang@cuhk.edu.cn).  
}
}


\IEEEtitleabstractindextext{%
\vspace{-45pt}
\begin{abstract}
Traditional user profiling techniques rely on browsing history or purchase records to identify users' willingness to pay. This enables sellers to offer personalized prices to profiled users while charging only a uniform price to non-profiled users. However, the emergence of privacy-enhancing technologies has caused users to actively avoid on-site data tracking. Today, major online sellers have turned to public platforms such as online social networks to better track users' profiles from their product-related discussions. This paper presents the first analytical study on how users should best manage their social activities against potential personalized pricing, and how a seller should strategically adjust her pricing scheme to facilitate user profiling in social networks. We formulate a dynamic Bayesian game played between the seller and users under asymmetric information. The key challenge of analyzing this game comes from the double couplings between the seller and the users as well as among the users. Furthermore, the equilibrium analysis needs to ensure consistency between users' revealed information and the seller's belief under random user profiling. We address these challenges by alternately applying backward and forward induction, and successfully characterize the unique perfect Bayesian equilibrium (PBE) in closed form. Our analysis reveals that as the accuracy of profiling technology improves, the seller tends to raise the equilibrium uniform price to motivate users' increased social activities and facilitate user profiling. However, this results in most users being worse off after the informed consent policy is imposed to ensure users' awareness of data access and profiling practices by potential sellers. This finding suggests that recent regulatory evolution towards enhancing users' privacy awareness may have unintended consequences of reducing users' payoffs. Finally, we examine prevalent pricing practices where the seller breaks a pricing promise to personalize final offerings, and show that it only slightly improves the seller's average revenue while introducing higher variance.
\end{abstract}

\begin{IEEEkeywords}\noindent
Online social networks, user profiling technology, privacy, personalized pricing, dynamic Bayesian game.
\end{IEEEkeywords}
}

\maketitle

\IEEEdisplaynontitleabstractindextext
\IEEEpeerreviewmaketitle


\section{Introduction}\label{sec:introduction}

\IEEEPARstart{M}{odern} advances in information technology have enabled user profiling, which is a data analytic tool that outlines users' characteristics like product preferences~and~demands. The availability of user profiling allows a seller~to~identify a user's willingness to pay and exercise personalized pricing accordingly. For example, major online stores like Amazon and Sears track users' web browsing history and online~purchase records to personalize their offerings~\cite{howe_a,datadriven}. Orbitz further differentiates users based on their computer's operating systems and offers higher hotel prices to~Mac~users~\cite{mattioli_2012}.~However, with the emergence of privacy-enhancing technologies (e.g., cookie blockers) that make user information no longer personally identifiable, today's product sellers have encountered increasing obstacles when tracking users' profiles only based on their on-site store data \cite{anant_2020_consumer}.

With the ever-increasing penetration of online social networks and the proliferation of user-generated data, more~\mbox{sellers} have turned to tracking users' profiles from their social network activities. While users register personal accounts for primary features on social media, sellers can easily associate a specific user's online activities (e.g., likes, posts, and comments) with his identity. This allows product sellers to extract user-specific information by analyzing one's social network activities, especially those discussions or sharing~related to the product. For instance, users sharing fitness photos frequently tend to value gym services more than average users.

Recent years have witnessed a rapid development of technologies that enable such user profiling with social network data \cite{2019Social}. For example, Facebook has launched APIs~that enable several online stores' access to users' Facebook identity and profile information since 2008~\cite{announcing}. Amazon and EyeBuyDirect have bought data from Facebook to train their personalization features \cite{bustos_2011_facebook,bustos_2011_7}. Apparently, such tracking and analysis of social network data can facilitate user profiling for product sellers to intensify price discrimination.

As various governments have recently tightened privacy regulations with a marked shift toward protecting users (e.g., \cite{a2013_general,a2018_california}), users are increasingly aware of such social data access and profiling practices from potential sellers. Particularly, the well-known informed consent policy requires online stores to inform people of their data access and the purpose of data usage. Following this, Amazon now needs to notify users of its social profile access in great detail when users sign in and connect to their Facebook accounts~\cite{bustos_2011_7,boulton_2010_amazon}. Given such awareness, many users have reduced social media usage or even remain silent online to thwart the seller's~attempts~to~profile and exercise personalized pricing \cite{cisco_2019_cisco,daxtheduck_2019_new}. We are thus motivated to ask the following two key questions.

\begin{keyquestion}\label{Q1}
How should users best manage their online social activities when facing the possibility of user profiling by the seller?
\end{keyquestion}

\begin{keyquestion}\label{Q2}
How should the seller react to users' reduced social activities and strategically adjust her pricing schemes to facilitate user profiling?
\end{keyquestion}

Addressing these questions presents several key challenges. Firstly, there exists a coupling between the seller and the~users, as well as a coupling among the users themselves. These~two types of couplings also interact with each other, which has not been systematically studied before. Regarding the first type of coupling, users are aware of the profiling risks when interacting with each other online and may adjust their interactions accordingly. Hence, the seller needs to carefully balance the uniform pricing for non-profiled users and the personalized pricing for profiled users to maximize profit. Previous works on~traditional user profiling (e.g., \cite{acquisti2005conditioning,conitzer2012hide,2015Monopoly}) viewed users as isolated individuals and analyzed the effect of each individual's endeavor~to~conceal private information and bypass the seller's profiling independently. However, these works ignored the coupling among users and, more specifically, the positive network externality of users' information revelation. Thus, the insights gained from previous studies may not be applicable in this context, and this externality will significantly complicate the analysis of the implications of user profiling.

Another major challenge is the uncertainty involved in user profiling. Even if a user has exposed his personal information online, whether the seller can successfully profile this user is still probabilistic due to technological or regulatory constraints (e.g., \cite{a2013_general,a2018_california}). On the technical level, updating the seller's beliefs about users' private information based on user profiling results is subject to inherent uncertainty, as the profiling results may not directly relate to users' online social behaviors. Therefore, this randomness in user profiling makes it more difficult to ensure consistency between users' revealed information and the seller's belief for further pricing. Note that satisfying such belief consistency is fundamental in analyzing interactions in a dynamic Bayesian setting \cite{fudenberg1991perfect}.

We summarize the main contributions of this work below.

\begin{itemize}

\item \emph{Novel personalized pricing through user profiling in~social networks:} To the best of our knowledge, this~is~the first analytical study regarding how users should best manage their social activities against potential personalized pricing, and how a seller should strategically~adjust her pricing schemes to facilitate user profiling in social networks. Our work provides valuable insights for current regulatory efforts regarding users’ awareness of potential profiling practices in social networks and calls for further review of the current informed consent policy.

\item \emph{Doubly-coupled game formulation \& analysis:} We model the unique doubly-coupled interactions between the seller and the users under asymmetric information as a dynamic Bayesian game. Our game analysis is challenging due to the double couplings between the seller and the users as well as amongst the users themselves. Moreover, we must ensure consistency between users' revealed information and the seller's belief under random user profiling, which prevents us from relying solely on the traditional method of backward induction. Instead, we alternate backward induction with forward induction and successfully characterize the unique perfect Bayesian equilibrium (PBE) in closed form.

\item \emph{Regulatory implications of user profiling:} As the accuracy of profiling technology improves, the seller will raise the equilibrium uniform price to encourage users' increased social activities and facilitate user profiling. However, this results in most users being worse off after the informed consent policy is implemented to ensure users' awareness of data access and profiling practices by potential sellers. This finding then suggests that recent regulatory evolution towards enhancing users' privacy awareness may have~unintended consequences of reducing users' payoffs.

\item \emph{Tradeoff involving breaking a pricing promise:} We further examine some prevalent pricing practices in our setting, where the seller breaks a pricing promise to personalize final offerings \cite{thecouncilofeconomicadvisers_2015_the,chen2020competitive}. Our analysis demonstrates that while this practice allows for the seller's proactive control of users' incentives in social decisions, it only marginally improves the seller's average revenue while introducing higher variance. Considering the additional reputation and regulation risks \cite{lee_2021_chinese}, it is not recommended for sellers to break pricing promises in practice.

\end{itemize}

The remainder of the paper is organized as follows. Section \ref{sec:relatedwork} reviews the related work, and we present the system model in Section \ref{Sec:Model}. To characterize the PBE, we first alternately~apply backward induction and forward induction in Section~\ref{Sec:alternate}. Then the analysis of PBE and its implications follows in~Section \ref{Sec:PBE}. We further investigate the effect of~social network benefits~in Section \ref{Sec:Network} and explore the seller's flexibility in pricing timing in Section~\ref{Sec:Sequential}. Furthermore, we extend our analysis to incorporate the heterogeneity in users' social network benefits and network positions in Section~\ref{extension:heterogeneous}. Section~\ref{appendix:general} relaxes our prior assumption of a uniform distribution over user valuations, to demonstrate the robustness of our major insights. Section~\ref{sec:conclusion} concludes this paper.

\section{Related Work}\label{sec:relatedwork}

A growing literature on personalized pricing has studied users' purchase behaviors in online markets while considering the implications of revealing their private information~to~product sellers (e.g., \cite{conitzer2012hide,2015Monopoly,2017Is,2020Consumer,chen2020competitive,bimpikis2021data}). For example, Conitzer et al. in \cite{conitzer2012hide} investigated a repeated purchases~scenario, where users could hide their past purchase records to hinder personalized pricing. Koh et al. in \cite{2017Is} further allowed users' voluntary participation in profiling, while considering the benefit of reducing search costs for ideal products through users' information revelation. Valletti et al. in \cite{2020Consumer} analyzed how users reveal or conceal their private information, given the seller's strategic investment in consumer profiling technology. Chen et al. in~\cite{chen2020competitive} explored a duopoly setting, where sellers compete to target users and attempt to price discriminate while users make strategic purchase decisions to bypass.

The existing studies did not consider the seller's strategic pricing strategy to profile users and the corresponding impact on users' behaviors in social networks. As a result, the linkage between user profiling in social networks and personalized pricing has been largely missing in current studies. This paper faces the unique challenge of analyzing the doubly coupled interactions among users in social networks and between the seller and the coupled users under random user profiling. This leads to a rather involved PBE analysis for such dynamic Bayesian interactions.

Several studies in the literature on personalized pricing have adopted PBE as the solution concept (e.g., \cite{2017Is,2020Consumer,2020Voluntary,conitzer2012hide}). The standard analysis for PBE proceeds in three steps as follows. First, a belief about the structure of certain players' strategies is proposed. Based on this proposed belief, the analysis for the remaining players' strategies is then conducted through backward induction. Finally, the equilibrium strategies and the proposed belief are verified to constitute a PBE satisfying belief consistency and sequential rationality. However, the coupling among users' social behaviors in our work significantly complicates the belief proposal in the first step. This departs from previous studies (e.g., \mbox{\cite{conitzer2012hide,2015Monopoly,2017Is}}) that analyze users' online behaviors individually, which allows for an easily identifiable threshold structure for the equilibrium belief. Motivated by the advanced game theory literature \cite{battigalli2006rationalization}, we propose to explicitly conduct a forward analysis of users' coupled social interactions to derive the structure of the equilibrium belief. In a nutshell, we alternate backward induction with forward induction to analyze the whole dynamic Bayesian game and provide a complete PBE characterization.

\section{System Model}\label{Sec:Model}

Consider the coupled interactions among the seller and a group of users $\mathcal{N}\triangleq\{1,\dots,n\}$ under user profiling. Specifically, users share information and interact with each other in the social network, while the shared social activity data is monitored or even accessible by the seller (e.g., through data-sharing deals with the social network platform \cite{x_2018}). Then, the seller manages to extract users' private profiles from their online social activities to enable personalized pricing.\footnote{For convenience, we hereafter use female pronouns to refer to the seller and male (or plural) pronouns to refer to each user (or users).} 
 
\begin{figure}[h]
	\centering
	\includegraphics[width=0.8\linewidth]{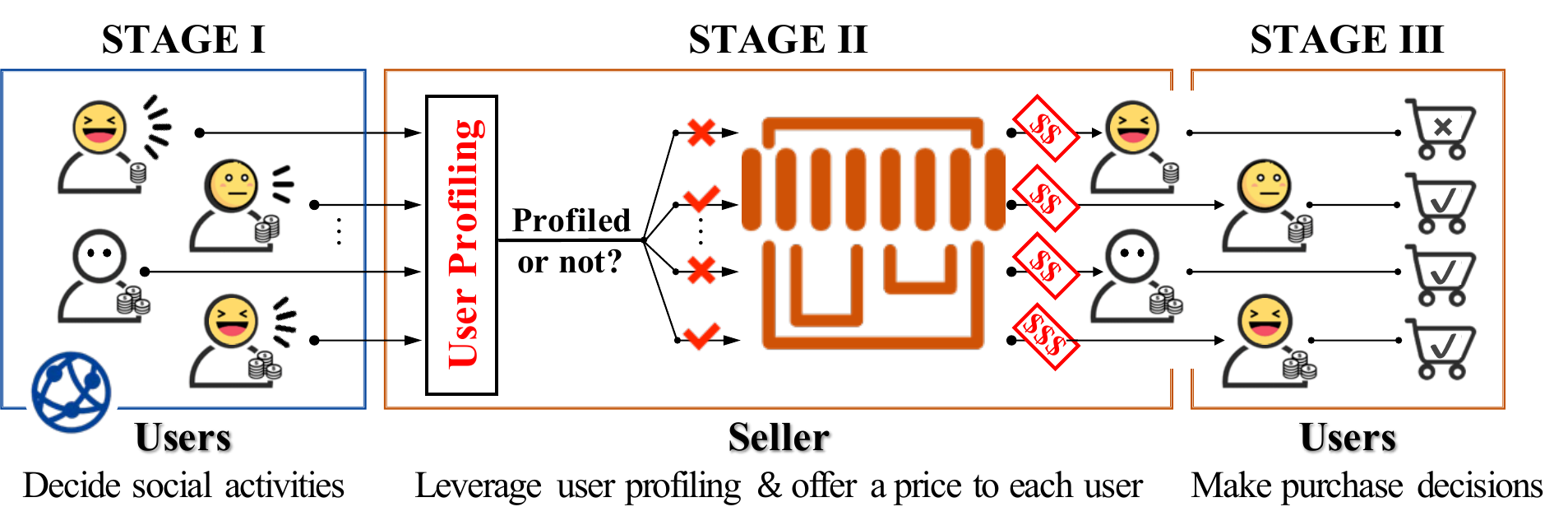}
	\caption{A Three-stage System Model.}
	\label{system_model}
\end{figure}
 
As illustrated in Fig. \ref{system_model}, we formulate such interactions as a three-stage dynamic Bayesian game below:

\begin{itemize}
\item \textbf{\mbox{Stage I:}} Initially, each user $i\in\mathcal{N}$ decides his social activity level $x_i$ in the normalized range of $[0,1]$. Each user would trade off the social network benefit (from interactions with the other users) against the potential risk of revealing his private information to the seller. Here, $x_i$ indicates the time or attention user $i$ devotes to the social network. In particular, the normalized maximum level $x_i=1$ means that the user $i$ spends all his available time using the social network, while the minimum level $x_i=0$ implies that the user is inactive on social media.
    
\item \textbf{\mbox{Stage II:}} 
Next, the seller first applies the profiling technology to users' social activity data, aiming to identify their product valuations (i.e., willingness to pay).\footnote{We consider that the seller cannot directly observe the social activity levels of users. Instead, the seller only observes the user profiling results. This well models the real practices, where the seller can only leverage the profiling technology to analyze users' digital traces in social networks without access to the raw data due to regulatory constraints.} Once receiving an accurate signal about a particular user $i$'s valuation $v_i$, the seller then tailors the price for that user accordingly. That is, the seller offers a personalized price exactly the user's valuation, i.e., $p_i=v_i$. For those users whom the seller fails to profile (due to users' inactive social behaviors or technological constraints), the seller offers one common uniform price $p_0$ to them. Overall, the price offered to each user $i$ depends on the seller's profiling result, i.e.,
\begin{equation}\label{offered_price}
p_i =\left\{
\begin{aligned}
v_i, & \quad\textrm{if profiled by the seller,}\\
p_0, & \quad\textrm{otherwise.}
\end{aligned}
\right. 
\end{equation}
    
\item \textbf{\mbox{Stage III:}} Finally, each user $i\in\mathcal{N}$ makes his binary purchase decision $d_i\in\{0,1\}$. By comparing one's product valuation $v_i$ with the offered price $p_i$, each user decides whether to purchase the product or not.
\end{itemize}

In what follows, we introduce the models under user profiling for users and the seller in Sections \ref{subsec:usermodel} and \ref{subsec:sellermodel}, respectively. Finally, we formally formulate our dynamic Bayesian game in Section \ref{subsection:game}.  

\subsection{Users' Model under User Profiling}\label{subsec:usermodel}

In this subsection, we first model users' social interactions in Stage I and product purchase in Stage III (see Fig.\ref{system_model}). We then formulate the final payoff for each user at the end of this subsection.

\subsubsection{Social Interactions in Stage I}\label{subsubsection:socialinteraction}

Users enjoy interacting with each other in the social network while leaving digital traces or data online. With awareness of the seller's profiling and personalized pricing, each user $i\in\mathcal{N}$ would strategically decide his social activity level $x_i\in[0,1]$ (e.g., through sharing posts and comments) in Stage I. Let $\boldsymbol{x}_{-i}$ summarize the social activity levels of all other users except user $i$.

Each user $i$ gains more satisfaction in the social network if the social activity levels (of his and other users') are higher. For instance, users feel connected, gain empathy and identity, and strengthen communication through more social interactions online \cite{seiter2015secret}. Formally, we model the social network benefit $J_i$ for each user $i$  as below,
\begin{equation}\label{social_network_benefit}
	J_i(x_i,\boldsymbol{x}_{-i})= x_i\ln\left(\sum_{j\neq i}x_j+\omega_0\right),
\end{equation}
which captures the network effect. Namely, the other users' social activities impose a positive externality on one user's social network benefit. The choice of the logarithmic function is motivated by the Zipf's law \cite{2006Metcalfe}, which emphasizes that the marginal network benefit each user $i$ experiences is diminishing as the others intensify their social activities. 

In particular, a user $i$ receives no social network benefit in (\ref{social_network_benefit}) if he is totally inactive online with $x_i=0$, no matter how socially active the other users are. Yet if $x_i$ is positive, even when all the other users are inactive (i.e., $x_j=0, \forall j\neq i$), user $i$ still has positive intrinsic benefit in (\ref{social_network_benefit}) with $\omega_0>1$. This arises from the usage of some basic services on social media, without the need to interact with others.

\subsubsection{Product Purchase in Stage III}

The seller sells products or services to the users, each with one unit demand. Denote each user $i$'s valuation for the product or service as~$v_i$. Here, we assume that each user's valuation is independent and identically distributed (i.i.d.) over $[0,\bar{v}]$ according to a uniform distribution. This uniform assumption enables tractable analyses to provide sharper insights. We extend to more general distributions to demonstrate the robustness of our major results in Section~\ref{appendix:general}, where our analysis method is still applicable. 

In Stage III, each user $i$ decides whether to purchase the product or not based on the offered price $p_i$ in (\ref{offered_price}), i.e.,
\begin{equation}\label{user: purchase decision}
d_i^*(v_i,p_i) = \mathbbm{1}(v_i\ge p_i),
\end{equation}
where $\mathbbm{1}(\cdot)$ is an indicator function. Then, each user $i$ receives a purchase surplus of $\max\{v_i-p_i,0\}$ in Stage III.

\subsubsection{Final User Payoff}
Finally, we formulate the final payoff $\pi_i$ for each user $i$ as below, which consists of the social network benefit in Stage I and the purchase surplus in Stage~III, i.e.,
\begin{equation}\label{final_user_payoff}
\pi_i(x_i,\boldsymbol{x}_{-i})=J_i(x_i,\boldsymbol{x}_{-i})+\max\{v_i-p_i,0\},
\end{equation}
which in fact depends on the user profiling results, i.e.,
\begin{equation}\label{user_utility}
\pi_i(x_i,\boldsymbol{x}_{-i})=\left\{
\begin{aligned}
&J_i(x_i,\boldsymbol{x}_{-i}), & \textrm{if profiled,}\\
&J_i(x_i,\boldsymbol{x}_{-i})+\max\{v_i-p_0,0\},  & \textrm{otherwise.}
\end{aligned}
\right. 
\end{equation}
Especially, if user $i$ is successfully profiled by the seller, user $i$ only has the social network benefit $J_i(x_i,\boldsymbol{x}_{-i})$ of (\ref{social_network_benefit}) in the final. This is because the seller can exercise personalized pricing and leave zero purchase surplus to the user. Otherwise, the user may have a positive purchase surplus with the uniform price $p_0$ (if affordable).

\subsection{Seller's Model with User Profiling}\label{subsec:sellermodel}

The seller employs the profiling technology (e.g., \cite{2013Private,2018Collective}) to analyze users' social activity data and statistically infer their valuations (if possible). The user profiling signals the seller about one user's valuation for the product or service. Based on the profiling results, the seller determines the prices (either personalized or uniform) offered to different users.

\subsubsection{User Profiling} 
Similar to \cite{2015Monopoly,2017Is,2020Consumer}, we consider a typical binary profiler that outputs a signal about users' valuation if and only if it is sure about the result. We assume that the profiling result about user $i$ follows a Bernoulli distribution. That is, user $i$'s valuation $v_i$ is revealed with a probability $\lambda_i\in[0,1]$; otherwise, user $i$ is non-profiled with unrevealed valuation. Here, the signal accuracy $\lambda_i$ is increasing in user $i$'s social activity level $x_i$, i.e.,
 \begin{equation}\label{profiling_accuracy}
 	\lambda_i = \delta x_i^\alpha,
 \end{equation}
where $0<\alpha<1$ and $0\le\delta\le 1$. 

The signal accuracy in (\ref{profiling_accuracy}) is a concave function of $x_i$ with $0<\alpha<1$. It indicates that, as the seller already harvests lots of critical information, the extra data extracted from user $i$'s online social activity becomes less informative. Note that (\ref{profiling_accuracy}) is upper bounded by $\delta$, referred to as user profiling accuracy. This practically captures the fact that, due to technological and regulatory constraints (e.g., \cite{a2013_general,a2018_california}), whether the seller can successfully profile a user is probabilistic even when this user is fully active in the social network.

One interpretation of such profiler assumption is to consider a binary hypothesis testing of user $i$'s valuation at value $v_i$ with a confidence level $\lambda_i$ \cite{2020Consumer}. Our analysis method also applies to more general profiling models. For instance, we can extend our analysis to the case with noisy profiling signals, where a user $i$'s valuation $v_i$ is revealed as a random variable over the support $[a_i,b_i]$ following a cumulative distribution function $G(\cdot)$. Instead of choosing the price $p_i^*=v_i$ in (\ref{offered_price}), the seller in this more general case can sets the optimal personalized price as $p_i^*=\arg\max_{p_i\in[a_i,b_i]} p_i(1-G(p_i))$. For clarity, we focus on the binary profiler to crystalize insights in this work.

\subsubsection{Seller's Revenue}
In Stage III, the seller gains revenue if users purchase the product at the pricing vector $(p_1,p_2,\dots,p_n)$ offered to them. Denote the sets of users who are non-profiled and profiled as $\mathcal{N}_{0}$ and $\mathcal{N}_{1}$, respectively. As the seller charges $p_0$ for the non-profiled users and $p_i=v_i$ for the profiled users $i\in\mathcal{N}_{1}$, we model the seller's revenue as follows,

\begin{equation}\label{platform_sale_revenue}
\Pi(p_1,p_2,\dots,p_n)=\sum_{i\in \mathcal{N}_{0}}p_0d_i+\sum_{i\in \mathcal{N}_{1}}v_i.
\end{equation}

\subsection{Dynamic Bayesian Game Formulation}\label{subsection:game}

We formally model the interactions between the seller and users as a three-stage dynamic Bayesian game, as illustrated in Fig. \ref{system_model}. Besides the social coupling among users in Stage~I (see Section \ref{subsubsection:socialinteraction}), another coupling exists between users and the seller across Stages I and II. Specifically, the seller's profiling-enabled pricing in Stage II depends on users' social activities in Stage I. Meanwhile, users in Stage I make social decisions while predicting the seller's pricing in Stage~II to infer the profiling risk. In particular, a higher uniform price would reduce users' incentives to avoid personalized pricing. This then stimulates users' increased social activities towards better profiling results for the seller but may discourage some non-profiled users from purchasing the product.

Next, we explain the information structure for our model. In Stage I, each user's valuation is initially private and only known to himself. The seller only knows the prior distribution, and each user only knows the common distribution of other users' valuations (rather than the precise values). This setting aligns with today's social media platforms, where users usually interact with a large-scale social network and~have~limited~information about other users' specific valuations. Furthermore, users are uninformed about each other's choices before determining their own social activity levels in Stage~I. Through~user profiling, the seller in Stage II obtains extra~information~regarding the valuations of profiled users. Yet, she still~does~not know those non-profiled users' private valuations.~Overall,~the seller's incomplete information about each user's product~valuation decreases along the stages in Fig. \ref{system_model}.

A standard approach to dynamic game analysis is backward induction. However, if we use backward induction alone, we only manage to derive the following two decisions.

\begin{itemize}
\item \textbf{\mbox{Users' Purchase Decisions in Stage III:}} Given the seller's offered price $p_i$, each user $i$'s optimal purchase decision is characterized as in (\ref{user: purchase decision}).
    
\item \textbf{\mbox{Seller's Personalized Pricing Scheme in Stage II:}} Given knowledge of any profiled user $i$'s valuation $v_i$, the seller's optimal personalized pricing scheme is to offer a tailored price exactly his valuation, i.e., $p_i=v_i$.
\end{itemize}

Due to the double-coupled decision-making of users' social activities and the seller's (uniform) pricing in Stages I and~II, backward induction alone cannot solve the equilibrium. To ensure belief consistency over stages, we further combine forward induction motivated by the advanced game theory literature \cite{battigalli2006rationalization}. That is, we alternate backward induction with forward induction to analyze the PBE of our dynamic Bayesian game (see Section \ref{Sec:alternate} for details).

\section{Alternating Backward Analysis with Forward Analysis in Stages I \& II}\label{Sec:alternate}

After characterizing the seller's personalized pricing scheme in Stage II and users' purchase decisions in Stage III (See~Section \ref{subsection:game}), we now continue to analyze the PBE for Stages I and II. 

Our PBE analysis develops as follows. First, in Section~\ref{backward}, we use backward induction to analyze the seller's uniform pricing in Stage II, based on a belief of users' social activity structure. This belief is derived through forward induction considering the users' social decisions in Stage I, which we elaborate on in Section \ref{forward}. Finally, we combine the backward and forward analyses to derive the PBE in the next section.

\subsection{Backward Analysis of Seller's Uniform Pricing in Stage II}\label{backward}

This subsection explores the seller's uniform pricing scheme in Stage II. Specifically, we first propose a belief about users' social activity structure in Stage I, based on which we then analyze how the seller determines the uniform price in Stage~II.

Notice that we cannot directly analyze the seller's uniform pricing scheme through traditional backward induction. A standard backward analysis relies on users' social activities in Stage I. However, there exists a great diversity of possible outcomes of users' social activity decisions, which brings great complexity to the uniform pricing discussion through traditional backward induction. To enable a backward analysis of the seller's uniform pricing, we first provide the structural result of users' social activity decisions in Stage I, which is derived using forward induction to be explained in Section~\ref{forward}. 

\begin{belief}\label{threshold_structure}
\emph{(Belief of Social Activity Structure)} There exists a common valuation threshold $v^* \in [0, \bar{v}]$ for users' social activity decisions in Stage I, such that any user $i$ with a valuation $v_i> v^*$ chooses the minimal social activity level $x_i(v_i)=0$, whereas any user $i$ with a valuation $v_i\le v^*$ chooses the maximal social activity level $x_i(v_i)=1$. i.e.,
\begin{equation}
	x_i^*(v_i)=\mathbbm{1}(v_i\le v^*).
\end{equation}
\end{belief}

The above belief is formally proved in Section \ref{forward}. Note that all users share the same valuation threshold independent of their valuations. Intuitively, users with higher valuations prefer to be non-profiled, which would lead to a low uniform price $p_0$ rather than a high personalized price $v_i$ charged by the seller. Thus, high-valuation users with $v_i>v^*$ become inactive online to avoid revealing their private profiles (i.e., $x_i=0$). In contrast, if a user $i$'s valuation $v_i$ is less than $v^*$, the social network benefit outweighs his potential risk of being profiled. Therefore, this low-valuation user remains active in the social network (i.e., $x_i=1$). According to (\ref{profiling_accuracy}), this user's chance of being successfully profiled by the seller is $\delta$ then.

Hereafter, we thus restrict to consider an arbitrary value of the valuation threshold $v^*$ when conducting the backward analysis for the seller's uniform pricing in Stage II. After understanding this threshold structure of users' social activity decisions, the seller updates her belief of the profiled and non-profiled users' valuations in Stage II according to Bayes' rule.

Initially, in Stage I, the seller only knows the prior distribution of any user's valuation (i.e., uniform distribution over $[0,\bar{v}]$). After user profiling, the seller's posterior belief for a profiled user's valuation $f(v_i|i\in\mathcal{N}_{1})$ in Stage II is given by
\begin{equation}\label{profiled_belief}
\begin{aligned}
			f(v_i|i\in\mathcal{N}_{1})=\frac{1}{v^*},\quad\text{if}\ 0\le v_i\le v^*,
\end{aligned}
\end{equation}
whereas the posterior probability density function for a non-profiled user's valuation $f(v_i|i\in\mathcal{N}_{0})$ is as below:
\begin{equation}\label{nonprofiled_belief}
f(v_i|i\in\mathcal{N}_{0})=\left\{
	\begin{aligned}
		\frac{1-\delta}{\bar{v}-\delta v^*}, &\quad\textrm{if $0\le v_i\le v^*$,}\\
		\frac{1}{\bar{v}-\delta v^*},  &\quad\textrm{if $v^*< v_i\le \bar{v}$.}
	\end{aligned}
	\right.
\end{equation}
Here, we have $\delta v^*<\bar{v}$ with $v^*\leq \bar{v}$ and $\delta< 1$. Noticed that the non-profiled users consist of two types: inactive users and active users whom the seller fails to profile.

Before we further analyze the seller's uniform pricing in Stage II under the general case of random user profiling with $\delta\in(0,1)$, we first introduce a benchmark: \textit{no profiling} with $\delta=0$. In this case, the seller cannot exercise personalized pricing, and users thus decide social decisions only to maximize their social network benefits.

\begin{lemma}\label{Lem:noprofiling}
In the \textbf{no profiling} case with $\delta=0$, there exists a unique equilibrium characterized as follows. In Stage I, users' equilibrium valuation threshold is $v^*=\bar{v}$, i.e., all users are active in the social network with $x_i(v_i)=1,\forall i\in\mathcal{N}$. In Stage II, the seller only considers the uniform price, setting it as
\begin{equation}\label{noprofiling_price}
    p_0^*=\frac{\bar{v}}{2}.
\end{equation} 
\end{lemma}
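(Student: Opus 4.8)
The plan is to verify that the claimed equilibrium is the unique one by solving the game stage-by-stage under $\delta=0$, where the absence of profiling eliminates the forward-induction subtlety. First I would invoke the already-established Stage III result \eqref{user: purchase decision}: a non-profiled user buys iff $v_i \ge p_0$. Since $\delta = 0$, by \eqref{profiling_accuracy} no user is ever profiled, so $\mathcal{N}_1 = \emptyset$ and $\mathcal{N}_0 = \mathcal{N}$ almost surely; every user faces only the uniform price $p_0$ in Stage II.

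Next I would analyze Stage I. With $\delta=0$, the user payoff \eqref{user_utility} reduces to $\pi_i = J_i(x_i,\boldsymbol{x}_{-i}) + \max\{v_i - p_0, 0\}$ regardless of $x_i$, because the social activity level no longer affects the profiling probability — it enters only through $J_i$. Since $J_i(x_i,\boldsymbol{x}_{-i}) = x_i \ln(\sum_{j\neq i} x_j + \omega_0)$ is strictly increasing in $x_i$ (the argument of the logarithm is at least $\omega_0 > 1$, so the log is strictly positive), each user's dominant strategy is to set $x_i = 1$ irrespective of $v_i$ and of $\boldsymbol{x}_{-i}$. Hence the equilibrium valuation threshold is $v^* = \bar v$, meaning all users are active; this also confirms consistency with Belief~\ref{threshold_structure} in the degenerate form where the threshold saturates the support.

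Then I would solve Stage II by backward induction on the seller's revenue. Given that all users are non-profiled with valuations i.i.d. uniform on $[0,\bar v]$ (the posterior \eqref{nonprofiled_belief} collapses to the uniform prior when $\delta = 0$), and a user buys at price $p_0$ with probability $\Pr(v_i \ge p_0) = (\bar v - p_0)/\bar v$ for $p_0 \in [0,\bar v]$, the expected revenue from \eqref{platform_sale_revenue} is $n \, p_0 (\bar v - p_0)/\bar v$. Maximizing the concave quadratic $p_0(\bar v - p_0)$ over $[0,\bar v]$ gives the unique maximizer $p_0^* = \bar v/2$, establishing \eqref{noprofiling_price}.

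The main obstacle, such as it is, is not a calculation but an argument for uniqueness: I need to note that in Stage I the choice $x_i = 1$ is strictly dominant (so no other social-activity profile survives), and in Stage II the revenue objective is strictly concave in $p_0$ on the interior, so the optimizer is unique; combined with the unique Stage III best response, this pins down a unique equilibrium. I would also remark that since nothing in Stages I or III depends on $p_0$ here, the sequential-rationality and belief-consistency conditions of PBE are trivially met, so the characterization is complete.
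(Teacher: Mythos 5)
Your proposal is correct and follows essentially the same route as the paper's own proof: with $\delta=0$ the social-activity and pricing decisions decouple, $J_i$ is strictly increasing in $x_i$ since $\omega_0>1$, so $x_i^*=1$ for all $i$, and the seller's problem reduces to maximizing $p_0(1-F(p_0))$, giving $p_0^*=\bar v/2$. The extra detail you supply on strict dominance and strict concavity to justify uniqueness is a welcome elaboration but not a different argument.
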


Next, we characterize the seller's optimal uniform pricing scheme under random user profiling with $\delta\in(0,1)$, and compare it with the no profiling benchmark.

\begin{proposition}\label{base_price}
Given any users' valuation threshold $v^*$ in Stage I, the seller's optimal uniform pricing scheme (see Fig. \ref{fig:uniform}) in Stage II is given by 
\begin{numcases}{p_0^*(v^*)=}
\frac{\bar{v}}{2},  &\quad\textrm{if $v^*\le \frac{\bar{v}}{2}$,}\label{base_price_equation1}\\
v^*, &\quad\textrm{if $\frac{\bar{v}}{2}<v^*\le\frac{\bar{v}}{2-\delta}$,}\label{base_price_equation2}\\
\frac{\bar{v}-\delta v^*}{2(1-\delta)}, &\quad\textrm{if $\frac{\bar{v}}{2-\delta}<v^*\le \bar{v}$.}\label{base_price_equation3}
\end{numcases}
\end{proposition}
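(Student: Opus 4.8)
The plan is to reduce the seller's problem to a one-dimensional optimization over the uniform price $p_0$ and then analyze the resulting revenue function piecewise. First I would observe that in the revenue expression (\ref{platform_sale_revenue}) the term $\sum_{i\in\mathcal{N}_1}v_i$ collected from profiled users does not depend on $p_0$ (each profiled user pays exactly his valuation). Hence maximizing $\Pi$ over $p_0$ is equivalent to maximizing the expected revenue contributed by a representative non-profiled user, $R(p_0)\triangleq p_0\cdot\Pr(v\ge p_0\mid i\in\mathcal{N}_0)$, where $v$ is distributed according to the posterior density $f(v\mid i\in\mathcal{N}_0)$ in (\ref{nonprofiled_belief}). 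Since any $p_0>\bar v$ yields zero demand and any $p_0<0$ is meaningless, it suffices to optimize over $p_0\in[0,\bar v]$.

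Next I would write $R$ explicitly by integrating the two-piece density (\ref{nonprofiled_belief}) across the kink at $v^*$. For $p_0\le v^*$ this gives $R(p_0)=\frac{p_0\,(\bar v-\delta v^*-(1-\delta)p_0)}{\bar v-\delta v^*}$, a downward parabola with vertex at $p_0=\frac{\bar v-\delta v^*}{2(1-\delta)}$; for $v^*< p_0\le \bar v$ it gives $R(p_0)=\frac{p_0(\bar v-p_0)}{\bar v-\delta v^*}$, a downward parabola with vertex at $p_0=\frac{\bar v}{2}$. I would check that the two expressions agree at $p_0=v^*$, so $R$ is continuous (with a possible kink there). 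Because each branch is strictly concave, the maximizer on each sub-interval is either the interior vertex, when it lies in that sub-interval, or the relevant endpoint; this reduces the whole problem to locating the two vertices relative to $v^*$.

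The core step is then the case split on $v^*$. The vertex $\bar v/2$ of the upper branch lies in $[v^*,\bar v]$ iff $v^*\le \bar v/2$; the vertex $\frac{\bar v-\delta v^*}{2(1-\delta)}$ of the lower branch lies in $[0,v^*]$ iff $\bar v\le(2-\delta)v^*$, i.e.\ iff $v^*\ge \frac{\bar v}{2-\delta}$ (note this vertex is always positive since $\delta v^*<\bar v$). These two thresholds carve out exactly the three regimes in the statement. When $v^*\le\bar v/2$: the lower branch is increasing throughout $[0,v^*]$ while the upper branch attains its global peak at $\bar v/2$, so $p_0^*=\bar v/2$. When $\bar v/2<v^*\le\frac{\bar v}{2-\delta}$: the lower branch is still increasing on $[0,v^*]$ and the upper branch is decreasing on $[v^*,\bar v]$, so the common value $R(v^*)$ is maximal and $p_0^*=v^*$. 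When $v^*>\frac{\bar v}{2-\delta}$: the lower branch has an interior maximum at $\frac{\bar v-\delta v^*}{2(1-\delta)}<v^*$, which dominates $R(v^*)$, while the upper branch is decreasing on $[v^*,\bar v]$; hence $p_0^*=\frac{\bar v-\delta v^*}{2(1-\delta)}$. Uniqueness follows from strict concavity of each branch together with the monotonicity of the non-binding branch at the junction $p_0=v^*$.

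The main obstacle I anticipate is bookkeeping rather than conceptual: one must verify the vertex--interval membership inequalities and the sign of $R'$ on the relevant sub-intervals in each regime, confirm continuity at the kink (so that a maximizer located at $p_0=v^*$ is legitimate), and settle the candidate comparisons (value at an interior vertex versus $R(v^*)$) in the correct direction. All of these reduce to elementary algebra using $0<\alpha<1$ is irrelevant here, only $0<\delta<1$ and $0\le v^*\le\bar v$ (which guarantee $\bar v-\delta v^*>0$), but the inequalities are numerous enough that care is needed. It is also worth checking at the end that the three formulas match at the regime boundaries $v^*=\bar v/2$ and $v^*=\frac{\bar v}{2-\delta}$, which provides a useful sanity check on the algebra.
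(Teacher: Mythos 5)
Your proposal is correct and follows essentially the same route as the paper's proof: both reduce the problem to maximizing the expected non-profiled revenue, compute it piecewise from the posterior density \eqref{nonprofiled_belief}, identify the two parabola vertices $\bar v/2$ and $(\bar v-\delta v^*)/(2(1-\delta))$, and case-split on their location relative to $v^*$ to obtain the three regimes. The extra checks you flag (continuity at the kink, consistency at the regime boundaries) are sound bookkeeping that the paper handles implicitly.
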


\begin{figure}[h]
	\centering  
	\begin{tikzpicture}[scale = 1, domain=0:4]
		\node (O) at (0,0) [below left] {\small $O$};
		\node at (2,0) [below] {\small $\frac{\bar{v}}{2}$};
		\draw[color=black,thick] (2,0) -- (2,0.07);
		\node (b) at (8/3,0) [below] {\small $\frac{\bar{v}}{2-\delta}$};
		\draw[color=black,thick] (8/3,0) -- (8/3,0.07);
		\node (c) at (4,0) [below] {\small $\bar{v}$};
		\draw[color=black,thick] (4,0) -- (4,0.07);
		\node (A) at (0,0.5) [left] {\small $\frac{\bar{v}}{2}$};
		\draw[color=black,thick] (0,2-1.5) -- (0.07,2-1.5);
		\node (B) at (0,8/3-1.5) [left] {\small $\frac{\bar{v}}{2-\delta}$};
		\draw[color=black,thick] (0,8/3-1.5) -- (0.07,8/3-1.5);
		\draw[-latex] (0,0) -- (4.5,0) node[right] {\small $v^*$};
		\draw[-latex] (0,0) -- (0,1.8) node[right] {\small $p_0^*$};
		\draw[color=red,thick] plot[domain=0:2,samples=200] (\x,2-1.5);
		\draw[color=red,thick] plot[domain=2:(8/3),samples=200] (\x,\x-1.5);
		\draw[color=red,thick] plot[domain=(8/3):4,samples=200] (\x,{(4-\x/2)-1.5});
		\draw[color=black,dotted] (8/3,0) -- (8/3,8/3-1.5);
		\draw[color=black,dotted] (0,8/3-1.5) -- (8/3,8/3-1.5);
		\draw[color=black,dotted] (2,0) -- (2,2-1.5);
		\draw[color=black,dotted] (2,2-1.5) -- (4,2-1.5);
		\draw[color=black,dotted] (4,0) -- (4,2-1.5);
		\fill[red] (2,0.5) circle (1pt);
		\fill[red] (8/3,8/3-1.5) circle (1pt);
		\fill[red] (4,0.5) circle (1pt);
	\end{tikzpicture}
	\caption{The seller's optimal uniform pricing scheme $p_0^*(v^*)$ in Stage II.}
	\label{fig:uniform}
\end{figure}
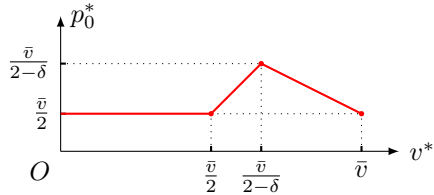

Proposition \ref{base_price} shows that if relatively few users prefer to reveal their profiles (i.e., $v^*\le\bar{v}/2$), the seller mainly cares about the revenue extracted from non-profiled users. Thus, the uniform price in (\ref{base_price_equation1}) degenerates to (\ref{noprofiling_price}) in Lemma \ref{Lem:noprofiling} (under the no profiling benchmark). As $v^*$ increases, more users with valuations $v_i\le v^*$ are active online and may get profiled by the seller. The seller's posterior belief of non-profiled users' valuations in (\ref{nonprofiled_belief}) thus skews towards higher values. Accordingly, the seller raises the uniform price in~(\ref{base_price_equation2}) higher than the mean user valuation, as shown in Fig. \ref{fig:uniform}. Differently, the seller reduces the uniform price in (\ref{base_price_equation3}) when $v^*$ approaches $\bar{v}$. This is because very few non-profiled users are with high valuations in this case. Hence, the seller tends to extract more revenue from those low-valuation users.

\subsection{Forward Analysis of Users' Social Decisions in Stage I}\label{forward}

In this subsection, we use forward induction to analyze the threshold structure of users' social activity decisions in Stage~I (see Belief \ref{threshold_structure}). Specifically, we investigate the equilibrium of users' social interaction game in Stage I, where users decide their social activity levels while predicting the seller's uniform price $p_0$ in Stage II.

Note that users in Stage I cannot fully control their chances of being profiled due to the randomness in profiling technology (see the signal accuracy in (\ref{profiling_accuracy})). In addition, users also need to estimate the seller's uniform pricing for the non-profiled case in Stage II, which in turn depends on their social activity levels in Stage I. Formally, each user $i$ with valuation $v_i$ in Stage I decides the social activity level $x_i$ to maximize his expected final payoff over all possible profiling results, i.e.,

\begin{equation}\label{expectpayoff}
    \tilde{\pi}_i(x_i,\boldsymbol{x}_{-i}) = x_i\ln\left(\sum_{j\neq i}x_j+\omega_0\right)+(1-\delta x_i^\alpha)\cdot\max\{v_i-p_0,0\},
\end{equation}
 which also hinges on the other users' social decisions $\boldsymbol{x}_{-i}$.
 
Recall Belief \ref{threshold_structure}, which describes the threshold structure of users' social activity decisions in Stage I. Its proof consists of the following three steps (see details in Appendix \ref{proofbelief}):

\begin{itemize}
\item \textbf{\mbox{Step I (Game Formulation and Solution Existence).}} We first identify users' social interactions in Stage I as a static Bayesian game with strategic complementarities \cite{1990Rationalizability}, where users make their social activity decisions while predicting the seller's uniform price $p_0$. This then indicates the existence of a pure-strategy Bayesian Nash equilibrium in users' social activity levels.
\item \textbf{Step II (Polarization in Equilibrium Social Decisions).} Next, we  show the convexity of users' expected utility in (\ref{expectpayoff}). It follows that the best response of each user $i$ is either the maximal social activity level $x_i^*=1$ or the minimal one $x_i^*=0$ instead of somewhere between (i.e., $x_i^*\in(0,1)$).
\item \textbf{Step III (Existence of Common Valuation Threshold).} Finally, we prove that there exists a valuation threshold for each user's equilibrium social decision. Further, we show that such a valuation threshold $v^*$ is the same for all users independent of their valuations.
\end{itemize}

Based on the threshold structure of users' equilibrium social decisions, we further characterize the valuation threshold in the following theorem. For clarity, we hereafter denote the cumulative distribution function (CDF) each user $i$'s valuation follows as $F(\cdot)$.

\begin{theorem}\label{Theorem:p0-vstar}
The common equilibrium valuation threshold $v^*$ in Belief \ref{threshold_structure} is $\min\{v^\dagger,\bar{v}\}$, where $v^\dagger$ satisfies
\begin{equation}\label{vstar_p0_summation}
v^\dagger=p_0+\frac{1}{\delta}\sum_{m=0}^{n-1}\tbinom{n-1}{m}\ln(m+\omega_0)F(v^\dagger)^m\left(1-F(v^\dagger)\right)^{n-1-m}.
\end{equation}
\end{theorem}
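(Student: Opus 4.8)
The plan is to invoke Belief~\ref{threshold_structure}, by which the equilibrium of the Stage~I game takes the common-threshold form, and then to pin down the value of $v^*$ by locating the user type who is exactly indifferent between being active and inactive. First I would fix the conjectured common threshold $v^*$ used by every user other than $i$, so that user $j\neq i$ sets $x_j=1$ precisely when $v_j\le v^*$. Since valuations are i.i.d.\ with CDF $F$, the number $M$ of active users among the remaining $n-1$ is then distributed as $\mathrm{Binomial}(n-1,F(v^*))$, and under the threshold rule $\sum_{j\neq i}x_j=M$. By Step~II in the proof of Belief~\ref{threshold_structure}, the expected payoff $\tilde\pi_i$ in~(\ref{expectpayoff}) is convex in $x_i$, so user $i$'s best response is always an endpoint of $[0,1]$, and it suffices to compare the two pure actions. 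Taking expectations over $M$, action $x_i=1$ yields $\mathbb{E}[\ln(M+\omega_0)]+(1-\delta)\max\{v_i-p_0,0\}$, whereas action $x_i=0$ yields $\max\{v_i-p_0,0\}$.

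Equating these two expected payoffs, the indifferent type satisfies $\mathbb{E}[\ln(M+\omega_0)]=\delta\max\{v_i-p_0,0\}$. Because $\omega_0>1$ forces $\mathbb{E}[\ln(M+\omega_0)]\ge\ln\omega_0>0$, no type with $v_i\le p_0$ can be indifferent; hence the indifferent type obeys $v_i>p_0$, so $v_i=p_0+\frac1\delta\,\mathbb{E}[\ln(M+\omega_0)]$. Substituting $v_i=v^*$ and expanding the binomial expectation with $M\sim\mathrm{Binomial}(n-1,F(v^*))$ reproduces exactly the fixed-point equation~(\ref{vstar_p0_summation}) for $v^\dagger$. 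If the value obtained from that equation exceeds $\bar v$, then every type in $[0,\bar v]$ strictly prefers $x_i=1$, all users are active, and the threshold collapses to $v^*=\bar v$; combining this boundary case with the interior one gives $v^*=\min\{v^\dagger,\bar v\}$.

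It then remains to confirm internal consistency and that $v^\dagger$ is well defined. For consistency I would check that, with the other users at threshold $v^*$, the active-minus-inactive payoff gap $\mathbb{E}[\ln(M+\omega_0)]-\delta\max\{v_i-p_0,0\}$ is strictly positive on $[0,p_0]$ and strictly decreasing in $v_i$ on $(p_0,\infty)$, hence changes sign exactly once, at $v^*$; so every $v_i<v^*$ strictly prefers $x_i=1$ and every $v_i>v^*$ strictly prefers $x_i=0$, which reproduces the conjectured threshold rule and closes the loop. For well-definedness, write $\Phi(v)=p_0+\frac1\delta\,\mathbb{E}[\ln(M+\omega_0)]-v$ with $M\sim\mathrm{Binomial}(n-1,F(v))$; this is continuous with $\Phi(0)=p_0+\frac1\delta\ln\omega_0>0$. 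In the uniform-valuation model of the main text, $q\mapsto\mathbb{E}_{M\sim\mathrm{Binomial}(n-1,q)}[\ln(M+\omega_0)]$ is concave: by the finite-difference identity for derivatives of binomial expectations, its second derivative equals $(n-1)(n-2)$ times an expectation of the second difference $\ln(m+2+\omega_0)-2\ln(m+1+\omega_0)+\ln(m+\omega_0)$, which is negative by strict concavity of the logarithm. Hence $\Phi$ is concave; being positive at $v=0$, it either stays positive on $[0,\bar v]$ (so $v^*=\bar v$) or has a unique zero in $(p_0,\bar v]$, which determines $v^\dagger$ unambiguously.

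The step I expect to be the main obstacle is handling this self-referential fixed point rigorously: the law of $M$ depends on the very threshold $v^*$ being solved for, so one must carefully distinguish the threshold used by others from the best-response threshold, impose equality only afterwards, and then establish uniqueness of the relevant root of~(\ref{vstar_p0_summation}). Monotonicity of its right-hand side in $v^\dagger$ is immediate from stochastic monotonicity of the binomial and monotonicity of $m\mapsto\ln(m+\omega_0)$, but ruling out multiple intersections with the identity line genuinely needs the concavity observation above; once uniqueness is secured, deriving~(\ref{vstar_p0_summation}) itself is the short computation sketched in the first two paragraphs.
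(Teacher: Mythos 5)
Your proposal is correct and follows essentially the same route as the paper: reduce to the binary action comparison via the convexity result from Step II of Belief~\ref{threshold_structure}, impose indifference of the marginal type $v_i=v^*$ between the expected payoff when active, $\tilde{J}(v^*)+(1-\delta)(v^*-p_0)$, and when inactive, $v^*-p_0$, and cap the resulting fixed point at $\bar{v}$. The additional uniqueness/concavity analysis of $\Phi$ you sketch is exactly what the paper defers to Proposition~\ref{unique_user} (via Lemma~\ref{claim:J_concave}'s finite-difference computation), so it is not needed for this theorem but is consistent with the paper's treatment.
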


Theorem \ref{Theorem:p0-vstar} holds for any continuous distribution of users' valuations, not limited to the uniform distribution. To see why the valuation threshold is characterized as in (\ref{vstar_p0_summation}), we consider the two possible social activity levels in the equilibrium: $x_i=1$ (active) and $x_i=0$ (inactive). With~a valuation exactly the threshold $v^*$, the marginal user is indifferent between these two social decisions. That is, the purchase surplus he receives when inactive in the social network should be the same as the expected payoff he gains when active online, which consists of both the social network benefit and the purchase surplus.

Alternatively, we can rewrite the formulation in (\ref{vstar_p0_summation}) considering the expectation over the number of active users in the social network except for the user himself. Such a number, denoted by $k$, follows the binomial distribution $B(n-1,F(v^*))$. We then have 
\begin{equation}\label{vstar_p0_expectation}
	\begin{aligned}
		v^\ast&=p_0+\frac{1}{\delta}\mathbb{E}_{k}\ln(k+\omega_0),
	\end{aligned}
\end{equation}
where the last term on the right-hand side of (\ref{vstar_p0_expectation}) captures each user's expected social network benefit scaled by the profiling accuracy $\delta$. For convenience, we hereafter denote the expected social network benefit of each user in the equilibrium as $\tilde{J}(v^*)$, i.e.,
\begin{equation}\label{expected_social_network}
    \tilde{J}(v^*)\triangleq\mathbb{E}_{k}\ln(k+\omega_0).
\end{equation}

We conclude our forward analysis of users' social activity decisions with Proposition \ref{unique_user} below, which further shows the uniqueness of the equilibrium in Stage I.

\begin{proposition}\label{unique_user}
The Bayesian Nash equilibrium of users' social interaction game in Stage I is unique.
\end{proposition}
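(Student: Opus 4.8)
The plan is to reduce uniqueness of the Stage-I equilibrium to uniqueness of a one-dimensional fixed point. By Belief~\ref{threshold_structure}, every Bayesian Nash equilibrium of the Stage-I game (given the predicted uniform price $p_0$) is a threshold profile $x_i^*(v_i)=\mathbbm{1}(v_i\le v^*)$ for some common $v^*\in[0,\bar v]$; and by Theorem~\ref{Theorem:p0-vstar} any such threshold satisfies $v^*=\min\{v^\dagger,\bar v\}$, where $v^\dagger$ is a solution of $v=g(v)$ and $g(\cdot)$ denotes the right-hand side of (\ref{vstar_p0_summation}), i.e.
\begin{equation*}
g(v)\;:=\;p_0+\frac{1}{\delta}\,\mathbb{E}_{k\sim B(n-1,F(v))}\bigl[\ln(k+\omega_0)\bigr],
\end{equation*}
noting $F(v)=1$ for $v\ge\bar v$. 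Hence it suffices to show that $g$ has a unique fixed point on $[0,\infty)$; then $v^*=\min\{v^\dagger,\bar v\}$, and therefore the whole equilibrium strategy profile, is uniquely determined.

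I would then establish three properties of $g$. (i) $g$ is continuous with $g(0)=p_0+\tfrac1\delta\ln\omega_0>0$, since $p_0\ge0$ and $\omega_0>1$. (ii) $g$ is nondecreasing, because $\ln(\cdot+\omega_0)$ is increasing and a larger $F(v)$ makes $k$ stochastically larger; moreover $g\equiv p_0+\tfrac1\delta\ln(n-1+\omega_0)$ on $[\bar v,\infty)$, so $v-g(v)\to+\infty$ as $v\to\infty$. (iii) $g$ is concave on $[0,\infty)$: let $\Psi(q):=\mathbb{E}_{k\sim B(n-1,q)}[\ln(k+\omega_0)]$, a polynomial of degree $n-1$ in $q$; applying the identity $\tfrac{d}{dq}\mathbb{E}_{k\sim B(N,q)}[\phi(k)]=N\,\mathbb{E}_{k\sim B(N-1,q)}[\phi(k+1)-\phi(k)]$ twice yields, for $n\ge3$, $\Psi''(q)=(n-1)(n-2)\,\mathbb{E}_{k\sim B(n-3,q)}[\ln(k+\omega_0)-2\ln(k+1+\omega_0)+\ln(k+2+\omega_0)]<0$, because the bracketed second forward difference of the strictly concave map $t\mapsto\ln(t+\omega_0)$ is negative (for $n\le2$, $\Psi$ is affine, hence concave). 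Since $F$ is affine on $[0,\bar v]$ under the uniform distribution, $g(v)=p_0+\tfrac1\delta\Psi(F(v))$ is concave on $[0,\bar v]$; and as the left derivative of $g$ at $\bar v$ is nonnegative whereas its right derivative is $0$, $g$ is in fact concave on all of $[0,\infty)$.

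Finally I would conclude. Set $h(v):=v-g(v)$, which is convex on $[0,\infty)$ by (iii), with $h(0)<0$ by (i) and $h(v)\to+\infty$ by (ii). The sublevel set $\{v\ge0:h(v)<0\}$ is thus a nonempty bounded interval containing $0$, say $[0,v^\dagger)$, so $h(v^\dagger)=0$ by continuity; and $h$ cannot vanish at any $v'>v^\dagger$, for otherwise convexity would give $h(v^\dagger)\le(1-\tfrac{v^\dagger}{v'})h(0)+\tfrac{v^\dagger}{v'}h(v')=(1-\tfrac{v^\dagger}{v'})h(0)<0$, contradicting $h(v^\dagger)=0$. Hence $v^\dagger$ is the unique fixed point of $g$, so $v^*=\min\{v^\dagger,\bar v\}$ is uniquely pinned down, and by Belief~\ref{threshold_structure} the Stage-I Bayesian Nash equilibrium is unique.

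The step I expect to be the main obstacle is (iii), the concavity of $g$ — equivalently, of the binomial expectation $\Psi(q)=\mathbb{E}_{k\sim B(n-1,q)}[\ln(k+\omega_0)]$ in the success probability $q=F(v)$ — since this is exactly what makes $h$ convex and precludes multiple crossings, whereas monotonicity and boundedness of $g$ alone do not. (If instead one substitutes the equilibrium price $p_0=p_0^*(v^*)$ from Proposition~\ref{base_price}, so the condition reads $v-p_0^*(v)=\tfrac1\delta\Psi(F(v))$, the same concavity of $\Psi$ still yields uniqueness after splitting along the three linear pieces of $p_0^*(\cdot)$: on the middle piece the left side equals $0$ while the right side is strictly positive, and on each of the other two pieces $v-p_0^*(v)-\tfrac1\delta\Psi(F(v))$ is convex and hence has at most one zero. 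For a non-uniform CDF $F$ one would additionally need $\Psi\circ F$ concave, e.g.\ $F$ concave, or would have to replace the crossing argument by a direct estimate $g'<1$.)
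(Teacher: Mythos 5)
Your proof is correct and follows essentially the same route as the paper's: both reduce uniqueness of the BNE to uniqueness of the threshold $v^*$, both hinge on the concavity of the expected social-network benefit $\tilde{J}(v)=\Psi(F(v))$ (your twice-applied binomial-derivative identity producing the negative second forward difference of $\ln(\cdot+\omega_0)$ is exactly the paper's index-shifting computation in its Lemma on $\tilde{J}$), and both conclude via a single-crossing argument for the concave map $v\mapsto p_0+\tfrac{1}{\delta}\tilde{J}(v)-v$. The only cosmetic difference is that you absorb the region $v>\bar{v}$ into one globally convex function $h$ and use a sublevel-set argument, whereas the paper treats the linear piece beyond $\bar{v}$ separately and invokes its auxiliary fact on concave functions with a three-case analysis on the sign of $\Phi(\bar{v})$.
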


\section{Analysis of PBE}\label{Sec:PBE}

After alternating backward analysis with forward analysis in Stages I and II, we now derive the PBE of the whole dynamic Bayesian game in this section. Specifically, we combine users' equilibrium valuation threshold $v^\ast$ for their social activity decisions (see Theorem \ref{Theorem:p0-vstar}) and the seller's optimal uniform pricing scheme $p_0^*$ (see Proposition \ref{base_price}) together. In this way, we meet the two requirements of PBE: (i) sequential rationality: the seller sets the uniform price to maximize her sale revenue in Stage II given her belief of users' social activities, which is derived through Bayes' theorem; (ii) belief consistency: users make their social activity decisions in Stage I while predicting the seller's pricing in Stage II, and such equilibrium strategies should be consistent with the seller's belief \cite{fudenberg1991game}.

\begin{theorem}\label{Theorem_PBE}
Under \textbf{random user profiling} with $\delta\in(0,1)$, there exists a unique PBE characterized as follows.\footnote{We assume $\bar{v}> 2\ln(n-1+\omega_0)$ throughout the PBE analysis. Otherwise, only one trivial equilibrium exists under user profiling, where all users remain fully active (i.e., $x_i=1, \forall i\in\mathcal{N}$) and the seller sets the uniform price as $p_0^*=\bar{v}/2$. The reason for this is that, given $\bar{v}\le 2\ln(n-1+\omega_0)$, the social network benefit of any user is always sufficient to outweigh the potential loss in purchase surplus resulting from personalized pricing.\label{footnote3}}
\begin{itemize}
\item \emph{\textbf{Case I {(all active users in social networks)}}} with a small mean valuation
\begin{equation}
\frac{\bar{v}}{2}\le \frac{1}{\delta}\ln(n-1+\omega_0).
\end{equation}
\begin{itemize}
    \item [(i)] In Stage I, users' equilibrium valuation threshold is $v^*=\bar{v}$, i.e., every user $i\in\mathcal{N}$ is active in the social network with $x_i^*=1$.
    \item [(ii)] In Stage II, the seller sets a uniform price of $p_0^*=\bar{v}/2$.
\end{itemize}
		
\item \emph{\textbf{Case II {(partially active users in social networks)}}} with a large mean valuation
\begin{equation}
\frac{\bar{v}}{2}>{\frac{1}{\delta}\ln(n-1+\omega_0)}.
\end{equation}
\begin{itemize}
    \item [(i)] In Stage I, users' equilibrium valuation threshold is $v^*<\bar{v}$, i.e., there exist some high-valuation users who are inactive in the social network. Here, $v^\ast$ is the unique solution to 
    \begin{equation}\label{vstar_PBE}
        2\delta v^*-\delta^2 v^\ast-2(1-\delta)\tilde{J}(v^\ast)=\delta \bar{v}.
    \end{equation}
    Particularly, the fraction of socially active users $v^*/\bar{v}$ (i.e., users' overall social activity levels) decreases as $\bar{v}$ increases. 
    \item [(ii)] In Stage II, the seller sets a uniform price of 
    \begin{equation}
    p_0^*=v^\ast-\frac{\tilde{J}(v^\ast)}{\delta}>\frac{\bar{v}}{2}.
    \end{equation}
\end{itemize}
\end{itemize}
\end{theorem}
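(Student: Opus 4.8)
The plan is to obtain the PBE as the unique fixed point of the composition of the Stage-II backward best response with the Stage-I forward best response. By Proposition~\ref{base_price}, a conjectured common threshold $v^*$ induces the seller's optimal uniform price $p_0^*(v^*)$ given by the three branches \eqref{base_price_equation1}--\eqref{base_price_equation3}; by Theorem~\ref{Theorem:p0-vstar}, a uniform price $p_0$ induces the equilibrium threshold $\min\{v^\dagger,\bar v\}$, where $v^\dagger$ solves $v^\dagger=p_0+\tilde J(v^\dagger)/\delta$; and by Proposition~\ref{unique_user} this induced threshold is unique. Hence a PBE is exactly a $v^*$ with $v^*=\min\{v^\dagger,\bar v\}$ and $v^\dagger=p_0^*(v^*)+\tilde J(v^\dagger)/\delta$, the price being $p_0^*(v^*)$; sequential rationality of Stages~I--III is then inherited from \eqref{user: purchase decision}, Proposition~\ref{base_price} and Theorem~\ref{Theorem:p0-vstar}, and belief consistency is the fixed-point condition itself. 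I would search for this fixed point branch by branch in $p_0^*(v^*)$, working in the uniform setting of this section where $F(v)=v/\bar v$, so that $\tilde J(v)=\mathbb{E}_{k}\ln(k+\omega_0)$ with $k\sim B(n-1,v/\bar v)$.

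First I would rule out branches \eqref{base_price_equation1} and \eqref{base_price_equation2}. On \eqref{base_price_equation1}, $p_0^*(v^*)=\bar v/2$, so any forward fixed point obeys $v^\dagger=\bar v/2+\tilde J(v^\dagger)/\delta>\bar v/2$ since $\tilde J(\cdot)\ge\ln\omega_0>0$; hence $v^*=\min\{v^\dagger,\bar v\}>\bar v/2$, contradicting $v^*\le\bar v/2$. On \eqref{base_price_equation2}, $p_0^*(v^*)=v^*$, so $v^\dagger=v^*+\tilde J(v^\dagger)/\delta>v^*$, which is compatible with $v^*=\min\{v^\dagger,\bar v\}$ only at the corner $v^*=\bar v$; but $\bar v>\bar v/(2-\delta)$ since $\delta<1$, contradicting $v^*\le\bar v/(2-\delta)$. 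Thus every PBE lies on branch \eqref{base_price_equation3}, where $\bar v/(2-\delta)<v^*\le\bar v$ and $p_0^*(v^*)=\frac{\bar v-\delta v^*}{2(1-\delta)}$.

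On branch \eqref{base_price_equation3} I would split into the corner $v^*=\bar v$ and the interior $v^*<\bar v$. At $v^*=\bar v$ the price collapses to $p_0^*(\bar v)=\frac{\bar v-\delta\bar v}{2(1-\delta)}=\bar v/2$, and since all users are active $\tilde J(\bar v)=\ln(n-1+\omega_0)$; by Theorem~\ref{Theorem:p0-vstar} the induced threshold equals $\bar v$ iff the forward fixed point lies at or above $\bar v$, i.e.\ iff $\bar v\le\bar v/2+\ln(n-1+\omega_0)/\delta$, which is Case~I and yields $p_0^*=\bar v/2$. For the interior, $v^*=v^\dagger$, so inserting $p_0^*(v^*)$ into $v^*=p_0^*(v^*)+\tilde J(v^*)/\delta$ and clearing denominators yields precisely \eqref{vstar_PBE}, while reading the same identity the other way gives $p_0^*=v^*-\tilde J(v^*)/\delta$, which exceeds $\bar v/2$ exactly because $v^*<\bar v$. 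It then remains to show \eqref{vstar_PBE} has exactly one root in $\big(\bar v/(2-\delta),\bar v\big)$ under Case~II and none there under Case~I. I would argue this through $g(v)\triangleq\delta(2-\delta)v-2(1-\delta)\tilde J(v)$: since $\tilde J$ is concave in $v$ (being an expectation over $B(n-1,v/\bar v)$ of the concave map $\ln(\cdot+\omega_0)$, and affine in the degenerate cases $n\le 2$), $g$ is convex; and direct evaluation gives $g(\bar v/(2-\delta))=\delta\bar v-2(1-\delta)\tilde J(\bar v/(2-\delta))<\delta\bar v$ always, whereas $g(\bar v)-\delta\bar v=(1-\delta)\big(\delta\bar v-2\ln(n-1+\omega_0)\big)$ is positive under Case~II and nonpositive under Case~I. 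Convexity together with these endpoint signs then gives exactly one interior root under Case~II and no interior root under Case~I, so the PBE exists and is unique in both cases.

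Finally, for the comparative-statics claim that $v^*/\bar v$ decreases in $\bar v$ under Case~II, set $q=v^*/\bar v$ and write $\tilde J(v^*)=h(q)$ with $h(q)=\mathbb{E}_{k\sim B(n-1,q)}\ln(k+\omega_0)$; then \eqref{vstar_PBE} rearranges to $\bar v=\frac{2(1-\delta)h(q)}{\delta\,[(2-\delta)q-1]}$. I would show the right-hand side is strictly decreasing in $q$ on $\big(\tfrac{1}{2-\delta},1\big)$: its derivative has the same sign as $h'(q)[(2-\delta)q-1]-(2-\delta)h(q)$, which is negative because concavity and $h(0)=\ln\omega_0>0$ give $h(q)\ge h(0)+q\,h'(q)>q\,h'(q)$, whence $(2-\delta)h(q)>(2-\delta)q\,h'(q)\ge[(2-\delta)q-1]\,h'(q)$. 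Thus $\bar v$ is a strictly decreasing function of $q$, i.e.\ $q=v^*/\bar v$ strictly decreases in $\bar v$. The main obstacle I anticipate is the interior-root step: establishing the convexity of $g$ (equivalently the concavity of $\tilde J$) together with the precise endpoint evaluations, and checking that each candidate root composes into a genuine PBE with no spurious solutions lurking at the branch boundaries; the remaining steps are substitution, the intermediate value theorem, and elementary algebra.
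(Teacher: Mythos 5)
Your proposal is correct and follows the same skeleton as the paper's proof: combine the backward pricing map of Proposition~\ref{base_price} with the forward threshold map of Theorem~\ref{Theorem:p0-vstar} and locate the fixed point branch by branch, ruling out \eqref{base_price_equation1} and \eqref{base_price_equation2}, then splitting \eqref{base_price_equation3} into the corner $v^*=\bar{v}$ (Case~I) and the interior root of \eqref{vstar_PBE} (Case~II). Two of your sub-arguments are genuinely different and, in my view, cleaner. First, you eliminate branch \eqref{base_price_equation1} by the one-line observation $v^\dagger=\bar{v}/2+\tilde{J}(v^\dagger)/\delta>\bar{v}/2$, whereas the paper runs the concavity machinery of Fact~\ref{Fact1}.(B) on $\Gamma(v^*)=\bar{v}/2-v^*+\tilde{J}(v^*)/\delta$; your interior existence/uniqueness argument via the convex $g(v)=\delta(2-\delta)v-2(1-\delta)\tilde{J}(v)$ is the exact mirror image of the paper's concave $\Gamma$, so nothing is lost there. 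Second, for the comparative statics $v^*/\bar{v}$ decreasing in $\bar{v}$, you invert \eqref{vstar_PBE} explicitly as $\bar{v}=2(1-\delta)h(q)/\bigl(\delta[(2-\delta)q-1]\bigr)$ and show monotonicity from the tangent-line inequality $h(q)\ge h(0)+qh'(q)>qh'(q)$; the paper's Lemma~\ref{vbar_fraction} instead applies the implicit function theorem and needs the extra input $\tilde{J}'(v^*)<\delta$ from Lemma~\ref{jprime}, which your route avoids entirely. The one place you are too quick is the concavity of $h(q)=\mathbb{E}_{k\sim B(n-1,q)}\ln(k+\omega_0)$ in $q$: an expectation of a concave map under a binomial law is not automatically concave in the success probability; what is needed is that the second derivative in $q$ equals $(n-1)(n-2)\,\mathbb{E}_{B(n-3,q)}\Delta^2\phi$ with $\Delta^2\phi(m)=\ln(m+2+\omega_0)-2\ln(m+1+\omega_0)+\ln(m+\omega_0)<0$, which is precisely the computation the paper carries out in Lemma~\ref{claim:J_concave}. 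That is a line of justification to add, not a flaw in the approach.
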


\begin{figure}[h]
  \centering
  \vspace{-15pt}
  \includegraphics[width=0.4\linewidth]{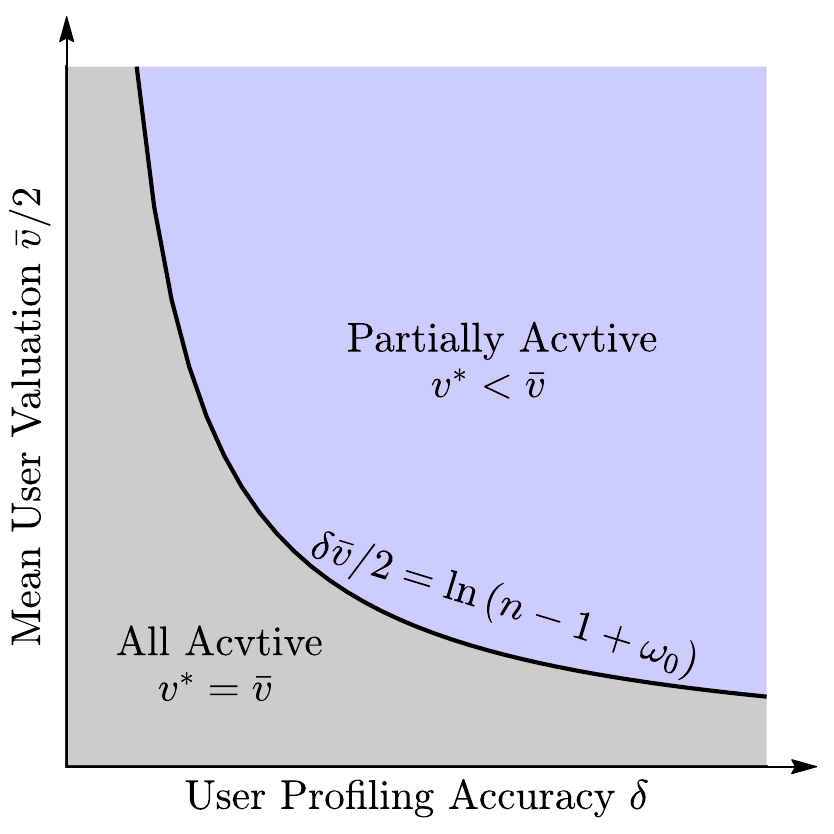}
  \vspace{-5pt}
  \caption{PBE under different values of mean user valuation $\bar{v}/2$ and user profiling accuracy $\delta$.}
  \label{system_parameter_impact}
  \vspace{-10pt}
\end{figure}

Fig. \ref{system_parameter_impact} depicts graphically the PBE characterized in Theorem \ref{Theorem_PBE}, which jointly considers the mean user valuation $\bar{v}/2$ and the user profiling accuracy $\delta$. In what follows, we explain the insights behind Theorem \ref{Theorem_PBE}.

\begin{itemize}
    \item In Case I \emph{(all active users)} with small mean valuation~$\bar{v}/2$, all users fully expose themselves in the social network. In this case, the purchase surplus is insignificant so that users do not worry too much about the loss from being profiled and charged personalized prices. In turn, this leads to the fact that the non-profiled users' valuations remain uniformly distributed. Hence, the seller simply sets the uniform price exactly the mean valuation to maximize her sale revenue from non-profiled users.
    
    \item In Case II \emph{(partially active users)} where the product valuation is large, the potential purchase loss from being profiled becomes significant. Thus, some high-valuation users would become inactive to avoid personalized pricing. In turn, the seller sets the uniform price higher than the mean valuation $\bar{v}/2$, aiming to extract more revenue from those non-profiled users with higher valuations.
\end{itemize}

In particular, we now shed light on one extreme case: \textit{perfect profiling} with $\delta=1$, where the seller can successfully profile any user as long as he is fully active in the social network.

\begin{proposition}\label{Proposition:perfectprofiling}
In the \textbf{perfect profiling} case with $\delta=1$, there exits a unique PBE characterized as follows. In Stage I, users' equilibrium valuation threshold is $v^*=\bar{v}$, i.e., all users are active in the social network with $x_i(v_i)=1,\forall i\in\mathcal{N}$. In Stage II, the seller sets the uniform price of
\begin{equation}
p_0^*=\bar{v}-\ln(n-1+\omega_0).
\end{equation}
\end{proposition}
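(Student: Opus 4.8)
The plan is to recover the perfect-profiling PBE as the boundary case $\delta=1$ of the analysis in Sections~\ref{Sec:alternate}--\ref{Sec:PBE}, while carefully handling the degeneracy that every fully active user is then profiled with certainty. First I would note that the structural results underlying Theorem~\ref{Theorem_PBE} survive at $\delta=1$: the convexity/polarization argument behind Belief~\ref{threshold_structure} and the threshold characterization of Theorem~\ref{Theorem:p0-vstar} never invoked $\delta<1$, and in Proposition~\ref{base_price} the third branch becomes vacuous because $\bar{v}/(2-\delta)=\bar{v}$ at $\delta=1$, so the seller's best response reduces to $p_0^*(v^*)=\bar{v}/2$ for $v^*\le\bar{v}/2$ and $p_0^*(v^*)=v^*$ for $\bar{v}/2<v^*\le\bar{v}$. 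Under the standing assumption $\bar{v}>2\ln(n-1+\omega_0)$ of footnote~\ref{footnote3}, we are in the analogue of Case~II of Theorem~\ref{Theorem_PBE}.

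Next I would pin down the equilibrium valuation threshold. At $\delta=1$ a user with $x_i=1$ is profiled with probability one and thus retains zero purchase surplus, so Theorem~\ref{Theorem:p0-vstar} reads $v^*=\min\{v^\dagger,\bar{v}\}$ with $v^\dagger=p_0+\tilde{J}(v^\dagger)$. Substituting the seller's best response $p_0^*(v^*)\in\{\bar{v}/2,\,v^*\}$ into this fixed-point relation and using $\tilde{J}(v^*)\ge\ln\omega_0>0$ rules out every $v^*<\bar{v}$: if $v^*\le\bar{v}/2$ then $v^*=\bar{v}/2+\tilde{J}(v^*)>\bar{v}/2$, a contradiction; if $\bar{v}/2<v^*<\bar{v}$ then $v^*=v^*+\tilde{J}(v^*)$, again impossible. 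Equivalently, the Case~II equation $2\delta v^*-\delta^2 v^*-2(1-\delta)\tilde{J}(v^*)=\delta\bar{v}$ collapses to $v^*=\bar{v}$ at $\delta=1$. Hence the only candidate is $v^*=\bar{v}$, i.e.\ every user is active with $x_i^*=1$; since the binomial $B(n-1,F(\bar{v}))=B(n-1,1)$ puts all its mass on $k=n-1$, this gives $\tilde{J}(\bar{v})=\ln(n-1+\omega_0)$.

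With $v^*=\bar{v}$ fixed, I would then determine the uniform price and verify the PBE conditions. Belief consistency requires the users' Stage~I prediction of $p_0$ to sustain the threshold $\bar{v}$, i.e.\ the marginal user with valuation $\bar{v}$ must weakly prefer being active: $\tilde{J}(\bar{v})\ge\bar{v}-p_0$, so $p_0^*\ge\bar{v}-\ln(n-1+\omega_0)$, and the tight value $p_0^*=\bar{v}-\ln(n-1+\omega_0)$ is precisely the $\delta\to1^-$ limit of the Case~II price $v^*-\tilde{J}(v^*)/\delta$. For the PBE checks: Stage~III purchase decisions are as in~(\ref{user: purchase decision}); in Stage~II the seller's posterior, Bayes-updated from the all-active profile, places all users in the profiled set, so her total revenue equals $\sum_i v_i$ independently of $p_0$ and $p_0^*$ is a best response; and in Stage~I, given $p_0^*$, the unique Bayesian Nash equilibrium of the social interaction game (Proposition~\ref{unique_user}, whose strategic-complementarity argument carries over to $\delta=1$) is $x_i=1$ for all $i$, since any user with $v_i<\bar{v}$ strictly prefers the social benefit $\ln(n-1+\omega_0)$ to the purchase surplus $v_i-p_0^*<\ln(n-1+\omega_0)$.

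The main obstacle is exactly this degeneracy: when $v^*=\bar{v}$ and $\delta=1$, the uniform price is offered to a probability-zero set of users, so revenue maximization alone does not single out $p_0^*$ (any $p_0\ge\bar{v}-\ln(n-1+\omega_0)$ supports an outcome-equivalent PBE). I expect to settle this by the limit-of-equilibria viewpoint: $p_0^*=\bar{v}-\ln(n-1+\omega_0)$ is the unique price obtained as $\delta\to1^-$ from the unique PBE of Theorem~\ref{Theorem_PBE}, hence the only one robust to an arbitrarily small profiling failure and consistent with the seller's belief derived along such perturbations. The remaining pieces---the convexity/polarization check at $\delta=1$ and the binomial evaluation $\tilde{J}(\bar{v})=\ln(n-1+\omega_0)$---are routine.
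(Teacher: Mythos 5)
Your proposal is correct and follows essentially the same route as the paper's own proof: you rule out $v^*<\bar{v}$ by the same contradiction between the seller's best response $p_0^*=\max\{v^*,\bar{v}/2\}\ge v^*$ (obtained here as the surviving branches of Proposition~\ref{base_price} at $\delta=1$, which is how the paper also handles the vanishing denominator) and the requirement $v^*>p_0$ from Theorem~\ref{Theorem:p0-vstar}, and you pin down $p_0^*\ge\bar{v}-\ln(n-1+\omega_0)$ via the marginal user's no-deviation condition exactly as the paper does. The only cosmetic difference is the tie-breaking rationale for selecting the lowest admissible price: you invoke the $\delta\to1^-$ limit of the unique PBE, whereas the paper simply picks the lowest value since all admissible prices are outcome-equivalent under perfect profiling.
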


Although both cases of no profiling (see Lemma \ref{Lem:noprofiling}) and perfect profiling here induce all users to be active online (i.e. $v^*=\bar{v}$), the seller's equilibrium uniform price $p_0^*$ is different. The difference lies in the role of the uniform price designed by the seller. When there is no profiling, the seller sets the uniform price purely to maximize sale revenue. As all users are non-profiled with $\delta=0$, the revenue can only be extracted through uniform pricing. In contrast, perfect profiling allows the seller to obtain more revenue through personalized pricing. Hence, the seller would set a uniform price to encourage more users to remain active online for better profiling. This accounts for the higher uniform price in the perfect profiling case (i.e., $p_0^*=\bar{v}-\ln(n-1+\omega_0)>\bar{v}/2$), which turns out to eliminate users' incentives to become inactive and bypass personalized pricing.

The juxtaposition of Theorem \ref{Theorem_PBE} with the two extreme cases (no profiling and perfect profiling) indicates that the users' equilibrium valuation threshold $v^*$ does not monotonically change in the user profiling accuracy $\delta$. Recall that under the perfect profiling case with $\delta=1$, the valuation threshold is $v^*=\bar{v}$, the same as in Case I of Theorem \ref{Theorem_PBE} with small $\delta$ as well as the no profiling benchmark with $\delta=0$. But the threshold turns out to be $v^*<\bar{v}$ for some $\delta\in(0,1)$ in between  (see Case II of Theorem~\ref{Theorem_PBE}). To explain such non-monotonicity in users' social activities, we further investigate the impact of user profiling accuracy $\delta$ in the following subsection.

\subsection{Impact of User Profiling Accuracy on PBE}\label{Subsection_PBEdelta}

We now analyze how the user profiling accuracy $\delta$ \mbox{affects} the equilibrium behaviors in this subsection. In particular, Proposition \ref{Prop:PBE_delta} (together with Fig. \ref{FIG:PBE_delta}) identifies the~specific~pattern of the non-monotonic changes in users' social activity~levels regarding~$\delta$. Hereafter, let $\hat{\delta}\triangleq2\ln\left(n-1+\omega_0\right)/\bar{v}$, and let $\tilde{\delta}$ denote the unique solution to $\partial v^*/\partial \delta=0$, where \eqref{vstar_PBE} defines $v^*$ as an implicit function of $\delta$.

\begin{figure}[h]
	\centering
    \vspace{-10pt}
	\begin{tikzpicture}
		\node (image) at (0,0) {\includegraphics[width=0.4\linewidth]{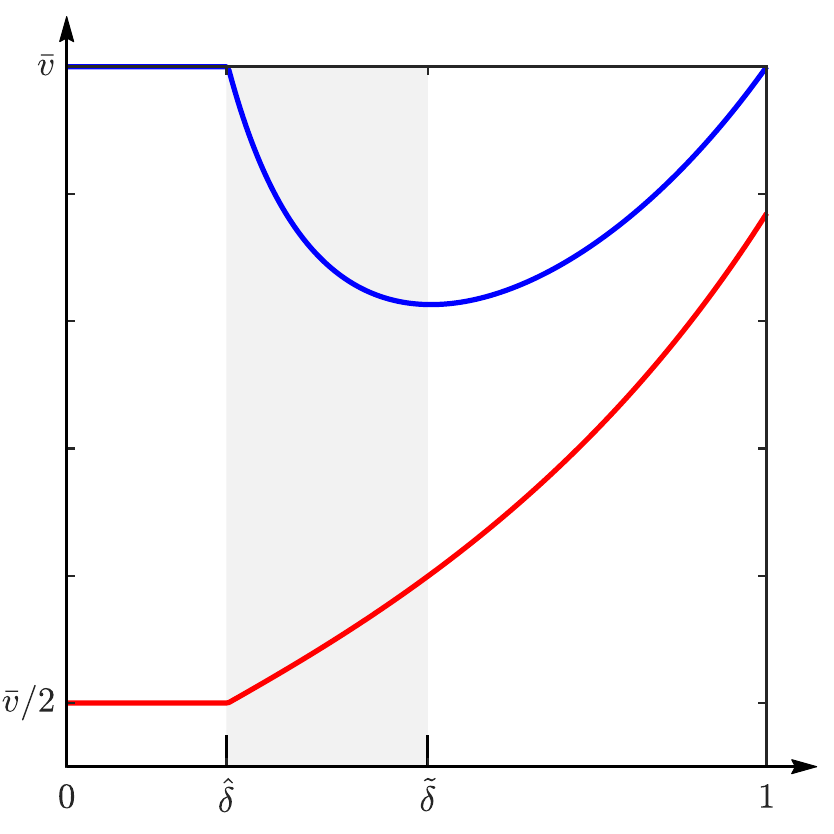}};
		\draw [latex-,red,thick] (1.3,-1.4) arc (12:100:0.6);
		\draw [latex-,blue,thick]  (-1.1,0.68)  arc (220:130:0.6);
		\node [blue,align=center]at (-0.7,0.45) {\scriptsize Users' Equilibrium};
        \node [blue,align=center]at (-0.7,0.15) {\scriptsize  Valuation Threshold $v^*$};
		\node [red] at (1.3,-1.6) {\scriptsize Seller's Uniform Price $p_{\tiny \text{0}}^*$};
		\node at (-2.13,-2.64) {\scriptsize  Regime I};
		\node at (-0.7,-2.64) {\scriptsize  Regime II};
		\node at (1.45,-2.64) {\scriptsize  Regime III};
		\node at (0,-3.4) {\scriptsize  User Profiling Accuracy $\delta$};
		\draw [-latex] (-2.68,-2.63) -- (-2.78,-2.63);
		\draw [-latex] (-1.6,-2.63) -- (-1.5,-2.63);
		\draw [-latex] (-1.27,-2.63) -- (-1.5,-2.63);
		\draw [-latex] (-0.13,-2.63) -- (0.1,-2.63);
		\draw [-latex] (0.8,-2.63) -- (0.1,-2.63);
		\draw [-latex] (2.1,-2.63) -- (2.8,-2.63);
	\end{tikzpicture}
    \vspace{-5pt}
	\caption{Users' equilibrium valuation threshold $v^\ast$ and the seller's uniform price $p_0^*$ versus user profiling accuracy $\delta$ in the PBE.}
	\label{FIG:PBE_delta}
    \vspace{-10pt}
\end{figure}

\begin{proposition}\label{Prop:PBE_delta}
The impact of user profiling accuracy $\delta$ on PBE is as follows.
\begin{itemize}
    \item [(i)] Users' equilibrium valuation threshold $v^\ast$ is non-monotonic in the profiling accuracy $\delta$. Specifically, it remains unchanged with $\delta$ over $[0,\hat{\delta})$, decreases with $\delta$ over $[\hat{\delta},\tilde{\delta})$, and increases over $[\tilde{\delta},1]$. 
    \item [(ii)] The seller's uniform price $p_0^*$ is non-decreasing in the profiling accuracy $\delta$. Specifically, it remains unchanged with $\delta$ over $[0,\hat{\delta})$, and then increases with $\delta$ over $[\hat{\delta},1]$.
\end{itemize}
\end{proposition}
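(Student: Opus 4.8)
The plan is to split on $\delta$. For $\delta\in[0,\hat\delta)$ we are in Case~I of Theorem~\ref{Theorem_PBE}, so $v^*\equiv\bar v$ and $p_0^*\equiv\bar v/2$; both are constant, which settles the first regime of (i) and of (ii). For $\delta\in[\hat\delta,1]$ we are in Case~II, where $v^*=v^*(\delta)$ solves \eqref{vstar_PBE}. I would write $H(v,\delta):=\delta(2-\delta)\,v-2(1-\delta)\tilde J(v)-\delta\bar v$, so that $H(v^*(\delta),\delta)=0$. Since $F$ is uniform, $\tilde J(v)=\mathbb E_{k\sim B(n-1,v/\bar v)}\ln(k+\omega_0)$ is a polynomial in $v$, so $H$ is smooth and the implicit function theorem applies as soon as $\partial H/\partial v\neq 0$ on the solution curve; moreover on this branch $p_0^*=\frac{\bar v-\delta v^*}{2(1-\delta)}$ by \eqref{base_price_equation3}, which I will differentiate once $v^{*\prime}:=dv^*/d\delta$ is understood.

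The linchpin is the \emph{concavity of $\tilde J$}. I would obtain it from the finite-difference identity $\frac{d^2}{d\phi^2}\mathbb E_{k\sim B(m,\phi)}h(k)=m(m-1)\,\mathbb E_{k\sim B(m-2,\phi)}\!\big[h(k{+}2)-2h(k{+}1)+h(k)\big]$ applied to $h(k)=\ln(k+\omega_0)$, whose second difference is negative; composing with $v\mapsto v/\bar v$ gives $\tilde J''\le 0$. Concavity together with $\tilde J(0)=\ln\omega_0\ge 0$ gives $\tilde J'(v^*)\le\tilde J(v^*)/v^*$, and inserting $2(1-\delta)\tilde J(v^*)=\delta[(2-\delta)v^*-\bar v]$ from \eqref{vstar_PBE} yields
\[
\frac{\partial H}{\partial v}\Big|_{v=v^*}=\delta(2-\delta)-2(1-\delta)\tilde J'(v^*)\ \ge\ \frac{\delta\bar v}{v^*}\ >\ 0,
\]
which also underlies the uniqueness asserted in Theorem~\ref{Theorem_PBE}. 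The implicit function theorem then gives $v^*\in C^1(\hat\delta,1)$ with $v^{*\prime}=-(\partial H/\partial\delta)/(\partial H/\partial v)$; computing $\partial H/\partial\delta=2(1-\delta)v+2\tilde J(v)-\bar v$ and again eliminating $\tilde J(v^*)$ via \eqref{vstar_PBE} gives $\partial H/\partial\delta\big|_{v=v^*}=\big[(1+(1-\delta)^2)v^*-\bar v\big]/(1-\delta)$, hence
\[
\operatorname{sign} v^{*\prime}=\operatorname{sign}\!\big(\bar v-(1+(1-\delta)^2)\,v^*\big).
\]

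To pin the sign pattern down I would use the boundary values $v^*(\hat\delta)=\bar v$ (by continuity with Case~I; one checks directly that $v=\bar v$ solves \eqref{vstar_PBE} at $\delta=\hat\delta$) and $v^*(1)=\bar v$ (Proposition~\ref{Proposition:perfectprofiling}), together with $v^*<\bar v$ on $(\hat\delta,1)$. Feeding $v^*\approx\bar v$ into the sign formula shows $v^{*\prime}<0$ just above $\hat\delta$, so $v^*$ first strictly decreases. For uniqueness of the turning point $\tilde\delta$ I would show every interior critical point is a strict local minimum: if $v^{*\prime}(\delta_0)=0$ then $(1+(1-\delta_0)^2)v^*(\delta_0)=\bar v$, and differentiating $-(\partial H/\partial\delta)/(\partial H/\partial v)$ (the numerator vanishing at $\delta_0$) gives $v^{*\prime\prime}(\delta_0)=2v^*(\delta_0)\,/\,\big(\partial H/\partial v\big)\big|_{v=v^*}>0$. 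A function equal to $\bar v$ at both endpoints of $[\hat\delta,1]$, strictly below in between, with only strict local minima as interior critical points, has exactly one such point, which is $\tilde\delta$; so $v^*$ strictly decreases on $[\hat\delta,\tilde\delta)$ and strictly increases on $(\tilde\delta,1]$, giving (i). For (ii) on $[\hat\delta,1]$, the total derivative of $p_0^*=\frac{\bar v-\delta v^*}{2(1-\delta)}$ is $\big[(\bar v-v^*)-\delta(1-\delta)v^{*\prime}\big]\big/\big(2(1-\delta)^2\big)$, i.e. the nonnegative direct effect $\bar v-v^*$ plus the effect through $v^*$. On $[\hat\delta,\tilde\delta]$, $v^{*\prime}\le 0$ makes both terms nonnegative. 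On $[\tilde\delta,1]$, $v^{*\prime}\ge 0$, and using $\partial H/\partial v\ge\delta\bar v/v^*$ I get $\delta(1-\delta)v^{*\prime}\le v^*\big(\bar v-(1+(1-\delta)^2)v^*\big)/\bar v$, after which the needed inequality $\delta(1-\delta)v^{*\prime}\le\bar v-v^*$ collapses to $(\bar v-v^*)^2+(1-\delta)^2(v^*)^2\ge 0$. Hence $p_0^*$ is strictly increasing on $(\hat\delta,1)$ and constant on $[0,\hat\delta)$, giving (ii).

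The hard part will not be the calculus but the concavity of $\tilde J$ (the finite-difference identity for binomial expectations of a concave function), on which every inequality above rests, and the global bookkeeping in (i): verifying $v^*(\hat\delta)=v^*(1)=\bar v$ with the correct one-sided behavior near $\hat\delta$, so that the Rolle-type argument together with the ``all interior critical points are strict minima'' fact forces a unique $\tilde\delta$. Everything else is routine algebra, driven by repeatedly using \eqref{vstar_PBE} to eliminate $\tilde J(v^*)$.
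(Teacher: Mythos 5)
Your proposal is correct --- I checked the key computations ($\partial H/\partial v|_{v=v^*}\ge \delta\bar v/v^*$ after eliminating $\tilde J(v^*)$ via \eqref{vstar_PBE}; the closed form $\partial H/\partial\delta|_{v=v^*}=[(1+(1-\delta)^2)v^*-\bar v]/(1-\delta)$; the value $v^{*\prime\prime}(\delta_0)=2v^*(\delta_0)/(\partial H/\partial v)>0$ at interior critical points; and the reduction of the part-(ii) inequality on $(\tilde\delta,1)$ to $(\bar v-v^*)^2+(1-\delta)^2(v^*)^2\ge 0$) and they all hold --- but the route differs from the paper's in three substantive ways. First, the paper proves concavity of $\tilde J$ by direct index-shifting of the binomial sums (Lemma \ref{claim:J_concave}), whereas you invoke the finite-difference identity for derivatives of Bernstein-type expectations; same content, cleaner packaging. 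Second, and most importantly, for the sign pattern of $v^{*\prime}$ the paper studies $\partial\Theta/\partial\delta=\bar v-2v^*+2\delta v^*-2\tilde J(v^*)$ as a function of the pair $(v^*,\delta)$, establishes that it increases in $\delta$ and decreases in $v^*$, and pins down its sign at $\delta=\hat\delta$ and $\delta\to 1$ over the feasible range $v^*\in(\bar v/(2-\delta),\bar v)$ to locate a unique zero $\tilde\delta$; you instead eliminate $\tilde J(v^*)$ entirely using the equilibrium identity, obtain the explicit sign rule $\operatorname{sign}v^{*\prime}=\operatorname{sign}(\bar v-(1+(1-\delta)^2)v^*)$, and close with a Rolle-type argument (equal boundary values $v^*(\hat\delta)=v^*(1)=\bar v$, initial decrease, and every interior critical point a strict local minimum). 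Your version avoids the somewhat delicate step in the paper of tracking the monotonicity of $\partial\Theta/\partial\delta$ along the equilibrium path, at the cost of needing $v^*$ to be twice differentiable (which the implicit function theorem supplies, since $\tilde J$ is polynomial under the uniform distribution). Third, for $p_0^*$ increasing on $(\tilde\delta,1)$ the paper switches to the representation $p_0^*=v^*-\tilde J(v^*)/\delta$ and invokes the separate bound $\tilde J'(v^*)<\delta$ (Lemma \ref{jprime}), while you stay with \eqref{base_price_equation3} and use the quantitative lower bound on $\partial H/\partial v$; your sum-of-squares identity is self-contained and arguably sharper, whereas the paper's Lemma \ref{jprime} is reused elsewhere in its appendix. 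Both arguments yield the same $\tilde\delta$ (the unique zero of $v^{*\prime}$) and the same conclusions.
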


Fig. \ref{FIG:PBE_delta} provides a graphical illustration of Proposition~\ref{Prop:PBE_delta}, where the blue and red curves denote users' equilibrium~valuation threshold $v^*$ and the seller's uniform price $p_0^\ast$ in the~PBE, respectively. In particular, we divide the feasible range of the possible profiling accuracy $\delta\in[0,1]$ into three regimes, as discussed below.

\begin{itemize}
\item \mbox{\textbf{Regime I with Low Accuracy $\delta\in[0,\hat{\delta}]$}} \emph{(all active users)}: 
The profiling accuracy $\delta$ is low in this regime, such that it is hard to profile users even when they~are fully active online. All users thus choose to remain~active in the social network. When the seller updates her belief on the non-profiled users' valuations, this results in~a distribution that remains uniform. Hence, the seller keeps the uniform price just the mean user valuation as in the no profiling benchmark ($\delta=0$).

\item \textbf{Regime II with Medium Accuracy $\delta\in(\hat{\delta},\tilde{\delta}]$} \emph{(partially active users with a decreasing $v^*$ in $\delta$)}: As the accuracy $\delta$ continues to increase, users are more likely to be~profiled and charged personalized prices when they are active~in the social network. This is especially true for users~with high valuations, who would suffer a great loss in purchase surplus and hence become inactive to thwart the seller's profiling. Consequently, the users' equilibrium valuation threshold $v^\ast$ decreases with $\delta$. This makes the distribution of the non-profiled users' valuations skew towards higher values as $\delta$ increases. Hence, the seller raises the uniform price to extract more revenue from those non-profiled users with higher valuations.

\item \textbf{Regime III with High Accuracy $\delta\in(\tilde{\delta},1]$} \emph{(partially~active users with an increasing $v^*$ in $\delta$)}: When $\delta$ becomes high, it is highly likely to successfully profile an active user. Thus, the seller has the incentive to stimulate users' social activities for better profiling. Specifically, the seller will keep raising the uniform price $p_0^*$ under larger $\delta$. This makes users suffer less from being profiled and charged personalized prices, and those inactive users in Regime II gradually choose to be active and enjoy social interactions again. Note that the seller's increasing uniform price $p_0^\ast$ also helps capture revenue from those remaining non-profiled users with high valuations.
\end{itemize}

\subsection{Welfare Implications for the Seller and Users}\label{Sec:welfare}

Next, we focus on the welfare implications of user profiling for both the seller and users. 

\begin{figure}[h]
    \vspace{-10pt}
	\centering
	\subfigure[The Seller's Expected Revenue]{
		\begin{minipage}{0.45\linewidth}
			\centering
            \includegraphics[width=0.8\linewidth]{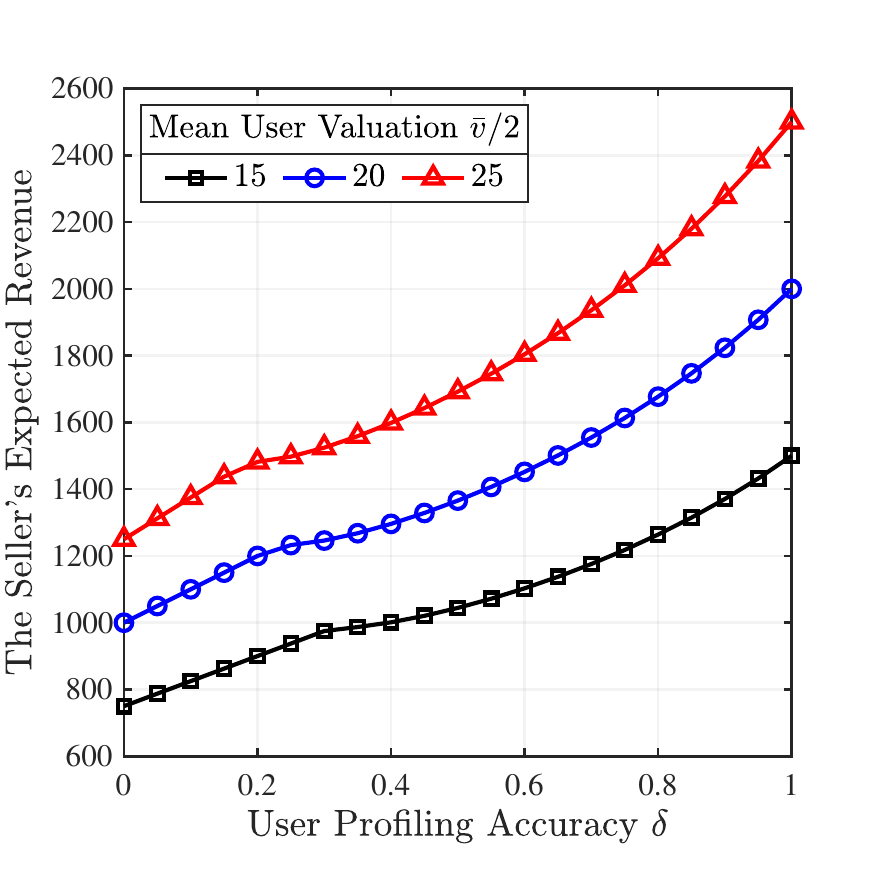}
			\label{fig_revenue_vbar}
		\end{minipage}
	}
	\subfigure[Revenue Proportions ($\bar{v}/2=20$)]{
		\begin{minipage}{0.45\linewidth}
			\centering
            \includegraphics[width=0.8\linewidth]{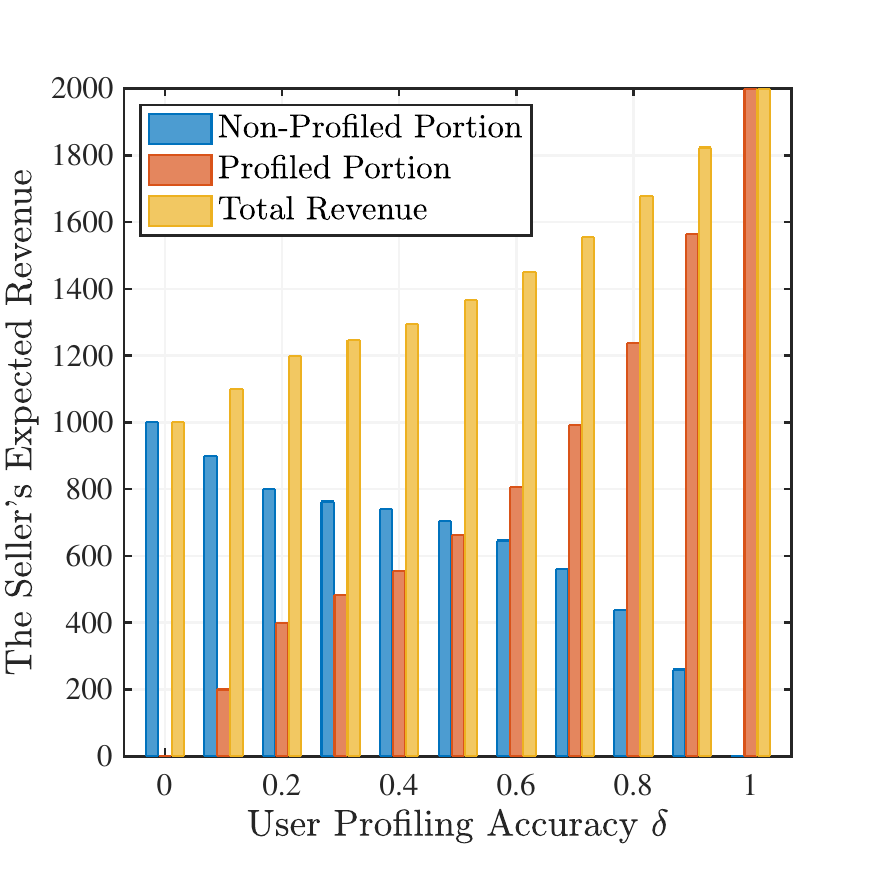}
			\label{fig_revenue_portions}
		\end{minipage}
	}
	\caption{The seller's expected revenue versus user profiling accuracy $\delta$.~We~consider $n=100$ users with uniformly distributed valuations, and we vary the mean user valuation $\bar{v}/2$ in set $\{15,20,25\}$ in Fig. \ref{fig_revenue_vbar} while fixing it as $20$~in Fig. \ref{fig_revenue_portions}.}
    \label{fig_revenue}
\end{figure}

Starting with the seller, we numerically analyze the changes in her expected revenue as the user profiling accuracy $\delta$~varies, as depicted in Fig. \ref{fig_revenue}. Specifically, we observe that the seller's expected revenue initially rises in a concave manner, then~increases convexly at higher $\delta$ values, as shown in Fig. \ref{fig_revenue_vbar}. This pattern can be attributed to the shifting proportions of revenue contributed by profiled and non-profiled users,~as~illustrated in Fig. \ref{fig_revenue_portions}. Notably, when $\delta$ is in a moderate range (e.g.,~$0.2$ to $0.5$ for $\bar{v}/2=20$), an increase in $\delta$ leads to more~users~going inactive to evade profiling (see Proposition~\ref{Prop:PBE_delta} and Fig. \ref{FIG:PBE_delta}). This inactivity then slows the revenue growth from profiled users, and consequently, the overall revenue growth is also affected (see Fig. \ref{fig_revenue_portions}).

Moving forward, we turn our attention to user implications, where Fig. \ref{fig_agguserpayoff} reveals that the average user payoff diminishes as the user profiling accuracy $\delta$ improves. This is because better user profiling enables the seller to exploit more information about the users from their social activities. However, this trend does not hold across all users, as suggested in the following proposition.

\begin{figure}[h]
	\centering
    \vspace{-10pt}
	\includegraphics[width=0.4\linewidth]{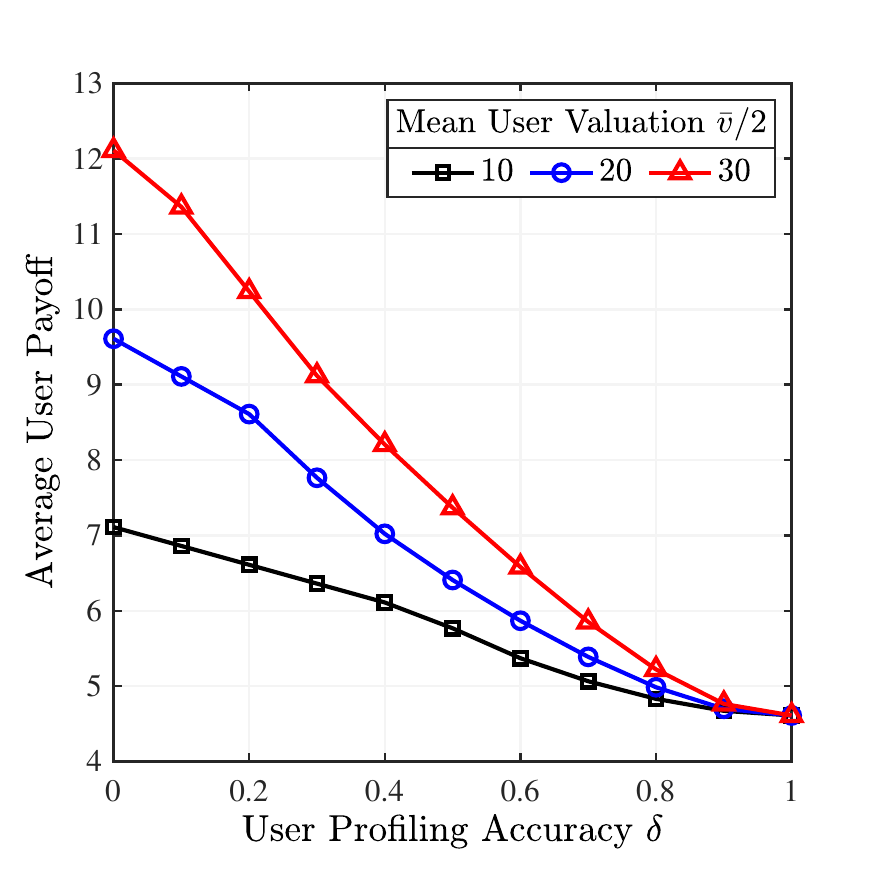}
    \vspace{-5pt}
	\caption{Average User Payoff versus User Profiling Accuracy $\delta$. Here,~we~consider $n=100$ users with uniformly distributed valuations and vary the mean user valuation $\bar{v}/2$ in set $\{10,20,30\}$.}
	\label{fig_agguserpayoff}
    \vspace{-5pt}
\end{figure}

\begin{proposition}\label{prop_userpayoff}
An individual user $i$'s payoff increases with the user profiling accuracy $\delta$ if \footnote{Here, the user valuation threshold $\hat{v}$ and the profiling accuracy threshold function $\delta^\dagger(\cdot)$ are given in Appendix \ref{Proof:Userpayoff_Increse}. In particular, $\delta^\dagger(v_i)$ is non-decreasing in $v_i$.\label{footnote4}}
\begin{equation}\label{prop_userpayoff_condition}
v_i<\hat{v} \text{ and } \delta > \delta^\dagger(v_i),
\end{equation}
i.e., for any low-valuation user $i$ with a valuation smaller than a threshold $v_i<\hat{v}$ and when the profiling accuracy is high within the range of $\left[\delta^\dagger(v_i),1\right]$.
\end{proposition}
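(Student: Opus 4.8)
The plan is to write user $i$'s interim expected payoff --- conditional on his valuation $v_i$, and taken in expectation over the Bernoulli profiling outcome and over the other users' valuations --- as an explicit function $u_i(\delta)$ using the PBE characterization in Theorem~\ref{Theorem_PBE}, and then to show that $u_i$ is strictly increasing on $[\delta^\dagger(v_i),1]$ under~\eqref{prop_userpayoff_condition}. First I would observe that the hypothesis $v_i<\hat v$ (with $\hat v$ identified in the next paragraph) keeps such a user \emph{active} over the relevant range of $\delta$: in all cases of Theorem~\ref{Theorem_PBE} we have $v^*(\delta)>p_0^*(\delta)$ and $p_0^*(\delta)\ge\bar v/2$ (in Case~II, $p_0^*(\delta)=v^*(\delta)-\tilde J(v^*(\delta))/\delta>\bar v/2$ and $v^*(\delta)>p_0^*(\delta)$ since $\tilde J(v^*)/\delta>0$), so whenever $p_0^*(\delta)\ge v_i$ we get $v_i\le p_0^*(\delta)<v^*(\delta)$ and user $i$ plays $x_i^*=1$. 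His equilibrium payoff is then
\begin{equation*}
u_i(\delta)=\tilde{J}\!\left(v^*(\delta)\right)+(1-\delta)\max\{v_i-p_0^*(\delta),0\},
\end{equation*}
where the first term is the expected social network benefit of an active user defined in~\eqref{expected_social_network} and the second is the expected purchase surplus over the two profiling outcomes.

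Next I would pin down the two thresholds in the statement. I would set $\hat v\triangleq\bar v-\ln(n-1+\omega_0)$, which by Proposition~\ref{Proposition:perfectprofiling} equals $\lim_{\delta\to 1}p_0^*(\delta)$; the map $\delta\mapsto p_0^*(\delta)$ is continuous (as follows from the implicit characterization~\eqref{vstar_PBE}, continuity being already available from the analysis underlying Theorem~\ref{Theorem_PBE}) and non-decreasing on $[0,1]$ by Proposition~\ref{Prop:PBE_delta}(ii), constant at $\bar v/2$ on $[0,\hat\delta]$ and strictly increasing on $[\hat\delta,1]$. Hence for each $v_i<\hat v$ there is a well-defined $\delta_0(v_i)\in[0,1)$ with $p_0^*(\delta)<v_i$ for $\delta<\delta_0(v_i)$ and $p_0^*(\delta)\ge v_i$ for $\delta\ge\delta_0(v_i)$ (with $\delta_0(v_i)=0$ whenever $v_i\le\bar v/2$). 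I would then define $\delta^\dagger(v_i)\triangleq\max\{\tilde\delta,\delta_0(v_i)\}<1$. Since $p_0^*$ is non-decreasing, $\delta_0$ is non-decreasing in $v_i$, hence so is $\delta^\dagger$, as asserted in footnote~\ref{footnote4}.

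The conclusion then follows from two monotonicity facts already in hand. For $\delta\in[\delta^\dagger(v_i),1]$ we have $\delta\ge\delta_0(v_i)$, so $p_0^*(\delta)\ge v_i$, the $\max$-term in $u_i(\delta)$ vanishes, and $u_i(\delta)=\tilde J(v^*(\delta))$. Moreover $\delta\ge\tilde\delta$ places us in the high-accuracy regime of Proposition~\ref{Prop:PBE_delta}(i), where $v^*(\delta)$ is strictly increasing in $\delta$. Finally $\tilde J(v)=\mathbb{E}_{k\sim B(n-1,F(v))}\ln(k+\omega_0)$ is strictly increasing in $v$: a larger $v$ raises the success probability $F(v)$, the resulting binomial first-order stochastically dominates (in $k$) the one with smaller $F(v)$, and $k\mapsto\ln(k+\omega_0)$ is increasing. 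Composing, $u_i(\delta)=\tilde J(v^*(\delta))$ is strictly increasing on $[\delta^\dagger(v_i),1]$, which is exactly the claim.

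The main obstacle is not a single hard estimate --- the non-monotone behavior of $v^*(\delta)$ and the monotone behavior of $p_0^*(\delta)$ in $\delta$ are already delivered by Proposition~\ref{Prop:PBE_delta} --- but rather the case bookkeeping needed to justify the simple form of $u_i(\delta)$ and to choose $\hat v$ and $\delta^\dagger(v_i)$ so that the region in~\eqref{prop_userpayoff_condition} is nonempty and the monotonicity genuinely holds on it. In particular one must check the complementary subcases where the conclusion fails, which is precisely why both conditions in~\eqref{prop_userpayoff_condition} are needed: if $v_i\ge\hat v$, then $p_0^*(\delta)<v_i$ for every $\delta<1$, so $(1-\delta)(v_i-p_0^*(\delta))$ never drops out and can render $u_i$ decreasing; and if $\delta<\delta^\dagger(v_i)$, then either $v^*(\delta)$ is constant or decreasing (low/medium accuracy, Regimes~I--II) or the still-positive $\max$-term is decreasing, so $u_i$ need not be monotone. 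Making the behavior at the junctions $\delta=\hat\delta$, $\delta=\tilde\delta$ and at $\delta\to 1$ precise --- especially continuity of $v^*(\cdot)$ and $p_0^*(\cdot)$ there and the limit $p_0^*(\delta)\uparrow\bar v-\ln(n-1+\omega_0)$ --- is the portion that requires care.
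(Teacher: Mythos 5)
Your proposal is correct and follows essentially the same route as the paper's proof: it identifies $\hat v=\bar v-\ln(n-1+\omega_0)$ and $\delta^\dagger(v_i)=\max\{\tilde\delta,\,\delta_0(v_i)\}$ exactly as in Appendix~\ref{Proof:Userpayoff_Increse} (where the paper writes the same object piecewise via $\tilde p_0^*$), shows the conditions force $v_i\le p_0^*(\delta)<v^*(\delta)$ so the equilibrium payoff reduces to $\tilde J(v^*(\delta))$ as in Lemma~\ref{Cor:userpayoff}, and concludes by composing the monotonicity of $v^*(\delta)$ on $[\tilde\delta,1]$ from Proposition~\ref{Prop:PBE_delta} with the monotonicity of $\tilde J$. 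The only cosmetic difference is that you justify $\tilde J$ increasing by first-order stochastic dominance rather than the explicit derivative computation of Lemma~\ref{claim:J_concave}.
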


Proposition \ref{prop_userpayoff} shows that some low-valuation users can~actually benefit from an increased profiling accuracy. This happens when the profiling accuracy $\delta$ is relatively high. In this case, as $\delta$ increases, more users become active in the social network due to the increasing uniform price (see Proposition \ref{Prop:PBE_delta}). Therefore, higher profiling accuracy results in more significant social network benefits, especially for those low-valuation users who are always active online and receive zero purchase surplus due to personalized pricing or an unaffordable uniform price.

Finally, we may question whether users benefit from knowing the seller's data access and profiling practices in social networks. With such awareness, some users may reduce their social activities and avoid personalized pricing in the equilibrium. To address this concern, we compare our PBE~in~Theorem \ref{Theorem_PBE} to a no-awareness benchmark, where the seller secretly accesses users' social activity data and performs user profiling without their knowledge or consent.

\begin{figure}[h]
\centering
\vspace{-15pt}
\begin{tikzpicture}
\node (image) at (0,0) {\includegraphics[width=0.4\linewidth]{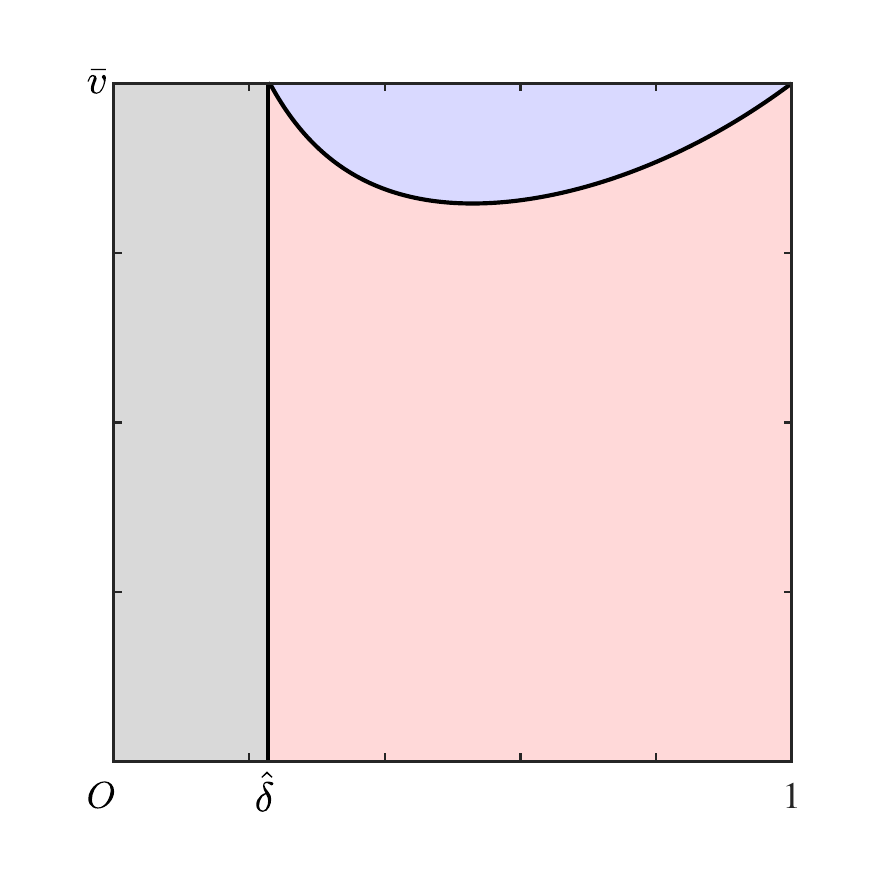}};
\node [align=center] at (0.43,2.3) {\footnotesize Aware $>$ Unaware};
\node [align=center] at (0.5,-0.1) {\footnotesize Aware $<$ Unaware \\\footnotesize for User Payoffs};
\node [align=center] at (-1.87,0) {\footnotesize Aware\\\footnotesize $=$\\\footnotesize Unaware};
\node at (0,-3.05) {\footnotesize User Profiling Accuracy $\delta$};
\node at (-2.85,0) {\footnotesize\rotatebox{90}{User Valuation $v_i$}};
\end{tikzpicture}
\vspace{-5pt}
\caption{Impact of user awareness: comparison of all users' payoffs, before~and after becoming aware of the seller's profiling in social networks under different profiling accuracy levels $\delta$. Here, the boundary function that separates the Blue and the Red regions is $2\delta v_i=2\ln(n-1+\omega_{\footnotesize\text{0}})+2p_{\footnotesize\text{0}}^*-(1-\delta)\bar{v}$.}
\label{fig_awareness}
\vspace{-10pt}
\end{figure}

\begin{proposition}\label{Prop:Myopic-Compare}
Compared to the benchmark of users lacking awareness, an individual user $i$'s payoff reduces after becoming aware of the seller's profiling in social networks if and only if 
\begin{equation}
\hat{\delta}<\delta<1 \text{ and } v_i<v^\dagger\triangleq\frac{2\ln\left(n-1+\omega_0\right)+2p_0^*-(1-\delta)\bar{v}}{2\delta},
\end{equation}
as illustrated in Fig.~\ref{fig_awareness}.
\end{proposition}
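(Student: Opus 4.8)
The plan is to compare each user's PBE payoff from Theorem~\ref{Theorem_PBE} with his payoff in the no-awareness benchmark, which I first need to pin down. In the benchmark a user who is unaware of profiling chooses $x_i$ only to maximize his social network benefit $x_i\ln(\sum_{j\ne i}x_j+\omega_0)$; since $\omega_0>1$ this logarithm is strictly positive for every profile of the others, so $x_i=1$ is strictly dominant and \emph{all} users are active. Anticipating this, the seller profiles each user independently with probability $\delta$, and since profiling success is independent of the valuation her posterior over any non-profiled user's valuation stays uniform on $[0,\bar{v}]$, so maximizing $p_0(1-p_0/\bar{v})$ gives the uniform price $\bar{v}/2$. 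I would thus record the benchmark payoff as $\pi_i^{\mathrm{unaware}}=\ln(n-1+\omega_0)+(1-\delta)\max\{v_i-\bar{v}/2,0\}$.

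Next I would dispose of the boundary cases that force $\hat{\delta}<\delta<1$. When $\delta\le\hat{\delta}$ the PBE is Case~I of Theorem~\ref{Theorem_PBE}, i.e.\ $v^*=\bar{v}$ and $p_0^*=\bar{v}/2$, which is exactly the benchmark outcome, so no user's payoff changes; and at $\delta=1$ every active user is profiled with certainty in both the PBE (Proposition~\ref{Proposition:perfectprofiling}) and the benchmark, so payoffs again coincide. Hence a user can be strictly worse off only when $\hat{\delta}<\delta<1$, which puts us in Case~II, where $v^*<\bar{v}$, $p_0^*=v^*-\tilde{J}(v^*)/\delta>\bar{v}/2$, and $\tilde{J}(v^*)<\ln(n-1+\omega_0)$ (the last because $v^*<\bar{v}$ makes $F(v^*)<1$, so with positive probability fewer than $n-1$ other users are active).

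For $\delta\in(\hat{\delta},1)$ I would split on the activity threshold. A user with $v_i\le v^*$ is active, so $\pi_i^{\mathrm{aware}}=\tilde{J}(v^*)+(1-\delta)\max\{v_i-p_0^*,0\}$; comparing termwise with $\pi_i^{\mathrm{unaware}}$ and using $\tilde{J}(v^*)<\ln(n-1+\omega_0)$ together with $p_0^*>\bar{v}/2$ shows such a user is \emph{always} strictly worse off. A user with $v_i>v^*$ is inactive, hence never profiled, and (since $v_i>v^*>p_0^*$) buys at $p_0^*$, so $\pi_i^{\mathrm{aware}}=v_i-p_0^*$ while $\pi_i^{\mathrm{unaware}}=\ln(n-1+\omega_0)+(1-\delta)(v_i-\bar{v}/2)$; the gap $\pi_i^{\mathrm{aware}}-\pi_i^{\mathrm{unaware}}=\delta v_i-p_0^*-\ln(n-1+\omega_0)+(1-\delta)\bar{v}/2$ is strictly increasing in $v_i$ and vanishes exactly at the stated $v^\dagger$, so such a user is worse off if and only if $v_i<v^\dagger$. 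To make these two pieces glue into the single condition ``$v_i<v^\dagger$'', I would finally check $v^\dagger>v^*$: substituting $p_0^*=(\bar{v}-\delta v^*)/(2(1-\delta))$ into the definition of $v^\dagger$ and eliminating $\tilde{J}(v^*)$ via \eqref{vstar_PBE} reduces this to $\delta(2-\delta)(v^\dagger-v^*)=(2-\delta)\ln(n-1+\omega_0)+\tfrac{\delta(1-\delta)}{2}\bar{v}-(3-2\delta)\tilde{J}(v^*)$, which by $\tilde{J}(v^*)\le\ln(n-1+\omega_0)$ is at least $(1-\delta)\bigl(\tfrac{\delta\bar{v}}{2}-\ln(n-1+\omega_0)\bigr)>0$ in Case~II. (No upper bound on $v^\dagger$ is needed: if $v^\dagger\ge\bar{v}$ the statement simply reads ``all users worse off'', which the above analysis already confirms.)

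The hard part will be this last consistency check $v^\dagger>v^*$. Because $v^\dagger$ is defined through the endogenous price $p_0^*$, crude bounds on $v^*$ in terms of $\bar{v}$ and $\delta$ alone are too loose near $\delta=\hat{\delta}$, so one must feed in the closed-form relation \eqref{vstar_PBE} and then use the coarse but sufficient inequality $\tilde{J}(v^*)\le\ln(n-1+\omega_0)$; getting the algebra to collapse to the clean lower bound above is the delicate bookkeeping. A secondary point to handle carefully is that $\pi_i^{\mathrm{aware}}$ is continuous at $v_i=v^*$ --- which is automatic since the marginal user is indifferent between being active and inactive, so $\tilde{J}(v^*)=\delta(v^*-p_0^*)$ --- guaranteeing the glued threshold condition has neither a gap nor an overlap with the activity split.
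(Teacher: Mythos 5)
Your proof is correct and follows the same overall structure as the paper's: identify the no-awareness benchmark (all users active, $p_0^{\text{no}}=\bar{v}/2$, payoff $\ln(n-1+\omega_0)+(1-\delta)\max\{v_i-\bar{v}/2,0\}$), dispose of $\delta\le\hat{\delta}$ and $\delta=1$ by showing the two payoffs coincide, and then in Case II split on the activity threshold, showing active users ($v_i\le v^*$) always lose via $\tilde{J}(v^*)<\ln(n-1+\omega_0)$ and $p_0^*>\bar{v}/2$, while for inactive users the payoff gap is affine increasing in $v_i$ with root exactly $v^\dagger$. The one place you genuinely diverge is the final consistency checks. For $v^\dagger>v^*$ you do an explicit algebraic elimination via \eqref{vstar_PBE} (which I verified: the identity $\delta(2-\delta)(v^\dagger-v^*)=(2-\delta)\ln(n-1+\omega_0)+\tfrac{\delta(1-\delta)}{2}\bar{v}-(3-2\delta)\tilde{J}(v^*)$ and the lower bound $(1-\delta)(\delta\bar{v}/2-\ln(n-1+\omega_0))>0$ in Case II both check out); the paper gets this for free from continuity of the gap at $v_i=v^*$ (via the marginal user's indifference $\tilde{J}(v^*)=\delta(v^*-p_0^*)$) together with its negativity there, which you also note in your closing remark --- either route works. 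More significantly, you deliberately skip the upper bound $v^\dagger<\bar{v}$, which is where the paper invests nearly all of its technical effort (Lemma \ref{lemma_v_i}, resting on $\tilde{J}'(v^*)<\delta/2$ and the long combinatorial inequality of Lemma \ref{lemma_1/3}). You are right that the biconditional as literally stated survives without it, so your proof is complete for the proposition; but it then does not establish that the set of users who \emph{benefit} from awareness (the blue region of Fig.~\ref{fig_awareness}) is non-empty, which is part of the substantive message the paper's proof delivers and which the surrounding discussion relies on. If you wanted that conclusion too, you would still need something like the paper's Lemma \ref{lemma_v_i}.
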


As shown in Fig. \ref{fig_awareness}, users' awareness does not affect their payoffs when the profiling accuracy $\delta$ is small (see~the~Gray~region). Once $\delta$ is larger than a threshold $\hat{\delta}$, most users (except those with a high enough valuation $v_i$) are worse off after they become aware of the seller's profiling in social networks (see the Red region). Specifically, the fact that some high-valuation users become inactive with an increased profiling awareness imposes a social loss on users with low valuations. Moreover, non-profiled users who successfully avoid personalized pricing still experience a reduction in purchase surplus. This is because the seller tends to set a higher uniform price in response to users' awareness to stimulate their social activities.\footnote{In the no-awareness benchmark, all users remain fully active online in the equilibrium, and the valuations of those non-profiled users remain uniformly distributed over $[0,\bar{v}]$. The seller hence sets a uniform price of $p_0^{\text{no}}=\bar{v}/2$ in this benchmark. Compared to the uniform price $p_0^*$ in PBE of Theorem \ref{Theorem_PBE}, we have $p_0^*\ge  p_0^{\text{no}}$.}

On the other hand, only users with high enough valuations will benefit from their awareness of the seller's profiling (see the Blue region in Fig. \ref{fig_awareness}). Notice that these users~become~inactive online with awareness (i.e., $v^*<v^\dagger$), and thus the seller can only charge them a low uniform price $p_0$ instead of a high personalized price $v_i$. For such users with sufficiently high valuations, this results in a significant gain in their purchase surplus to offset the social cost of both their own and other users' endeavors to avoid the seller's profiling.

\section{Effect of Social Network Benefits}\label{Sec:Network}

Building on the well-established PBE in Section \ref{Sec:PBE}, we now investigate the effect of social network benefits on equilibrium behaviors and welfare outcomes. Specifically, we~begin~by~examining the effect on users' strategic information disclosure in Section~\ref{subsec:effect_user}, followed by an evaluation of the seller's benefits from leveraging users' social network profiles in Section \ref{subsec:effect_seller}.

\subsection{Effect on Users' Information Disclosure}\label{subsec:effect_user}

To gauge the effect of social network benefits, we introduce a no-social-network benchmark for comparisons, where~users determine whether to disclose their personal information (e.g., online purchase records and web browsing history) to~the~seller without considering any social network benefits. We adopt a simple voluntary-disclosure variant of the three-stage model~in Section \ref{Sec:Model}, replacing the users' decisions to be active/inactive in the social network in our original model as their individual choices regarding voluntary disclosure. In this context, users' decision-making becomes decoupled from each other, thereby eliminating the positive network externality among their information disclosure.

Similar to our analysis in Section~\ref{Sec:alternate}, we alternate backward induction with forward induction and characterize the PBE for this new benchmark in the following proposition.

\begin{proposition}\label{Benchmark:no-social-network}
In the no-social-network benchmark, there~exists a unique user-optimal PBE as follows.\footnote{Notice that multiple equilibria exist in this no-social-network benchmark, and we will focus on the user-optimal equilibrium to provide clear insights, which maximizes the user's ex-ante payoff across all equilibria. This refinement, known as the sender-optimal equilibrium, is widely adopted in the literature on incomplete information games (e.g., \cite{mailath1993belief}).}
\begin{itemize}
\item In Stage I, the users' equilibrium valuation threshold is $v^\star=\bar{v}/{2}$ such that $x_i^\star(v_i)=\mathbbm{1}(v_i\le v^\star)$,~i.e., any user $i$ with a valuation $v_i>v^\star$ chooses to conceal his private information ($x^\star_i=0$) in the online platform. In Stage II, the seller sets a uniform price of $p_0^\star=\bar{v}/{2}$.
\item The equilibrium valuation threshold in this benchmark~is strictly lower than that in the three-stage model with~social network benefits (see Theorem \ref{Theorem_PBE}), i.e., $v^\star < v^*$.
\end{itemize}
\end{proposition}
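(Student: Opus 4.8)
The plan is to mirror the alternating backward/forward analysis of Section~\ref{Sec:alternate}, but now exploiting the fact that users' disclosure decisions are decoupled. First I would set up the voluntary-disclosure variant formally: in Stage~I each user $i$ picks $x_i\in\{0,1\}$ (disclose or not); if $x_i=1$ he is profiled with probability $\delta$ (take $\alpha$ irrelevant since $x_i\in\{0,1\}$, so $\lambda_i=\delta x_i=\delta$) and then charged $p_i=v_i$; otherwise he faces the uniform price $p_0$. Since there is no social network term $J_i$, user $i$'s expected payoff from disclosing is $(1-\delta)\max\{v_i-p_0,0\}$ versus $\max\{v_i-p_0,0\}$ from not disclosing. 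For any $v_i$ with $v_i>p_0$ the user strictly prefers not to disclose (he loses a $\delta$-fraction of a positive surplus); for $v_i\le p_0$ he is indifferent. This already yields the threshold structure with $v^\star$ effectively pinned at $p_0$ once we break indifference in the user-optimal way, so \emph{forward induction here is almost trivial} compared with the coupled game — I would state this as the analogue of Belief~\ref{threshold_structure} and Theorem~\ref{Theorem:p0-vstar}, observing that the summation term $\tfrac1\delta\sum_m\binom{n-1}{m}\ln(m+\omega_0)(\cdots)$ in \eqref{vstar_p0_summation} is replaced by $0$, so $v^\dagger=p_0$ and $v^\star=\min\{p_0,\bar v\}$.

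Next I would run the backward step for the seller's uniform pricing. Given threshold $v^\star$, the posterior densities \eqref{profiled_belief}–\eqref{nonprofiled_belief} carry over verbatim (the derivation there only used the threshold structure and Bayes' rule, not the source of the threshold). So Proposition~\ref{base_price} applies: $p_0^*(v^\star)$ equals $\bar v/2$ when $v^\star\le\bar v/2$, equals $v^\star$ on the middle branch, and equals $(\bar v-\delta v^\star)/(2(1-\delta))$ on the top branch. Now I would impose the fixed-point/consistency condition $v^\star=\min\{p_0^*(v^\star),\bar v\}$. Checking the branches: if $v^\star\le\bar v/2$ then $p_0^*=\bar v/2$, and consistency forces $v^\star=\bar v/2$ (the boundary case), which is self-consistent; the middle branch $p_0^*=v^\star$ would make every $v^\star\in(\bar v/2,\bar v/(2-\delta}]$ a fixed point, and the top branch gives $v^\star=(\bar v-\delta v^\star)/(2(1-\delta))$, i.e. $v^\star=\bar v/(2-\delta)>\bar v/2$. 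Thus multiple equilibria exist — exactly the multiplicity acknowledged in the footnote — and I would then invoke the user-optimal (sender-optimal) refinement: the ex-ante user payoff is $\mathbb{E}_{v_i}[(1-\delta)\max\{v_i-p_0^*,0\}\mathbbm1(v_i\le v^\star)+\max\{v_i-p_0^*,0\}\mathbbm1(v_i>v^\star)]$, which is decreasing in $p_0^*$; since $p_0^*$ is (weakly) minimized at the $v^\star=\bar v/2$ equilibrium with $p_0^\star=\bar v/2$, that is the unique user-optimal PBE. This establishes the first bullet.

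For the second bullet, $v^\star<v^*$, I would compare $v^\star=\bar v/2$ with the threshold $v^*$ from Theorem~\ref{Theorem_PBE}. In Case~I of that theorem $v^*=\bar v>\bar v/2=v^\star$, done. In Case~II, $v^*$ solves \eqref{vstar_PBE}: $2\delta v^*-\delta^2 v^*-2(1-\delta)\tilde J(v^*)=\delta\bar v$, equivalently $v^*=\bar v/(2-\delta)+\tfrac{2(1-\delta)}{\delta(2-\delta)}\tilde J(v^*)$; since $\tilde J(v^*)=\mathbb{E}_k\ln(k+\omega_0)\ge\ln\omega_0>0$ (using $\omega_0>1$) and $\bar v/(2-\delta)>\bar v/2$, we get $v^*>\bar v/2=v^\star$ strictly. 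The main obstacle — and the point I would be most careful about — is the multiplicity argument: I must verify (a) that the candidate fixed points actually satisfy \emph{both} sequential rationality (via Proposition~\ref{base_price}) and belief consistency simultaneously, handling the boundary $v^\star=\bar v/2$ where \eqref{base_price_equation1} and \eqref{base_price_equation2} meet, and (b) that the user-optimal selection is well-defined and singles out $p_0^\star=\bar v/2$; the monotonicity of ex-ante user payoff in $p_0^*$ needs a short argument since changing $p_0^*$ also shifts $v^\star$, but because along every equilibrium branch $v^\star=p_0^*$ or $v^\star=\bar v/2\le p_0^*$, one can write the ex-ante payoff purely as a function of $p_0^*$ and check it is nonincreasing.
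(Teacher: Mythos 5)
The paper states in Section~\ref{subsec:effect_user} that this benchmark is handled by the same alternating backward/forward induction as Section~\ref{Sec:alternate} but provides no written proof of Proposition~\ref{Benchmark:no-social-network} in the appendices, so your proposal is being judged on its own merits against that announced strategy. It follows exactly that strategy and is essentially correct: the decoupled forward step (disclose iff $(1-\delta)\max\{v_i-p_0,0\}\ge\max\{v_i-p_0,0\}$, i.e., the social-benefit term in \eqref{vstar_p0_summation} is replaced by $0$ so the candidate threshold collapses to $p_0$), the reuse of Proposition~\ref{base_price} for the backward step (valid, since \eqref{profiled_belief}--\eqref{nonprofiled_belief} depend only on the threshold structure), the enumeration of the fixed points $v^\star\in[\bar v/2,\bar v/(2-\delta)]$ with $p_0^\star=v^\star$, the selection of $p_0^\star=\bar v/2$ by minimizing the uniform price (correctly observing that the ex-ante payoff reduces to $\mathbb{E}[(v_i-p_0^\star)^+]$ because the disclosing users earn zero surplus), and the comparison $v^\star<v^*$ via Case~I of Theorem~\ref{Theorem_PBE} and via rewriting \eqref{vstar_PBE} as $v^*=\bar v/(2-\delta)+\tfrac{2(1-\delta)}{\delta(2-\delta)}\tilde J(v^*)>\bar v/2$ in Case~II are all sound.

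One caveat you should make explicit rather than bury in the phrase ``break indifference in the user-optimal way'': every user with $v_i\le p_0$ is \emph{exactly} indifferent between disclosing and concealing (surplus is zero either way), so \emph{any} threshold $v^\star\in[0,p_0^*(v^\star)]$ is supported by best responses, and all equilibria with $v^\star\le\bar v/2$ deliver the identical price $p_0^\star=\bar v/2$ and identical payoffs to every user. Consequently ``consistency forces $v^\star=\bar v/2$'' is too strong, and the user-optimal refinement by itself selects a unique \emph{outcome} (price and payoffs) but not a unique threshold; pinning $v^\star=\bar v/2$ requires the additional convention that indifferent users disclose. This imprecision is inherited from the proposition's own statement (cf.\ the discussion after it, which asserts low-valuation users ``prefer'' the personalized price when in fact they are indifferent), so it is not a defect specific to your argument, but a complete proof should state the tie-breaking rule and note that uniqueness is up to payoff-irrelevant behavior of the indifferent types.
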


\begin{figure}[h]
    \vspace{-20pt}
    \subfigure[No-social-network benchmark]{
        \begin{minipage}{0.45\linewidth}
            \centering
            \begin{tikzpicture}
            \node (image) at (0,0) {\includegraphics[width=0.8\linewidth]{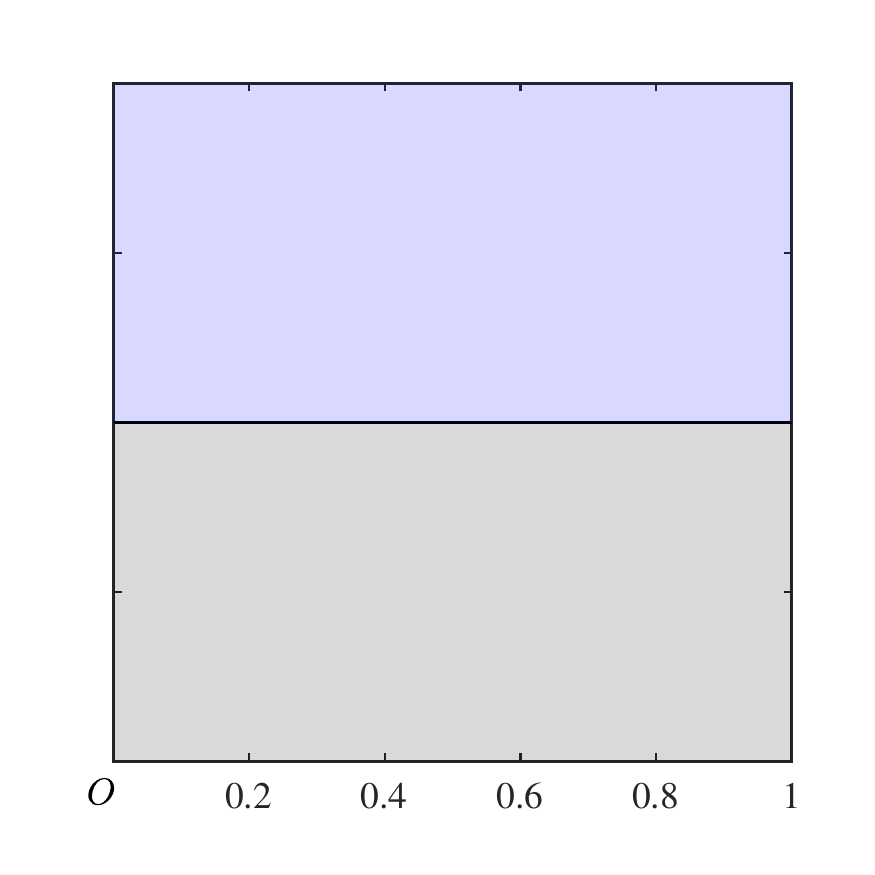}};
            \node [align=center] at (0.1,1.1) {\footnotesize Conceal Information};
            \node [align=center] at (0.1,-1) {\footnotesize Disclose Information};
            \node at (-2.95,0) {\footnotesize\rotatebox{90}{User Valuation $v_i$}};
            \node at (0.1,-2.8) {\footnotesize User Profiling Accuracy $\delta$};
            \node [align=center] at (-2.4,2.4) {\footnotesize $\bar{v}$};
            \node [align=center] at (-2.525,0.05) {\footnotesize $\bar{v}/{2}$};
            \end{tikzpicture}
            \label{fig_subnet_nobenchmark}
		\end{minipage}
	}
	\subfigure[Effect of social network benefits]{
        \begin{minipage}{0.45\linewidth}
            \centering
            \begin{tikzpicture}
            \node (image) at (0,0) {\includegraphics[width=0.8\linewidth]{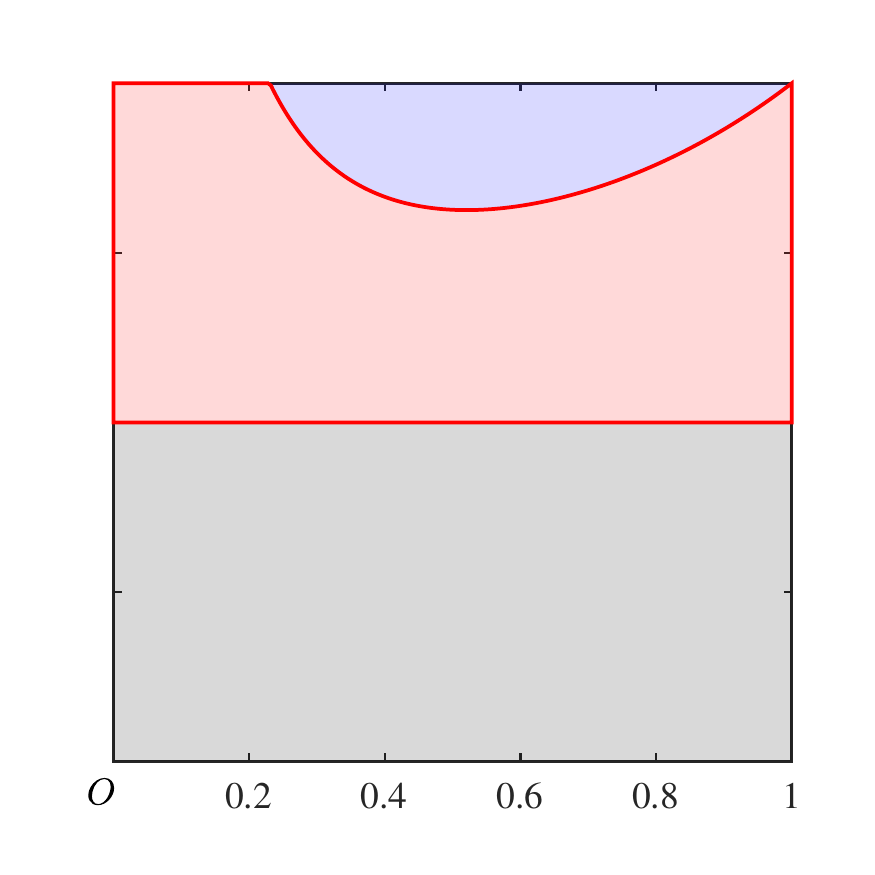}};
            \node [align=center] at (0.36,2.05) {\footnotesize Always Conceal};
            \node [align=left] at (0.1,1.1) {\footnotesize \color{red} Disclosure Driven by};
            \node [align=left] at (0.1,0.7) {\footnotesize \color{red} Social Network Benefits};
            \node [align=center] at (0.1,-0.80) {\footnotesize Always Disclose};
            \node [align=center] at (0.1,-1.20) {\footnotesize w/wo Social Network Benefits};
            \node at (-2.95,0) {\footnotesize\rotatebox{90}{User Valuation $v_i$}};
            \node at (0.1,-2.8) {\footnotesize User Profiling Accuracy $\delta$};
            \node [align=center] at (-2.4,2.4) {\footnotesize $\bar{v}$};
            \node [align=center] at (-2.525,0.05) {\footnotesize $\bar{v}/{2}$};
            \end{tikzpicture}
            \label{fig_no-social-network_effect}
		\end{minipage}
	}
\caption{(a) All users' voluntary disclosure decisions in the PBE under different profiling accuracy levels $\delta$ in the no-social-network benchmark,~and~(b)~the~effect of social network benefits: comparison of all users' disclosure decisions with versus without social network benefits. The boundary function $v^*(\delta)$, as specified in Theorem \ref{Theorem_PBE}, separates the blue from the red regions of Fig. \ref{fig_no-social-network_effect}.}
\label{fig:no-social-network-benchmark}
\end{figure}

Proposition \ref{Benchmark:no-social-network} highlights that some low-valuation users with $v_i<\bar{v}/2$ always voluntarily disclose their types to the seller despite potential profiling, as depicted in Fig. \ref{fig_subnet_nobenchmark}. The reason for this is that these users prefer being profiled to obtain~a lower personalized price $v_i$ rather than a higher uniform~price $p_0^\star=\bar{v}/{2}$. Furthermore, in this benchmark, the users' equilibrium valuation threshold $v^\star=\bar{v}/{2}$ remains unaffected by the profiling accuracy $\delta$. Without social network benefits, each user will independently make disclosure decisions and only need to consider the tradeoff between receiving a uniform price and a personalized one. This is in contrast to users' non-monotonic valuation threshold $v^*$ in the presence of social network benefits (see Proposition \ref{Prop:PBE_delta}), as illustrated in Fig. \ref{FIG:PBE_delta}.

Finally, we investigate the effect of social network benefits on all users' information disclosure decisions. Proposition \ref{Benchmark:no-social-network} suggests that fewer users will disclose their private information in the absence of social network benefits (i.e.,  \mbox{$v^\star < v^*$}). In particular, the red region in Fig. \ref{fig_no-social-network_effect} graphically illustrates the range of users $i$ who would not voluntarily disclose their types for user profiling but choose to do so for social network benefits, i.e., $v^\star<v_i<v^*$.

\subsection{Performance Evaluation for the Seller}\label{subsec:effect_seller}

Next, we evaluate the performance of our proposed pricing mechanism for the seller, through which we will understand the power of harnessing social network benefits. To evaluate the performance, we will introduce two additional benchmarks adapted from the existing literature (e.g., \cite{conitzer2012hide,2017Is,zhang2011perils,2020Voluntary}). Due to space constraints, we have deferred the detailed descriptions of the pricing benchmarks and the performance evaluation to Appendix \ref{Appendix:Performance}.

\section{Impact of Pricing Timing}\label{Sec:Sequential}

Thus far, we have assumed that the seller determines~the~uniform and personalized prices simultaneously in Stage II of~the three-stage model in Section \ref{Sec:Model}. However, some sellers currently employ alternative pricing practices where they initially make a pricing promise but break it later to personalize final offerings \cite{thecouncilofeconomicadvisers_2015_the,chen2020competitive}. This raises the issue of whether the seller should adopt such promise-breaking pricing practices while profiling users in the social network. To address this concern, in this section, we further consider the seller's flexibility in the pricing timing, i.e., the seller can announce the uniform price before users' social activity decisions. Such a pricing practice will make the uniform price publicly observable to all users. Notice that personalized pricing is always determined after user profiling and offered to the profiled users privately, the same as our original model. 

\begin{figure}[h]
	\centering
	\includegraphics[width=\linewidth]{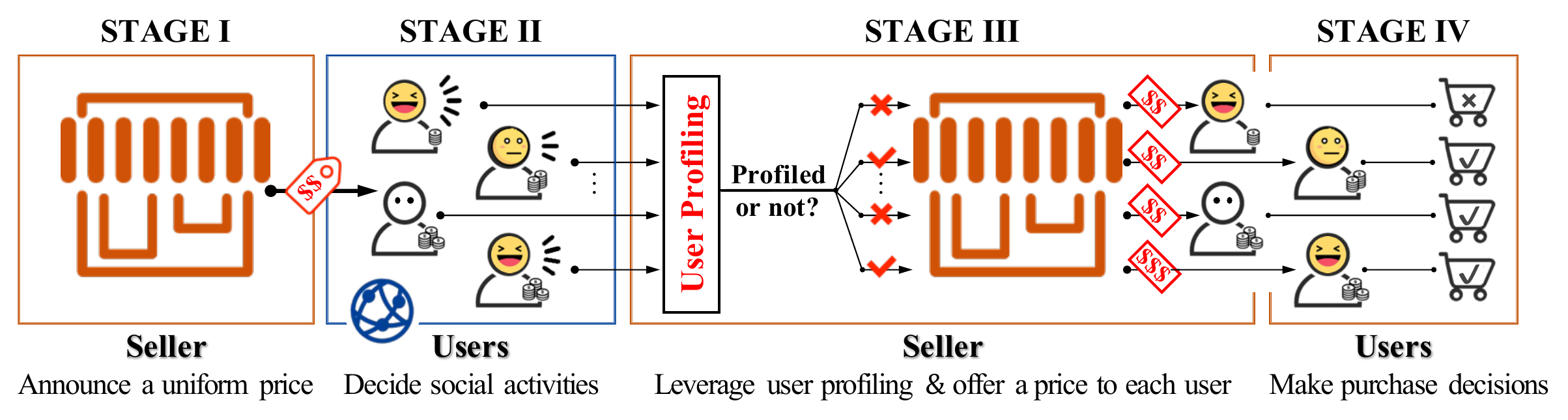}
	\caption{A Four-stage Extended Model: The seller now has the flexibility in pricing timing to announce the uniform price initially before users' social activity decisions.}
	\label{timeline:sequential}
\end{figure}

As illustrated in Fig. \ref{timeline:sequential}, we formulate the interactions~between the seller and users as a four-stage dynamic game:

\begin{itemize}
\item {\textbf{\mbox{Stage I}}}: The seller announces the uniform price~$p_0$.
\item {\textbf{\mbox{Stage II}}}: Each user $i\in\mathcal{N}$ decides his social activity~level $x_i\in[0,1]$.
\item {\textbf{\mbox{Stage III}}}: The seller first leverages the profiling technology to identify users' valuations (if possible). Based on the profiling results, the seller charges the profiled users a personalized price $p_i=v_i$. For non-profiled users, the seller charges the uniform price $p_0$ announced in Stage I.
\item {\textbf{\mbox{Stage IV}}}: Each user $i\in\mathcal{N}$ makes his purchase decision $d_i\in\{0,1\}$ after observing the final price offers.
\end{itemize}

As the seller provides personalized prices after announcing the uniform price, she actually breaks the pricing commitment to some users. This may harm the seller's reputation or even violate related regulations on pricing fraud \cite{lee_2021_chinese}. Later~in~this section, we will show the surprising result:~\textit{Even without~considering reputation or regulation issues, such flexibility in pricing timing only marginally improves the seller's average revenue but at the cost of higher variance.} Therefore, it is not reasonable for the seller to adopt a flexible pricing timing approach. This further justifies our use of the original model in Section \ref{Sec:Model} as the focus of this paper.

\subsection{New Equilibrium Analysis}\label{flexibility_new_equilibrium}

In this subsection, we analyze the equilibrium of the four-stage extended model in Fig. \ref{timeline:sequential}. Different from our original model in Fig. \ref{system_model}, we can analyze the extended model purely through backward induction. The key reason is that the coupling~between the seller's uniform pricing and the users' social activity decisions no longer exists in the four-stage model~here. This is because the seller determines the uniform price in Stage I only based on the prior belief of users' valuation distribution, which is not affected by users' social decisions. Therefore, we do not need forward analysis to derive an updated belief to determine the seller's uniform price as in Section \ref{Sec:alternate}. 

Notice that the coupling among users' social decisions~continues to hold. As users still have incomplete information~about the others' valuations, we formulate users' social interactions in Stage II as a Bayesian game. Similar to our original~model, we still manage to characterize the valuation threshold structure ($v^e$) of users' social decisions, such that each user $i$'s~equilibrium social activity level is given by $x_i^e=\mathbbm{1}(v_i\le v^e)$ as in Section \ref{Sec:alternate}. Hereafter, we use the superscript $e$ to refer to the extended four-stage model.

Next, we present the equilibrium result of our extended four-stage model. Notice that we only need to show the~seller's~uniform price in Stage I and the users' social activity decisions in Stage II. For the seller's personalized pricing scheme in~Stage III and users' purchase decisions in Stage IV, they are the~same as in Section~\ref{subsection:game}. 

\begin{proposition}\label{equilibrium_extension_model}
In the extended four-stage model, there exists a unique equilibrium as follows.
\begin{itemize}
\item \textbf{\emph{{Case I} (all active users with low uniform price)}} with a small mean valuation
\begin{equation}
\frac{\bar{v}}{2}\le\frac{1}{\delta}\ln\left(n-1+\omega_0\right).
\end{equation}
\begin{itemize}
    \item [(i)] In Stage I, users' equilibrium valuation threshold is $v^e=\bar{v}$, i.e., every user $i\in\mathcal{N}$ is active in the social network with $x_i^e=1$.
    \item [(ii)] In Stage II, the seller sets a uniform price of $p_0^e=\bar{v}/2$.   
\end{itemize}

\item \textbf{\emph{{Case II} (all active users with high uniform price)}} with a medium mean valuation\footnote{Here, the mean valuation threshold $\tilde{v}(\delta)$ is provided in Appendix \ref{Proof:PBE_extension}.\label{footnote6}}
\begin{equation}\label{case2_extension}
\frac{1}{\delta}\ln\left(n-1+\omega_0\right)<\frac{\bar{v}}{2}\le \tilde{v}(\delta).
\end{equation}
\begin{itemize}
    \item [(i)] In Stage I, users' equilibrium valuation threshold is $v^e=\bar{v}$, i.e., every user $i\in\mathcal{N}$ is active in the social network with $x_i^e=1$.
    \item [(ii)] In Stage II, the seller sets a uniform price of
    \begin{equation}\label{case2_extension_price}
    p_0^e=\bar{v}-\frac{1}{\delta}\ln\left(n-1+\omega_0\right),
    \end{equation}
    which is higher than $\bar{v}/2$.
\end{itemize}

\item \textbf{\emph{{Case III} (partially active users)}} with a large mean~valuation
\begin{equation}
\frac{\bar{v}}{2}>\tilde{v}(\delta).
\end{equation}
\begin{itemize}
    \item [(i)] In Stage I, users' equilibrium valuation threshold is $v^e<\bar{v}$, i.e., there exist some high-valuation users who are inactive in the social network. 
    \item [(ii)] In Stage II, the seller sets a uniform price of
    \begin{equation}
    p_0^e=v^e-\frac{\tilde{J}(v^e)}{\delta},
    \end{equation}
    which is less than the uniform price in \eqref{case2_extension_price}.
\end{itemize}
\end{itemize}
\end{proposition}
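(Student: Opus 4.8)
The plan is to solve the four-stage game by pure backward induction, exploiting the fact that (unlike the original model) the seller's uniform price in Stage I is set on the prior only, so there is no belief-consistency coupling to resolve by forward induction. First I would dispense with Stages III and IV, which are identical to the original model: in Stage IV each user buys iff $v_i\ge p_i$, and in Stage III the seller's optimal personalized price for a profiled user is $p_i=v_i$. Then I would analyze Stage II, the users' social-interaction Bayesian game, \emph{taking $p_0$ as a fixed, publicly observed parameter}. Here I can reuse the machinery behind Belief \ref{threshold_structure} and Theorem \ref{Theorem:p0-vstar} essentially verbatim: the game still has strategic complementarities, each user's expected payoff \eqref{expectpayoff} is still convex in $x_i$ so best responses are in $\{0,1\}$, and there is still a common valuation threshold $v^e$. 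The only change is that $p_0$ is now exogenous rather than anticipated, so the marginal-user indifference condition gives $v^e=\min\{v^e_\dagger,\bar v\}$ where $v^e_\dagger = p_0 + \tfrac1\delta\tilde J(v^e_\dagger)$, i.e.\ exactly \eqref{vstar_p0_expectation} with $p_0$ treated as the control. Uniqueness of this threshold follows from Proposition \ref{unique_user} applied with fixed $p_0$.

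Next I would move to Stage I: the seller chooses $p_0$ to maximize expected revenue, anticipating that users will respond with threshold $v^e(p_0)$ from the previous step. Using the posterior densities \eqref{profiled_belief}--\eqref{nonprofiled_belief} evaluated at $v^*=v^e(p_0)$, the expected revenue is $\delta F(v^e)\,\mathbb E[v_i\mid v_i\le v^e] + (1-\delta F(v^e))\,p_0\Pr(v_i\ge p_0\mid \text{non-profiled})$. The key structural simplification relative to Proposition \ref{base_price} is that here the seller internalizes how $p_0$ moves $v^e$: raising $p_0$ both raises per-sale revenue from non-profiled users \emph{and} raises $v^e$, which raises the profiling yield. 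I would substitute $v^e(p_0)$ (inverting the Stage-II relation to write $p_0$ as a function of $v^e$ when $v^e<\bar v$, namely $p_0 = v^e - \tilde J(v^e)/\delta$) and differentiate the resulting revenue-in-$v^e$ objective. The three cases then fall out from where the unconstrained optimizer sits relative to the corner $v^e=\bar v$: for small $\bar v/2$ the optimum is interior to ``all active'' and coincides with the no-profiling price $\bar v/2$ (Case I); for intermediate $\bar v/2$ the constraint $v^e\le\bar v$ binds but the seller still wants all users active, so she pushes $p_0$ up to exactly the level $p_0^e=\bar v-\tfrac1\delta\ln(n-1+\omega_0)$ that makes the marginal $v_i=\bar v$ user just indifferent (Case II); for large $\bar v/2$ even that price cannot hold everyone in, so $v^e<\bar v$ solves the first-order condition and $p_0^e=v^e-\tilde J(v^e)/\delta$ (Case III). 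The thresholds $\tfrac1\delta\ln(n-1+\omega_0)$ and $\tilde v(\delta)$ are precisely the $\bar v/2$ values at which these regimes switch, and I would define $\tilde v(\delta)$ as that switch point.

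For the inequalities in the statement, $p_0^e=\bar v-\tfrac1\delta\ln(n-1+\omega_0)>\bar v/2$ in Case II is immediate from the Case-II hypothesis $\tfrac1\delta\ln(n-1+\omega_0)<\bar v/2$; and $p_0^e=v^e-\tilde J(v^e)/\delta$ in Case III being below the Case-II price follows because $v^e<\bar v$ there while $\tilde J$ is increasing, so $v^e-\tilde J(v^e)/\delta$ is increasing in $v^e$ and hence maximized at $v^e=\bar v$, giving exactly the Case-II expression. Uniqueness of the equilibrium reduces to uniqueness of the Stage-II threshold (Proposition \ref{unique_user}, with fixed $p_0$) plus uniqueness of the Stage-I maximizer, which I would get from strict concavity of the revenue objective in $p_0$ (equivalently in $v^e$) on the relevant range, checking the single-peakedness at the regime boundaries.

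The main obstacle I expect is the Stage-I optimization when $v^e<\bar v$: once $p_0$ is substituted out in favor of $v^e$, the objective contains $\tilde J(v^e)=\mathbb E_k\ln(k+\omega_0)$ with $k\sim B(n-1,F(v^e))$, so its derivative in $v^e$ involves $\tfrac{d}{dv^e}\tilde J(v^e)$, a sum of binomial terms; showing this derivative combination is well-behaved enough to guarantee a unique interior optimum (and that the optimum indeed lies in the claimed regime for each range of $\bar v$) is the delicate computation. I would handle it by monotonicity arguments on $\tilde J$ and its derivative rather than closed-form evaluation, mirroring how Theorem \ref{Theorem_PBE} and Proposition \ref{Prop:PBE_delta} treat the analogous quantity in the original model, and I would explicitly verify the boundary matching at $\bar v/2=\tfrac1\delta\ln(n-1+\omega_0)$ and $\bar v/2=\tilde v(\delta)$ so the three cases glue continuously.
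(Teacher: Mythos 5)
Your overall route is the same as the paper's: pure backward induction, reusing the Stage-II threshold machinery (Belief \ref{threshold_structure}, Theorem \ref{Theorem:p0-vstar}, Proposition \ref{unique_user}) with $p_0$ now exogenous, then changing variables from $p_0$ to $v^e$ via $p_0=v^e-\tilde J(v^e)/\delta$ on the region $v^e<\bar v$, and reading off the three cases from where the optimum sits relative to the corner $v^e=\bar v$. Your derivations of the two price comparisons at the end (Case II price exceeds $\bar v/2$; Case III price is below it because $v\mapsto v-\tilde J(v)/\delta$ is increasing) are also exactly the paper's arguments.

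The one genuine gap is your plan to obtain uniqueness of the Stage-I maximizer from ``strict concavity of the revenue objective in $p_0$ (equivalently in $v^e$).'' The reformulated objective $\tilde\Pi(v^*)$ in \eqref{vstart_reformulate} is \emph{not} concave in general --- the paper states this explicitly and for that reason never attempts a second-derivative argument. What the paper actually proves is unimodality: it shows $\partial^3\tilde\Pi(v^*)/\partial(v^*)^3<0$, so that $\partial\tilde\Pi/\partial v^*$ is concave in $v^*$, and then pins down the sign pattern of this first derivative from its values at the endpoints $v^o$ and $\bar v$ via Fact \ref{Fact1} (Conditions A and B). Establishing the negative third derivative is the technical core of the proof and rests on three separate facts about the binomial-expectation function: $\tilde J'''(v^*)>0$, $\tilde J(v^*)/v^*>\tilde J'(v^*)$, and $3\tilde J''(v^*)+v^*\tilde J'''(v^*)<0$ (Lemmas \ref{TO_J}--\ref{J333}), each proved by index-shifting the binomial sums and exploiting convexity of the increments $\ln(m+\omega_0)-\ln(m-1+\omega_0)$. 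You correctly flag this as the delicate step, but ``monotonicity arguments on $\tilde J$ and its derivative'' will not carry it: first-order monotonicity of $\tilde J$ does not control the shape of $\tilde\Pi$, and the concavity you propose to verify is false. A second, smaller omission: to make the three regimes mutually exclusive you must rule out the combination where $\bar v/2\le\frac1\delta\ln(n-1+\omega_0)$ yet the interior optimum over $v^e<\bar v$ beats the corner; the paper does this by showing that the Case-I hypothesis forces $\partial\tilde\Pi/\partial v^*\vert_{v^*=\bar v}>0$, which is also what makes the switch point $\tilde v(\delta)$ (defined via $M(\bar v)=0$ with $M$ shown to be decreasing) well posed.
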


Compared with our original three-stage model (see Theorem \ref{Theorem_PBE}), the extended four-stage model here exhibits a new equilibrium: Case II in Proposition \ref{equilibrium_extension_model}. Given the flexibility~to~announce the uniform price before users' social decisions, the seller can now raise the uniform price $p_0^e$ from the mean user valuation in Case I to the value in (\ref{case2_extension_price}), so as to proactively motivate all users to remain active online as in Case I. To~better illustrate the role of such flexibility, we next formally compare the equilibrium under this extended four-stage model to our original three-stage model.

\begin{corollary}\label{extension_compar}
Compared to the original three-stage model in Section~\ref{Sec:Model}, both the seller's equilibrium uniform price in Stage I and the users' equilibrium valuation threshold in Stage II are higher in the extended four-stage model, i.e., $v^e\ge v^*$ and $p_{0}^e\ge p_{0}^*$.
\end{corollary}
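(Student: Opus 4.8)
The plan is to prove the two inequalities $v^e \ge v^*$ and $p_0^e \ge p_0^*$ by directly comparing the closed-form characterizations in Proposition~\ref{equilibrium_extension_model} (the four-stage model) and Theorem~\ref{Theorem_PBE} (the three-stage model), partitioned according to the mean valuation $\bar{v}/2$. The natural division is the three regions that appear in Proposition~\ref{equilibrium_extension_model}: (a) $\bar{v}/2 \le \frac{1}{\delta}\ln(n-1+\omega_0)$; (b) $\frac{1}{\delta}\ln(n-1+\omega_0) < \bar{v}/2 \le \tilde{v}(\delta)$; and (c) $\bar{v}/2 > \tilde{v}(\delta)$. In each region, I compare with the applicable case of Theorem~\ref{Theorem_PBE}, which itself has only the two cases determined by the sign of $\bar{v}/2 - \frac{1}{\delta}\ln(n-1+\omega_0)$.

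First I would handle region~(a): here both models fall into their respective ``all active users'' Case~I, giving $v^e = v^* = \bar{v}$ and $p_0^e = p_0^* = \bar{v}/2$, so both inequalities hold with equality. Next, region~(b): here the three-stage model is in Case~II of Theorem~\ref{Theorem_PBE} (since $\bar{v}/2 > \frac{1}{\delta}\ln(n-1+\omega_0)$), so $v^* < \bar{v}$, whereas the four-stage model still has $v^e = \bar{v}$; thus $v^e = \bar{v} > v^* $ immediately. For the prices, I need $p_0^e = \bar{v} - \frac{1}{\delta}\ln(n-1+\omega_0) \ge p_0^* = v^* - \tilde{J}(v^*)/\delta$. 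Since $\tilde J(v^*) = \mathbb{E}_k\ln(k+\omega_0) \le \ln(n-1+\omega_0)$ (the binomial count $k$ is at most $n-1$ and $\ln(\cdot+\omega_0)$ is increasing) and $v^* < \bar{v}$, we get $p_0^* = v^* - \tilde J(v^*)/\delta < \bar{v} - \tilde J(v^*)/\delta$; but this bounds $p_0^*$ from above by something \emph{larger} than $p_0^e$, so a crude bound is not enough. Instead I would use the defining equation \eqref{vstar_PBE} for $v^*$ to express $p_0^* = v^* - \tilde J(v^*)/\delta$ and show the gap $p_0^e - p_0^* = (\bar{v} - v^*) - \frac{1}{\delta}(\ln(n-1+\omega_0) - \tilde J(v^*))$ is nonnegative, using that both $\bar{v} - v^* > 0$ and $\ln(n-1+\omega_0) - \tilde J(v^*) \ge 0$ are controlled by the same monotonicity of $\tilde J$; more precisely, I would invoke that in the four-stage Case~II the seller's optimal response to $v^e = \bar{v}$ is exactly $p_0^e$, while in the three-stage model the seller's price $p_0^*$ is her best response to the \emph{smaller} threshold $v^*$, and by Proposition~\ref{base_price} (monotone structure of $p_0^*(v^*)$) combined with the consistency condition, a larger effective threshold forces a weakly larger price — which is precisely the content of Corollary~\ref{extension_compar}.

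For region~(c), both models are in their ``partially active'' case: the three-stage threshold $v^*$ solves \eqref{vstar_PBE}, and the four-stage threshold $v^e$ solves the analogous fixed-point equation stated (via footnote) in Appendix~\ref{Proof:PBE_extension}, with prices $p_0^* = v^* - \tilde J(v^*)/\delta$ and $p_0^e = v^e - \tilde J(v^e)/\delta$. Here I would argue that the four-stage fixed-point equation differs from the three-stage one precisely by the seller's ability to commit to a higher price, which shifts the marginal user's indifference condition so that the solution $v^e$ is at least $v^*$; concretely, I would write both as $\Phi(v)=0$ and $\Psi(v)=0$ for explicit monotone functions and show $\Psi$ lies below $\Phi$ (or use that the map $v \mapsto p_0^e(v) \mapsto$ induced best-response threshold is a contraction dominating the three-stage one), yielding $v^e \ge v^*$; then, since $g(v) \triangleq v - \tilde J(v)/\delta$ is increasing in $v$ (its derivative is $1 - \tilde J'(v)/\delta$, which is positive because $\tilde J'(v) = F'(v)\,\mathbb{E}[\ln(k+1+\omega_0)-\ln(k+\omega_0)]$ is bounded and small under the standing assumption $\bar{v} > 2\ln(n-1+\omega_0)$), monotonicity of $g$ gives $p_0^e = g(v^e) \ge g(v^*) = p_0^*$.

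The main obstacle I anticipate is region~(b) and the boundary behavior at $v^e = \bar{v}$: there the four-stage threshold is ``pinned'' at the corner $\bar{v}$ while the three-stage threshold sits strictly inside, so the two are characterized by genuinely different equations and I cannot simply compare two fixed points of the same map; I must instead show that the seller's committed price $p_0^e$ in \eqref{case2_extension_price} is large enough to make even the highest-valuation user weakly prefer to stay active, i.e., verify the marginal-user inequality $\bar{v} - p_0^e \le \tilde J(\bar{v})/\delta$ at the full-participation profile, and separately verify that $p_0^e$ is nonetheless no larger than would destroy the comparison with $p_0^*$. Establishing monotonicity of the induced best-response price in the participation level — which underlies all three regions — is the conceptual crux, and I would isolate it as a short lemma (essentially reading off the piecewise structure in Proposition~\ref{base_price} and checking $p_0^*(v^*)$ is nondecreasing on $[\bar{v}/2, \bar{v}]$, which is visible in Fig.~\ref{fig:uniform} only up to $\bar{v}/(2-\delta)$ and then decreasing — so care is needed to confine attention to the relevant sub-range where the equilibrium threshold actually lands).
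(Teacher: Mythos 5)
Your case split (three regions keyed to Proposition~\ref{equilibrium_extension_model}) is the same as the paper's, and regions (a) and (b) are essentially right: Case~I gives equality, and in Case~II the paper compares prices exactly the way you eventually land on, via the increasing map $v\mapsto v-\tilde J(v)/\delta$, i.e.\ $p_0^*=v^*-\tilde J(v^*)/\delta<\bar v-\tilde J(\bar v)/\delta=p_0^e$. Two cautions there. First, you also propose to lean on ``the monotone structure of $p_0^*(v^*)$'' from Proposition~\ref{base_price}; as you note yourself at the end, that best-response map is \emph{decreasing} on the relevant range $(\bar v/(2-\delta),\bar v]$ (see \eqref{base_price_equation3}), so it points the wrong way and cannot drive the comparison — the engine must be the user-side indifference relation $p_0=v-\tilde J(v)/\delta$, which holds in both models. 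Second, your justification that this map is increasing because $\tilde J'(v)$ is ``bounded and small'' is not a proof; the paper establishes $\tilde J'(v^*)<\delta$ from the sign of $\partial\Phi/\partial v^*$ at the equilibrium (Lemma~\ref{jprime}), and you need that or an equivalent argument.

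The genuine gap is region (c), the claim $v^e\ge v^*$ when both models have partially active users. Your plan — cast both thresholds as zeros of two monotone maps and show one dominates, or use a contraction argument — does not match how the two objects are characterized: $v^*$ is the intersection of the seller's myopic best response \eqref{base_price_equation3} with the indifference curve, while $v^e$ is pinned down by the first-order condition $\partial\tilde\Pi(v^e)/\partial v^e=0$ of the seller's ex-ante revenue $\tilde\Pi(v)$ reformulated as a function of the threshold. There is no common fixed-point map to compare, and ``$\Psi$ lies below $\Phi$'' is not something you can read off. The paper's argument is to evaluate $\partial\tilde\Pi(v)/\partial v$ at $v=v^*$: after substituting the three-stage equilibrium condition it collapses to $|\mathcal N|\,\tilde J(v^*)/\bar v>0$. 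Since $\partial\tilde\Pi/\partial v$ is concave, positive at $v^o$ and negative at $\bar v$ (hence single-crossing by Fact~\ref{Fact1}) and vanishes exactly at $v^e$, positivity at $v^*$ forces $v^*<v^e$; monotonicity of $v\mapsto v-\tilde J(v)/\delta$ then transfers this to the prices. The key idea you are missing is precisely this computation showing that the three-stage equilibrium sits strictly on the increasing side of the four-stage seller's objective; without it your region-(c) step does not go through.
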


Corollary \ref{extension_compar} demonstrates that the seller's flexibility in~pricing timing leads to an increase in users' social activity levels. This is because the flexibility allows the seller to announce a higher uniform price initially before users make their social decisions, which incentivizes their online activity levels.

\subsection{Impact of Flexibility in Pricing Timing on Seller's Revenue}

Next, we compare the seller's expected revenues in the~extended four-stage model and our original three-stage model.

\begin{proposition}\label{sale_revenue_compare}
Compared to the original three-stage model in Section \ref{Sec:Model}, the seller's expected profiled revenue is higher in the extended four-stage model, whereas the expected non-profiled revenue is lower. Overall, the seller's total expected revenue is higher in the extended four-stage model, where the improvement ratio is upper bounded by~\mbox{$(\delta-\delta^2)/2$} (no more than $12.5\%$).
\end{proposition}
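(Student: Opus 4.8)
My plan is to collapse the whole comparison into a single scalar problem in the users' social-activity threshold. For a threshold $v$ and uniform price $p_0$, let $R(v,p_0)$ be the seller's expected revenue. Since a profiled user pays his valuation and, conditioned on being active (i.e.\ $v_i\le v$), his valuation is uniform on $[0,v]$, this splits as $R(v,p_0)=\frac{n\delta v^2}{2\bar v}+n\,\varphi(v,p_0)$, where the first (profiled) term is independent of $p_0$ and $\varphi(v,p_0)$ is the per-user expected non-profiled revenue. Put $\Phi(v)\triangleq\max_{p_0}\varphi(v,p_0)$ — Proposition~\ref{base_price} gives $\Phi$ in closed form over its three regimes — and $G(v)\triangleq\frac{\delta v^2}{2\bar v}+\Phi(v)$. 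The two facts I would lean on are: in the three-stage PBE the seller sets $p_0^*$ to maximize the non-profiled revenue at an already-realized profiled revenue, so $R_3:=R(v^*,p_0^*)=nG(v^*)$; whereas in the four-stage model the seller commits to $p_0^e$ and only then meets the users' threshold response $v^e$, so $R_4:=R(v^e,p_0^e)\le nG(v^e)$ because $\varphi(v^e,p_0^e)\le\Phi(v^e)$.

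For the three sign statements: the profiled revenue $\frac{n\delta v^2}{2\bar v}$ is increasing in $v$, so $v^e\ge v^*$ (Corollary~\ref{extension_compar}) makes it weakly larger in the four-stage model. For any fixed $p_0$, $\varphi(\cdot,p_0)$ is non-increasing in $v$ — raising $v$ only turns some inactive (hence always non-profiled) users into active users who stay non-profiled with probability merely $1-\delta$ — so $\Phi$ is non-increasing and $\varphi(v^e,p_0^e)\le\Phi(v^e)\le\Phi(v^*)=\varphi(v^*,p_0^*)$, giving the smaller non-profiled revenue. For the total, I would use a mimicking argument: in the four-stage model the seller may simply announce $p_0^*$; since users' Stage-II response to an announced price is the same unique threshold equilibrium as their Stage-I response to a (correctly anticipated) price $p_0^*$ (Theorem~\ref{Theorem:p0-vstar}, Proposition~\ref{unique_user}), this reproduces $(v^*,p_0^*)$ and revenue $R_3$, hence $R_4=\max_{p_0}R(v(p_0),p_0)\ge R_3$.

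For the $(\delta-\delta^2)/2$ bound I would first show $G$ is non-decreasing on $[0,\bar v]$ by differentiating on each of Proposition~\ref{base_price}'s three regimes and checking continuity at the knots $\bar v/2$ and $\bar v/(2-\delta)$: one gets $G'=\frac{\delta v}{\bar v}$ on $[0,\bar v/2]$, $G'=\frac{\bar v-(2-\delta)v}{\bar v}$ on $[\bar v/2,\bar v/(2-\delta)]$, and $G'=\frac{\delta[(2-\delta)v-\bar v]}{2\bar v(1-\delta)}$ on $[\bar v/(2-\delta),\bar v]$, all nonnegative. Hence $R_4\le nG(v^e)\le nG(\bar v)$. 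Conversely the equilibrium uniform price in Theorem~\ref{Theorem_PBE} always falls in Proposition~\ref{base_price}'s third branch, so $v^*\ge\bar v/(2-\delta)$ and $R_3=nG(v^*)\ge nG\big(\bar v/(2-\delta)\big)$. A direct computation yields $G(\bar v)=\frac{(1+\delta)\bar v}{4}$ and $G\big(\bar v/(2-\delta)\big)=\frac{\bar v}{2(2-\delta)}$, so $\frac{R_4}{R_3}\le\frac{G(\bar v)}{G(\bar v/(2-\delta))}=\frac{(1+\delta)(2-\delta)}{2}=1+\frac{\delta-\delta^2}{2}$; maximizing $\frac{\delta-\delta^2}{2}$ over $\delta\in(0,1)$ at $\delta=\tfrac{1}{2}$ gives the $12.5\%$ cap.

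I expect the binding difficulty to be this last bound: I need $G$ to be monotone \emph{across} the switches of the seller's optimal uniform-pricing rule — not just within each branch — and I need to recognize that the tightest lower bound on $R_3$ sits exactly at $v^*=\bar v/(2-\delta)$, the boundary where Proposition~\ref{base_price} turns onto its decreasing branch; equivalently, the four-stage advantage is worst precisely when the three-stage threshold lands on that switching point.
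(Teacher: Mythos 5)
Your argument is correct and reaches all three claims. For the profiled revenue and for the total-revenue comparison (the mimicking/argmax step using belief consistency $v^*=v(p_0^*)$), you follow essentially the same route as the paper. Where you genuinely diverge is the non-profiled comparison: the paper substitutes the equilibrium relation $p_0=v-\tilde{J}(v)/\delta$ and studies $\tilde{\Pi}_0$ along that manifold, proving $\partial^3\tilde{\Pi}_0/\partial v^3<0$ (via Lemmas on $\tilde J'''$, $\tilde J/v^*>\tilde J'$, and $3\tilde J''+v^*\tilde J'''<0$) so that $\partial\tilde{\Pi}_0/\partial v$ is concave, then locates its sign changes to conclude $\tilde{\Pi}_0$ is decreasing between $v^*$ and $v^e$. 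Your envelope argument --- $\varphi(\cdot,p_0)$ is non-increasing in $v$ for each fixed $p_0$ (indeed $\varphi=p_0(\bar v-\delta v-(1-\delta)p_0)/\bar v$ for $p_0\le v$ gives $\partial\varphi/\partial v=-\delta p_0/\bar v$), hence $\Phi=\max_{p_0}\varphi$ is non-increasing, and $\varphi(v^e,p_0^e)\le\Phi(v^e)\le\Phi(v^*)=\varphi(v^*,p_0^*)$ --- is considerably more elementary, bypasses the social-benefit function $\tilde J$ entirely, and cleanly exploits the fact that the three-stage seller best-responds to $v^*$ while the four-stage seller does not best-respond to $v^e$; its only cost is that it yields weak rather than strict inequality, which suffices for the statement. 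For the $(\delta-\delta^2)/2$ bound your computation is numerically identical to the paper's ($G$ is exactly $\tilde{\Pi}^*(v)/n$ with the regime-appropriate price substituted, $G(\bar v)=(1+\delta)\bar v/4$ and $G(\bar v/(2-\delta))=\bar v/(2(2-\delta))$), but proving $G$ monotone across all three pricing regimes is a nice packaging: it subsumes the paper's separate verification that $v^e>\bar v/(2-\delta)$, which the paper needs because its quadratic formula for $\tilde{\Pi}^*(v)$ is only valid on the third branch and is symmetric about the knot. Your closing observation about where the bound is tight matches the paper's choice of endpoints exactly.
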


Proposition \ref{sale_revenue_compare} indicates that the flexibility in pricing timing only mildly improves the seller's total expected revenue. Note that the bounded ratio $(\delta-\delta^2)/2$ in Proposition \ref{sale_revenue_compare} is not~tight. Compared to our original three-stage model, the seller extracts less revenue from the non-profiled users in the extended four-stage model. However, the seller generates more revenue with an increased number of profiled users,~which~echoes~the~findings in Corollary \ref{extension_compar}.

\begin{figure}[h]
	\centering
    \vspace{-10pt}
	\begin{tikzpicture}
	    \definecolor{darkgreen}{rgb} {0.00,0.5,0.00}
		\node (image) at (0,0) {\includegraphics[width=0.8\linewidth]{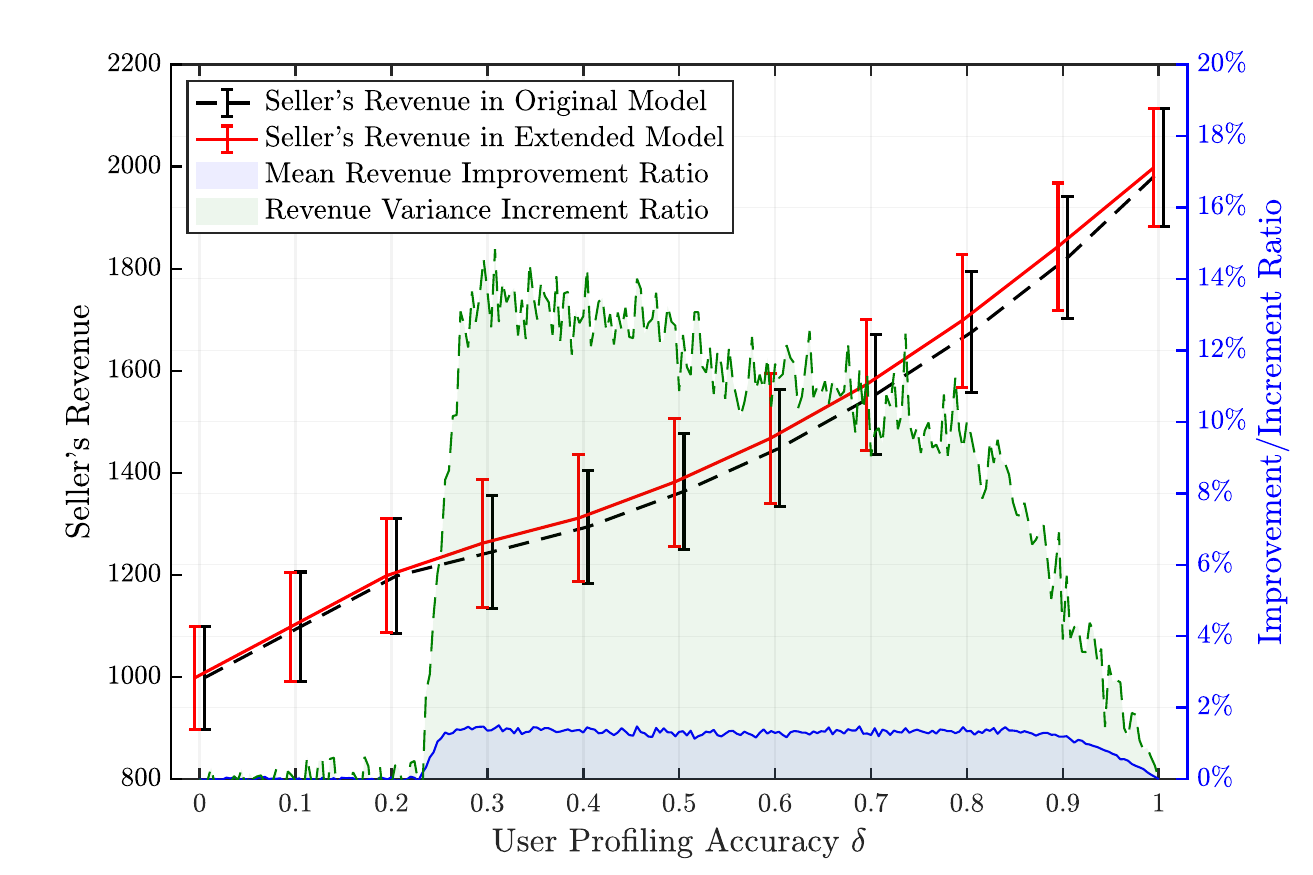}};
		\draw [thick,draw=blue] (-4.625,2.5) rectangle (-4,2.75);
		\draw [thick,draw=darkgreen] (-4.625,2.15) rectangle (-4,2.4);
		\draw [-stealth,thick,black] (-3.3,-0.55) arc (150:250:0.6);
		\node [align=center,black] at (-3.4,-0.4) {\footnotesize Seller's Revenue};
		\draw [-stealth,darkgreen,thick] (3.6,-1.55) arc (180:125:0.5);
		\node [align=center,darkgreen] at (1.6,-1.7) {\footnotesize Revenue Variance Increment Ratio};
		\draw [-stealth,blue,thick] (3,-2.4) arc (20:-45:0.5);
		\node [align=center,blue] at (0.7,-2.4) {\footnotesize Mean Revenue Improvement Ratio};
	\end{tikzpicture}
    \vspace{-5pt}
	\caption{The seller's mean revenue and revenue variance comparisons between the extended model and our original model, versus user profiling accuracy $\delta$. Here, we consider $n=100$ users with uniformly distributed valuations and fix the mean valuation $\bar{v}/2$ as $20$. We run $10000$ times the experiments to obtain the data results.}
	\label{variance_compare_fig}
\end{figure}

Finally, we turn our attention to the variance in the seller's revenue. Notice that examining revenue variance could assist the seller's risk assessment \cite{baron2022revenue} by telling the potential revenue volatility that mainly stems from randomness in user profiling. We conduct extensive simulation experiments to compare the actual revenue distributions in the two models. As shown in Fig.~\ref{variance_compare_fig}, the black dashed and the red solid curves represent the seller's average revenues of our original three-stage model and the extended four-stage model, respectively. We also plot vertical error bars across the curves to illustrate the revenue variances. Moreover, we graphically depict the improvement ratio of the seller's average revenue and the increment ratio of the seller's revenue variance in the extended model compared to our original model through blue and green curves in Fig.~\ref{variance_compare_fig}, respectively. Actually, we note that the seller can only improve the revenue by no more than $2\%$ with her flexibility in pricing timing.

\begin{observation}
In the extended four-stage model, the seller's revenue variance is up to $14\%$ higher compared to~the~original three-stage model.
\end{observation}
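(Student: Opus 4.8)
The plan is to compare the revenue distributions in the two models stage by stage, exploiting the fact that both equilibria have the same threshold structure $x_i = \mathbbm{1}(v_i \le v)$ and the same Bernoulli profiling mechanism. First I would fix the mean valuation at $\bar v/2 = 20$ and $n = 100$ as in the simulation, and for each value of the profiling accuracy $\delta \in (0,1)$ recall the closed-form equilibrium quantities: $v^*, p_0^*$ from Theorem~\ref{Theorem_PBE} and $v^e, p_0^e$ from Proposition~\ref{equilibrium_extension_model}, together with the ordering $v^e \ge v^*$ and $p_0^e \ge p_0^*$ from Corollary~\ref{extension_compar}. The seller's realized revenue in either model decomposes as $\Pi = \Pi_{\mathrm{prof}} + \Pi_{\mathrm{non}}$, where, conditioned on the set $\mathcal{N}_1$ of profiled users, $\Pi_{\mathrm{prof}} = \sum_{i \in \mathcal{N}_1} v_i$ and $\Pi_{\mathrm{non}} = p_0 \sum_{i \in \mathcal{N}_0} \mathbbm{1}(v_i \ge p_0)$; here a user is profiled with probability $\delta$ if active and $0$ if inactive, independently across users, and each $v_i$ is uniform on $[0,\bar v]$.

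Next I would set up the variance computation via the law of total variance, conditioning on the profiling indicator vector. For each user $i$ define the binary ``profiled'' event; the activity decision is deterministic given $v_i$, so the only genuine randomness beyond the $v_i$'s is the profiling coin. Because users are i.i.d.\ and the coins are independent, the total revenue variance splits into a sum of per-user contributions plus cross terms that vanish by independence. Each per-user contribution has a within-type part (variance of $v_i$ conditional on being active-and-profiled, which contributes $v^2/12$-type terms truncated to $[0,v]$, versus variance of $p_0\mathbbm{1}(v_i \ge p_0)$ on the non-profiled branch) and a between-branch part (the jump between receiving $v_i$ and receiving $p_0 d_i$). Since $v^e \ge v^*$ enlarges the support of the profiled branch and $p_0^e \ge p_0^*$ raises the magnitude of the non-profiled payoff while shrinking the non-profiled population, I would show the profiled-branch variance strictly increases and the net effect dominates; the error-bar data in Fig.~\ref{variance_compare_fig} pins the worst-case ratio at $14\%$ over $\delta \in (0,1)$. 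To make the $14\%$ claim rigorous one would plug the closed-form $v^e, p_0^e$ into the per-user variance expression, differentiate the ratio $\mathrm{Var}(\Pi^e)/\mathrm{Var}(\Pi^*)$ in $\delta$, and locate its maximum; in practice the statement is reported as an empirical observation from $10000$ Monte-Carlo runs, so I would present the analytic variance formulas and then cite the simulation for the numerical bound.

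The main obstacle is controlling the cross terms and the regime-dependence of the threshold: $v^e$ and $v^*$ coincide in Case~I (both equal $\bar v$) but diverge in Cases~II/III, and the non-profiled valuation density in \eqref{nonprofiled_belief} is piecewise, so the truncated-uniform variance integrals change form across the regimes. The variance ratio is therefore a piecewise-smooth function of $\delta$, and verifying that its supremum is $\le 14\%$ requires checking each piece plus the matching at the breakpoints $\hat\delta$ and $\tilde v^{-1}(\bar v/2)$. I would argue that within Case~I the two models are identical so the ratio is $1$, and that the ratio is continuous and bounded on the remaining compact parameter range, reducing the claim to a finite set of one-dimensional monotonicity checks that the simulation confirms numerically; a fully closed-form bound, while possible, is not needed for the Observation as stated.
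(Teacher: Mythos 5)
Your proposal is correct in substance, but it is worth being clear that the paper does not prove this Observation at all: the $14\%$ figure is read directly off the Monte-Carlo experiment behind Fig.~\ref{variance_compare_fig} ($10000$ runs, $n=100$, $\bar{v}/2=20$), and the only supporting argument the paper offers is the one-sentence intuition that more users are induced to be active in the four-stage model, so more of the revenue is exposed to the Bernoulli profiling randomness. What you add on top of that is a genuine analytic layer the paper never carries out: writing the realized revenue as a sum of independent per-user contributions $R_i(v_i,\text{coin}_i)$, noting that the cross terms vanish so $\mathrm{Var}(\Pi)=n\,\mathrm{Var}(R_1)$, and expressing $\mathrm{Var}(R_1)$ in closed form from $(v,p_0)\in\{(v^*,p_0^*),(v^e,p_0^e)\}$ via the law of total variance over the profiling coin. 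That decomposition is sound and would let one evaluate the variance ratio exactly rather than by sampling; your observation that the ratio is identically $1$ in Case~I and piecewise-smooth elsewhere is also right. The one place you overstate is the claim that the profiled-branch increase ``dominates'' — you assert rather than establish that $\mathrm{Var}(\Pi^e)\ge\mathrm{Var}(\Pi^*)$, and since $p_0^e\ge p_0^*$ shrinks the purchasing non-profiled population the sign of the net per-user change is not obvious a priori. But since you ultimately defer the numerical bound to simulation exactly as the paper does, this does not create a gap relative to what the paper itself establishes; it just means your route, if completed, would be strictly more informative than the paper's.
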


The variation of the seller's revenue increases in the extended four-stage model. This is because more users are~incentivized to be active and potentially subject to profiling in the extended model, which results in a larger variance under the random user profiling.\footnote{It is also natural to attribute an increased revenue variance to more socially active users within the original three-stage model. However, the extended four-stage model here differs from the original three-stage model in terms of how~it affects users' social network activities (see Corollary \ref{extension_compar}).} In conclusion, the flexibility in pricing timing only mildly improves the seller's expected revenue but at the cost of higher variance. Considering the additional risks associated with reputation and regulation, it is not a reasonable choice for the seller to adopt such flexibility in pricing timing.

\section{Extensions to Heterogeneous Online Social Networking}\label{extension:heterogeneous}

Thus far, our analysis has only considered the heterogeneity in user valuations and assumed a complete network for social interactions. In this section, we extend our original three-stage model in Section \ref{Sec:Model} to explore the heterogeneity in users' online social networking.

Specifically, we first incorporate users' heterogeneous social network benefits into our original three-stage model in Section \ref{subsec:extension:benefits}. Then, in Section \ref{subsec:extension:partial}, we further extend to consider a partially connected social network where the users possess heterogeneous network positions. Within these extensions, we can still identify the threshold structure of PBE (see Belief~\ref{threshold_structure}). However, it is important to notice that the threshold for users may vary depending on their social network benefits or network positions, as suggested in the subsequent analysis.


\subsection{Heterogeneous Social Network Benefits}\label{subsec:extension:benefits}

Let us first extend our original three-stage model in Section \ref{Sec:Model} to incorporate the heterogeneity of users' social network benefits. Suppose that $n$ users can be classified into $K$ different types of preferences for online social networking, denoted by $\mathcal{K}=\{1,2,\dots,K\}$. Each type $k\in\mathcal{K}$ appears in the population with a probability $\gamma_k$, where $\sum_{k\in\mathcal{K}}\gamma_k=1$.~Furthermore,~each type-$k$ user $i\in\mathcal{N}$ is associated with a non-negative preference parameter $\alpha_k$ for his social network benefit $J_i$ in \eqref{social_network_benefit}. Accordingly, we reformulate the payoff for a type-$k$ user $i$~with valuation $v_i$ as
\begin{equation*}
    \pi_{<i,k>}\left(x_i,\boldsymbol{x}_{-i}\right)=\alpha_k J_i\left(x_i,\boldsymbol{x}_{-i}\right)+\max\left\{v_i-p_i,0\right\},
\end{equation*}
which extends the original payoff formulation in \eqref{final_user_payoff} that assumes homogeneous preferences among users (i.e., normalized $\alpha_k=1$ for all $k\in\mathcal{K}$). We assume that each user's preference type is private and known only to himself. Moreover, the seller knows the prior distribution of users' preference types, and each user is aware of the distribution of other users' preference types. Similar to the analysis in Section \ref{Sec:alternate}, we will alternate forward induction (for users' social activity decisions) with backward induction (for the seller's uniform pricing) to guide the PBE analysis for this new extension.

Now, we start with the forward analysis of users' social interactions in Stage I. In this new extension, the inclusion of heterogeneity in social network benefits disrupts the existence of a common valuation threshold $v^*$ for users' social activity decisions, which is identified in our original model with~heterogeneity solely in user valuations (see Belief \ref{threshold_structure}). In fact, we reveal a natural belief variation wherein the valuation threshold varies across users with different preference types for online social networking, as detailed in the following structural belief.

\begin{lemma}\label{belief_hetergoneous}
For each user preference type $k\in\mathcal{K}$, there exists a valuation threshold $v^*_k\in[0,\bar{v}]$ such that the social activity decision in Stage I for any type-$k$ user $i$ with valuation $v_i$ is characterized as
\begin{equation*}
    x_{<i,k>}^*(v_i)=\mathbbm{1}\left(v_i\le v^*_k\right).
\end{equation*}
\end{lemma}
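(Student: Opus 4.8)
The plan is to mimic the three-step argument sketched for Belief~\ref{threshold_structure} in Section~\ref{forward}, but now carried out type-by-type. The key observation is that heterogeneity in the social-benefit preference parameter $\alpha_k$ does not destroy the structural properties that were used to establish the original threshold result; it merely rescales the social-benefit term in each user's expected payoff, so that the relevant quantities become type-dependent. Throughout, I would fix an arbitrary candidate uniform price $p_0$ (to be pinned down later by the backward analysis of the seller) and work inside Stage~I.

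\textbf{Step 1 (Game formulation and existence).} First I would write out the expected payoff in Stage~I for a type-$k$ user $i$ with valuation $v_i$, analogous to \eqref{expectpayoff}:
\begin{equation*}
\tilde{\pi}_{<i,k>}(x_i,\boldsymbol{x}_{-i}) = \alpha_k\, x_i\ln\Bigl(\textstyle\sum_{j\neq i}x_j+\omega_0\Bigr)+(1-\delta x_i^\alpha)\max\{v_i-p_0,0\}.
\end{equation*}
Since the cross-partial $\partial^2\tilde{\pi}_{<i,k>}/\partial x_i\,\partial x_j = \alpha_k/(\sum_{j\neq i}x_j+\omega_0)\ge 0$, the game among the (now heterogeneously-typed) users is still a supermodular Bayesian game with strategic complementarities, so a pure-strategy Bayesian Nash equilibrium exists by the same appeal to \cite{1990Rationalizability} used in the original Step~I.

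\textbf{Step 2 (Polarization).} Next I would show that, for each type $k$, the best response is still extremal, i.e. $x^*_{<i,k>}\in\{0,1\}$. Fixing $\boldsymbol{x}_{-i}$, the term $\alpha_k x_i\ln(\sum_{j\neq i}x_j+\omega_0)$ is linear in $x_i$, and $(1-\delta x_i^\alpha)\max\{v_i-p_0,0\}$ is convex in $x_i$ on $[0,1]$ because $x_i^\alpha$ is concave with $0<\alpha<1$ and the multiplier $\max\{v_i-p_0,0\}\ge 0$; hence $\tilde{\pi}_{<i,k>}$ is convex in $x_i$, and its maximum over $[0,1]$ is attained at an endpoint. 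This is exactly the original Step~II argument, and the scaling by $\alpha_k>0$ (the case $\alpha_k=0$ is trivial, forcing $x^*=0$ whenever $v_i>p_0$ and making the user indifferent otherwise) does not affect convexity.

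\textbf{Step 3 (Existence of a per-type valuation threshold).} Finally, with polarization in hand, fix any symmetric profile of the other users and compare the type-$k$ user's payoff at $x_i=1$ versus $x_i=0$. The payoff difference is
\begin{equation*}
\Delta_k(v_i)\;=\;\alpha_k\ln\Bigl(\textstyle\sum_{j\neq i}x_j+\omega_0\Bigr)\;-\;\delta\max\{v_i-p_0,0\},
\end{equation*}
which is non-increasing in $v_i$ (constant on $[0,p_0]$ and strictly decreasing with slope $-\delta$ on $[p_0,\bar v]$) and is independent of $v_i$ within each piece; since the expectation over the random number of active opponents preserves monotonicity in $v_i$, the user's optimal choice is "active" for small $v_i$ and "inactive" for large $v_i$, giving a cutoff $v^*_k$. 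Crucially—and this is where the conclusion differs from Belief~\ref{threshold_structure}—the location of this cutoff depends on $\alpha_k$ through the term $\alpha_k\,\mathbb{E}_k\ln(k+\omega_0)$, so the thresholds need not coincide across types; within a single type $k$, however, $\Delta_k$ does not otherwise depend on $v_i$, so all type-$k$ users share the same $v^*_k$. Combining the three steps yields $x^*_{<i,k>}(v_i)=\mathbbm{1}(v_i\le v^*_k)$, and $v^*_k\in[0,\bar v]$ follows by truncation as in Theorem~\ref{Theorem:p0-vstar}.

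The main obstacle I anticipate is Step~3, specifically the bookkeeping of the belief/expectation: the number of \emph{active} opponents now depends on the vector of type-specific thresholds $(v^*_1,\dots,v^*_K)$ and the type probabilities $(\gamma_1,\dots,\gamma_K)$, so the expected social-network benefit a type-$k$ user anticipates is a compound sum over type-compositions of the other $n-1$ users. One must verify that this more elaborate expectation is still \emph{monotone in the user's own valuation} $v_i$ (it is, because $v_i$ enters only through the purchase-surplus term, not through the opponents' activity) and is well-defined as a fixed point of the induced best-response map. This is a consistency-of-beliefs issue of the same flavor as in the original proof, but it is notationally heavier here; the uniqueness of the fixed point (paralleling Proposition~\ref{unique_user}) is not asserted in the present lemma, so I would only need existence for this statement, deferring uniqueness and the closed-form characterization of each $v^*_k$ to the subsequent analysis.
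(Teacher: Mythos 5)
Your proposal is essentially correct and follows exactly the route the paper intends: the paper gives no standalone proof of this lemma, instead asserting that it follows by the same three-step argument (supermodularity and existence, polarization via convexity of the payoff in $x_i$, and a monotone indifference condition) used for Belief~\ref{threshold_structure}, which is precisely what you execute with the social-benefit term rescaled by $\alpha_k$. The one place you are looser than the paper's template is the claim that all users of the same type share a common cutoff: since each user's expected social benefit depends on \emph{all} the other users' thresholds, this requires the symmetry-and-contradiction argument of Lemma~\ref{threshold_same} (subtracting the two incentive inequalities to force $v^*_i=v^*_j$ for same-type users), rather than the observation that $\Delta_k$ "does not otherwise depend on $v_i$"; you correctly flag the underlying belief-consistency issue, and the fix is routine.
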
\noindent
Furthermore, we characterize these valuation thresholds $\{v_k^*,\forall k\in\mathcal{K}\}$ in the following proposition, where we extend the original formulation in \eqref{vstar_p0_summation} for the common threshold $v^*$.

\begin{proposition}\label{prop:hetergoneous_threshold}
(i) For each preference type $k\in\mathcal{K}$, the~valuation threshold $v_k^*$ in Lemma \ref{belief_hetergoneous} is $\min\{v^\dagger_k,\bar{v}\}$, where $v^\dagger_k$ satisfies
\begin{equation}\label{equ:hetergoneous_threshold}
    v^\dagger_k=p_0+\frac{\alpha_k}{\delta}\sum_{m=0}^{n-1}\tbinom{n-1}{m}\ln(m+\omega_0)\left(\sum_{k=1}^K \gamma_k F(v^\dagger_k)\right)^m\left(1-\sum_{k=1}^K \gamma_k F(v^\dagger_k)\right)^{n-1-m}
\end{equation}

(ii) Moreover, the type-specific valuation threshold increases with the type-associated user preference for the social network benefit, i.e., for any preference type $k,l\in\mathcal{K}$, we have $v^*_k\le v^*_l$ if $\alpha_k<\alpha_l$.
\end{proposition}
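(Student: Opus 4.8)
The plan is to mirror the three-step argument behind Belief~\ref{threshold_structure} and Theorem~\ref{Theorem:p0-vstar}, but now carried out type by type. For part~(i), I would first fix the seller's uniform price $p_0$ and an arbitrary candidate profile of thresholds $\{v_k^\dagger\}_{k\in\mathcal{K}}$, and write the probability that any single other user is socially active. Under the proposed threshold belief (Lemma~\ref{belief_hetergoneous}), an arbitrary other user is active precisely when he is a type-$l$ user with valuation at most $v_l^*$, which occurs with probability $\sum_{l=1}^K \gamma_l F(v_l^*)$; call this $q$. Hence, conditioning on how many of the remaining $n-1$ users are active, that number $k$ follows $B(n-1,q)$. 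Then for a type-$k$ user with valuation $v_i$, I would write his expected payoff difference between being active ($x_i=1$) and inactive ($x_i=0$), exactly as in \eqref{expectpayoff} but with the social term scaled by $\alpha_k$: being active yields $\alpha_k\,\mathbb{E}\ln(k+\omega_0)$ in expected social benefit but costs $\delta\max\{v_i-p_0,0\}$ in expected lost purchase surplus (since an active type-$k$ user is profiled with probability $\delta$). Setting this difference to zero for the marginal type-$k$ user gives $v_k^\dagger = p_0 + \frac{\alpha_k}{\delta}\,\mathbb{E}_k\ln(k+\omega_0)$, which upon expanding the binomial expectation with $q=\sum_l \gamma_l F(v_l^\dagger)$ is exactly \eqref{equ:hetergoneous_threshold}; clipping at $\bar v$ gives $v_k^* = \min\{v_k^\dagger,\bar v\}$. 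The polarization step (that the marginal user's best response is bang-bang, not interior) carries over verbatim from Step~II of the proof of Belief~\ref{threshold_structure}, since convexity of the expected payoff in $x_i$ is unaffected by the positive scalar $\alpha_k$.

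For the comparative statement in part~(ii), I would argue directly from \eqref{equ:hetergoneous_threshold}: note that the summation factor $S(q)\triangleq \sum_{m=0}^{n-1}\binom{n-1}{m}\ln(m+\omega_0)q^m(1-q)^{n-1-m} = \mathbb{E}_{k\sim B(n-1,q)}\ln(k+\omega_0)$ is common to all types and depends only on the single aggregate quantity $q=\sum_l \gamma_l F(v_l^*)$. Therefore $v_k^\dagger - p_0 = \frac{\alpha_k}{\delta} S(q)$ is, for a \emph{fixed} equilibrium value of $q$, strictly increasing in $\alpha_k$ (here I use $S(q)>0$, which holds because $\omega_0>1$ so every term is positive). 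Since clipping at $\bar v$ is monotone, $\alpha_k<\alpha_l \Rightarrow v_k^\dagger \le v_l^\dagger \Rightarrow v_k^* \le v_l^*$, which is the claim. The ordering of the thresholds is thus an immediate consequence once we know the equilibrium $q$ is well-defined.

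The main obstacle is not the ordering but the consistency/fixed-point part of~(i): one must show the system \eqref{equ:hetergoneous_threshold} actually has a solution (and ideally a unique one), so that ``the valuation threshold is $\min\{v_k^\dagger,\bar v\}$'' is a meaningful statement. My plan here is to reduce the $K$-dimensional system to a scalar fixed-point equation in $q$. Substituting $v_k^\dagger(q) = \min\{p_0 + \frac{\alpha_k}{\delta}S(q),\,\bar v\}$ back into the definition of $q$ yields $q = \Phi(q) \triangleq \sum_{k} \gamma_k F\!\big(\min\{p_0+\tfrac{\alpha_k}{\delta}S(q),\bar v\}\big)$, a self-map of $[0,1]$. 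Continuity of $S(\cdot)$ and of $F(\cdot)$ gives existence via Brouwer/intermediate value theorem. For uniqueness I would try to show $\Phi$ is a contraction or at least that $\Phi(q)-q$ is strictly monotone; the key quantitative input is that $S'(q)$ is bounded — indeed $S(q)=\mathbb{E}_{B(n-1,q)}\ln(k+\omega_0)$ has derivative $(n-1)\big(\mathbb{E}_{B(n-2,q)}\ln(k+1+\omega_0) - \mathbb{E}_{B(n-2,q)}\ln(k+\omega_0)\big) \le (n-1)\ln\!\big(1+\tfrac{1}{\omega_0}\big)$, which is small when $\omega_0$ is not too small — combined with the Lipschitz bound on $F$ and the $1/\delta$ factor, so that $\Phi$ contracts under the same parameter regime (e.g.\ $\bar v > 2\ln(n-1+\omega_0)$, cf.\ footnote~\ref{footnote3}) already assumed for the homogeneous model. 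If a clean contraction constant is elusive, the fallback is to invoke the strategic-complementarities structure already used in Step~I of Belief~\ref{threshold_structure}'s proof and Proposition~\ref{unique_user}: the game remains supermodular, so the set of equilibria has a largest and smallest element, and I would pin down uniqueness by the same monotone-iteration argument that forces these to coincide, now applied coordinatewise to the vector $(v_1^*,\dots,v_K^*)$.
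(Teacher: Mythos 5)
Your argument for both parts is correct and is essentially the intended one: the paper gives no separate appendix proof for this proposition, presenting it as a direct extension of the marginal-user indifference condition behind Belief~\ref{threshold_structure} and Theorem~\ref{Theorem:p0-vstar}, which is exactly what you do --- scale the social term by $\alpha_k$, replace the active-probability by the mixture $q=\sum_{l}\gamma_l F(v_l^*)$, and read off \eqref{equ:hetergoneous_threshold} from $\tilde{U}_i(1;\cdot)=\tilde{U}_i(0;\cdot)$; part~(ii) then follows because $S(q)>0$ is common to all types within a given equilibrium, so only $\alpha_k$ varies. One remark: your third paragraph on existence and uniqueness of the fixed-point system goes beyond what the proposition asserts (the paper explicitly defers solving the resulting system of equations as part of the open PBE analysis), and the contraction bound you sketch there --- $S'(q)\le(n-1)\ln(1+1/\omega_0)$ combined with the $\alpha_k/\delta$ and Lipschitz-of-$F$ factors --- does not obviously yield a modulus below one, nor does supermodularity alone force the extremal equilibria to coincide; so treat that paragraph as an honest open issue rather than part of the proof of the stated result.
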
\noindent
Specifically, the formulation in \eqref{equ:hetergoneous_threshold} replaces the fraction of socially active users $F(v^\dagger)$ in \eqref{vstar_p0_summation} with a weighted variation $\sum_{k=1}^K \gamma_k F(v^\dagger_k)$, and further takes into account the heterogeneous preference $\alpha_k$ for the social network benefit. Moreover, Proposition \ref{prop:hetergoneous_threshold} demonstrates the intuition that users who value online social networking more are less willing to become inactive to avoid personalized pricing. This naturally results in a \textit{step-wise} threshold structure that monotonically increases when users are arranged in ascending order of their preferences for online social networking.

Building upon Lemma \ref{belief_hetergoneous}, we then turn to the backward analysis of the seller's uniform pricing $p_0$ in Stage II. The analysis will follow a similar rationale in Section \ref{backward}, although it will be more involved due to the diverse type-specific decision-making thresholds among users. We conclude with Corollary \ref{coro:hetergoneous_uniform} below, where the new formulation for the uniform pricing scheme in~\eqref{equ:hetergoneous_price} degenerates into \eqref{base_price_equation3} in Proposition \ref{base_price} under homogeneous preferences among users (i.e., $v^*_k=v^*$ for all $k\in\mathcal{K}$). 

\begin{corollary}\label{coro:hetergoneous_uniform}
(i) The seller's optimal uniform price satisfies $p_0^*<\min\{v_k^*;k\in\mathcal{K}\}$ in the PBE (whenever it exists). 

(ii) Moreover, given an arbitrary set of type-specific valuation thresholds $\{v_k^*,\forall k\in\mathcal{K}\}$ for users in Stage I, the seller's optimal uniform pricing scheme in Stage II is given by
\begin{equation}\label{equ:hetergoneous_price}
    p_0^*\left(v_k^*,\forall k\in\mathcal{K}\right)=\left(2\sum_{k\in\mathcal{K}}\gamma_k\frac{1-\delta}{\bar{v}-\delta v^*_k}\right)^{-1}.
\end{equation}
\end{corollary}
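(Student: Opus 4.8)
The plan is to reuse the backward-analysis template behind Proposition \ref{base_price}, but now propagating the preference-heterogeneous posterior beliefs through the seller's uniform-pricing problem. I would establish part (i) first, since it pins down the single pricing regime in which the closed form \eqref{equ:hetergoneous_price} is the maximizer, and only then derive \eqref{equ:hetergoneous_price} from a first-order condition restricted to that regime.

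For part (i): by Lemma \ref{belief_hetergoneous} and Proposition \ref{prop:hetergoneous_threshold}, in any PBE every type-$k$ threshold satisfies $v_k^*=\min\{v_k^\dagger,\bar v\}$ with $v_k^\dagger=p_0+\tfrac{\alpha_k}{\delta}\,\tilde J_k$, where $\tilde J_k$ denotes the binomial average $\sum_{m=0}^{n-1}\binom{n-1}{m}\ln(m+\omega_0)q^m(1-q)^{n-1-m}$, with $q=\sum_{\ell}\gamma_\ell F(v_\ell^\dagger)$, that appears in \eqref{equ:hetergoneous_threshold}. Since $\omega_0>1$, each term $\ln(m+\omega_0)\ge\ln\omega_0>0$, so $\tilde J_k>0$ and therefore $v_k^\dagger>p_0$ for every type with $\alpha_k>0$. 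Separately, the PBE uniform price satisfies $p_0^*<\bar v$: charging $p_0=\bar v$ extracts zero revenue from non-profiled users, whereas $p_0=\bar v/2$ extracts strictly positive revenue. Combining the two facts, $v_k^*=\min\{v_k^\dagger,\bar v\}>p_0^*$ for every $k$, i.e. $p_0^*<\min_{k}v_k^*$.

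For part (ii): I would restrict attention to the regime $p_0<\min_k v_k^*$ (the heterogeneous analogue of case \eqref{base_price_equation3}, legitimate by part (i) when the thresholds are large enough). First assemble the seller's posterior over a non-profiled user's valuation: conditional on the user being type $k$ and non-profiled, the density is the two-step function of \eqref{nonprofiled_belief} with $v^*$ replaced by $v_k^*$; mixing over types with the Bayes weights $\gamma_k(\bar v-\delta v_k^*)\big/\sum_\ell\gamma_\ell(\bar v-\delta v_\ell^*)$ produces a piecewise-constant posterior density that jumps upward at each distinct $v_k^*$. The seller's expected revenue from a non-profiled user, $p_0\Pr\!\left(v\ge p_0\mid i\in\mathcal N_0\right)$, is then a strictly concave quadratic in $p_0$ on this regime; solving its first-order condition and simplifying gives the closed form \eqref{equ:hetergoneous_price}. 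I would finish by verifying that this candidate indeed satisfies $p_0^*<\min_k v_k^*$, so it is interior to the assumed regime, and that it collapses to \eqref{base_price_equation3} when all $v_k^*$ coincide; ruling out the complementary regimes, where $p_0$ exceeds one or more thresholds, then shows the candidate is the global optimum.

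The hard part is the aggregation over preference types. Unlike the single-threshold setting of Proposition \ref{base_price}, whose revenue function splits cleanly into three pieces, the revenue here has a break point at every distinct $v_k^*$, so eliminating the competing pricing regimes requires a case analysis over the ordering of the thresholds that is markedly more intricate; the crux is to show that the revenue is strictly concave on each relevant piece and continuous at the break points, which is what makes the first-order condition conclusive, and to confirm that its maximizer never migrates above $\min_k v_k^*$ in equilibrium.
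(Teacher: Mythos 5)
The paper does not give a standalone proof of Corollary \ref{coro:hetergoneous_uniform}; it only remarks that the backward analysis ``follows a similar rationale'' to that of Proposition \ref{base_price}, and your overall architecture (part (i) via the indifference relation $v_k^\dagger=p_0+\tfrac{\alpha_k}{\delta}\tilde J_k>p_0$ together with $p_0^*<\bar v$, then a first-order condition restricted to the regime $p_0<\min_k v_k^*$) is the right template. Part (i) is essentially fine, modulo the edge case $\alpha_k=0$, which the paper's ``non-negative preference parameter'' permits and for which your strict inequality degenerates to an equality.

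The gap is in part (ii): the derivation you describe does not produce the stated formula. With the Bayes weights $w_k=\gamma_k(\bar v-\delta v_k^*)/\sum_\ell\gamma_\ell(\bar v-\delta v_\ell^*)$ that you (correctly, as a matter of Bayesian updating) attach to the type-conditional non-profiled densities, the purchase probability on the regime $p_0<\min_k v_k^*$ is
\begin{equation*}
\Pr\left(v\ge p_0\mid i\in\mathcal{N}_0\right)=\sum_{k\in\mathcal{K}} w_k\left(1-\frac{(1-\delta)p_0}{\bar v-\delta v_k^*}\right)=1-\frac{(1-\delta)p_0}{\sum_{\ell\in\mathcal{K}}\gamma_\ell(\bar v-\delta v_\ell^*)},
\end{equation*}
so the first-order condition yields $p_0^*=\sum_{k}\gamma_k(\bar v-\delta v_k^*)/\left(2(1-\delta)\right)$, an arithmetic $\gamma$-weighted average of the type-wise prices of \eqref{base_price_equation3}, and \emph{not} the harmonic-type expression \eqref{equ:hetergoneous_price}. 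The stated formula is what emerges if one instead weights the type-conditional purchase probabilities $1-(1-\delta)p_0/(\bar v-\delta v_k^*)$ directly by the \emph{prior} $\gamma_k$, i.e.\ skips the reweighting of types by their probability of being non-profiled. The two answers coincide only when all $v_k^*$ are equal, so the ``degenerates to \eqref{base_price_equation3}'' sanity check cannot distinguish them (by the AM--HM inequality your candidate is strictly larger whenever the thresholds differ). As written, your concluding step ``solving its first-order condition and simplifying gives the closed form \eqref{equ:hetergoneous_price}'' therefore fails; you must either abandon the posterior reweighting of types and justify why the prior weights are the correct ones for the seller's expected-revenue objective, or accept that your (Bayes-consistent) objective produces the arithmetic formula and reconcile that with the statement. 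Separately, the regime elimination you defer---ruling out maximizers lying above one or more of the $v_k^*$, which genuinely matters for ``an arbitrary set of thresholds'' just as in the first two cases of Proposition \ref{base_price}---is described but not carried out; that is a completeness issue rather than an error.
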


While previous analyses have derived key structural results for equilibrium outcomes, the final analysis for PBE remains challenging in general. The additional challenges stem from a more intricate information structure: (i) diverse type-specific decision-making thresholds among users within the context of doubly-coupled interactions, and (ii) the need to ensure belief consistency in the presence of users' multi-dimensional private information, encompassing both product valuation and preference type. On a technical level, replicating the analysis from Section \ref{Sec:PBE} to find PBE involves solving a rather complex system of non-linear, multivariate, and high-order polynomials. However, as an alternative approach, the step-wise threshold structure in users' social decision-making can potentially serve as a valuable guide for developing algorithmic solutions, and we will leave it as an interesting avenue for future research.

\subsection{Partially Connected Social Networks}\label{subsec:extension:partial}

In this subsection, we expand our original three-stage model in Section \ref{Sec:Model} to consider a partially connected social network. Specifically, we consider the set of users $\mathcal{N}$ interacting through an online social network, represented by a graph $\mathcal{G}$. We assume that the underlying graph is unweighted and publicly known. The edges of $\mathcal{G}$ represent the social relationship within the network. We denote the adjacency matrix of $\mathcal{G}$ as $G$, where the $(i,j)$-th entry encodes whether users $i$ and $j$ are connected in the underlying network, i.e., $g_{ij}\in\{1,0\}$, with $1$ for being connected and $0$ otherwise. Accordingly, we reformulate the social network benefit $J_i$ for each user $i$ as
\begin{equation*}
    J_i(x_i,\boldsymbol{x}_{-i})=x_i\ln\left(\sum_{j\neq i }g_{ij}x_j+\omega_0\right),
\end{equation*}
which extends the original formulation in \eqref{social_network_benefit} under a fully connected network (i.e., $g_{ij}=1$ for any $i,j\in\mathcal{N}$). Furthermore, in this subsection, we consider an extended pricing setting, where the seller's uniform pricing strategy for non-profiled users can vary depending on their network positions, i.e., $\{p_0^i,\forall i\in\mathcal{N}\}$. Notice that the seller still charges a personalized price \mbox{$p_i=v_i$} for each profiled user, the same as our original model.

\begin{figure}[h]
	\centering
    \vspace{-15pt}
	\subfigure[Subnetwork Visualization]{
        \begin{minipage}{0.45\linewidth}
            \centering
            \begin{tikzpicture}
            \node (image) at (0,0) {\includegraphics[width=0.8\linewidth]{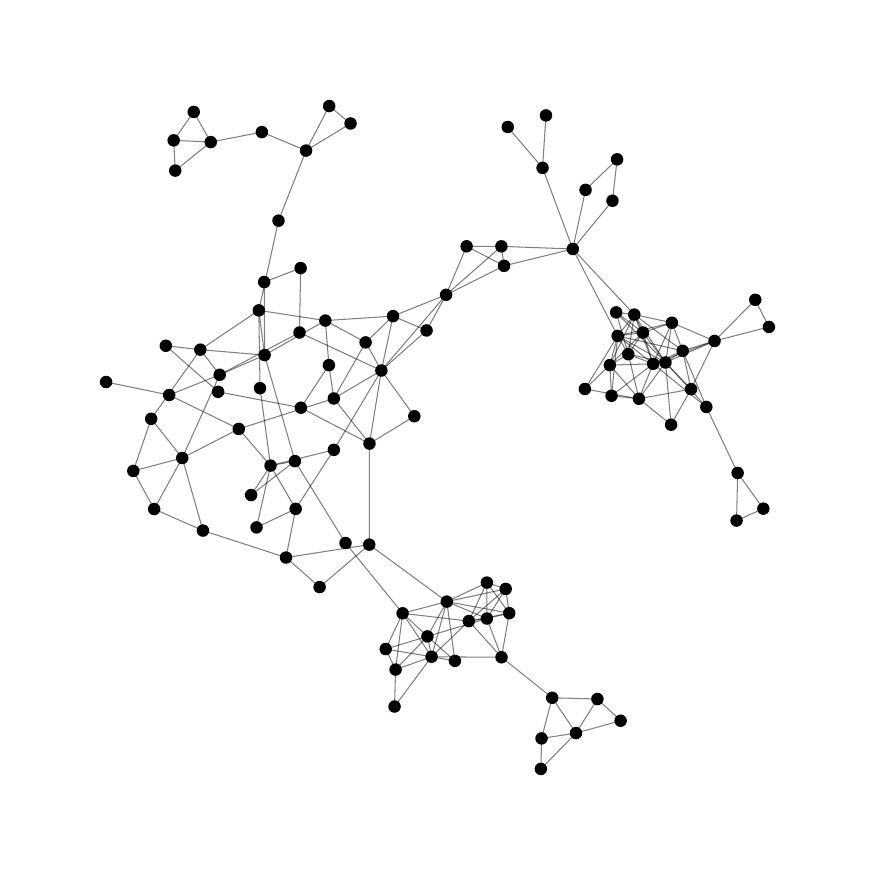}};
            \end{tikzpicture}
            \label{fig_subnet_origin}
            \vspace{-15pt}
		\end{minipage}
	}
    \hspace{-10pt}
	\subfigure[Social Activity Levels]{
        \begin{minipage}{0.45\linewidth}
            \centering
            \begin{tikzpicture}
            \node (image) at (0,0) {\includegraphics[width=0.8\linewidth]{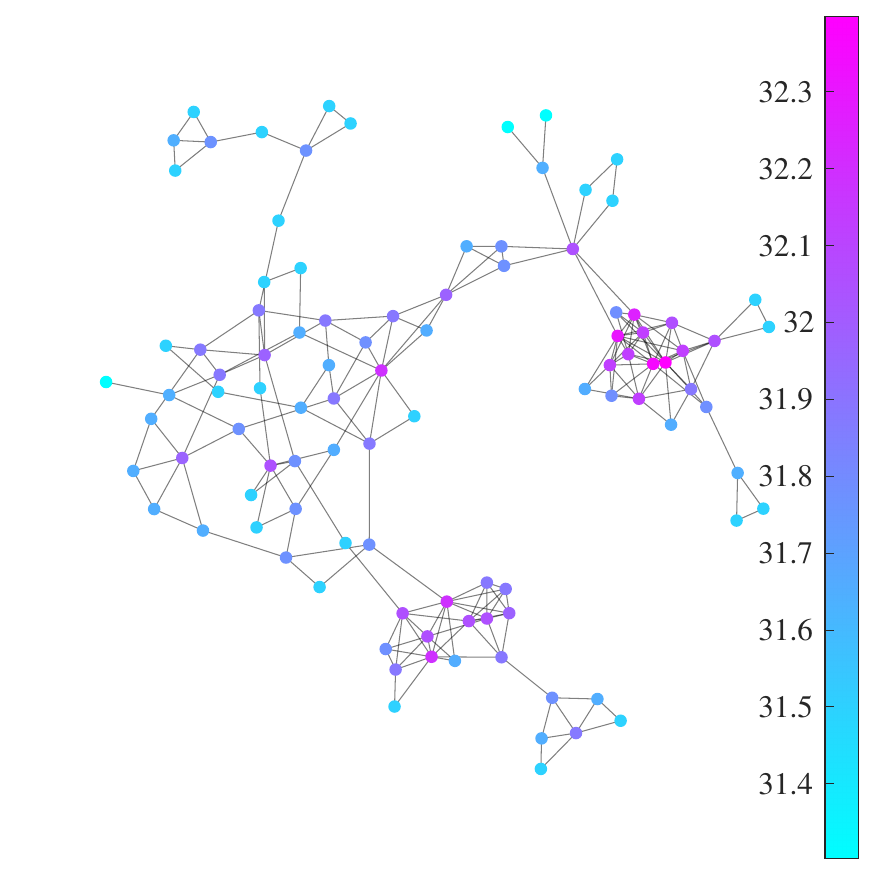}};
            \node at (3.2,0) {\footnotesize\rotatebox{90}{Different User $i$'s Valuation Threshold $v_i^*$}};
            \end{tikzpicture}
            \label{fig_subnet_activity.pdf}
            \vspace{-15pt}
		\end{minipage}
	}
	\caption{A subnetwork of the Facebook social network sourced from \cite{snapnets}, and an illustration of different users' social activity levels~in the PBE (indicated by the equilibrium valuation thresholds $v_i^*$ for each user $i\in\mathcal{N}$) with a profiling accuracy $\delta=0.7$.~Here,~we~consider that the valuation of each subnetwork user follows a uniform distribution over $[0, 40]$.}
    \vspace{-5pt}
    \label{fig_sub_visual}
\end{figure}

How should users manage online activities when interacting within a partially connected~social network? Moreover, considering the diverse and distinctive network positions among users, can the degree of each user node alone suffice to indicate their actions? To answer these questions, we will first present key structural results for the PBE of this new extension in Appendix \ref{Appendix:Partially}, based on which we then conduct empirical experiments utilizing Facebook social network data from \cite{snapnets}.

For our empirical studies, we utilize Facebook social network data from \cite{snapnets} and focus on a typical induced subnetwork of the original Facebook network in \cite{snapnets}. Specifically, the subnetwork consists of $100$ user nodes and $215$ edges, with an average degree of $4.3$. Fig. \ref{fig_subnet_origin} provides a visualization of this subnetwork. Furthermore, in Fig. \ref{fig_subnet_activity.pdf}, we graphically illustrate users' social activity levels in the PBE, as indicated by their equilibrium valuation thresholds $\{v^*_i,\forall i\in\mathcal{N}\}$, with a profiling accuracy $\delta=0.7$. In particular, the users who appear to be more connected and central in the network have higher valuation thresholds and are more likely to be active online.

\begin{figure}[h]
    \centering
    \vspace{-5pt}
    \includegraphics[width=0.6\linewidth]{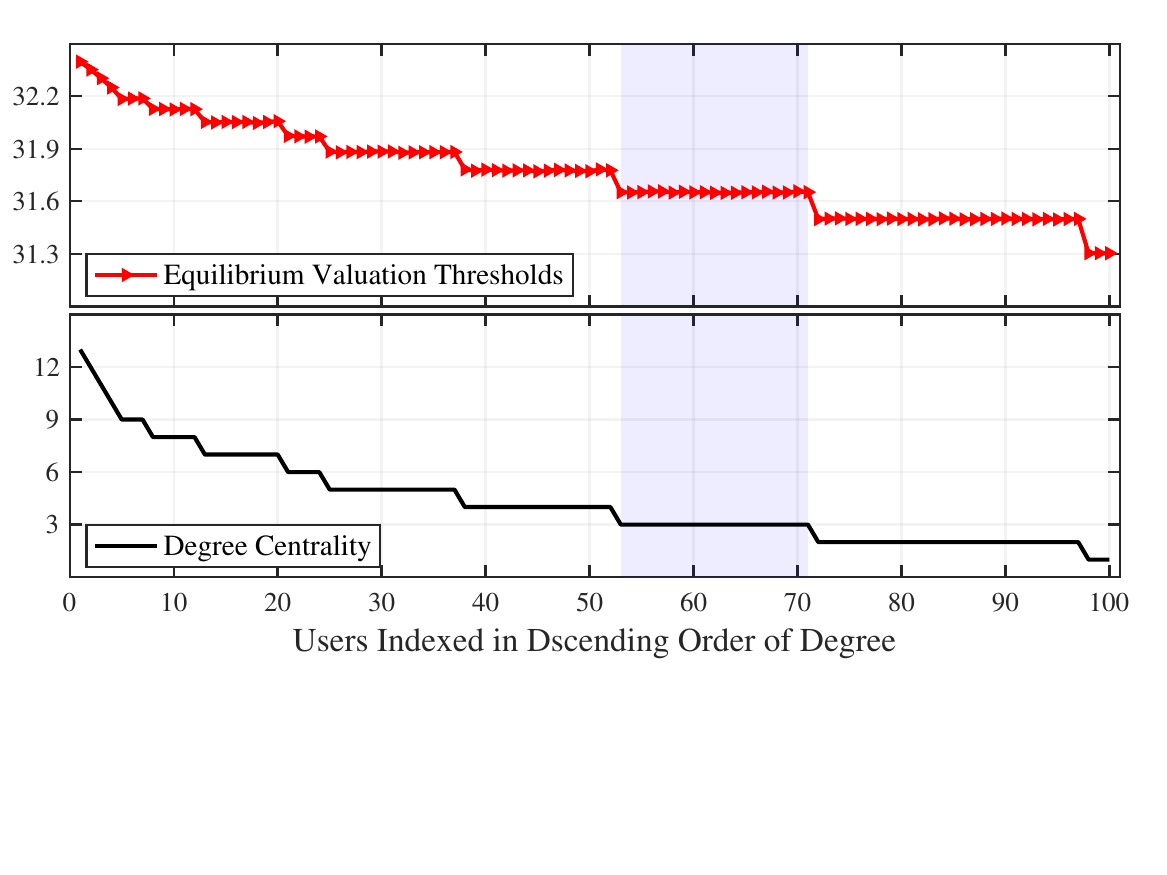}
    \vspace{-50pt}
    \caption{Equilibrium valuation thresholds and degree centrality for different users with a profiling accuracy $\delta=0.7$.}
    \label{fig:networ_thresholds}
\end{figure}

Finally, we delve into users' equilibrium valuation thresholds in more detail with Fig.~\ref{fig:networ_thresholds}. Specifically, we rank users in descending order of degree and plot their corresponding equilibrium valuation thresholds $v^*_i$ in Fig. \ref{fig:networ_thresholds}. Despite the heterogeneity in users' social network positions, our observation reveals a strong alignment between users' equilibrium valuation thresholds and their degree of centrality. For instance, within the blue regime of Fig. \ref{fig:networ_thresholds}, users with the same degree of $3$ exhibit similar valuation thresholds in the PBE. Therefore, our empirical findings suggest that \textit{the degree of each user node alone sufficiently serves as an indicator of users' social decision-making regarding information disclosure}. This result is surprising, as the degree is only one metric of each~user~node in the network: nodes with the same degree can have very~different centrality within the network.

\section{Robustness Beyond Uniform Valuation Distribution}\label{appendix:general}

For analytical tractability, we have assumed so far that users' valuations are uniformly distributed. In this section, we extend to more general distributions to demonstrate the robustness of our major results. Our analysis will be based on the three-stage model in Section \ref{Sec:Model}.

Consider that each user $i$'s valuation $v_i$ follows a more~general cumulative distribution function $F(v)$. However, such a dynamic Bayesian game is challenging to analyze in general. Nevertheless, we can still extend our analysis in closed form with the following assumption, which notably relaxes our prior assumption of uniformly-distributed users.

\begin{assumption}\label{ass:general_cdf}
The cumulative distribution function (CDF) of each user's valuation $F(v)$ satisfies the following conditions:
\begin{itemize}
    \item [(i)] $F(v)$ is concave over $[0,\bar{v}]$, and
    \item [(ii)] $p_0(1-F(p_0))$ is concave in $p_0$ over $[0,\bar{v}]$.
\end{itemize}
\end{assumption}

Assumption \ref{ass:general_cdf} presents a set of sufficient conditions to~ensure the existence and uniqueness of PBE beyond the uniform distribution. Typical examples that fall under these conditions include the uniform distribution, the exponential distribution, and certain beta distributions.\footnote{For instance, the CDF of beta distributions with shape parameters $\alpha=1$, $1<\beta<2$ or $1<\alpha<2$, $\beta=1$ is concave.} Particularly, as in the literature on price discrimination (e.g., \cite{conitzer2012hide,fudenberg2006behavior}), we also assume the concavity of the benchmark revenue function $p_0(1-F(p_0))$. This indicates that the seller's marginal revenue without user profiling decreases in the uniform price. This condition ensures the existence of a unique global maximum in the no-profiling benchmark case, which enables tractable analysis through the first-order condition. We hereafter denote the unique solution to $\arg\max_{p_0}p_0(1-F(p_0))$ as $\hat{p}_0$, i.e.,
\begin{equation*}
\hat{p}_0\triangleq\arg\max_{p_0}p_0(1-F(p_0)).
\end{equation*}

Similar to our analysis in Section~\ref{Sec:alternate}, we alternate backward induction with forward induction and characterize the PBE in the following proposition. Notice that our analysis here is based on Assumption \ref{ass:general_cdf}. Later in this section, we will further relax~these conditions in the numerical studies.

\begin{proposition}\label{Theorem:PBE_general}
Under Assumption \ref{ass:general_cdf}, there exists~a~unique PBE characterized as follows.
\begin{itemize}
\item \emph{\textbf{Case I (all active users in social networks)}} with a small maximum valuation
\begin{equation*}
\bar{v}-\hat{p}_0\le \frac{1}{\delta}\ln(n-1+\omega_0).
\end{equation*} 
\begin{itemize}
    \item [(i)] In Stage I, users' equilibrium valuation threshold is $v^\ast=\bar{v}$, i.e., every user $i\in\mathcal{N}$ is active in the social network with $x_i^*=1$.
    \item [(ii)] In Stage II, the seller sets a uniform price of $p_0^*=\hat{p}_0$.
\end{itemize}

\item \emph{\textbf{Case II (partially active users in social networks)}} with a large maximum valuation
\begin{equation*}
\bar{v}-\hat{p}_0> \frac{1}{\delta}\ln(n-1+\omega_0).
\end{equation*} 
\begin{itemize}
    \item [(i)] In Stage I, users' equilibrium valuation threshold is $v^*<\bar{v}$, i.e., there exist some high-valuation users who are inactive in the social network.
    \item [(ii)]  In Stage II, the seller sets a uniform price of 
    \begin{equation*}
        p_0^*=v^*-\tilde{J}(v^*)/\delta.
    \end{equation*}
\end{itemize}
Here, the users' equilibrium valuation threshold $v^*$ is the unique solution to 
\begin{equation*}
    1-\delta F(v^*)-(1-\delta)F(p_0^*)-(1-\delta)p_0^*f(p_0^*)=0.
\end{equation*}
\end{itemize}
\end{proposition}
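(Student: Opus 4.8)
The plan is to mirror the alternating backward–forward analysis of Sections \ref{Sec:alternate}–\ref{Sec:PBE}, but now tracking where the uniform distribution was actually used and replacing those steps with the structural consequences of Assumption \ref{ass:general_cdf}. First I would re-establish the threshold structure of Belief \ref{threshold_structure} for a general continuous $F$: the argument in Appendix \ref{proofbelief} (strategic complementarities $\Rightarrow$ existence of a pure-strategy Bayesian Nash equilibrium; convexity of $\tilde\pi_i$ in $x_i$ $\Rightarrow$ polarization to $x_i\in\{0,1\}$; monotonicity in $v_i$ $\Rightarrow$ a common valuation threshold $v^*$) does not use the uniform assumption, so Theorem \ref{Theorem:p0-vstar} carries over verbatim, giving $v^*=\min\{v^\dagger,\bar v\}$ with $v^\dagger=p_0+\frac{1}{\delta}\tilde J(v^\dagger)$ where $F$ enters only through the binomial weights defining $\tilde J$. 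Here concavity of $F$ (condition (i)) guarantees that the map $v\mapsto v-\frac{1}{\delta}\tilde J(v)$ is strictly increasing, which is what pins down $v^\dagger$ uniquely and also delivers the uniqueness claim of Proposition \ref{unique_user} in this generality.

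Next I would redo the backward step of Section \ref{backward}. Given an arbitrary threshold $v^*$, the seller's posterior on non-profiled valuations has density proportional to $(1-\delta)f(v)$ on $[0,v^*]$ and $f(v)$ on $(v^*,\bar v]$; the non-profiled revenue as a function of the uniform price $p_0$ is, up to the normalizing constant $(\bar v-\delta\!\int_0^{v^*}\!F)$—I would write it cleanly using $\int$ rather than closed forms—equal to $p_0\big[(1-F(p_0))-\delta(F(v^*)-F(p_0))\mathbbm 1(p_0\le v^*)\big]$. On $p_0>v^*$ this is $(1-\delta F(v^*))\,p_0(1-F(p_0))$ up to a constant, maximized at $\hat p_0$ by condition (ii); on $p_0\le v^*$ it is $p_0(1-F(p_0))-\delta p_0(F(v^*)-F(p_0))$, again concave under (ii) since $-\delta p_0(F(v^*)-F(p_0))=\delta p_0 F(p_0)-\delta p_0 F(v^*)$ and $p_0F(p_0)=p_0-p_0(1-F(p_0))$ is convex. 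So the seller's problem is concave on each of the two pieces, and comparing the two candidate maximizers (together with the kink at $p_0=v^*$) yields the analogue of Proposition \ref{base_price}: for small $v^*$ the optimum is $\hat p_0$, and for larger $v^*$ it is the interior solution to the first-order condition $1-\delta F(v^*)-(1-\delta)F(p_0)-(1-\delta)p_0 f(p_0)=0$ (this is exactly the stationarity condition of the $p_0\le v^*$ branch, using $p_0^*=v^*-\tilde J(v^*)/\delta\le v^*$).

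Finally I would close the loop by intersecting the seller's best response $p_0^*(v^*)$ with the users' threshold response $v^*(p_0)=\min\{v^\dagger,\bar v\}$ from Theorem \ref{Theorem:p0-vstar}, checking the two requirements of PBE (sequential rationality holds by the concave maximization above; belief consistency holds because the posteriors were derived from the assumed threshold via Bayes' rule). Plugging $v^*=\bar v$ into $v^\dagger=p_0+\tilde J(v^\dagger)/\delta$ with $\tilde J(\bar v)=\ln(n-1+\omega_0)$ and $p_0=\hat p_0$ gives the boundary $\bar v-\hat p_0\le\frac1\delta\ln(n-1+\omega_0)$ separating Case I from Case II; in Case II the fixed-point equation $v^*=p_0^*+\tilde J(v^*)/\delta$ together with the seller's first-order condition forms the stated system, and I would argue it has a unique solution by combining the strict monotonicity of $v\mapsto v-\tilde J(v)/\delta$ (from concavity of $F$) with the strict monotonicity of $p_0\mapsto p_0(1-F(p_0))$'s derivative (from concavity of the benchmark revenue). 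The main obstacle I anticipate is the case analysis at the kink $p_0=v^*$ and verifying that the Case-I/Case-II boundary is exactly where the interior solution leaves the feasible region $p_0^*\le v^*\le\bar v$—i.e., ruling out the middle regime that appeared as \eqref{base_price_equation2} in the uniform case, which should follow because concavity of $F$ forces $\hat p_0\le\bar v/2\le\cdots$ in a way that collapses that regime, but this requires a careful monotonicity comparison rather than a one-line argument.
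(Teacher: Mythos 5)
Your overall plan---carry the threshold structure and Theorem \ref{Theorem:p0-vstar} over to general $F$, generalize the backward pricing step using the concavity in Assumption \ref{ass:general_cdf}(ii), then intersect the two best responses---is exactly the route the paper takes, and your two-branch concavity computation for the seller's non-profiled revenue is correct (the $p_0\le v^*$ branch reduces to $(1-\delta)p_0(1-F(p_0))$ plus a linear term). However, two of the specific mechanisms you rely on do not hold as stated. First, you claim that concavity of $F$ makes $v\mapsto v-\tilde J(v)/\delta$ strictly increasing and use this both to pin down $v^\dagger$ and to get uniqueness in Case II. Globally this is false: $\tilde J'(v)$ contains the factor $f(v)$ and, near $v=0$, equals $(n-1)\ln(1+1/\omega_0)f(0)$, which easily exceeds $\delta$ for moderate $n$. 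What is actually needed is that $\tilde J$ remains \emph{concave} when $f$ is non-constant---the extra term proportional to $f'(v)$ in $\tilde J''$ has the right sign precisely because Assumption \ref{ass:general_cdf}(i) forces $f'\le 0$ (this is the content of Lemma \ref{last_lemma})---after which uniqueness of $v^\dagger$ follows from concavity of $\Phi(v)=p_0+\tilde J(v)/\delta-v$, and the bound $\tilde J'(v^*)<\delta$ is obtained only at the equilibrium point (Lemma \ref{jprime}), not globally. You correctly identified concavity of $F$ as the relevant hypothesis, but the step from there to your monotonicity claim is a genuine gap.

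Second, your proposed way of eliminating the middle (kink) regime $p_0^*=v^*$, the analogue of \eqref{base_price_equation2}---``concavity of $F$ forces $\hat p_0\le\bar v/2$ \dots\ collapses that regime''---is not the right mechanism. That regime is excluded for the same reason as in the uniform case: the marginal user's indifference condition gives $p_0=v^*-\tilde J(v^*)/\delta$ with $\tilde J(v^*)>0$, hence $p_0<v^*$ strictly, which directly contradicts $p_0^*=v^*$. The paper exploits this by restricting the seller's maximization to $p_0<v^*$ from the outset, but that shortcut then requires an explicit existence condition for an interior maximizer, namely $1-F(v^*)-(1-\delta)v^*f(v^*)<0$ (equivalently $v^*>\check v$), which your sketch omits entirely; without it, ``the interior solution to the first-order condition'' in your Case II is not guaranteed to lie in $(0,v^*)$, and the boundary-value argument at the lower end of the range (where $p_0^*(v^*)\to v^*$ and hence $K(v^*)\to\tilde J(\check v)/\delta>0$) cannot be run.
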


Proposition \ref{Theorem:PBE_general} generalizes Theorem \ref{Theorem_PBE} beyond the uniform distribution yet retains similar major insights. As $\bar{v}$ increases from Case I to Case II, an average user's potential loss from being profiled becomes more significant. Thus, more users become inactive online to receive a low uniform price $p_0$ from the seller rather than a high personalized price~$v_i$. Furthermore, through numerical studies, we can also show that the non-monotonicity of users' social activity levels regarding~the~profiling accuracy $\delta$ still holds in the PBE, which is similar to Proposition~\ref{Prop:PBE_delta}.

We are also interested in whether such non-monotonicity~in users' equilibrium social behaviors carries through to more general valuation distributions. Below, we conduct an empirical study based on real-world consumer profiles sourced from an annually-updated consumer dataset in \cite{consumer}. Specifically, our analysis focuses on handbag purchases and the corresponding $92$ consumer profiles. To start, we utilize these empirical data to estimate the users' valuation distribution, which follows a normal distribution $\mathcal{N}(57.84,20.25^2)$ truncated over $[20,100]$ (see more details in Appendix \ref{Appendix:empirical:statistics}). Based on this estimated market information, we proceed to derive the equilibrium~behaviors of both (sampled) users and the seller in Fig. \ref{fig_expo}. Please refer to Appendix \ref{Appendix:Partially} for a complete version of this empirical study, where we delve into the PBE outcome in more detail and further examine the seller's empirical revenue gain.

\begin{figure}[h]
	\centering
    \vspace{-12pt}
	\subfigure[$F(v^\ast)$ versus $\delta$]{
		\begin{minipage}{0.45\linewidth}
			\centering
			\includegraphics[width=0.8\linewidth]{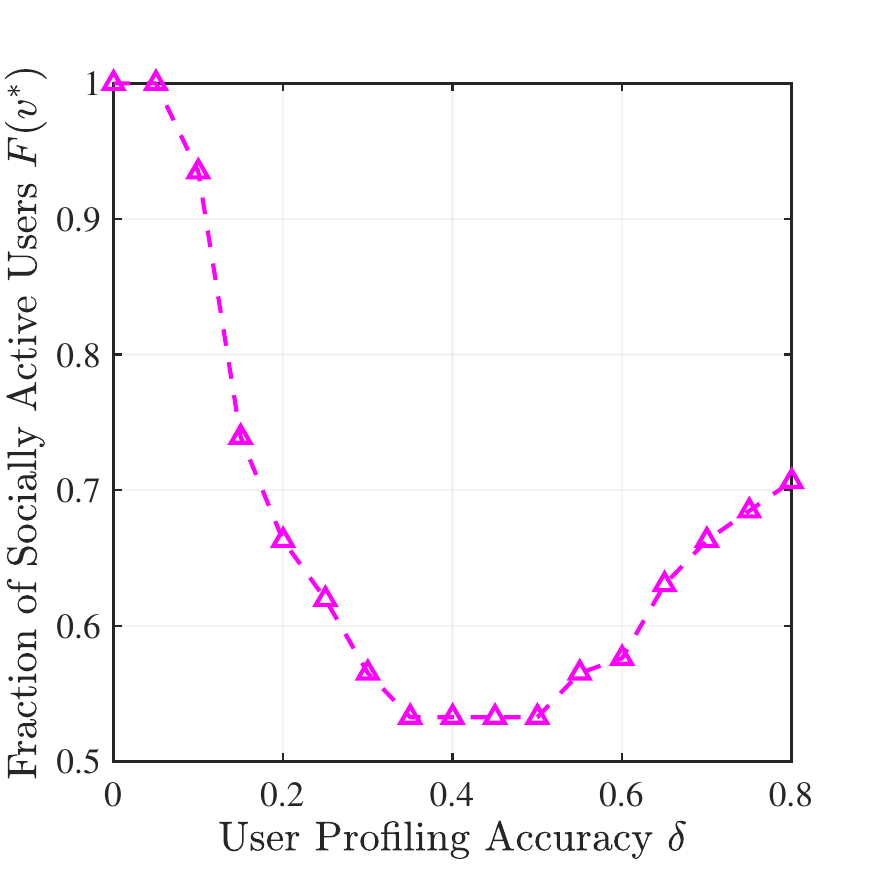}
			\label{fig_trunor_Fvstar}
		\end{minipage}
	}
	\subfigure[$p_0^*$ versus $\delta$]{
		\begin{minipage}{0.45\linewidth}
			\centering
			\includegraphics[width=0.8\linewidth]{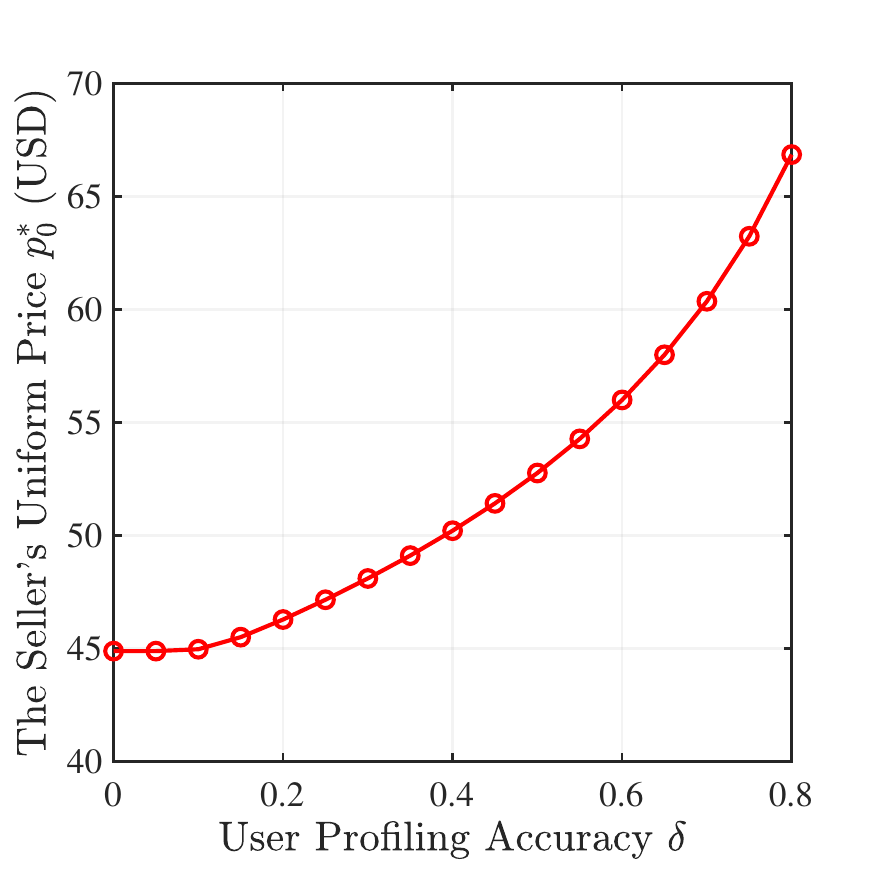}
			\label{fig_trunor_p0}
		\end{minipage}
	}
	\caption{The fraction of socially active users $F(v^\ast)$ and the seller's uniform price $p_0^*$ in the PBE (whenever it exists) based on real-world dataset \cite{consumer}.}
	\label{fig_expo}
    \vspace{-5pt}
\end{figure}

Fig. \ref{fig_expo} has a similar result to Fig. \ref{FIG:PBE_delta} and Proposition~\ref{Prop:PBE_delta}~under the uniform distribution. Specifically, the fraction of socially active users $F(v^\ast)$ in Fig. \ref{fig_trunor_Fvstar} first decreases and~then~increases in the user profiling accuracy $\delta$. Furthermore, the seller keeps raising the uniform price $p_0^*$ as $\delta$ increases, as shown in Fig.~\ref{fig_trunor_p0}, which motivates users to be active online. As a final remark, the main insights obtained from the analysis based on a uniform valuation distribution remain robust to more general distribution functions.

\section{Conclusion}\label{sec:conclusion}

This paper presents the first analytical study regarding~how users should best manage their social activities against potential personalized pricing, and how a seller should strategically adjust her pricing scheme to facilitate user profiling in social networks. To investigate these questions, we formulate~a~dynamic Bayesian game played between the seller and the users under asymmetric information. However, the~double couplings between the seller and the users, as well as amongst~the~users themselves, present significant challenges to our game analysis, where we also need to ensure consistency between users' revealed information and the seller's belief under random user profiling. Given the traditional backward induction method~no longer works, we alternate it with forward induction and~successfully characterize the unique PBE in closed form. 

Our analysis reveals that as the profiling technology accuracy improves, the seller tends to raise the equilibrium uniform price to motivate users' increased social activities and facilitate user profiling. However, this results in most users being worse off after the informed consent policy is implemented to ensure users' awareness of social data access and profiling practices by potential sellers. This finding suggests that recent regulatory evolution towards enhancing users' privacy awareness may have unintended consequences of reducing users' payoffs. In addition, we further examine some prevalent pricing practices where the seller breaks a pricing promise to personalize final offerings, which only marginally improves the seller's average revenue while introducing higher variance. Considering the additional reputation and regulation risks, it is not recommended for sellers to break pricing promises in practice.

For future work, we will investigate the case where the seller experiences costs when profiling users in the social~network. In particular, these costs may arise from data-sharing transactions and profiling technology development. It is interesting~to~understand how the seller manages the trade-off between profiling costs and the revenue from a larger pool of profiled users. Another future direction is to take into account the correlation among user profiles, where the data of one individual user may also be informative about his peers.




\section*{Acknowledgments}
This work was supported by the National Natural Science Foundation of China (Project $\text{62271434}$), Shenzhen Science and Technology Innovation Program (Project $\text{JCYJ20210324120011032}$), Guangdong Basic and Applied Basic Research Foundation (Project $\text{2021B1515120008}$), Shenzhen Key Lab of Crowd Intelligence Empowered Low-Carbon Energy Network (No. $\text{ZDSYS20220606100601002}$), Shenzhen Stability Science Program 2023, and the Shenzhen Institute of Artificial Intelligence and Robotics for Society. This work is also supported in part by the Ministry of Education, Singapore, under its Academic Research Fund Tier 2 Grant with Award no. MOE-T2EP20121-0001; in part by SUTD Kickstarter Initiative (SKI) Grant with no. SKI 2021\_04\_07; and in part by the Joint SMU-SUTD Grant with no. 22-LKCSB-SMU-053. 

\ifCLASSOPTIONcaptionsoff
  \newpage
\fi

\bibliographystyle{IEEEtran}
\bibliography{sample-base}

\clearpage
\onecolumn
\appendices

\setcounter{fact}{0}
\setcounter{lemma}{0}
\setcounter{claim}{0}
\setcounter{table}{0}
\setcounter{figure}{0}
\setcounter{theorem}{0}
\setcounter{equation}{0}
\setcounter{corollary}{0}
\setcounter{assumption}{0}
\setcounter{observation}{0}
\setcounter{proposition}{0}

\renewcommand{\thefact}{A\arabic{fact}}
\renewcommand{\theclaim}{A\arabic{claim}}
\renewcommand{\thelemma}{A\arabic{lemma}}
\renewcommand{\thetable}{A\arabic{table}}
\renewcommand{\thefigure}{A\arabic{figure}}
\renewcommand{\theremark}{A\arabic{remark}}
\renewcommand{\theequation}{A\arabic{equation}}
\renewcommand{\thecorollary}{A\arabic{corollary}}
\renewcommand{\theassumption}{A\arabic{assumption}}
\renewcommand{\theobservation}{A\arabic{observation}}
\renewcommand{\theproposition}{A\arabic{proposition}}


\section{Proof for Lemma \ref{Lem:noprofiling}}

In the absence of user profiling (i.e., $\delta=0$), users' social activities in Stage I and the seller's pricing strategy in Stage II are decoupled. Hence, each user $i$ decides his social activity level $x_i$ solely to maximize the social network benefit in Stage I. As \eqref{social_network_benefit} increases in $x_i$, we have $x_i^*(v_i)=1,\forall i\in\mathcal{N}$. As the seller cannot profile any user to exercise personalized pricing, she only considers the uniform price. Then, the seller sets the uniform price as
\begin{equation}
p_0^*=\arg\max p_0\left(1-F(p_0)\right)=\frac{\bar{v}}{2}
\end{equation}
to maximize her expected revenue extracted from non-profiled users alone (i.e., $\mathcal{N}_0=\mathcal{N}$).

\section{Proof for Proposition \ref{base_price}}

After the seller's profiling in Stage II, the set of profiled users $\mathcal{N}_1$ and the set of non-profiled users $\mathcal{N}_0$ are determined. Immediately, the seller's revenue extracted from the profiled users (see the second summation term in \eqref{platform_sale_revenue}) is also determined through personalized pricing (i.e., $p_i=v_i,\forall i\in\mathcal{N}_1$). Hence, the seller designs the uniform pricing scheme solely to maximize the revenue extracted from those non-profiled users. Notice that the true valuation of each non-profiled user is still unknown to the seller. Based on the seller's posterior belief of a non-profiled user's valuation $f(v_i|i\in\mathcal{N}_0)$ in \eqref{nonprofiled_belief}, given arbitrary users' valuation threshold $v^*$ in Stage I, we first calculate the seller's expected non-profiled revenue $\tilde{\Pi}_0$ below:
\begin{align}\label{Proposition2_revenue0}
\tilde{\Pi}_0
\triangleq \mathbb{E}_{v_i\sim f(\cdot|i\in\mathcal{N}_0)}\left[\sum_{i\in\mathcal{N}_0}p_0d_i\right]
&=|\mathcal{N}_{0}|\mathbb{E}_{v_i\sim f(\cdot|i\in\mathcal{N}_0)}\left[p_0\mathbbm{1}(v_i\ge p_0)\right],\nonumber\\
&=|\mathcal{N}_0|\int_{p_0}^{\bar{v}}p_0f(v_i|i\in\mathcal{N}_0)dv_i,\nonumber\\
&=\left\{
\begin{aligned}
&|\mathcal{N}_0|\frac{p_0(\bar{v}-p_0)}{\bar{v}-\delta v^*}, &\quad\textrm{if $p_0> v^*$,}\\
&|\mathcal{N}_0|p_0\left(1-\frac{1-\delta}{\bar{v}-\delta v^*}p_0\right),  &\quad\textrm{if $p_0\le v^*$;}
\end{aligned}
\right.
\end{align}
where we take expectations over all possible non-profiled users in $\mathcal{N}_0$. Given the first-order condition, we have $p_0^\dag\equiv\bar{v}/2$ and $p_0^\ddag\equiv\left(\bar{v}-\delta v^*\right)/2(1-\delta)$ to develop our discussions. Consider three cases in terms of $v^*$:
\begin{itemize}
	\item \mbox{Case 1 with $v^*\in[\bar{v}/(2-\delta),\bar{v}]$:} We have $p_0^\ddag\le v^*$ and $p_0^\dag<v^*$. Then \eqref{Proposition2_revenue0} increases with $p_0$ over $[0,p_0^\ddag]$ and decreases over $[p_0^\ddag,\bar{v}]$. Thus, we have the optimal uniform price $p_0^*=p_0^\ddag$.
	\item \mbox{Case 2 with $v^*\in[\bar{v}/2,\bar{v}/(2-\delta))$:} We have $p_0^\ddag> v^*$ and $p_0^\dag\le v^*$. Then \eqref{Proposition2_revenue0} increases with $p_0$ over $[0,v^*]$ and decreases over $[v^*,\bar{v}]$. Thus, we have the optimal uniform price $p_0^*=v^*$.
	\item \mbox{Case 3 with $v^*\in[0,\bar{v}/2)$:} We have $p_0^\ddag> v^*$ and $p_0^\dag>v^*$. Then \eqref{Proposition2_revenue0} increases with $p_0$ over $[0,\bar{v}/2]$ and decreases over $[\bar{v}/2,\bar{v}]$. Thus, we have the optimal uniform price $p_0^*=\bar{v}/2$.
\end{itemize}
We thus conclude with the seller's optimal uniform pricing scheme in \eqref{base_price_equation1}-\eqref{base_price_equation3}.

\section{Proof for Belief \ref{threshold_structure}}\label{proofbelief}

The proof follows the three steps stated in Section \ref{forward}.

\subsection{Step I: Game Formulation and Solution Existence}

We first formally formulate users' social interactions in Stage I as a static Bayesian game, where users make social decisions while predicting the seller's uniform price $p_0$ in Stage II.

\begin{game}[Social Interaction Game]
The users' social interaction game $\Gamma=(\mathcal{N},\{\mathcal{X}_i\}_{i\in \mathcal{N}},\{\mathcal{V}_i\}_{i\in \mathcal{N}},\mathcal{F},\{\tilde{\pi}_i\}_{i\in \mathcal{N}})$ is formulated as a static Bayesian game as follows.
\begin{itemize}
\item \emph{Players:} Users in the set $\mathcal{N}\triangleq\{1,\dots,n\}$.
\item \emph{Actions:} Each user $i\in\mathcal{N}$'s action is his social activity level $x_i\in\mathcal{X}_i\triangleq [0,1]$.
\item \emph{Types:} Each user $i\in\mathcal{N}$'s type is his valuation $v_i\in\mathcal{V}_i\triangleq[0,\bar{v}]$.
\item \emph{Common Prior: } Users' valuations are drawn from a prior distribution $\mathcal{F}(v_1,\dots,v_n)=\prod_{i=1}^{n}F(v_i)$, where $F$ is the i.i.d uniform distribution each user's valuation follows.
\item \emph{Utilities:} Each user $i\in\mathcal{N}$'s utility is $\tilde{\pi}_i$ in \eqref{expectpayoff}.
\end{itemize}
\end{game}

Next, we prove that the social interaction game defined above is a supermodular game \cite{1990Rationalizability,topkis1998supermodularity}.

\begin{proposition}\label{supermodular_game}
The users' social interaction game $\Gamma$ (under incomplete information) is a supermodular game.
\end{proposition}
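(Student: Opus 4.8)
The plan is to verify the standard Topkis-style conditions for a supermodular game directly from the structure of the expected payoff $\tilde{\pi}_i$ in \eqref{expectpayoff}. Recall that a static Bayesian game is supermodular if, for each player $i$: (a) the strategy space $\mathcal{X}_i=[0,1]$ is a compact sublattice of $\mathbb{R}$ (trivially true, as it is a compact interval); (b) each $\tilde{\pi}_i$ is upper semicontinuous in $x_i$ and continuous in $\boldsymbol{x}_{-i}$ (immediate, since $\tilde{\pi}_i$ is a continuous function of $(x_i,\boldsymbol{x}_{-i})$ for $\omega_0>1>0$, so $\sum_{j\neq i}x_j+\omega_0$ stays bounded away from zero); (c) $\tilde{\pi}_i$ has increasing differences in $(x_i,\boldsymbol{x}_{-i})$, equivalently $(x_i,x_j)$ for each $j\neq i$; and (d) $\tilde{\pi}_i$ is supermodular in $x_i$ (automatic here since $\mathcal{X}_i$ is one-dimensional, so this condition is vacuous). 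Since the actions are scalars, the only substantive condition is the increasing-differences/complementarity property (c), and since the game has incomplete information, I would state it at the level of the interim (ex-post) payoff $\tilde{\pi}_i(x_i,\boldsymbol{x}_{-i})$ for each realized type $v_i$, noting that complementarity in the ex-post payoffs is preserved under taking expectations over the (independent) types of the other players, which is exactly what is needed for the Bayesian supermodular structure.

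First I would write $\tilde{\pi}_i(x_i,\boldsymbol{x}_{-i}) = x_i\ln\!\big(\sum_{j\neq i}x_j+\omega_0\big) + (1-\delta x_i^\alpha)\max\{v_i-p_0,0\}$ and observe that the second term does not depend on $\boldsymbol{x}_{-i}$ at all, so it contributes nothing to the cross-partial. Hence the complementarity between $x_i$ and any $x_j$ ($j\neq i$) is entirely governed by the social-benefit term $J_i(x_i,\boldsymbol{x}_{-i}) = x_i\ln\!\big(\sum_{j\neq i}x_j+\omega_0\big)$. Since this is smooth in the interior, I would simply compute
\begin{equation}
\frac{\partial^2 \tilde{\pi}_i}{\partial x_i\,\partial x_j} \;=\; \frac{\partial}{\partial x_j}\ln\!\Big(\textstyle\sum_{\ell\neq i}x_\ell+\omega_0\Big) \;=\; \frac{1}{\sum_{\ell\neq i}x_\ell+\omega_0} \;>\;0,
\end{equation}
which is strictly positive on all of $[0,1]^n$ because $\omega_0>1$ guarantees the denominator is at least $\omega_0>0$. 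This establishes increasing differences in $(x_i,x_j)$ for every $j\neq i$, hence in $(x_i,\boldsymbol{x}_{-i})$ jointly (increasing differences in each coordinate of $\boldsymbol{x}_{-i}$ separately is equivalent to joint increasing differences on a product lattice). I would then note that this ex-post complementarity holds for every type profile $(v_i,\boldsymbol{v}_{-i})$, and that since the seller's predicted uniform price $p_0$ enters only the $\boldsymbol{x}_{-i}$-independent term, the interim expected payoff — obtained by integrating over $\boldsymbol{v}_{-i}$ against the i.i.d.\ prior and against any conjectured strategy profile of the opponents — inherits the same sign of the cross-partial. Finally I would invoke the standard result (e.g., Topkis, or Vives/Milgrom--Roberts) that a game with one-dimensional compact action sets, continuous payoffs, and increasing differences is supermodular, and in the Bayesian setting this yields a supermodular Bayesian game in the sense of \cite{1990Rationalizability}, completing the proof.

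I do not anticipate a genuine obstacle here: the payoff is additively separable into a social term (which carries all the cross-dependence and is manifestly log-supermodular after one differentiation) and a private purchase term (which is $\boldsymbol{x}_{-i}$-independent), so the cross-partial computation is a one-line verification. The only point requiring a little care — and the place I would be most careful in the write-up — is making explicit that complementarity is required and checked at the level of the ex-post payoffs and is then preserved under the expectation over opponents' types, so that the conclusion is a supermodular \emph{Bayesian} game rather than merely a supermodular complete-information game; this is what makes the subsequent appeal to existence of a pure-strategy Bayesian Nash equilibrium (Step I in the proof of Belief \ref{threshold_structure}) legitimate. A secondary minor point is to confirm the boundary behavior at $x_i=0$ or $x_j=0$: since $\omega_0>1$, the logarithm's argument never reaches zero, so $\tilde{\pi}_i$ is continuous (indeed smooth) up to and including the boundary of $[0,1]^n$, and no degeneracy arises.
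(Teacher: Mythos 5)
Your proposal is correct and follows essentially the same route as the paper: both reduce the claim to verifying increasing differences via the cross-partial $\partial^2\tilde{\pi}_i/\partial x_i\partial x_j = 1/\bigl(\sum_{j\neq i}x_j+\omega_0\bigr)>0$ on the compact one-dimensional action sets, and both lift the result to the incomplete-information setting (the paper via Theorem 6.1 of Vives, you via the equivalent observation that complementarity of the ex-post payoffs is preserved under expectation over opponents' types). No gaps.
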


\begin{proof}
Any Bayesian game is itself a supermodular game when its underlying certainty game (for any valuation realization) is supermodular (see Theorem 6.1 in \cite{vives1990nash}); thus we will next show the supermodularity of the underlying certainty game for $\Gamma$ to complete the proof. According to the definition of the supermodular game (see Theorem 4 in \cite{1990Rationalizability}), proofs can be done by verifying that (i) the action set $\mathcal{X}_i\triangleq [0,1]\in\mathbb{R}$ for each user $i$ is a compact subset of real number and thus a complete lattice, (ii) the utility function $\tilde{\pi}_i(x_i,\boldsymbol{x}_{-i})$ is supermodular in $x_i$ (for fixed $\boldsymbol{x}_{-i}$) since $x_i$ is a single variable instead of multidimensional vector, and (iii) $\tilde{\pi}_i(x_i,\boldsymbol{x}_{-i})$ is twice continuously differentiable in $x_i$ over $[0,1]$ and has increasing differences in $x_i$ and $x_j$, $\forall j\neq i$, i.e.,
\begin{align}
    \frac{\partial^2 \tilde{\pi}_i(x_i,\boldsymbol{x}_{-i})}{\partial x_i\partial x_j}=\frac{1}{\sum_{j\neq i}x_j+\omega_0}>0.
\end{align}
\end{proof}

Proposition \ref{supermodular_game} then helps establish the existence of a pure-strategy Bayesian Nash equilibrium (BNE) in the social interaction game, which is obtained through the properties of the supermodular game.

\begin{corollary}
A pure-strategy Bayesian Nash equilibrium exists in the social interaction game $\Gamma$.
\end{corollary}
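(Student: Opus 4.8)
The plan is to derive the corollary as an immediate consequence of Proposition~\ref{supermodular_game} via the classical fixed-point theory for supermodular games. First I would invoke Topkis's monotonicity theorem together with Tarski's fixed-point theorem: in any supermodular game whose action sets are nonempty compact sublattices of Euclidean space and whose payoffs have the increasing-differences property, each player's best-response correspondence is increasing (in the strong set order) and admits a largest and a smallest selection, so the joint best-response map is a monotone self-map of a complete lattice and hence has a fixed point --- in fact a largest and a smallest one, obtainable by iterating the map from the top and the bottom of the action space. This delivers a \emph{pure-strategy} equilibrium, which is precisely the content we need; general existence arguments (Nash, Glicksberg) would only guarantee mixed equilibria.

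Second, to pass from complete to incomplete information I would use the reduction already used in the proof of Proposition~\ref{supermodular_game}: a static Bayesian game is supermodular on the space of pure strategies $\sigma_i:\mathcal{V}_i\to\mathcal{X}_i$, ordered pointwise, whenever its ex-post (complete-information) game is supermodular for every type realization (Theorem~6.1 in~\cite{vives1990nash}). Here the relevant strategy space is the set of measurable maps from $[0,\bar{v}]$ into the compact interval $[0,1]$, which with the pointwise order is a complete lattice. Because the type profile is independent across users and expectation is a positive linear operator, taking interim expectations preserves both supermodularity in one's own action and increasing differences across players --- exactly the two pointwise properties verified in Proposition~\ref{supermodular_game}. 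Hence $\Gamma$ is itself a supermodular game on this strategy lattice, and the existence result of the previous paragraph applies, yielding a pure-strategy Bayesian Nash equilibrium (indeed a largest and a smallest one).

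I do not expect any genuinely hard step; the work is lattice-theoretic and measure-theoretic bookkeeping rather than a substantive inequality. Concretely, I would verify that the interim best-response problem $\max_{x_i\in[0,1]}\tilde{\pi}_i(x_i,\boldsymbol{x}_{-i})$, with $\tilde{\pi}_i$ as in~\eqref{expectpayoff}, has a largest maximizer that depends measurably on $v_i$ --- continuity of $\tilde{\pi}_i$ in $x_i$ on the compact interval handles existence, while Topkis's monotonicity theorem plus a measurable-selection argument handle the rest --- and that this induced map is monotone in $\boldsymbol{x}_{-i}$ in the pointwise order, so that Tarski's theorem applies on the strategy lattice. The only subtlety worth flagging is the order-continuity needed for the monotone-iteration characterization of the extremal equilibria; this follows from dominated convergence since $\tilde{\pi}_i$ is bounded and continuous in $x_i$. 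With these routine checks, the corollary is established.
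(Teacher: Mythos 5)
Your proposal is correct and follows essentially the same route as the paper: the paper's proof simply cites the pure-strategy equilibrium existence property of supermodular games (Theorem 5 in \cite{1990Rationalizability}) applied to Proposition~\ref{supermodular_game}, which is exactly the Topkis/Tarski argument you spell out, including the same Vives-style reduction of the Bayesian game to its ex-post certainty games. Your version merely makes explicit the lattice-theoretic and measurable-selection bookkeeping that the paper leaves implicit.
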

\begin{proof}
This follows from Proposition \ref{supermodular_game} based on the properties of the supermodular game (see Theorem 5 in \cite{1990Rationalizability}).
\end{proof}

Even though we have confirmed the existence of BNE in the users' social interaction game, its specific structure still needs to be determined. In the next two steps, we will focus on figuring out the structural results of users' social decisions in Stage~I.

\subsection{Step II: Polarization in Equilibrium Social Decisions}\label{StepII_belief}

Next, we demonstrate the polarization in the users' equilibrium social activity levels. More specifically, we reduce the original continuous action space $[0,1]$ to binary choices, i.e., either $0$ or $1$, for each individual user. Let $x_i(\cdot)$ denote the strategy for user $i$, and $\boldsymbol{x}_{-i}(\cdot)\triangleq\left(x_1(\cdot),\dots, x_{i-1}(\cdot),x_{i+1}(\cdot),\dots,x_{n}(\cdot)\right)$ denote the strategy profile for users other than $i$. Also, we denote the pure BNE strategy profiles as $\left(x_i^*(\cdot);i\in\mathcal{N}\right)$.

To start with, we characterize each individual user $i$'s expected payoff with incomplete information. Given that user $i$'s type is $v_i$ and the strategy profile of the other users is $\boldsymbol{x}_{-i}(\cdot)$, this user $i$'s expected payoff from choosing the social activity level $x_i=x_i(v_i)$ is
\begin{align}\label{incomplete_payoff}
    \tilde{U}_i(x_i;\boldsymbol{x}_{-i}(\cdot),v_i)&\triangleq\int \tilde{\pi}_i(x_i,\boldsymbol{x}_{-i}(\boldsymbol{v}_{-i})) d\boldsymbol{F}_{-i}(\boldsymbol{v}_{-i}),\nonumber\\&=\int x_i\ln\left(\sum_{j\neq i}x_j(v_j)+\omega_0\right) d\boldsymbol{F}_{-i}(\boldsymbol{v}_{-i})+(1-\delta x_i^\alpha)\max\{v_i-p_0,0\};
\end{align}
where we take expectations over all possible valuations of users other than $i$ given incomplete information, and $\tilde{\pi}_i(x_i,\boldsymbol{x}_{-i}(\boldsymbol{v}_{-i}))$ is given in \eqref{expectpayoff}. Then we proceed with analyzing the best response correspondence for each individual user. Specifically, in the BNE, given all other users' strategy profile $\boldsymbol{x}^*_{-i}(\cdot)$, each individual user $i$ decides his social activity level $x_i^*(v_i)$ to maximize his expected payoff in \eqref{incomplete_payoff}, i.e.,
\begin{align}\label{argmax_incompletepayoff}
x_i^*(v_i)\in &\arg\max_{x_i\in[0,1]}\tilde{U}_i(x_i;\boldsymbol{x}^*_{-i}(\cdot),v_i),\nonumber\\
=&\arg\max_{x_i\in[0,1]}\int x_i\ln\left(\sum_{j\neq i}x_j^*(v_j)+\omega_0\right) d\boldsymbol{F}_{-i}(\boldsymbol{v}_{-i})+(1-\delta x_i^\alpha)\max\{v_i-p_0,0\}.
\end{align}
Furthermore, since the expected payoff $\tilde{U}_i(x_i;\boldsymbol{x}^*_{-i}(\cdot),v_i)$ is convex in $x_i$ over the closed interval of $[0,1]$ (see Lemma \ref{convex_payoff} at the end of this subsection), we then have $x_i^*(v_i)\in\{0,1\}$, i.e., each user $i$'s equilibrium social activity level is either $0$ or $1$. More specifically, we have
\begin{align}\label{a5}
x_i^*(v_i)=&\mathbbm{1}\left(\tilde{U}_i(1;\boldsymbol{x}^*_{-i}(\cdot),v_i)\ge\tilde{U}_i(0;\boldsymbol{x}^*_{-i}(\cdot),v_i)\right),\nonumber\\=&\mathbbm{1}\left(v_i\le p_0+\frac{1}{\delta}\int \ln\left(\sum_{j\neq i}x_j^*(v_j)+\omega_0\right) d\boldsymbol{F}_{-i}(\boldsymbol{v}_{-i})\right).    
\end{align}
However, it is still difficult to directly outline the structure of users' social activity decisions from \eqref{a5} because of the coupling therein. Given such, we seek and prove the threshold structure in the next step.

\begin{lemma}\label{convex_payoff}
$\tilde{U}_i(x_i;\boldsymbol{x}^*_{-i}(\cdot),v_i)$ in \eqref{incomplete_payoff} is convex in $x_i$ over $[0,1]$.
\end{lemma}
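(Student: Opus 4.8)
The plan is to decompose $\tilde U_i(x_i;\boldsymbol{x}^*_{-i}(\cdot),v_i)$ as a sum of two functions of $x_i$ and verify convexity of each. The key observation is that in \eqref{incomplete_payoff} the integral $\int \ln\bigl(\sum_{j\neq i}x_j^*(v_j)+\omega_0\bigr)\,d\boldsymbol{F}_{-i}(\boldsymbol{v}_{-i})$ does not depend on $x_i$ at all, so the expected payoff has the form $C\,x_i + s - \delta\,s\,x_i^{\alpha}$, where $C\triangleq\int \ln\bigl(\sum_{j\neq i}x_j^*(v_j)+\omega_0\bigr)\,d\boldsymbol{F}_{-i}(\boldsymbol{v}_{-i})$ and $s\triangleq\max\{v_i-p_0,0\}\ge 0$ are both constants with respect to $x_i$.

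First I would note that the term $C\,x_i$ is affine in $x_i$, hence convex on $[0,1]$. For the remaining part $s - \delta\,s\,x_i^{\alpha}$, I would use that $s\ge 0$, $\delta\ge 0$, and $0<\alpha<1$: since $x\mapsto x^{\alpha}$ is concave on $[0,\infty)$ for $\alpha\in(0,1)$, the map $x_i\mapsto -\delta\,s\,x_i^{\alpha}$ is a nonnegative scaling of a convex function and therefore convex on $[0,1]$, and adding the constant $s$ preserves convexity. Since a finite sum of convex functions is convex, $\tilde U_i(x_i;\cdot,v_i)$ is convex on $[0,1]$. As a sanity check I would also record the second-derivative computation $\partial^2 \tilde U_i/\partial x_i^2 = \delta\,\alpha(1-\alpha)\,x_i^{\alpha-2}\,s \ge 0$, valid on $(0,1]$, and then extend convexity to the closed interval $[0,1]$ using continuity of $\tilde U_i$ at $x_i=0$.

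I would briefly remark that all boundary subcases are covered: if $v_i\le p_0$ then $s=0$ and the second term is constant; if $\delta=0$ the payoff is affine in $x_i$; otherwise the argument above applies directly.

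There is no genuine obstacle in this lemma. The only care-points are the sign bookkeeping (ensuring $s\ge 0$, $\delta\ge 0$, and $0<\alpha<1$ so that $x\mapsto x^{\alpha}$ is concave rather than convex) and the fact that $x_i^{\alpha-2}$ is unbounded as $x_i\to 0^+$; this is why I would prefer to invoke concavity of $x\mapsto x^{\alpha}$ as a standard fact (or pass through the continuity argument) rather than rely on the second-derivative test uniformly on $[0,1]$.
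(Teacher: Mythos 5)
Your proof is correct and follows essentially the same route as the paper's: the paper likewise reduces the claim to the sign of $\partial^2\tilde{\pi}_i/\partial x_i^2=-\delta\alpha(\alpha-1)x_i^{\alpha-2}\max\{v_i-p_0,0\}\ge 0$ for $0<\alpha<1$ and then notes that convexity is preserved under the integral in \eqref{incomplete_payoff}. Your extra care at $x_i=0$ (invoking concavity of $x\mapsto x^{\alpha}$ directly rather than the second-derivative test, which degenerates there) is a minor tightening of the paper's argument, not a different approach.
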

\begin{proof}
Calculating the second-order derivative of $\tilde{\pi}_i(x_i,\boldsymbol{x}_{-i})$ regarding $x_i$ yields
\begin{align}
    \frac{\partial^2 \tilde{\pi}_i(x_i,\boldsymbol{x}_{-i})}{\partial x_i^2}=-\delta\alpha(\alpha-1)x_i^{\alpha-2}\cdot\max\{v_i-p_0,0\}\ge0
\end{align}
with $0<\alpha<1$. So $\tilde{\pi}_i(x_i,\boldsymbol{x}_{-i})$ is convex in $x_i$ over $[0,1]$. The convexity of $\tilde{U}_i(x_i;\boldsymbol{x}^*_{-i}(\cdot),v_i)$ thus follows because convexity is preserved under integrals in \eqref{incomplete_payoff}.
\end{proof}

\subsection{Step III: Existence of Common Valuation Threshold}

Finally, we prove that there exists a valuation threshold for each user $i$'s social activity strategy $x_i^*(v_i)$ in the BNE. 

\begin{lemma}\label{threshold_i}
In the BNE of the social interaction game $\Gamma$, each user $i$'s social activity strategy $x_i^*(v_i)$ is non-increasing in his valuation $v_i$. In particular, there exists a equilibrium valuation threshold $v_i^*\in[p_0,\bar{v}]$ such that
\begin{equation}
x_i^*(v_i) = \mathbbm{1}(v_i\le v^*_i),\quad\forall i\in\mathcal{N}.
\end{equation}
\end{lemma}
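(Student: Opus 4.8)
The plan is to read off the threshold directly from the best-response characterization \eqref{a5} obtained in the polarization step. The key observation is that the quantity inside the indicator on the right-hand side of \eqref{a5}, namely
\[
p_0+\frac{1}{\delta}\int \ln\!\Big(\sum_{j\neq i}x_j^*(v_j)+\omega_0\Big)\,d\boldsymbol{F}_{-i}(\boldsymbol{v}_{-i}),
\]
does not depend on $v_i$: once we fix a pure-strategy BNE profile — which exists by the existence corollary established in Step~I — the strategies $x_j^*(\cdot)$, $j\neq i$, are determined, so this integral is a deterministic constant. I would denote this constant by $v_i^\dagger$ and set $v_i^*\triangleq\min\{v_i^\dagger,\bar v\}$. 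Since $\mathbbm{1}(v_i\le v_i^\dagger)=\mathbbm{1}(v_i\le v_i^*)$ for every $v_i\in[0,\bar v]$, \eqref{a5} then gives $x_i^*(v_i)=\mathbbm{1}(v_i\le v_i^*)$ directly, and because $v_i\mapsto\mathbbm{1}(v_i\le v_i^*)$ is non-increasing, the monotonicity claim follows at once.

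It then remains to place $v_i^*$ in $[p_0,\bar v]$. The upper bound holds by construction. For the lower bound I would note that $\sum_{j\neq i}x_j^*(v_j)\ge 0$ and $\omega_0>1$ force $\ln(\sum_{j\neq i}x_j^*(v_j)+\omega_0)\ge\ln\omega_0>0$; together with $\delta>0$ this yields $v_i^\dagger\ge p_0+\tfrac1\delta\ln\omega_0>p_0$, hence $v_i^*\in(p_0,\bar v]\subseteq[p_0,\bar v]$. An equivalent, slightly more structural route is a single-crossing argument: using \eqref{incomplete_payoff} one checks that the indifference gap $\tilde U_i(1;\boldsymbol{x}^*_{-i}(\cdot),v_i)-\tilde U_i(0;\boldsymbol{x}^*_{-i}(\cdot),v_i)$ equals $\int\ln(\cdot)\,d\boldsymbol{F}_{-i}-\delta\max\{v_i-p_0,0\}$, which is continuous in $v_i$ and strictly decreasing on $\{v_i>p_0\}$; it is therefore nonnegative on an initial interval $[0,v_i^*]$ and negative afterwards (or nonnegative on all of $[0,\bar v]$, in which case $v_i^*=\bar v$ and every user is active), recovering the same threshold.

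I do not anticipate a genuine obstacle here: the substantive work was the convexity of $\tilde U_i$ in $x_i$ (Lemma~\ref{convex_payoff}) that underlies \eqref{a5}. The only points deserving a word of care are (i) tie-breaking at $v_i=v_i^\dagger$, which is immaterial because valuations are atomless, so this event has probability zero and affects neither users' nor the seller's expected-payoff computations; (ii) the case $v_i^\dagger>\bar v$, handled by the truncation to $\bar v$ so that the stated range $[p_0,\bar v]$ is respected; and (iii) being explicit that the thresholds $v_i^*$ are equilibrium-dependent objects that may a priori differ across users — showing they in fact coincide is the content of the subsequent step.
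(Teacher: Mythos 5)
Your argument is correct, but it is not the route the paper takes. You read the threshold directly off the best-response characterization \eqref{a5}: since the valuations are i.i.d., the integral $\int \ln\bigl(\sum_{j\neq i}x_j^*(v_j)+\omega_0\bigr)\,d\boldsymbol{F}_{-i}(\boldsymbol{v}_{-i})$ does not involve $v_i$, so the right-hand side of \eqref{a5} is a constant $v_i^\dagger$ and the indicator form is immediate; the bound $v_i^\dagger>p_0$ follows from $\ln\omega_0>0$, and the truncation at $\bar v$ handles the case $v_i^\dagger>\bar v$. The paper instead splits into the cases $v_i\le p_0$ (where the payoff is linear and increasing in $x_i$, forcing $x_i^*=1$) and $v_i>p_0$, and in the latter runs a revealed-preference argument: writing the two incentive-compatibility inequalities for valuations $v_i'$ and $v_i''$ and adding them yields $\delta\bigl(({x''_i})^\alpha-({x'_i})^\alpha\bigr)(v'_i-v''_i)\ge0$, hence $x_i^*(\cdot)$ is non-increasing, and the threshold then follows from the earlier polarization to $\{0,1\}$. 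Your version is shorter and exploits the explicit closed form already derived in \eqref{a5} (which the paper derives but then declines to use directly, citing the coupling across users — a concern that pertains to \emph{computing} the threshold, not to its \emph{existence}); the paper's single-crossing argument is slightly more robust in that it only uses that the profiling probability $\delta x_i^\alpha$ is increasing in $x_i$ and multiplies $v_i-p_0$, without needing the binary reduction until the last line. Your side remarks are also apt: the tie at $v_i=v_i^\dagger$ is measure-zero, and the fact that the per-user thresholds coincide is indeed deferred to the next lemma (Lemma \ref{threshold_same}) in the paper as well. The one hypothesis worth flagging explicitly is independence of the valuations: it is what makes $\boldsymbol{F}_{-i}$, and hence $v_i^\dagger$, free of $v_i$.
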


\begin{proof}
Based on each individual user $i$'s expected payoff $\tilde{U}_i$ in \eqref{incomplete_payoff} given incomplete information, this user's valuation $v_i$ and the equilibrium strategy profile of the other users $\boldsymbol{x}^*_{-i}$, we consider the following two possible cases concerning $v_i$. Hereafter, we denote $x_i(v_i)\equiv x_i$ for notational simplicity.
\begin{itemize}
\item [(1)] If $v_i\le p_0$, we have
\begin{align}
\tilde{U}_i(x_i;\boldsymbol{x}^*_{-i},v_i)=\int x_i\ln\left(\sum_{j\neq i}x_j^*(v_j)+\omega_0\right) d\boldsymbol{F}_{-i}(\boldsymbol{v}_{-i}),
\end{align}
which increases with $x_i$. Therefore, user $i$'s equilibrium strategy is $x_i^*(v_i)=1$ if $v_i\le p_0$.

\item [(2)] If $v_i>p_0$, we have 
\begin{align}
    \tilde{U}_i(x_i;\boldsymbol{x}^*_{-i},v_i)=\int x_i\ln\left(\sum_{j\neq i}x_j^*(v_j)+\omega_0\right) d\boldsymbol{F}_{-i}(\boldsymbol{v}_{-i})+(1-\delta x_i^\alpha)(v_i-p_0).
\end{align}
For user $i$ with a valuation $v'_i$, the equilibrium strategy $x_i^*(\cdot)$ requires that this user prefers $x_i^*(v'_i)\equiv x'_i$ to $x_i^*(v''_i)\equiv x''_i$. Thus we have

\begin{multline}\label{a6}
\int x'_i\ln\left(\sum_{j\neq i}x_j^*(v_j)+\omega_0\right) d\boldsymbol{F}_{-i}(\boldsymbol{v}_{-i})+(1-\delta \left(x'_i\right)^\alpha)(v'_i-p_0)\ge\\\int x''_i\ln\left(\sum_{j\neq i}x_j^*(v_j)+\omega_0\right) d\boldsymbol{F}_{-i}(\boldsymbol{v}_{-i})+(1-\delta \left(x''_i\right)^\alpha)(v'_i-p_0),
\end{multline}

and we also have, for user $i$ with a valuation $v''_i$,

\begin{multline}\label{a7}
\int x''_i\ln\left(\sum_{j\neq i}x_j^*(v_j)+\omega_0\right) d\boldsymbol{F}_{-i}(\boldsymbol{v}_{-i})+(1-\delta \left(x''_i\right)^\alpha)(v''_i-p_0)\ge\\\int x'_i\ln\left(\sum_{j\neq i}x_j^*(v_j)+\omega_0\right) d\boldsymbol{F}_{-i}(\boldsymbol{v}_{-i})+(1-\delta \left(x'_i\right)^\alpha)(v''_i-p_0).
\end{multline}

Subtracting the right-hand side of \eqref{a7} from the left-hand side of \eqref{a6}, and subtracting the left-hand side of \eqref{a7} from the right-hand side of \eqref{a6}, yields $\delta({x''_i}^\alpha-{x'_i}^\alpha)(v'_i-v''_i)\ge0$. So $x_i''\ge x_i'$ if $v'_i\ge v_i''>p_0$, i.e., user $i$'s equilibrium strategy $x_i^*(v_i)$ is non-increasing in his valuation $v_i$ if $v_i>p_0$. 
\end{itemize}
In conclusion, each user $i$'s equilibrium strategy $x_i^*(v_i)$ is non-increasing in his valuation $v_i$. Furthermore, since the equilibrium social activity level for each user $i$ is either $x_i^*=0$ or $x_i^*=1$, there exists a valuation threshold $v_i^*\in[p_0,\bar{v}]$ such that $x_i^*(v_i)=1$ if $v_i\le v^*_i$ whereas $x_i^*(v_i)=0$ if $v_i> v^*_i$.
\end{proof}

Moreover, we show that such an equilibrium valuation threshold is the same for all users independent of their valuations.

\begin{lemma}\label{threshold_same}
$v^*_i=v^*,\forall i\in\mathcal{N}$.
\end{lemma}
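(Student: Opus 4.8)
The plan is to combine the best-response characterization \eqref{a5} with the threshold structure already established in Lemma \ref{threshold_i}. In the BNE every user $j$ plays $x_j^*(v_j)=\mathbbm{1}(v_j\le v_j^*)$, and since valuations are i.i.d.\ with CDF $F$, the quantity $\sum_{j\neq i}x_j^*(v_j)$ inside the logarithm in \eqref{a5} is a sum of independent Bernoulli random variables with success probabilities $F(v_j^*)$, $j\neq i$. Write $\phi_i\triangleq\mathbb{E}_{\boldsymbol{v}_{-i}}[\ln(\sum_{j\neq i}x_j^*(v_j)+\omega_0)]$. By \eqref{a5}, in equilibrium $x_i^*(v_i)=\mathbbm{1}(v_i\le p_0+\phi_i/\delta)$; since $v_i$ ranges over $[0,\bar{v}]$, this is precisely the threshold strategy of Lemma \ref{threshold_i} with threshold $v_i^*=\min\{p_0+\phi_i/\delta,\bar{v}\}$. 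Because $t\mapsto\min\{p_0+t/\delta,\bar{v}\}$ is non-decreasing, it now suffices to prove $\phi_i\le\phi_j$ whenever $v_i^*\ge v_j^*$; the claim $v_i^*=v_j^*$ then follows immediately.

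The crux is a factorization identity for $\phi_i-\phi_j$ obtained by peeling off the Bernoulli contributions of users $i$ and $j$ themselves. Fix $i\neq j$ and set $S_{ij}\triangleq\sum_{k\neq i,j}x_k^*(v_k)$, which is independent of the indicators $\mathbbm{1}(v_i\le v_i^*)$ and $\mathbbm{1}(v_j\le v_j^*)$. Then $\phi_i=\mathbb{E}[\ln(S_{ij}+\mathbbm{1}(v_j\le v_j^*)+\omega_0)]$ and $\phi_j=\mathbb{E}[\ln(S_{ij}+\mathbbm{1}(v_i\le v_i^*)+\omega_0)]$. Conditioning on $S_{ij}$ and averaging over the single remaining Bernoulli in each expression gives
\begin{equation*}
\phi_i-\phi_j=\Delta_{ij}\,(F(v_j^*)-F(v_i^*)),\qquad \Delta_{ij}\triangleq\mathbb{E}[\ln(S_{ij}+1+\omega_0)-\ln(S_{ij}+\omega_0)]>0,
\end{equation*}
where positivity of $\Delta_{ij}$ holds because $S_{ij}\ge0$ and $\omega_0>0$ make the integrand strictly positive.

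To finish, take any $i,j\in\mathcal{N}$ and, relabeling if necessary, assume $v_i^*\ge v_j^*$. Monotonicity of the CDF gives $F(v_i^*)\ge F(v_j^*)$, so the displayed identity yields $\phi_i-\phi_j=\Delta_{ij}(F(v_j^*)-F(v_i^*))\le0$, i.e.\ $\phi_i\le\phi_j$. Applying the non-decreasing map above, $v_i^*=\min\{p_0+\phi_i/\delta,\bar{v}\}\le\min\{p_0+\phi_j/\delta,\bar{v}\}=v_j^*$, which together with $v_i^*\ge v_j^*$ forces $v_i^*=v_j^*$. Since $i,j$ were arbitrary, all users share a common valuation threshold $v^*$. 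The only delicate points are the reduction of \eqref{a5} to $v_i^*=\min\{p_0+\phi_i/\delta,\bar{v}\}$ (one checks \eqref{a5} still holds for $v_i\le p_0$, where the purchase-surplus term drops out, and that capping at $\bar{v}$ is the correct reading of the indicator when $p_0+\phi_i/\delta>\bar{v}$) and the conditioning step that produces the factorized identity; neither is genuinely hard, so I expect no real obstacle here beyond careful bookkeeping.
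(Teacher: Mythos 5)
Your proof is correct and follows essentially the same route as the paper: both peel off the Bernoulli contribution of the other user to obtain the factorization $\phi_i-\phi_j=\Delta_{ij}\left(F(v_j^*)-F(v_i^*)\right)$ (the paper's $A-B$ plays the role of your $\Delta_{ij}/\delta$) and then conclude from monotonicity of $F$, the paper phrasing it as a contradiction and you arguing directly. Your explicit handling of the $\min\{\cdot,\bar{v}\}$ cap is a minor tidying of a point the paper leaves implicit, but it does not change the argument.
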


\begin{proof}
The proof is developed by contradiction. According to \eqref{a5} and Lemma~\ref{threshold_i}, the equilibrium valuation threshold $v_i^*$ for user $i$ satisfies the following equality:

\begin{align}\label{a13}
v_i^*= p_0+\frac{1}{\delta}&\int \ln\left(\sum_{j\neq i}\mathbbm{1}(v_j\le v^*_j)+\omega_0\right) d\boldsymbol{F}_{-i}(\boldsymbol{v}_{-i}),\nonumber\\
=p_0+\frac{1}{\delta}&\int \ln\left(\sum_{k\neq i,j}\mathbbm{1}(v_k\le v^*_k)+\mathbbm{1}(v_j\le v^*_j)+\omega_0\right) d\boldsymbol{F}_{-i}\left(\boldsymbol{v}_{-i}\right),\nonumber\\
=p_0+\frac{1}{\delta}&\int\ln\left(\sum_{k\neq i,j}\mathbbm{1}(v_k\le v^*_k)+1+\omega_0\right) F(v^*_j)\cdot d\boldsymbol{F}_{-i,j}\left(v_{-i,j}\right)\nonumber\\&+\frac{1}{\delta}\int\ln\left(\sum_{k\neq i,j}\mathbbm{1}(v_k\le v^*_k)+\omega_0\right)\left(1-F(v^*_j)\right)d\boldsymbol{F}_{-i,j}\left(v_{-i,j}\right),
\end{align}
where we denote for ease of exposition
\begin{align*}
    A\triangleq\frac{1}{\delta}\int\ln\left(\sum_{k\neq i,j}\mathbbm{1}(v_k\le v^*_k)+1+\omega_0\right)d\boldsymbol{F}_{-i,j}\left(v_{-i,j}\right)
\end{align*}
and
\begin{align*}
B \triangleq\frac{1}{\delta}\int\ln\left(\sum_{k\neq i,j}\mathbbm{1}(v_k\le v^*_k)+\omega_0\right)d\boldsymbol{F}_{-i,j}\left(v_{-i,j}\right).
\end{align*}
Notice that $A>B$, and both $A$ and $B$ are independent of $v_i^*$ and $v_j^*$. Suppose by contradiction that $v^*_i\neq v^*_j$ for user $i$ and user $j$, and let $v^*_i> v^*_j$ without loss of generality. It then follows that $F(v_i^*)\ge F(v_j^*)$, and we also have from \eqref{a13} that
\begin{align}\label{a9}
v^*_i=p_0+A\cdot F(v^*_j)+B(1-F(v^*_j)) \text{ and } v^*_j=&p_0+A\cdot F(v^*_i)+B(1-F(v^*_i)).
\end{align}
We hence have $v_i^*-v_j^*=(A-B)\left(F(v^*_j)-F(v^*_i)\right)\le 0$ given that $A>B$ and $F(v_i^*)\ge F(v_j^*)$. This results in contradiction with $v^*_i>v^*_j$. Therefore, we have $v^*_i=v^*,\forall i\in\mathcal{N}$.
\end{proof}
Combining Lemma \ref{threshold_i} and Lemma \ref{threshold_same} establishes $x_i^*(v_i) = \mathbbm{1}(v_i\le v^*)$, $\forall i\in\mathcal{N}$. This completes the proof for Belief \ref{threshold_structure}.

\section{Proof for Theorem \ref{Theorem:p0-vstar}}
Recall that there are only two possible social activity levels in the equilibrium for each individual user $i$: $x_i=1$ (active) and $x_i=0$ (inactive) (see Appendix \ref{StepII_belief}). With a valuation exactly the threshold value $v_i=v^*$, the marginal user $i$ is indifferent between these two social decisions, i.e., $\tilde{U}^*_i(1;\boldsymbol{x}^*_{-i},v^*)=\tilde{U}^*_i(0;\boldsymbol{x}^*_{-i},v^*)$ (see the expected user payoff $\tilde{U}^*_i(x^*_i;\boldsymbol{x}^*_{-i},v_i)$ in \eqref{incomplete_payoff}). More specifically, when $x_i^*=0$, this marginal user only receives the purchase surplus, i.e., $\tilde{U}^*_i(0;\boldsymbol{x}^*_{-i},v^*)=\max\{v^*-p_0,0\}=v^*-p_0$ (see Lemma \ref{threshold_i}, $v^*\ge p_0$). When $x_i^*=1$, this marginal user gains an expected payoff of
\begin{align}
\tilde{U}_i^*(1;\boldsymbol{x}^*_{-i},v^*)&=\int \ln\left(\sum_{j\neq i}\mathbbm{1}(v_j\le v^*)+\omega_0\right) d\boldsymbol{F}_{-i}(\boldsymbol{v}_{-i})+(1-\delta)\max\{v^*-p_0,0\},\nonumber\\
&=\sum_{m=0}^{n-1}\tbinom{n-1}{m}\ln(m+\omega_0)F(v^*)^m\left(1-F(v^*)\right)^{n-1-m}+(1-\delta)(v^*-p_0).
\end{align}
The indifference condition $\tilde{U}^*_i(1;\boldsymbol{x}^*_{-i},v^*)=\tilde{U}^*_i(0;\boldsymbol{x}^*_{-i},v^*)$ then suggests \eqref{vstar_p0_summation} in Theorem \ref{Theorem:p0-vstar}, where we actually have $v^*>p_0$. Notice that the equilibrium valuation threshold $v^*$ should be no larger than $\bar{v}$. This then completes the proof for Theorem \ref{Theorem:p0-vstar}.

\section{Proof for Proposition \ref{unique_user}}\label{proof_v^*}

Recall that we have established and characterized the valuation threshold structure in the BNE of the social interaction game (see Belief \ref{threshold_structure}). Specifically, the users' equilibrium valuation threshold is $v^*=\min\{v^\dagger,\bar{v}\}$, where $v^\dagger$ is the solution to \eqref{vstar_p0_summation} (see Theorem \ref{Theorem:p0-vstar}). Notice that we will fix the seller's uniform price $p_0$ in this proof, since users' social interaction game is defined in Stage I while predicting an arbitrary $p_0$ in Stage II. To show the uniqueness of BNE, we only need to prove the uniqueness of $v^*$. To this end, we need to analyze the solution $v^\dagger$ to \eqref{vstar_p0_summation}, rewritten below:
\begin{equation}
    v^\dagger =p_0+\frac{1}{\delta}\tilde{J}(v^\dagger).
\end{equation}
For ease of exposition, henceforth let
\begin{equation}\label{Phi}
    \Phi(v^\dagger)\triangleq p_0+\frac{1}{\delta}\tilde{J}(v^\dagger)-v^\dagger.
\end{equation}
Namely, we need to analyze the solution $v^\dagger$ to $\Phi(v^\dagger)=0$ and its uniqueness in particular. Notice that when $v^\dagger>\bar{v}$, the CDF function value is $F(v^\dagger)\equiv 1$; then, $\Phi(v^\dagger)$ simply becomes a linearly decreasing function of $v^\dagger$:
\begin{equation}\label{Phi_linear}
    \Phi(v^\dagger) = p_0+\frac{1}{\delta}\ln(n-1+\omega_0)-v^\dagger, \text{ if } v^\dagger>\bar{v}.
\end{equation}
The more involved case when $v^\dagger\le\bar{v}$ becomes our focus in what follows. To tackle it, we present some preliminary results.

To start with, we first establish the concavity of the expected social network benefit function $\tilde{J}(v^\dagger)$ defined in \eqref{expected_social_network}.

\begin{landscape}
\begin{lemma}\label{claim:J_concave}
$\tilde{J}(v^\dagger)$ concavely increases in $v^\dagger$ over $[0,\bar{v}]$, i.e., $\partial \tilde{J}(v^\dagger)/\partial v^\dagger>0$ and $\partial ^2\tilde{J}(v^\dagger)/\partial (v^\dagger)^2<0$. 
\end{lemma}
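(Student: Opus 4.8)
\textbf{Proof plan for Lemma \ref{claim:J_concave}.} The plan is to reparametrize $\tilde J$ through the active fraction $q\triangleq F(v^\dagger)\in[0,1]$, which reduces the claim to shape properties of a single-variable polynomial. Writing $K\sim\mathrm{Binomial}(n-1,q)$, we have $\tilde J(v^\dagger)=g(F(v^\dagger))$ with
\begin{equation*}
g(q)\triangleq\mathbb{E}\big[\ln(K+\omega_0)\big]=\sum_{m=0}^{n-1}\tbinom{n-1}{m}\ln(m+\omega_0)\,q^m(1-q)^{n-1-m}.
\end{equation*}
By the chain rule, $\partial\tilde J/\partial v^\dagger=g'(F(v^\dagger))\,f(v^\dagger)$ and $\partial^2\tilde J/\partial(v^\dagger)^2=g''(F(v^\dagger))\,f(v^\dagger)^2+g'(F(v^\dagger))\,f'(v^\dagger)$, where $f=F'$ is the valuation density. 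For the uniform prior $f\equiv 1/\bar v>0$ and $f'\equiv 0$, so it suffices to prove $g'(q)>0$ and $g''(q)<0$ on $[0,1]$; the same conclusion then persists for any concave CDF, where $f'\le 0$.

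First I would record the elementary finite-difference identity for Binomial expectations: for any function $\phi$ and any integer $N\ge 1$,
\begin{equation*}
\frac{d}{dq}\,\mathbb{E}\big[\phi(K_N)\big]=N\,\mathbb{E}\big[\phi(K_{N-1}+1)-\phi(K_{N-1})\big],\qquad K_N\sim\mathrm{Binomial}(N,q),
\end{equation*}
which follows in one line by differentiating the polynomial termwise and using $m\tbinom{N}{m}=N\tbinom{N-1}{m-1}$ and $(N-m)\tbinom{N}{m}=N\tbinom{N-1}{m}$ to re-index the two resulting sums. Applying it once with $N=n-1$ and $\phi(k)=\ln(k+\omega_0)$ gives $g'(q)=(n-1)\,\mathbb{E}\big[\ln\!\big((K_{n-2}+1+\omega_0)/(K_{n-2}+\omega_0)\big)\big]>0$, since each log-argument exceeds $1$. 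Applying it a second time, now to $\psi(k)\triangleq\phi(k+1)-\phi(k)$ with $N=n-2$, yields $g''(q)=(n-1)(n-2)\,\mathbb{E}\big[\Delta^2\phi(K_{n-3})\big]$ with $\Delta^2\phi(k)=\ln(k+2+\omega_0)-2\ln(k+1+\omega_0)+\ln(k+\omega_0)<0$, because $\ln(\cdot+\omega_0)$ is strictly concave. Hence $g''(q)<0$ for $n\ge 3$; the degenerate cases $n\le 2$, where $g$ is affine in $q$ (and the accompanying uniqueness claim in Proposition \ref{unique_user} is immediate since $\Phi$ is then strictly decreasing), can be dispatched separately.

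The step I expect to be the main obstacle is pinning down the sign of $g''$, since the second derivative of a Binomial expectation is not transparent from the polynomial form; the finite-difference identity is exactly the device that reduces it to the (obvious) negativity of the discrete second difference of a concave function. Equivalently, one may invoke the shape-preserving property of the Bernstein operator, noting that $g$ is the degree-$(n-1)$ Bernstein polynomial of the strictly concave, increasing map $t\mapsto\ln((n-1)t+\omega_0)$ and is therefore itself increasing and strictly concave on $(0,1)$. Combining $g'>0$ and $g''<0$ with the chain-rule expressions above, together with $f>0$ and $f'\le 0$, then completes the proof.
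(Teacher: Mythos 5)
Your proposal is correct and follows essentially the same route as the paper's proof: both differentiate the binomial expectation termwise and re-index the sums so that the first derivative becomes an expectation of first differences of $\ln(\cdot+\omega_0)$ (positive, since the log is increasing) and the second derivative an expectation of second differences (negative by concavity of the logarithm) — your finite-difference identity for binomial expectations is just a clean, reusable packaging of the paper's index shift from $m$ to $m-1$. The one substantive point you add is the caveat about $n\le 2$, where the stated second derivative vanishes rather than being strictly negative; this degenerate case is passed over silently in the paper's own proof, and your suggestion to dispatch it separately is appropriate.
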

\begin{proof}
We first take the first-order derivative of $\tilde{J}(v^\dagger)$ in \eqref{FO_J} below.
\begin{align}
\frac{\partial \tilde{J}(v^\dagger)}{\partial v^\dagger}
&=\sum_{m=0}^{n-1}\tbinom{n-1}{m}\ln(m+\omega_0)\left(mF(v^\dagger)^{m-1}\left(1-F(v^\dagger)\right)^{n-1-m}-(n-1-m)F(v^\dagger)^m\left(1-F(v^\dagger)\right)^{n-2-m}\right)f(v^\dagger),\nonumber\\
&=\sum_{m=0}^{n-1}\tbinom{n-1}{m}\ln(m+\omega_0)mF(v^\dagger)^{m-1}\left(1-F(v^\dagger)\right)^{n-1-m}f(v^\dagger)-\sum_{m=0}^{n-1}\tbinom{n-1}{m}\ln(m+\omega_0)(n-1-m)F(v^\dagger)^m\left(1-F(v^\dagger)\right)^{n-2-m}f(v^\dagger),\nonumber\\
&=\sum_{m=1}^{n-1}\tbinom{n-1}{m}\ln(m+\omega_0)mF(v^\dagger)^{m-1}\left(1-F(v^\dagger)\right)^{n-1-m}f(v^\dagger)-\sum_{m=0}^{n-2}\tbinom{n-1}{m}\ln(m+\omega_0)(n-1-m)F(v^\dagger)^m\left(1-F(v^\dagger)\right)^{n-2-m}f(v^\dagger),\nonumber\\
&=\sum_{m=1}^{n-1}\frac{(n-1)!}{(n-1-m)!(m-1)!}\ln(m+\omega_0)F(v^\dagger)^{m-1}\left(1-F(v^\dagger)\right)^{n-1-m}f(v^\dagger)\nonumber\\&\quad-\sum_{m=0}^{n-2}\frac{(n-1)!}{(n-2-m)!m!}\ln(m+\omega_0)F(v^\dagger)^m\left(1-F(v^\dagger)\right)^{n-2-m}f(v^\dagger),\nonumber\\
&=\sum_{m=1}^{n-1}\frac{(n-1)!}{(n-1-m)!(m-1)!}\ln(m+\omega_0)F(v^\dagger)^{m-1}\left(1-F(v^\dagger)\right)^{n-1-m}f(v^\dagger)\tag{Line 5-1}\\&\quad-\sum_{m=1}^{n-1}\frac{(n-1)!}{(n-1-m)!(m-1)!}\ln\left(m-1+\omega_0\right)F(v^\dagger)^{m-1}\left(1-F(v^\dagger)\right)^{n-1-m}f(v^\dagger),\tag{Line 5-2}\\	
&=\sum_{m=1}^{n-1}\frac{(n-1)!}{(n-1-m)!(m-1)!}\left(\ln(m+\omega_0)-\ln(m-1+\omega_0)\right)F(v^\dagger)^{m-1}\left(1-F(v^\dagger)\right)^{n-1-m}f(v^\dagger).\label{FO_J}
\end{align}

One key step is to transform the summation index from $m$ to $m-1$ in Line 5-2. As such, \eqref{FO_J} directly tells that $\partial \tilde{J}(v^\dagger)/\partial v^\dagger\equiv \tilde{J}'(v^\dagger)>0$, i.e., $\tilde{J}(v^\dagger)$ strictly increases in $v^\dagger$. Then the second-order derivative of $\tilde{J}
(v^\dagger)$ follows in \eqref{SO_J} below. 

\begin{align}
\frac{\partial^2 \tilde{J}(v^\dagger)}{\partial (v^\dagger)^2}	
&=\sum_{m=1}^{n-1}\frac{(n-1)!}{(n-1-m)!(m-1)!}\left(\ln(m+\omega_0)-\ln(m-1+\omega_0)\right)(m-1)F(v^\dagger)^{m-2}\left(1-F(v^\dagger)\right)^{n-1-m}\left(f(v^\dagger)\right)^2\tag{Line 1-1}\\&\quad-\sum_{m=1}^{n-1}\frac{(n-1)!}{(n-1-m)!(m-1)!}\left(\ln(m+\omega_0)-\ln(m-1+\omega_0)\right)(n-1-m)F(v^\dagger)^{m-1}\left(1-F(v^\dagger)\right)^{n-2-m}\left(f(v^\dagger)\right)^2,\tag{Line 1-2}\\
&=\sum_{m=2}^{n-1}\frac{(n-1)!}{(n-1-m)!(m-2)!}\left(\ln(m+\omega_0)-\ln(m-1+\omega_0)\right)F(v^\dagger)^{m-2}\left(1-F(v^\dagger)\right)^{n-1-m}\left(f(v^\dagger)\right)^2\nonumber\\&\quad-\sum_{m=1}^{n-2}\frac{(n-1)!}{(n-2-m)!(m-1)!}\left(\ln(m+\omega_0)-\ln(m-1+\omega_0)\right)F(v^\dagger)^{m-1}\left(1-F(v^\dagger)\right)^{n-2-m}\left(f(v^\dagger)\right)^2,\nonumber\\
&=\sum_{m=2}^{n-1}\frac{(n-1)!}{(n-1-m)!(m-2)!}\left(\ln(m+\omega_0)-\ln(m-1+\omega_0)\right)F(v^\dagger)^{m-2}\left(1-F(v^\dagger)\right)^{n-1-m}\left(f(v^\dagger)\right)^2\tag{Line 3-1}\\&\quad-\sum_{m=2}^{n-1}\frac{(n-1)!}{(n-1-m)!(m-2)!}\left(\ln(m-1+\omega_0)-\ln(m-2+\omega_0)\right)F(v^\dagger)^{m-2}\left(1-F(v^\dagger)\right)^{n-1-m}\left(f(v^\dagger)\right)^2,\tag{Line 3-2}\\
&=\sum_{m=2}^{n-1}\frac{(n-1)!}{(n-1-m)!(m-2)!}F(v^\dagger)^{m-2}\left(1-F(v^\dagger)\right)^{n-1-m}\left(f(v^\dagger)\right)^2\left(\ln(m+\omega_0)-2\ln(m-1+\omega_0)+\ln(m-2+\omega_0)\right).\label{SO_J}
\end{align}

Notice that $\partial f(v^\dagger)/\partial v^\dagger = 0$ given $f(v^\dagger) = 1/\bar{v}$ for the uniform valuation distribution. This fact is applied in Line 1. Similar to calculating \eqref{FO_J}, the key step here to derive the second-order derivative is Line 3-2, where we transform the summation index from $m$ to $m-1$. Furthermore, one can easily verify that
\begin{align*}
\ln(m+\omega_0)-2\ln(m-1+\omega_0)+\ln(m-2+\omega_0)\nonumber
=\left(\ln(m+\omega_0)-\ln(m-1+\omega_0)\right)-\left(\ln(m-1+\omega_0)-\ln(m-2+\omega_0)\right)<0;
\end{align*}
thus, we conclude with $\partial ^2\tilde{J}(v^\dagger)/\partial (v^\dagger)^2<0$, i.e., $\tilde{J}(v^\dagger)$ is concave in $v^\dagger$.
\end{proof}
\end{landscape}

Next, we provide facts regarding concave functions to facilitate our solution analysis for the equation $\Phi(v^\dagger)=0$ as well as other analyses in proofs later. 

\begin{fact}\label{Fact1}
Consider a continuous function $g(x)$ defined on the closed interval $[a,b]$. Suppose $g(x)$ is twice-differentiable on the open interval $(a,b)$ and strictly concave. 
\begin{itemize}
\item [(A)] If 
\begin{equation}\label{Con1_fact}
g(b)<0<g(a),
\end{equation} 
there exists a unique solution to $g(x)=0$ and the solution is in the open interval $x^\dagger\in(a,b)$. Moreover, $g'(x^\dagger)<0$.
\item [(B)] If
\begin{equation}\label{Con2_fact}
g(a)>0 \text{ and } g(b)>0,
\end{equation} 
there exists no solution to $g(x)=0$.
\item [(C)] If
\begin{equation}\label{Con3_fact}
g(a)>0 \text{ and } g(b)=0,
\end{equation} 
the unique solution to $g(x)=0$ is $x^\dagger=b$.
\end{itemize}
\end{fact}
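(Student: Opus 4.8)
The plan is to prove parts (A), (B) and (C) separately, using only three elementary tools: the Intermediate Value Theorem, the Mean Value Theorem, and the two standard characterizations of strict concavity for a twice-differentiable $g$ --- namely the strict chord inequality $g(\lambda u+(1-\lambda)w)>\lambda g(u)+(1-\lambda)g(w)$ for distinct $u,w\in[a,b]$ and $\lambda\in(0,1)$, and the equivalent fact that $g'$ is strictly decreasing on $(a,b)$. No appeal to any specific form of $g$ is needed, so the very same argument will later apply verbatim to $\Phi$ defined in \eqref{Phi} and to analogous concave objects in subsequent proofs.

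For part (A), existence of a root is immediate: $g$ is continuous and $g(b)<0<g(a)$, so the IVT yields some $x^\dagger$ with $g(x^\dagger)=0$, and $x^\dagger\in(a,b)$ because $g$ is nonzero at both endpoints. For uniqueness I would argue by contradiction: if $x_1<x_2$ were two roots in $[a,b]$, then writing $x_1$ as a convex combination of an arbitrary $x<x_1$ and $x_2$ and applying the strict chord inequality forces $g(x)<0$ for every $x<x_1$; since $g(a)>0$ this rules out $x_1>a$, while $x_1=a$ contradicts $g(a)>0$ directly. For the derivative claim, apply the MVT on $[a,x^\dagger]$ to obtain $c\in(a,x^\dagger)$ with $g'(c)=(0-g(a))/(x^\dagger-a)<0$, and then use that $g'$ is strictly decreasing with $c<x^\dagger$ to conclude $g'(x^\dagger)<g'(c)<0$. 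Part (B) is a one-line contradiction: any root $x^\dagger$ must lie in $(a,b)$ since $g(a),g(b)>0$, hence $x^\dagger=\mu a+(1-\mu)b$ with $\mu\in(0,1)$, and strict concavity gives $g(x^\dagger)>\mu g(a)+(1-\mu)g(b)>0$, contradicting $g(x^\dagger)=0$. Part (C) reuses the uniqueness argument of (A): $b$ is a root because $g(b)=0$, and any other root $x_1<b$ would, by the chord argument applied to the pair $x_1<b$, force $g<0$ just to the left of $x_1$, incompatible with $g(a)>0$ (and $x_1=a$ is excluded by $g(a)>0$), so $b$ is the unique root.

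The only mildly delicate point is the derivative sign $g'(x^\dagger)<0$ in (A): the tempting one-sided estimate via the MVT on $[x^\dagger,b]$ only bounds $g'(x^\dagger)$ from below by a negative number, which is inconclusive, so one must instead run the MVT on the left subinterval $[a,x^\dagger]$ and invoke monotonicity of $g'$. Everything else is routine once the strict chord inequality has been recorded.
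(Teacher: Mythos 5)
Your proof is correct, but it runs along a genuinely different track from the paper's. The paper proves all three parts by a shape analysis driven by the fact that $g'$ is strictly decreasing: it splits into cases according to whether $g'$ vanishes in $(a,b)$, concludes that $g$ is either monotone or unimodal (increasing then decreasing), locates the maximizer $x^*$ with $g(x^*)>0$, and then reads off existence, uniqueness, and the sign $g'(x^\dagger)<0$ from the monotone decrease of $g$ on $[x^*,b]$ together with the endpoint signs. You instead use the strict chord inequality directly: for uniqueness in (A) and (C) you show that two roots $x_1<x_2$ force $g<0$ on $[a,x_1)$, contradicting $g(a)>0$; for (B) you get an immediate contradiction from $g(x^\dagger)>\mu g(a)+(1-\mu)g(b)>0$; and for the derivative sign you combine the MVT on the \emph{left} subinterval $[a,x^\dagger]$ with the monotonicity of $g'$ — and you correctly flag that the MVT on $[x^\dagger,b]$ would only bound $g'(x^\dagger)$ from below and is therefore useless here. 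Your route is shorter and avoids the paper's case distinction on whether $g'$ has an interior zero (and incidentally sidesteps a small slip in the paper's case (ii) of part (B), where the bound should read $g(x)\ge\min\{g(a),g(b)\}$ rather than $g(x)>\max\{g(a),g(b)\}$); the paper's route, on the other hand, yields the extra structural information that the root lies to the right of the maximizer of $g$, which is mildly informative but not actually needed anywhere downstream, since all later invocations of the fact use only existence, uniqueness, and $g'(x^\dagger)<0$, all of which your argument delivers.
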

\begin{proof} 
\begin{itemize}
    \item [(A)] The existence of a solution to $g(x)=0$ directly follows from the intermediate value theorem (IVT). Next, we will prove the solution's uniqueness. The concavity of a twice-differentiable function $g(x)$ indicates that $\partial^2 g(x)/\partial x^2<0$; thus, $g'(x)\triangleq\partial g(x)/\partial x$ decreases with $x$. Two possible cases immediately arise. (i) If no $x_0\in(a,b)$ exists such that $g'(x_0)=0$, i.e., $g'(x)$ is always non-negative or non-positive over $[a,b]$, $g(x)$ monotonically changes with $x$. Considering \eqref{Con1_fact}, $g(x)$ actually decreases with $x$, and the maximum function value is reached at $x^*=a$. (ii) Otherwise, there exists $x_0\in(a,b)$ such that $g'(x_0)=0$. Hence, $g(x)$ first increases over $[a,x_0]$ and then decreases over $[x_0,b]$, and the maximum point is exactly $x_0$ with $g(x_0)>g(a)>0$. To sum up, there always exists a unique maximum point $x^*\in[a,b]$ with $g(x^*)>0$. Specifically, $g(x)$ decreases with $x$ on $[x^*,b]$. In addition, if $x^*\neq a$, $g(x)$ is always positive and increases with $x$ over $[a,x^*]$. In conjunction with $g(b)<0$, a unique solution $x^\dagger$ to $g(x)=0$ exists where $x^\dagger\in(x^*,b)$ and $g'(x^\dagger)<0$.

    \item [(B)] We start the proof with the fact that $g'(x)$ decreases with $x$. Similarly, consider the two possible cases mentioned above regarding whether there exists $x_0\in(a,b)$ such that $g'(x_0)=0$. (i) Since $g(x)$ monotonically changes with $x$ over $[a,b]$, $g(x)$ is always positive given \eqref{Con2_fact}, i.e., $g(x)>\min\left\{g(a),g(b)\right\}>0$. (ii) Since $g(x)$ first increases over $[a,x_0]$ and then decreases over $[x_0,b]$, $g(x)$ is always positive given \eqref{Con2_fact}, i.e., $g(x)>\max\left\{g(a),g(b)\right\}>0$. In conclusion, $g(x)>0$ always holds. Therefore, no solution exists for $g(x)=0$.

    \item [(C)] The proof follows a similar rationale as above. In both cases (i) and (ii), no matter whether $g(x)$ monotonically changes over $[a,b]$ or first increases in $[a,x_0]$ and then decreases in $[x_0,b]$, we have that $g(x)=0$ is obtained only when $x^\dagger=b$ given \eqref{Con3_fact}.
\end{itemize}
\end{proof}

Now we proceed to analyze the solution to the equation $\Phi(v^\dagger)=0$. Calculating the second-order derivative of $\Phi(v^\dagger)$ over $[0,\bar{v}]$ yields
\begin{equation}
    \frac{\partial^2 \Phi(v^\dagger)}{\partial (v^\dagger)^2}=\frac{1}{\delta}\frac{\partial^2 \tilde{J}(v^\dagger)}{\partial (v^\dagger)^2}<0
\end{equation}
according to Lemma \ref{claim:J_concave}. That is, $\Phi(v^\dagger)$ is also concave over $[0,\bar{v}]$. Furthermore, $\Phi(0)=p_0+\tilde{J}(0)/\delta=p_0+(\ln\omega_0)/\delta>0$.  We also have 
\begin{equation}\label{phi_bar}
    \Phi(\bar{v})=p_0+\frac{1}{\delta}\ln\left(n-1+\omega_0\right)-\bar{v}.
\end{equation}
In what follows, we discuss three possible cases based on the seller's uniform price $p_0$.
\begin{itemize}
    \item [(1)] If $p_0<\bar{v}-1/\delta\cdot\ln\left(n-1+\omega_0\right)$, then we have $\Phi(\bar{v})<0$. (i) For the closed interval $0\le v^\dagger\le\bar{v}$: Since $\Phi(\bar{v})<0<\Phi(0)$, we obtain that there exists a unique solution $v^\dagger\in(0,\bar{v})$ to $\Phi(v^\dagger)=0$ according to Fact \ref{Fact1}.(A). (ii) For $v^\dagger>\bar{v}$: Since $\Phi(\bar{v})<0$ and $\Phi(v^\dagger)$ decreases with $v^\dagger$, no solution to $\Phi(v^\dagger)=0$ exists here. Hence, we have a unique valuation threshold $v^*=v^\dagger\in(0,\bar{v})$. Moreover, $\partial \Phi(v^*)/\partial v^*<0$ according to Fact \ref{Fact1}.(A).
    
    \item [(2)] If $p_0>\bar{v}-1/\delta\cdot\ln\left(n-1+\omega_0\right)$, then we have $\Phi(\bar{v})>0$. (i) For the closed interval $0\le v^\dagger\le\bar{v}$: Since $\Phi(0)>0$ and $\Phi(\bar{v})>0$, we obtain that no solution to $\Phi(v^\dagger)=0$ exists according to Fact \ref{Fact1}.(B). (ii) For $v^\dagger>\bar{v}$: Since $\Phi(\bar{v})>0$ and $\Phi(v^\dagger)$ linearly decreases with $v^\dagger$, there exists a unique solution $v^\dagger\in(\bar{v},+\infty)$ to $\Phi(v^\dagger)=0$ in this case. Hence, we have a unique valuation threshold $v^*\triangleq\min\{v^\dagger,\bar{v}\}=\bar{v}$.
    
    \item [(3)] If $p_0=\bar{v}-1/\delta\cdot\ln\left(n-1+\omega_0\right)$, then we have $\Phi(\bar{v})=0$. (i) For the closed interval $0\le v^\dagger\le\bar{v}$: Since $\Phi(0)>0$ and $\Phi(\bar{v})=0$, we obtain that the unique solution to $\Phi(v^\dagger)=0$ is $\bar{v}$ according to Fact \ref{Fact1}.(C). (ii) For $v^\dagger>\bar{v}$: Since $\Phi(\bar{v})=0$ and $\Phi(v^\dagger)$ linearly decreases with $v^\dagger$, no solution to $\Phi(v^\dagger)=0$ exists in this case. Hence, we have a unique valuation threshold $v^*=v^\dagger=\bar{v}$.
\end{itemize}
The uniqueness of $v^*$ and BNE then follows. This completes the whole proof. 

\section{Proof for Theorem \ref{Theorem_PBE}}\label{proof_TheoremPBE}

To derive the PBE for the whole dynamic Bayesian game, we will combine the users' valuation threshold $v^*$ (see Theorem~\ref{Theorem:p0-vstar}) and the seller's optimal uniform pricing scheme $p_0^*$ in \eqref{base_price_equation1}-\eqref{base_price_equation3} (see Proposition \ref{base_price}) together. In what follows, we develop our analysis based on four cases concerning $v^*$.

\begin{itemize}
	\item [(1)] If $0\le v^*\le \bar{v}/2$, the seller's uniform pricing scheme is given in \eqref{base_price_equation1}. Combining it with \eqref{vstar_p0_summation} yields
	\begin{equation}\label{price_v_1}
		\frac{\bar{v}}{2}=v^*-\frac{1}{\delta}\tilde{J}(v^*).
	\end{equation}
    We now let $\Gamma(v^*)\triangleq\bar{v}/2-v^*+\tilde{J}(v^*)/\delta$, and its second-order derivative over $[0,\bar{v}/2]$ is given by
    \begin{equation}\label{gamma_sod}
        \frac{\partial^2 \Gamma(v^*)}{\partial (v^*)^2}=\frac{1}{\delta}\frac{\partial^2 \tilde{J}(v^*)}{\partial (v^*)^2}<0
    \end{equation}
    according to Lemma \ref{claim:J_concave}. Hence, $\Gamma(v^*)$ is concave over $[0,\bar{v}/2]$. We also have $\Gamma(0)=\bar{v}/2+(\ln\omega_0)/\delta>0$ and $\Gamma(\bar{v}/2)=\tilde{J}(\bar{v}/2)/\delta>0$. It follows that no solution to $\Gamma(v^*)=0$ or \eqref{price_v_1} exists on $[0,\bar{v}/2]$ according to Fact \ref{Fact1}.(B). Therefore, the PBE valuation threshold $v^*$ does not exist on $[0,\bar{v}/2]$.
	\item [(2)] If $\bar{v}/2<v^*\le \bar{v}/(2-\delta)$, the seller's uniform pricing scheme is given in \eqref{base_price_equation2}, i.e., $p_0^*(v^*)=v^*$. This leads to contradiction with $p_0^*=v^*-\tilde{J}(v^*)/\delta$ in \eqref{vstar_p0_summation}, where $\tilde{J}(v^*)>0$. Thus, the PBE valuation threshold  $v^*$ does not exist on $(\bar{v}/2,\bar{v}/(2-\delta)]$.
	\item [(3)] If $\bar{v}/(2-\delta)<v^*<\bar{v}$, the seller's uniform pricing scheme is given in \eqref{base_price_equation3}. Combining it with \eqref{vstar_p0_summation} yields
	\begin{equation}\label{case2_equ}
		\frac{\bar{v}-\delta v^*}{2(1-\delta)}=v^*-\frac{1}{\delta}\tilde{J}(v^*).
	\end{equation}
    We now let $\Gamma(v^*)\triangleq(\bar{v}-\delta v^*)/2(1-\delta)-v^*+\tilde{J}(v^*)/\delta$, and its second-order derivative over $(\bar{v}/(2-\delta),\bar{v})$ is also given by \eqref{gamma_sod}. So $\Gamma(v^*)$ here is concave over $(\bar{v}/(2-\delta),\bar{v})$. We have $\Gamma(\bar{v}/(2-\delta))=\tilde{J}(\bar{v}/(2-\delta))/\delta>0$ and
    \begin{equation}\label{phi_bar}
        \Gamma(\bar{v})=\frac{1}{\delta}\ln\left(n-1+\omega_0\right)-\frac{\bar{v}}{2}.
    \end{equation}
    If and only if $\Gamma(\bar{v})<0$, there exists a unique solution to $\Gamma(v^*)=0$ over $(\bar{v}/(2-\delta),\bar{v})$ (see Fact \ref{Fact1}). Equivalently, if and only  if $\bar{v}/2>1/\delta\cdot \ln(n-1+\omega_0)$, a unique PBE valuation threshold $v^*$ exists over $(\bar{v}/(2-\delta),\bar{v})$. This is the Case II in Theorem \ref{Theorem_PBE}, where $v^*<\bar{v}$ is the unique solution to \eqref{case2_equ} and simplifying \eqref{case2_equ} yields \eqref{vstar_PBE}. Moreover, $\partial \Gamma(v^*)/\partial v^*<0$ according to Fact \ref{Fact1}.(A). Notice that $p_0^*>\bar{v}/2$ can be easily verified by examining \eqref{base_price_equation3}. Additionally, we also examine the impact of $\bar{v}$ on the fraction of socially active users $v^*/\bar{v}$ in Lemma \ref{vbar_fraction} at the end of this appendix.
	\item [(4)] If $v^*=\bar{v}$, the seller's uniform pricing scheme is a special case of \eqref{base_price_equation3}, i.e., $p_0^*=\bar{v}/2$. On the other hand, $v^*=\bar{v}$ if and only if $p_0\ge\bar{v}-1/\delta\cdot\ln\left(n-1+\omega_0\right)$ (see the final part of Appendix \ref{proof_v^*}). Therefore, if and only if $\bar{v}/2\le1/\delta\cdot \ln(n-1+\omega_0)$, a unique PBE valuation threshold $v^*=\bar{v}$ exists. This is the Case I in Theorem \ref{Theorem_PBE}.
\end{itemize}
This completes the proof for Theorem \ref{Theorem_PBE}. 

Notice that we assume $\bar{v}> 2\ln(n-1+\omega_0)$ throughout the PBE analysis.\footref{footnote3} Recall the sufficient and necessary condition for Case I in Theorem \ref{Theorem_PBE}: $\bar{v}/2\le1/\delta\cdot \ln(n-1+\omega_0)$.~Suppose $\bar{v}\le 2\ln(n-1+\omega_0)$, this condition always holds regardless of any $\delta$ value over $(0,1)$. As such, only one trivial equilibrium exists under random user profiling, where $v^*=\bar{v}$ and $p_0^*=\bar{v}/2$.

\begin{lemma}\label{vbar_fraction}
If $\bar{v}/2>1/\delta\cdot \ln(n-1+\omega_0)$ (Case II in Theorem \ref{Theorem_PBE}), $v^*/\bar{v}$ decreases with $\bar{v}$ in the PBE.
\end{lemma}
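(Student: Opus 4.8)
\emph{Proof plan.} The plan is to reparametrize the Case~II equilibrium condition \eqref{vstar_PBE} in terms of the fraction of socially active users $\rho\triangleq v^*/\bar v$ and then apply the implicit function theorem. The key starting observation is that, under the uniform valuation distribution, $F(v^*)=v^*/\bar v=\rho$, so the expected social network benefit $\tilde J(v^*)$ in \eqref{expected_social_network} depends on the pair $(v^*,\bar v)$ only through $\rho$; write $\tilde J(v^*)=\phi(\rho)$ with $\phi(\rho)\triangleq\sum_{m=0}^{n-1}\binom{n-1}{m}\ln(m+\omega_0)\rho^m(1-\rho)^{n-1-m}$, a polynomial in $\rho$. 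Dividing \eqref{vstar_PBE} through by $\bar v$ then recasts the equilibrium condition as $H(\rho,\bar v)=0$, where
\[
H(\rho,\bar v)\triangleq(2\delta-\delta^2)\rho-\frac{2(1-\delta)}{\bar v}\,\phi(\rho)-\delta .
\]

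Next I would sign the two partial derivatives of $H$. For $\partial H/\partial\bar v=\tfrac{2(1-\delta)}{\bar v^{2}}\phi(\rho)$, positivity is immediate because $\phi(\rho)\ge\ln\omega_0>0$ (as $\omega_0>1$) and $\delta\in(0,1)$. The sign of $\partial H/\partial\rho=(2\delta-\delta^2)-\tfrac{2(1-\delta)}{\bar v}\phi'(\rho)$ is where the work lies. Here I would reuse the fact, already established in Case~(3) of the proof of Theorem~\ref{Theorem_PBE}, that the auxiliary function $\Gamma(v^*)=\tfrac{\bar v-\delta v^*}{2(1-\delta)}-v^*+\tfrac1\delta\tilde J(v^*)$ is strictly decreasing at the PBE root, i.e.\ $\Gamma'(v^*)=-\tfrac{\delta}{2(1-\delta)}-1+\tfrac1\delta\tilde J'(v^*)<0$. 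Since the chain rule gives $\tilde J'(v^*)=\phi'(\rho)/\bar v$ (with $\bar v$ held fixed), rearranging this strict inequality yields precisely $\tfrac{2(1-\delta)}{\bar v}\phi'(\rho)<2\delta-\delta^2$, and hence $\partial H/\partial\rho>0$.

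Finally, because $\partial H/\partial\rho\neq 0$, the implicit function theorem guarantees that $\rho$ is a continuously differentiable function of $\bar v$ throughout the Case~II regime, with
\[
\frac{d\rho}{d\bar v}=-\,\frac{\partial H/\partial\bar v}{\partial H/\partial\rho}<0,
\]
which is exactly the claim that $v^*/\bar v$ decreases in $\bar v$. The main obstacle is the step that signs $\partial H/\partial\rho$: it succeeds only because we may quote the strict inequality $\Gamma'(v^*)<0$ at the PBE root from the proof of Theorem~\ref{Theorem_PBE}, which itself came from Fact~\ref{Fact1}(A). Trying instead to establish $\partial H/\partial\rho>0$ from the shape of $\phi$ alone (increasing and concave by Lemma~\ref{claim:J_concave}) would not obviously suffice, since the needed bound is tied to the equilibrium selection rather than to $\phi$ in isolation. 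Everything else --- the reparametrization and the positivity of $\partial H/\partial\bar v$ --- is routine.
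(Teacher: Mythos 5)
Your proof is correct and follows essentially the same route as the paper's: both reparametrize the Case II equilibrium condition in terms of $\rho=v^*/\bar{v}$ (exploiting that under the uniform distribution $\tilde{J}(v^*)$ depends on $(v^*,\bar{v})$ only through $F(v^*)=\rho$) and then apply the implicit function theorem to sign $d\rho/d\bar{v}$. The only cosmetic difference is in how the two partial derivatives are signed --- the paper invokes $\tilde{J}'(v^*)<\delta$ (Lemma \ref{jprime}) together with $v^*/\bar{v}>1/(2-\delta)$, whereas you invoke $\Gamma'(v^*)<0$ at the PBE root together with $\phi(\rho)>0$; both sets of facts are established in the paper's Theorem \ref{Theorem_PBE} analysis and yield the same conclusion.
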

\begin{proof}
Recall that $v^*$ is the unique solution to \eqref{case2_equ} or $\Gamma(v^*)=0$ over $(\bar{v}/(2-\delta),\bar{v})$. To examine the impact of $\bar{v}$ on $v^*/\bar{v}$, we now transform the variables by letting $a=v^*/\bar{v}$ and $b=\bar{v}$. It then follows that $v^*=ab$, $1/(2-\delta)<a<1$, and
\begin{align}
    \Gamma(v^*)\equiv \Gamma(a,b)= \frac{b-\delta ab}{2(1-\delta)}-ab+\frac{1}{\delta}\tilde{J}_a(a),
\end{align}
where $\tilde{J}_a(a)$ is a function of only variable $a$ here. Then we have the following first-order partial derivatives:
\begin{align}
    \frac{\partial \Gamma(a,b)}{\partial a}=-\frac{\delta b}{2(1-\delta)}-\left(b-\frac{1}{\delta}\frac{\partial\tilde{J}_a(a)}{\partial a}\right)<0,
\end{align}
where the last inequality follows from 
\begin{align}
\tilde{J}'(v^*)=\frac{\partial \tilde{J}(v^*)}{\partial v^*}   \equiv\frac{\partial \tilde{J}(v^*,\bar{v})}{\partial v^*}=\frac{\partial \tilde{J}(v^*,\bar{v})}{\partial (v^*/\bar{v})}\cdot\frac{1}{\bar{v}}=\frac{\partial\tilde{J}_a(a)}{\partial a}\cdot\frac{1}{\bar{v}}=\frac{\partial\tilde{J}_a(a)}{\partial a}\cdot\frac{1}{b}
\end{align}
with $\tilde{J}'(v^*)$ given in \eqref{FO_J} and $\tilde{J}'(v^*)<\delta$ (see Lemma \ref{jprime}); and
\begin{align}
    \frac{\partial \Gamma(a,b)}{\partial b}=\frac{1-\delta a}{2(1-\delta)}-a=\frac{1-(2-\delta)a}{2(1-\delta)}<0,
\end{align}
where the last inequality follows from $a>1/(2-\delta)$. According to the implicit function theorem, we have $\partial a/\partial b<0$, which then completes the proof. 
\end{proof}

\section{Proof for Proposition \ref{Proposition:perfectprofiling}}

The proof consists of two main steps. We first analyze users' social activities in Appendix \ref{Appendix:Perfect_Step1} and then the seller's uniform pricing in Appendix \ref{Appendix:Perfect_Step2}. Notice that we cannot directly obtain Proposition \ref{Proposition:perfectprofiling} from Theorem \ref{Theorem_PBE} for the general case with $\delta\in(0,1)$. This is because $\delta=1$ in the perfect profiling case results in a zero denominator in \eqref{base_price_equation3} for the backward analysis of the seller's uniform pricing (see Proposition \ref{base_price}). But the threshold structure ($v^*$, see Section \ref{forward}) in the forward analysis of users' social decisions remains valid under $\delta=1$.

\subsection{Step 1: Users' Social Activity Analysis} \label{Appendix:Perfect_Step1}

This proof is developed by contradiction. Suppose $v^\ast<\bar{v}$, i.e., there exist some high-valuation users who are inactive in the social network. Given $\delta=1$, we directly obtain the set of non-profiled users $\mathcal{N}_0=\{i\in\mathcal{N};v^*<v_i\le\bar{v}\}$ and the set of profiled users $\mathcal{N}_1=\{i\in\mathcal{N};0\le v_i\le v^*\}$ in this perfect profiling case. Then, we update the seller's posterior belief in Stage II, i.e.,
\begin{equation}\label{perfect_belief}
	\left\{
	\begin{aligned}
		&f(v_i|i\in\mathcal{N}_1)=\frac{1}{v^*}, &\quad\textrm{$0\le v_i\le v^*$,}\\
		&f(v_i|i\in\mathcal{N}_0)=\frac{1}{\bar{v}-v^\ast},  &\quad\textrm{$v^*<v_i\le\bar{v}$.}
	\end{aligned}
	\right.
\end{equation}
Given $v^*$, the seller's expected revenue extracted from the non-profiled users $\tilde{\Pi}_0$ is then given by
\begin{align}
	\tilde{\Pi}_0=|\mathcal{N}_{0}|\mathbb{E}_{v_i\sim f(\cdot|i\in\mathcal{N}_0)}\left[p_0\mathbbm{1}(v_i\ge p_0)\right]=\left\{
	\begin{aligned}
		&\frac{|\mathcal{N}_0|p_0(\bar{v}-p_0)}{\bar{v}-v^*}, &\quad\textrm{if $p_0> v^*$,}\\
		&|\mathcal{N}_0|p_0,  &\quad\textrm{if $p_0\le v^*$;}
	\end{aligned}
	\right.
\end{align}
where we take expectations over all possible non-profiled users in $\mathcal{N}_0$. We hence obtain the seller's optimal uniform price of $p_0^*=\arg\max\tilde{\Pi}_0=\max\{v^*,\bar{v}/2\}\ge v^*$. However, if $v^*<\bar{v}$, we must have $v^*>p_0$ according to \eqref{vstar_p0_summation} in Section \ref{forward}. This results in a contradiction. Therefore, we now have $v^*=\bar{v}$ for the perfect profiling case with $\delta=1$.

\subsection{Step 2: Seller's Uniform Pricing Analysis}\label{Appendix:Perfect_Step2}

To constitute the PBE, users' social activity decisions should be consistent with the seller's belief that is updated according to Bayes' theorem (i.e., belief consistency \cite{fudenberg1991perfect}). Moreover, anticipating the seller's uniform price $p_0$ in Stage II, users must have no incentive to deviate from the equilibrium strategies of $x_i^*=1,\forall i\in\mathcal{N}$. This fact then helps guide our proof in this subsection. Rewrite the final payoff for each user $i$ in \eqref{final_user_payoff} here:
\begin{equation}\label{App:final_user_payoff}
	\pi_i(x_i,\boldsymbol{x}_{-i})=x_i\ln\left(\sum_{j\neq i}x_j+\omega_0\right)+\max\{v_i-p_i,0\}.
\end{equation}
Then we must ensure that, when all other users choose the equilibrium strategy (i.e., $x^*_j=1,\forall j\neq i$), each user $i$'s payoff should satisfy the following inequality:
\begin{equation}\label{App:user_deviation}
    \pi_i(1,\boldsymbol{x}^*_{-i})\ge\pi_i(0,\boldsymbol{x}^*_{-i}).
\end{equation}
Simplifying this yields
$\ln\left(n-1+\omega_0\right)\ge\max\{v_i-p_0,0\}$, which must hold for any user valuation $0\le v_i\le \bar{v}$. That is,
\begin{equation}
	\ln\left(n-1+\omega_0\right)\ge\max_{0\le v_i\le \bar{v}}\max\{v_i-p_0,0\}=\bar{v}-p_0.
\end{equation}
We thus conclude with $p_0^*\ge \bar{v}-\ln\left(n-1+\omega_0\right)$. 

Here, any uniform price $p_0$ that is no less than $\bar{v}-\ln(n-1+\omega_0)$ can constitute the PBE. Without loss of generality, we choose the lowest value of $p_0^*$ here. Any value above $\bar{v}-\ln(n-1+\omega_0)$ would not change the equilibrium welfare outcomes for users and the seller. Note that the uniform price is never realized in the equilibrium under the perfect profiling case, as the seller can always make personalized offers to any user.

\section{Proof for Proposition \ref{Prop:PBE_delta}}

\subsection{Impact of $\delta$ on $v^*$}

We first analyze the impact of $\delta$ on $v^*$ in the PBE. In what follows, we develop our analysis based on the two cases in Theorem \ref{Theorem_PBE}.
\begin{itemize}
 \item \mbox{Case I with $\delta\le 2\ln(n-1+\omega_0)/\bar{v}$:} In this case, $v^*=\bar{v}$ remains unchanged with $\delta$.
\item \mbox{Case II with $\delta> 2\ln(n-1+\omega_0)/\bar{v}$:} Recall that $v^*$ is the unique solution to \eqref{vstar_PBE}, equivalent to $\Theta(v^*,\delta)=0$ with 
\begin{align}
&\Theta(v^*,\delta)\triangleq \delta \bar{v} - 2\delta v^*+\delta^2 v^\ast+2(1-\delta)\tilde{J}(v^\ast).
\end{align}
We then calculate the following partial derivatives regarding $\delta$:
\begin{align}
\frac{\partial \Theta(v^*,\delta)}{\partial \delta}=\bar{v}-2v^*+2\delta v^*-2\tilde{J}(v^*),\label{Gamma_FOC_sign}
\end{align}
and $\partial^2 \Theta(v^*,\delta)/\partial \delta^2=2v^*>0$, which suggests that $\partial \Theta(v^*,\delta)/\partial \delta$ increases with $\delta$ on $(2\ln(n-1+\omega_0)/\bar{v},1)$. We also calculate the following second-order partial derivative:
\begin{align}
\frac{\partial^2 \Theta(v^*,\delta)}{\partial \delta\partial v^*}=-2(1-\delta)-2\tilde{J}'(v^*)<0,    
\end{align}
where $\tilde{J}'(v^*)>0$ according to Lemma \ref{claim:J_concave}. Thus, $\partial \Theta(v^*,\delta)/\partial \delta$ decreases with $v^*$ in the PBE. Next, we discuss the signs of the function values for $\partial \Theta(v^*,\delta)/\partial \delta$ in \eqref{Gamma_FOC_sign}. In particular, when $\delta$ approaches $1$, we have 
\begin{align}\label{limit_1}
    \lim_{\delta\to 1}\frac{\partial \Theta(v^*,\delta)}{\partial \delta}=\bar{v}-2\tilde{J}(v^*)>0
\end{align}
for any feasible $v^*$. The last inequality in \eqref{limit_1} holds because 
\begin{align}
    \tilde{J}(v^*)<\tilde{J}(\bar{v})<\bar{v}/2,
\end{align}
where the first inequality holds because $\tilde{J}(\cdot)$ is an increasing function according to Lemma \ref{claim:J_concave}. The second inequality holds due to the assumption of $\bar{v}> 2\ln(n-1+\omega_0)$.\footref{footnote3} Now consider the other boundary point of $\delta=2\ln(n-1+\omega_0)/\bar{v}$. When $\delta=\hat{\delta} \triangleq 2\ln(n-1+\omega_0)/\bar{v}=2\tilde{J}(\bar{v})/\bar{v}$, for any feasible $v^*$, we have:
\begin{align}
\frac{\partial \Theta(v^*,\delta)}{\partial \delta}|_{\delta=\hat{\delta}}<\max_{v^*\in\left(\frac{\bar{v}}{2-\hat{\delta}},\bar{v}\right)}\frac{\partial \Theta(v^*,\delta)}{\partial \delta}|_{\delta=\hat{\delta}}
<&\frac{\partial \Theta(v^*,\delta)}{\partial \delta}|_{v^*=\frac{\bar{v}}{2-\hat{\delta}},\delta=\hat{\delta}},\nonumber\\
=&\left(\bar{v}-2v^*+\hat{\delta} v^*\right)+\left(\hat{\delta} v^*-2\tilde{J}(v^*)\right),\nonumber\\
=&\left(\bar{v}-2\frac{\bar{v}}{2-\hat{\delta}}+\hat{\delta}\frac{\bar{v}}{2-\hat{\delta}}\right)+2\left(\frac{\tilde{J}(\bar{v})}{2-\hat{\delta}}-\tilde{J}(\frac{\bar{v}}{2-\hat{\delta}})\right),\nonumber\\
=&2\left(\frac{\tilde{J}(\bar{v})}{2-\hat{\delta}}-\tilde{J}(\frac{\bar{v}}{2-\hat{\delta}})\right)<2\left(-\frac{1-\hat{\delta}}{2-\hat{\delta}}\tilde{J}(0)\right)<0.
\end{align}
    where the first inequality holds because $v^*$ in the PBE lies over $(\bar{v}/(2-\delta),\bar{v})$ (see Appendix \ref{proof_TheoremPBE}.(3)). The second inequality holds because $\partial \Theta(v^*,\delta)/\partial \delta$ decreases with $v^*$. The penultimate inequality follows from the strict concavity of $\tilde{J}(\cdot)$ over $[0,\bar{v}]$ (see Lemma \ref{claim:J_concave}). Therefore, since $\partial \Theta(v^*,\delta)/\partial \delta$ increases with $\delta$, there exists a unique solution of $\tilde{\delta}\in(\hat{\delta},1)$ to $\partial \Theta(v^*,\delta)/\partial \delta=0$. Moreover, $\partial \Theta(v^*,\delta)/\partial \delta<0$ for $\delta\in(\hat{\delta},\tilde{\delta})$ whereas $\partial \Theta(v^*,\delta)/\partial \delta>0$ for $\delta\in(\tilde{\delta},1)$. 

Next we turn to the second-order partial derivative of $\Theta(v^*,\delta)$ regarding $v^*$: 
\begin{equation}
    \frac{\partial^2 \Theta(v^*,\delta)}{\partial (v^*)^2}=2(1-\delta)\frac{\partial^2 \tilde{J}(v^\dagger)}{\partial (v^\dagger)^2}<0;
\end{equation}
plus, as $v^*$ in the PBE lies within $(\bar{v}/(2-\delta),\bar{v})$ (see Appendix \ref{proof_TheoremPBE}.(3)), we also examine $\Theta(\bar{v}/(2-\delta),\delta)=2(1-\delta)\tilde{J}(\bar{v}/(2-\delta))>0$ and $\Theta(\bar{v},\delta)=(1-\delta)\cdot\left(2\ln(n-1+\omega_0)-\delta \bar{v}\right)<0$. We have $\partial \Theta(v^*,\delta)/\partial v^*<0$ according to Fact \ref{Fact1}.(A).

Finally, we obtain through the implicit function theorem:
\begin{equation}
    \frac{\partial v^*(\delta)}{\partial \delta}=-\frac{\partial \Theta(v^*,\delta)/\partial \delta}{\partial \Theta(v^*,\delta)/\partial v^*}
\end{equation}
is negative for for $\delta\in(\hat{\delta},\tilde{\delta})$ whereas positive for $\delta\in(\tilde{\delta},1)$. Particularly, the unique solution to $\partial v^*/\partial \delta=0$ is $\tilde{\delta}$.
\end{itemize}

\subsection{Impact of $\delta$ on $p_0^*$}
Next, we analyze the impact of $\delta$ on $p_0^*$ in the PBE. Similarly, our analysis is based on the two cases in Theorem \ref{Theorem_PBE}.

\begin{itemize}
\item \mbox{Case I with $\delta\le 2\ln(n-1+\omega_0)/\bar{v}$:} In this case, $p_0^*=\bar{v}/2$ remains unchanged with $\delta$.
\item \mbox{Case II with $\delta> 2\ln(n-1+\omega_0)/\bar{v}$:} Recall that in the PBE, the seller's equilibrium uniform price $p_0^*$
satisfies
\begin{equation}
    p_0^*=v^*-\frac{1}{\delta}\tilde{J}(v^*)\equiv\frac{\bar{v}-\delta v^*}{2(1-\delta)}.
\end{equation}
Then for $\delta\in(\hat{\delta},\tilde{\delta})$, the last subsection proves that $\partial v^*/\partial \delta<0$. We adopt the expression $p_0^*=(\bar{v}-\delta v^*)/2(2-\delta)$ here to take the first-order derivative of $\delta$:
\begin{equation}
    \frac{\partial p_0^*}{\partial \delta}=\frac{\bar{v}-v^*+\delta(\delta-1)\cdot\partial v^*/\partial \delta}{2(1-\delta)^2}>0;
\end{equation}
and for $\delta\in(\tilde{\delta},1)$, the last subsection proves that $\partial v^*/\partial \delta>0$. Differently yet equivalently, we adopt the expression $p_0^*=v^*-\tilde{J}(v^*)/\delta$ to take the first-order derivative of $\delta$:
\begin{equation}
\frac{\partial p_0^*}{\partial \delta}=\frac{\partial v^*}{\partial \delta}(1-\frac{\tilde{J}'(v^*)}{\delta})+\frac{\tilde{J}(v^*)}{\delta^2}>0,
\end{equation}
where $\tilde{J}'(v^*)<\delta$ (see Lemma \ref{jprime} in the final). To sum up, $\partial p_0^*/\partial \delta$ always increases with $\delta$ over $(\hat{\delta},1)$. 
\end{itemize}

\begin{lemma}\label{jprime}
$\tilde{J}'(v^*)<\delta$ holds for $v^*$ in the BNE of the users' social interaction game and thus in the PBE. 
\end{lemma}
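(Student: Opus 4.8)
The plan is to obtain the bound directly from the concavity of $\tilde J$ (Lemma \ref{claim:J_concave}) together with the equilibrium characterization already established in Appendix \ref{proof_v^*}. Recall the auxiliary function $\Phi(v)=p_0+\tfrac{1}{\delta}\tilde J(v)-v$ from \eqref{Phi}, whose interior zero is the BNE threshold $v^*=v^\dagger$ in the partially-active regime --- which is exactly the regime in which this bound is invoked (Case II of Theorem \ref{Theorem_PBE}, and its uses in Lemma \ref{vbar_fraction} and Proposition \ref{Prop:PBE_delta}). Since $\tilde J$ is strictly concave on $[0,\bar v]$, so is $\Phi$, and $\Phi(0)=p_0+\tfrac{1}{\delta}\ln\omega_0>0$.

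First I would record the elementary fact that a strictly concave $\Phi$ with $\Phi(0)>0$ cannot satisfy $\Phi'(v^*)\ge 0$ at any interior zero $v^*\in(0,\bar v)$: otherwise concavity would force $\Phi'>0$ on all of $[0,v^*)$, giving $\Phi(0)<\Phi(v^*)=0$, a contradiction. Hence $\Phi'(v^*)<0$ --- precisely the conclusion $\partial\Phi(v^*)/\partial v^*<0$ already derived via Fact \ref{Fact1}(A) in Appendix \ref{proof_v^*}. Because $\Phi'(v)=\tfrac{1}{\delta}\tilde J'(v)-1$, this inequality is exactly $\tilde J'(v^*)<\delta$, which is the claim. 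For the corner case $v^*=\bar v$ not covered above, either $v^\dagger=\bar v$ so $\Phi(\bar v)=0$ and the same argument on $[0,\bar v]$ applies verbatim, or one uses the tangent-line inequality for the concave $\tilde J$: $\bar v\,\tilde J'(\bar v)\le\tilde J(\bar v)-\tilde J(0)<\tilde J(\bar v)=\ln(n-1+\omega_0)$, hence $\tilde J'(\bar v)<\ln(n-1+\omega_0)/\bar v<\tfrac{1}{2}$ under the standing assumption $\bar v>2\ln(n-1+\omega_0)$.

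I do not anticipate a real obstacle: the substantive ingredients (strict concavity of $\tilde J$ and the root structure of $\Phi$) are already in hand from earlier in the paper, and the only step needing a line of justification is the ``concave, positive at the left endpoint, zero at an interior point $\Rightarrow$ strictly decreasing there'' observation, which is immediate.
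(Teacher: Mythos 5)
Your argument for the interior case is correct and is essentially the paper's own proof: the paper likewise reads $\tilde{J}'(v^*)<\delta$ off from $\partial\Phi(v^*)/\partial v^*=\tilde{J}'(v^*)/\delta-1<0$, citing the sign of $\Phi'$ at the root established via Fact~\ref{Fact1}.(A) in Appendix~\ref{proof_v^*}, which is exactly the concavity-plus-$\Phi(0)>0$ argument you spell out. One caveat on your corner-case addendum: when $v^\dagger>\bar{v}$ so that $v^*=\bar{v}$ with $\Phi(\bar{v})>0$, your tangent-line bound only yields $\tilde{J}'(\bar{v})<\ln(n-1+\omega_0)/\bar{v}=\hat{\delta}/2$, which does not imply $\tilde{J}'(\bar{v})<\delta$ when $\delta<\hat{\delta}/2$ (and the inequality can genuinely fail there, since $\tilde{J}'(\bar{v})$ is a fixed positive constant while $\delta$ can be arbitrarily small); the paper sidesteps this by proving the lemma only at interior roots, which is the only situation in which it is ever invoked.
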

\begin{proof}
Notice that this proof is developed based on our BNE analysis of the users' social interaction game, where $p_0$ is viewed as given, and we revisit $\Phi(v^*)$ in \eqref{Phi}. Since $\partial \Phi(v^*)/\partial v^*<0$ in the BNE (see the final part (1) of Appendix \ref{proof_v^*}), we then have $\partial \Phi(v^*)/\partial v^*=\tilde{J}'(v^*)/\delta-1<0$. Furthermore, since $v^*$ for PBE must be the equilibrium valuation threshold for BNE in the users' social interaction game, we thus complete the proof.
\end{proof}

\section{Proof for Proposition \ref{prop_userpayoff}}\label{Proof:Userpayoff_Increse}

\subsection{Threshold Characterization}\label{threshold_characterization}
This appendix presents the detailed characterization of $\hat{v}$ and $\delta^\dagger(\cdot)$ in Proposition \ref{prop_userpayoff}. Specifically, the user valuation threshold $\hat{v}$ is given by
\begin{align}\label{hat_v}
    \hat{v}\triangleq\bar{v}-\ln(n-1+\omega_0).
\end{align}
Let $\tilde{p}_0^*$ denote the seller's uniform price in the PBE when the profiling accuracy is $\tilde{\delta}$. The profiling accuracy threshold function $\delta^\dagger(\cdot)$ is given by
\begin{align}\label{delta_func}
\delta^\dagger(v_i)=\left\{
\begin{aligned}
	&\tilde{\delta}, &\quad\textrm{if $v_i\le \tilde{p}_0^*$,}\\
	&\text{the unique solution $\delta^\dagger$ to $p_0^*(\delta^\dagger)=v_i$},  &\quad\textrm{if $\tilde{p}_0^*<v_i<\hat{v}$.}
\end{aligned}
\right.
\end{align}
Recall that in the PBE $p_0^*$ (alternatively, $p_0^*(\delta)$) is an increasing function of $\delta$ over $(\tilde{\delta},1)$ (see Proposition \ref{Prop:PBE_delta} and Fig. \ref{FIG:PBE_delta}), and thus $\tilde{p}_0^*<p_0^*(\delta)<\bar{v}-\ln(n-1+\omega_0)$. Accordingly, given $\tilde{p}_0^*<v_i<\hat{v}$, there always exists a unique solution $\delta^\dagger$ to $p_0^*(\delta^\dagger)=v_i$. Moreover, the inverse function of $p_0^*(\delta^\dagger)$ is also an increasing function. Namely, for $\tilde{p}_0^*<v_i<\hat{v}$, the unique solution $\delta^\dagger$ to $p_0^*(\delta^\dagger)=v_i$ increases with $v_i$ over $(\tilde{p}_0^*,\hat{v})$. To sum up, $\delta^\dagger(v_i)$ is non-decreasing in $v_i$.\footref{footnote4}

\subsection{Proof for Proposition \ref{prop_userpayoff}}

We first provide the following lemma (together with Fig. \ref{fig_userpayoff}) to characterize an individual user $i$'s payoff $\tilde{\pi}_i$ in the PBE. 

\begin{lemma}\label{Cor:userpayoff}
In the PBE, the expected user payoff for each user $i$ with valuation $v_i$ is given by
\begin{equation}\label{equation:pbeuserpayoff}
\tilde{\pi}_i(v_i;\delta)=\left\{
\begin{aligned}
&\tilde{J}(v^\ast),&&\textrm{if $\ 0\le v_i\le p_0^*$,}\\
&\tilde{J}(v^\ast)+(1-\delta)(v_i-p_0^*),&&\textrm{if $\ p_0^*< v_i\le v^\ast$,}\\
&v_i-p_0^*,&&\textrm{if $\ v^\ast< v_i\le\Bar{v}$.}
\end{aligned}
\right.
\end{equation}
\end{lemma}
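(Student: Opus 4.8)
The plan is to read off the payoff expression directly from the equilibrium structure established in Theorem~\ref{Theorem_PBE} together with the payoff decomposition in~(\ref{user_utility}). First I would recall that in the PBE every user follows the common threshold strategy $x_i^*(v_i)=\mathbbm{1}(v_i\le v^*)$ (Belief~\ref{threshold_structure}), that the seller charges $p_0^*$ to the non-profiled users, and that $p_0^*<v^*$ always holds: in Case~I this is $\bar v/2<\bar v$, and in Case~II it follows from $p_0^*=v^*-\tilde{J}(v^*)/\delta$ with $\tilde{J}(v^*)>0$. This ordering is what lets me resolve the $\max\{\cdot,0\}$ terms into the three valuation regimes claimed.

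Next I would compute the expected payoff of an \emph{active} user, i.e.\ one with $v_i\le v^*$ and hence $x_i^*=1$. Conditioning on the profiling outcome and using~(\ref{user_utility}): with probability $\delta$ the user is profiled and receives only the social-network benefit $\ln(\sum_{j\neq i}x_j^*(v_j)+\omega_0)$; with probability $1-\delta$ he is non-profiled and additionally receives the purchase surplus $\max\{v_i-p_0^*,0\}$. Since the number $k$ of the remaining $n-1$ users who are active is $\mathrm{B}(n-1,F(v^*))$-distributed, taking expectations over the other users' valuations turns the social-network term into exactly $\mathbb{E}_k\ln(k+\omega_0)=\tilde{J}(v^*)$ as defined in~(\ref{expected_social_network}); note this term is present whether or not user $i$ himself is profiled. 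Hence $\tilde{\pi}_i(v_i;\delta)=\tilde{J}(v^*)+(1-\delta)\max\{v_i-p_0^*,0\}$ for $v_i\le v^*$, which splits into the first two cases according to whether $v_i\le p_0^*$ or $p_0^*<v_i\le v^*$.

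Then I would handle an \emph{inactive} user, i.e.\ one with $v_i>v^*$ and hence $x_i^*=0$. Such a user earns no social-network benefit and is never profiled, so he faces the uniform price $p_0^*$ and earns $\max\{v_i-p_0^*,0\}$; since $v_i>v^*>p_0^*$ this equals $v_i-p_0^*$, giving the third case. Combining the two computations yields~(\ref{equation:pbeuserpayoff}).

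I do not anticipate a genuine obstacle here --- the lemma is essentially a bookkeeping consequence of Theorem~\ref{Theorem_PBE}. The only points requiring a little care are (i) justifying $p_0^*<v^*$ so that the case boundaries are well ordered and the third branch collapses the $\max$, and (ii) being explicit that the expected social-network benefit of an active user equals $\tilde{J}(v^*)$ regardless of his own profiling status, which is precisely where the equilibrium threshold structure (rather than an arbitrary strategy profile) is invoked.
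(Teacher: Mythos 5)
Your proposal is correct and follows essentially the same route as the paper: both establish $p_0^*<v^*$ and then read the three branches of \eqref{equation:pbeuserpayoff} off the equilibrium threshold strategy, with the active user's expected social-network term reducing to $\tilde{J}(v^*)$ via the binomial expectation over the other users' activity and the purchase surplus contributing $(1-\delta)\max\{v_i-p_0^*,0\}$ when active and $v_i-p_0^*$ when inactive. The only cosmetic difference is that you first split by active/inactive and then subdivide the active case at $p_0^*$, whereas the paper enumerates the three valuation intervals directly.
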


\begin{figure}[h]
\centering
\begin{tikzpicture}[scale = 0.15, domain=0:45]
    \fill[blue!15](0,0) rectangle (32.54/1.5,4.40);
    \fill[blue!15](0,0) rectangle (32.54/1.5,4.40);
    \fill[red!15](23.73/1.5,4.40) -- (32.54/1.5,4.40) -- (32.54/1.5,8.81) -- cycle;
    \fill[red!15](32.54/1.5,0) -- (32.54/1.5,8.81) -- (40/1.5,16.27) -- (40/1.5,0) -- cycle;
    
	\draw[-latex] (0,0) -- (40/1.5+3,0) node[right] {\footnotesize $v_i$};
	\draw[-latex] (0,0) -- (0,23) node[right] {\footnotesize $\tilde{\pi}_i(v_i;\delta)$};
	
	\draw[color=black,densely dashed] (23.73/1.5,0) -- (23.73/1.5,4.40);
 	\draw[color=black,densely dashed] (32.54/1.5,0) -- (32.54/1.5,8.81);
 	\draw[color=black,densely dotted] (0,8.81) -- (32.54/1.5,8.81);
 	\draw[color=black,densely dotted] (0,16.27) -- (40/1.5,16.27);
 	\draw[color=black] (0,20) -- (40/1.5,20);
 	\draw[color=black] (40/1.5,0) -- (40/1.5,20);
 	
 	\draw[color=black,thick] plot[domain=0:23.73,samples=200] (\x/1.5,4.40);
 	\draw[color=black,thick] plot[domain=23.73:32.54,samples=200] (\x/1.5,{4.40+0.5*(\x-23.73)});
 	\draw[color=black,thick] plot[domain=32.54:40,samples=200] (\x/1.5,{\x-23.73});
 	
	\node at (0.1,0.1) [below left] {\scriptsize $O$};
	
	\node at (23.73/1.5,0) [below] {\scriptsize $p_0^*$};
	\draw[color=black,thick] (23.73/1.5,-0.3) -- (23.73/1.5,0.5);
 	\node at (32.54/1.5,0) [below] {\scriptsize $v^\ast$};
 	\draw[color=black,thick] (32.54/1.5,-0.3) -- (32.54/1.5,0.5);
 	\node at (40/1.5,0) [below] {\scriptsize $\bar{v}$};
 	\draw[color=black,thick] (40/1.5,-0.3) -- (40/1.5,0.5);
	
 	\node at (0,4.40) [left] {\scriptsize $\tilde{J}(v^\ast)$};
	\draw[color=black,thick] (-0.3,4.40) -- (0.5,4.40);
	\node at (0,16.27) [left] {\scriptsize $\bar{v}-p_0^*$};
	\draw[color=black,thick] (-0.3,16.27) -- (0.5,16.27);
	\draw[color=black,thick] (40/1.5-0.5,16.27) -- (40/1.5+0.3,16.27);
	\node at (0,20) [left] {\scriptsize $\bar{v}/2$};
	\draw[color=black,thick] (-0.3,20) -- (0.5,20);
 	\draw[color=black,thick] (40/1.5-0.5,20) -- (40/1.5+0.3,20);
	
 	\node at (20/1.5,0) [below left] {\scriptsize $\bar{v}/2$};
 	\draw[color=black,thick] (20/1.5,-0.3) -- (20/1.5,0.5);
	\draw [-,thin] (20/1.5,0) arc (0:-60:1.5);
	
 	\node at (0,10) [above right] {\scriptsize $\tilde{J}(v^\ast)+(1-\delta)(v^\ast-p_0^*)$};
 	\draw[color=black,thick] (-0.3,8.81) -- (0.5,8.81);
 	\draw [-,thin] (0,8.81) arc (-90:-40:5);
\end{tikzpicture}
\centering

\begin{tikzpicture}[scale = 0.15]
\fill[gray!5](-18,1.5) rectangle (31.5,-1.5);
\fill[blue!15](-17,1) rectangle (-10,-1);
\fill[red!15](10,1) rectangle (17,-1);
\node at (-10,0) [right] {\scriptsize Social Network Benefit};
\node at (17,-0.2) [right] {\scriptsize Purchase Surplus};
\end{tikzpicture}
\caption{Expected user payoff $\tilde{\pi}_i(v_i;\delta)$ under different user valuations $v_i$.}
\label{fig_userpayoff}
\end{figure}
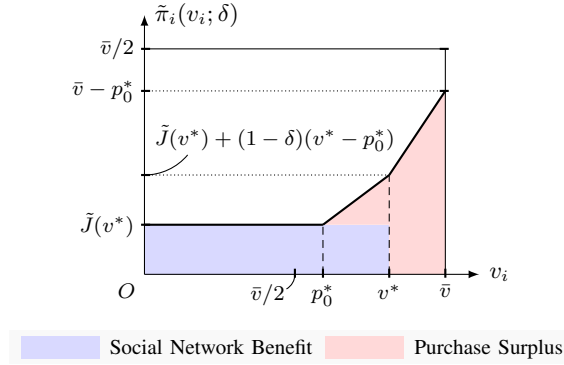

\begin{proof}
Recall that each individual user $i$'s expected final payoff is given in \eqref{expectpayoff}, which hinges on both his and the other users' equilibrium social decisions. Based on the PBE outcomes, we consider the following three possible cases regarding $v_i$. Notice that $p_0^*=v^*-\tilde{J}(v^*)/\delta<v^*$.
\begin{itemize}
    \item \mbox{For $0\le v_i\le p_0^*$:} Since $v_i\le p_0^*<v^*$, an individual user $i$ with valuation $v_i$ chooses $x_i^*=1$ in the PBE (see Belief \ref{threshold_structure}). In addition, he cannot afford the uniform price $p_0^*\ge v_i$ even if the seller fails to profile him and thus ends up with zero purchase surplus. As such, he merely receives the expected social network benefit $\tilde{J}(v^\ast)$ in \eqref{expected_social_network}.
    \item \mbox{For $p_0^*< v_i\le v^\ast$:} Since $v_i\le v^\ast$, an individual user $i$ with valuation $v_i$ chooses $x_i^*=1$ in the PBE (see Belief \ref{threshold_structure}). Given $v_i>p_0^*$, he can receive a mixture of purchase surplus with social network benefit. More specifically, his expected purchase surplus is $(1-\delta)\max\{v_i-p_0^*,0\}=(1-\delta)(v_i-p_0^*)$ and the expected social network benefit $\tilde{J}(v^\ast)$ is given in \eqref{expected_social_network}.
    \item \mbox{For $v^\ast< v_i\le\Bar{v}$: } Since $v_i>v^*$, an individual user $i$ with valuation $v_i$ chooses $x_i^*=0$ in the PBE (see Belief \ref{threshold_structure}). Thus, he receives zero social network benefit but only gains the purchase surplus. Specifically, the seller can only charge the uniform price $p_0^*$ to him, and thus the purchase surplus is $\max\{v_i-p_0^*,0\}=v_i-p_0^*$ given $v_i>v^*>p_0^*$.
\end{itemize}
This completes the proof for Lemma \ref{Cor:userpayoff}.
\end{proof}

Next, we will prove that under the set of conditions in \eqref{prop_userpayoff_condition}, an individual user $i$'s payoff increases with $\delta$. Specifically, we will discuss based on \eqref{hat_v} and \eqref{delta_func}.
\begin{itemize}
    \item \mbox{For $v_i\le \tilde{p}_0^*$ and $\delta>\tilde{\delta}$:} In this case, since $p_0^*(\delta)$ is an increasing function of $\delta$ over $(\tilde{\delta},1)$ (see Proposition \ref{Prop:PBE_delta}), we have $p_0^*(\delta)> p_0^*(\tilde{\delta})=\tilde{p}_0^*$. It follows that $v_i\le \tilde{p}_0^*< p_0^*(\delta)$. Hence, an individual user $i$ with valuation $v_i$ receives his expected payoff of $\tilde{\pi}_i(v_i;\delta)=\tilde{J}(v^\ast)$, according to Lemma \ref{Cor:userpayoff}.
    \item \mbox{For $\tilde{p}_0^*<v_i<\hat{v}$ and $\delta>\delta^\dagger(v_i)$:} Similarly, since $\delta>\delta^\dagger(v_i)>\delta^\dagger(\tilde{p}_0^*)=\tilde{\delta}$, we can have $p_0^*(\delta)>p_0^*\left(\delta^\dagger(v_i)\right)=v_i$ in this case. Hence, an individual user $i$ with valuation $v_i$ also receives his expected payoff of $\tilde{\pi}_i(v_i;\delta)=\tilde{J}(v^\ast)$.
\end{itemize}
Recall that $\tilde{J}(v^\ast)$ increases with $v^*$ (see Lemma \ref{claim:J_concave}) and $v^*$ increases with $\delta$ over $(\tilde{\delta},1)$ (see Proposition \ref{Prop:PBE_delta}). We thus conclude that $\tilde{\pi}_i(v_i;\delta)=\tilde{J}(v^\ast)$ increases with $\delta$ if \eqref{prop_userpayoff_condition} holds. This then completes the proof.

\section{Proof for Proposition \ref{Prop:Myopic-Compare}}

The proof consists of two main steps. Appendix \ref{step1_noaware} first introduces the no-awareness benchmark and its PBE outcomes. Then we compare user payoffs under this benchmark to that of our original model in the presence of user awareness in Appendix \ref{step2_noaware}.

\subsection{PBE Outcomes for No-awareness Benchmark}\label{step1_noaware}

This appendix presents the PBE outcomes for the no-awareness benchmark, where users are not aware of the seller's profiling in social networks.

\begin{lemma}[No-awareness benchmark]\label{Lem:Myopic}
In the benchmark of users lacking awareness, there exists a unique equilibrium as below. Hereafter, we use the superscript ``$\text{no}$" to refer to this no-awareness benchmark.
\begin{itemize}
    \item [(i)] In Stage I, all users choose the maximum social activity levels regardless of their valuations, i.e., $x_i^{\text{no}}(v_i)=1,\forall i\in\mathcal{N}$. In Stage II, the seller sets a uniform price of $p_0^{\text{no}}=\bar{v}/2$.
    
    \item[(ii)] In the equilibrium, the expected user payoff for each individual user $i$ with valuation $v_i$ is given by
\begin{equation}\label{equ:payoff_myopic}
    \tilde{\pi}_i^{\text{no}}(v_i;\delta)=
\ln(n-1+\omega_0)+(1-\delta)\cdot\max\{v_i-\frac{\bar{v}}{2},0\}.
\end{equation} 
\end{itemize}
\end{lemma}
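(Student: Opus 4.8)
The plan is to exploit the defining feature of the no-awareness benchmark --- an unaware user ignores the possibility of profiling when choosing his Stage~I activity level --- which decouples the users' social interaction game from the seller's pricing problem and reduces the argument to a short one-shot analysis on each side. \textbf{Stage~I:} since an unaware user $i$ does not anticipate being profiled, his perceived continuation payoff from Stage~III is the constant $\max\{v_i-p_0,0\}$, independent of $x_i$, so his Stage~I objective reduces to the social network benefit $J_i(x_i,\boldsymbol{x}_{-i})=x_i\ln\!\big(\sum_{j\neq i}x_j+\omega_0\big)$. Because $\omega_0>1$, the argument of the logarithm always exceeds $1$, so $J_i$ is strictly increasing in $x_i$ on $[0,1]$ for every fixed $\boldsymbol{x}_{-i}$; hence $x_i=1$ is each user's unique best response, independent of the others' choices --- indeed a strictly dominant strategy. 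The unique Stage~I equilibrium is therefore $x_i^{\text{no}}(v_i)=1$ for all $i\in\mathcal{N}$, regardless of valuations.

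\textbf{Stage~II and belief consistency:} with every user choosing $x_i=1$, the profiling accuracy in \eqref{profiling_accuracy} is $\lambda_i=\delta$ for all $i$, and --- crucially --- the event that a given user is profiled is Bernoulli$(\delta)$ and statistically independent of his valuation $v_i$. By Bayes' rule, the seller's posterior over any non-profiled user's valuation thus coincides with the prior uniform distribution on $[0,\bar{v}]$. Since the personalized prices are pinned at $p_i=v_i$ (Section~\ref{subsection:game}), the seller chooses the uniform price solely to maximize the expected non-profiled revenue $|\mathcal{N}_0|\,p_0\big(1-F(p_0)\big)=|\mathcal{N}_0|\,p_0\big(1-p_0/\bar{v}\big)$ --- the same strictly concave problem as in the proof of Lemma~\ref{Lem:noprofiling} --- whose unique maximizer is $p_0^{\text{no}}=\bar{v}/2$. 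This establishes part~(i), and uniqueness of the whole equilibrium follows from the two uniqueness claims above.

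For part~(ii), I would evaluate the equilibrium payoff in \eqref{user_utility}. Since $x_i^{\text{no}}=1$ for every user, user $i$'s social network benefit is the deterministic value $\ln(n-1+\omega_0)$. He is profiled with probability $\delta$ (in which case personalized pricing leaves zero purchase surplus) and non-profiled with probability $1-\delta$ (in which case he faces $p_0^{\text{no}}=\bar{v}/2$ and gains surplus $\max\{v_i-\bar{v}/2,0\}$). Taking the expectation over the profiling outcome gives $\tilde{\pi}_i^{\text{no}}(v_i;\delta)=\ln(n-1+\omega_0)+(1-\delta)\max\{v_i-\bar{v}/2,0\}$, which is exactly \eqref{equ:payoff_myopic}.

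I do not expect a genuine obstacle. The only step that warrants care is the belief-consistency argument in Stage~II: because all users choose identical activity levels, the profiling signal carries no information about valuations and the non-profiled posterior collapses to the prior. This is precisely the simplification that makes the benchmark far easier than the full PBE analysis of Section~\ref{Sec:PBE}, where heterogeneous activity levels force a nontrivial Bayesian update and the alternating backward/forward induction machinery.
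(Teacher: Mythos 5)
Your proposal is correct and follows essentially the same route as the paper's own (much terser) proof: unaware users maximize only the social network benefit, which is increasing in $x_i$, so all choose $x_i=1$; the non-profiled posterior then stays uniform, giving $p_0^{\text{no}}=\bar{v}/2$; and the payoff formula follows by taking the expectation over the Bernoulli$(\delta)$ profiling outcome. The extra detail you supply (strict dominance via $\omega_0>1$ and the explicit belief-consistency step) is a faithful elaboration rather than a different argument.
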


\begin{proof}
Without awareness of the seller's profiling, each user $i$ decides his social activity level merely to maximize the social network benefit $J(x_i,\boldsymbol{x}_{-i})$ in \eqref{social_network_benefit}, which monotonically increases with $x_i$. As such, we have $x_i^{\text{no}}(v_i)=1,\forall i\in\mathcal{N}$. This also makes the distribution of non-profiled users' valuations remain uniform across $[0,\bar{v}]$, and thus the seller sets the uniform price as $p_0^*=\arg\max p_0\left(1-F(p_0)\right)=\bar{v}/2$. The expected user payoff for each user $i$ then follows.
\end{proof}

\subsection{Comparison of User Payoffs}\label{step2_noaware}

This appendix compares the expected user payoff for each user $i$ in \eqref{equ:payoff_myopic} ($\tilde{\pi}_i(v_i;\delta)$, in the absence of user awareness) to \eqref{equation:pbeuserpayoff} ($\tilde{\pi}_i^{\text{no}}(v_i;\delta)$, in the presence of user awareness).

\subsubsection{If $0\le\delta\le \hat{\delta}$}
In the presence of user awareness, we have $v^*=\bar{v}$ and $p_0^*=\bar{v}/2$ (see Theorem \ref{Theorem_PBE}). Accordingly, \eqref{equation:pbeuserpayoff} degenerates into  $\tilde{\pi}_i(v_i;\delta)=\ln(n-1+\omega_0)+(1-\delta)\cdot\max\{v_i-\bar{v}/2,0\}$, the same as $\tilde{\pi}_i^{\text{no}}(v_i;\delta)$ in \eqref{equ:payoff_myopic}.

\subsubsection{If $\hat{\delta}<\delta<1$}
(i) First, consider when $0\le v_i\le v^*$. We will have $\tilde{\pi}_i(v_i;\delta)=
\tilde{J}(v^*)+(1-\delta)\max\{v_i-p_0^*,0\}$ from \eqref{equation:pbeuserpayoff}. Since $v^*<\bar{v}$ in this case, we have $\tilde{J}(v^*)<\tilde{J}(\bar{v})=\ln(n-1+\omega_0)$ according to Lemma \ref{claim:J_concave}. In conjunction with $p_0^*>\bar{v}/2$, one can easily prove that $\tilde{\pi}_i(v_i;\delta)<\tilde{\pi}_i^{\text{no}}(v_i;\delta)$. (ii) Next consider when $v^*<v_i\le \bar{v}$, and notice that we already have $\tilde{\pi}_i(v_i;\delta)<\tilde{\pi}_i^{\text{no}}(v_i;\delta)$ at $v_i=v^*$. We also have $\partial \tilde{\pi}_i(v_i;\delta)/\partial v_i -\partial \tilde{\pi}_i^{\text{no}}(v_i;\delta)/\partial v_i=\delta>0$. That is, $\tilde{\pi}_i(v_i;\delta)-\tilde{\pi}_i^{\text{no}}(v_i;\delta)$ increases with $v_i$ over $(v^*,\bar{v}]$. Since $\tilde{\pi}_i(v_i;\delta)>\tilde{\pi}_i^{\text{no}}(v_i;\delta)$ at $v_i=\bar{v}$ (see Lemma \ref{lemma_v_i} in the final), we can conclude that there exists a unique solution $v^\dagger\in(v^*,\bar{v})$ to $\tilde{\pi}_i(v^\dagger;\delta)=\tilde{\pi}_i^{\text{no}}(v^\dagger;\delta)$. More specifically, we have $v^\dagger=\left(2\ln(n-1±\omega_0)+2p_0^*-(1-\delta)\bar{v}\right)/2\delta$, and $\tilde{\pi}_i(v_i;\delta)<\tilde{\pi}_i^{\text{no}}(v_i;\delta)$ if $v^*<v_i<v^\dagger$ whereas $\tilde{\pi}_i(v_i;\delta)>\tilde{\pi}_i^{\text{no}}(v_i;\delta)$ if $v^\dagger<v_i\le\bar{v}$.

\subsubsection{If $\delta=1$}
In the presence of user awareness, we have $v^*=\bar{v}$ and $p_0^*=\bar{v}-\ln(n-1+\omega_0)$ (see Proposition \ref{Proposition:perfectprofiling}). Substituting $v^*$, $p_0^*$, and $\delta=1$ into \eqref{equation:pbeuserpayoff} yields $\tilde{\pi}_i(v_i;1)=\ln(n-1+\omega_0)$, the same as $\tilde{\pi}_i^{\text{no}}(v_i;1)$ in \eqref{equ:payoff_myopic}.

We thus complete the proof for Proposition \ref{Prop:Myopic-Compare}.

\begin{lemma}\label{lemma_v_i}
$\tilde{\pi}_i(\bar{v};\delta)>\tilde{\pi}_i^{\text{no}}(\bar{v};\delta)$, i.e., $(1+\delta) \bar{v}/2-p_0^*-\ln(n-1+\omega_0)>0$, $\forall \delta\in(\hat{\delta},1)$.
\end{lemma}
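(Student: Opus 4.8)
The plan is to first reduce the claimed payoff inequality to a statement purely about the equilibrium threshold $v^*$, and then to prove that statement using the concavity of $\tilde{J}$. Since $\delta\in(\hat{\delta},1)$ puts us in Case~II of Theorem~\ref{Theorem_PBE}, we have $v^*<\bar{v}$, so $v_i=\bar{v}$ falls in the third branch of \eqref{equation:pbeuserpayoff}, giving $\tilde{\pi}_i(\bar{v};\delta)=\bar{v}-p_0^*$; meanwhile \eqref{equ:payoff_myopic} gives $\tilde{\pi}_i^{\text{no}}(\bar{v};\delta)=\ln(n-1+\omega_0)+(1-\delta)\bar{v}/2$. Subtracting, $\tilde{\pi}_i(\bar{v};\delta)>\tilde{\pi}_i^{\text{no}}(\bar{v};\delta)$ is exactly $(1+\delta)\bar{v}/2-p_0^*-\ln(n-1+\omega_0)>0$, the equivalent form in the statement. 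Substituting the Case~II price $p_0^*=(\bar{v}-\delta v^*)/(2(1-\delta))$ and multiplying through by $2(1-\delta)>0$, this inequality becomes $\delta(v^*-\delta\bar{v})>2(1-\delta)\ln(n-1+\omega_0)$.

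Next I would rewrite this target in terms of $\tilde{J}$ alone. Write $L:=\ln(n-1+\omega_0)=\tilde{J}(\bar{v})$ (as $F(\bar{v})=1$ in \eqref{expected_social_network}), and divide \eqref{vstar_PBE} by $\delta$ to isolate $\frac{2(1-\delta)}{\delta}\tilde{J}(v^*)=(2-\delta)v^*-\bar{v}$. A short substitution -- replace $\tilde{J}(\bar{v})$ by $\tilde{J}(v^*)+(\tilde{J}(\bar{v})-\tilde{J}(v^*))$, then cancel and divide by $1-\delta>0$ -- converts $\delta(v^*-\delta\bar{v})>2(1-\delta)L$ into the equivalent inequality
\begin{equation*}
\delta(\bar{v}-v^*)>2\bigl(\tilde{J}(\bar{v})-\tilde{J}(v^*)\bigr).
\end{equation*}

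The crux is to bound $\tilde{J}(\bar{v})-\tilde{J}(v^*)$ from above. By Lemma~\ref{claim:J_concave}, $\tilde{J}$ is concave on $[0,\bar{v}]$, so the secant slope $v\mapsto(\tilde{J}(\bar{v})-\tilde{J}(v))/(\bar{v}-v)$ is non-increasing on $[0,\bar{v})$; evaluating at $v=v^*\in(0,\bar{v})$ and comparing with the secant over the whole interval gives
\begin{equation*}
\frac{\tilde{J}(\bar{v})-\tilde{J}(v^*)}{\bar{v}-v^*}\le\frac{\tilde{J}(\bar{v})-\tilde{J}(0)}{\bar{v}}=\frac{\ln(n-1+\omega_0)-\ln\omega_0}{\bar{v}}<\frac{\ln(n-1+\omega_0)}{\bar{v}}=\frac{\hat{\delta}}{2}<\frac{\delta}{2},
\end{equation*}
using $\omega_0>1$ (hence $\ln\omega_0>0$) and $\delta>\hat{\delta}$. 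Multiplying by $2(\bar{v}-v^*)>0$ yields $2(\tilde{J}(\bar{v})-\tilde{J}(v^*))<\delta(\bar{v}-v^*)$, the target, which completes the proof; note the argument only needs weak concavity of $\tilde{J}$, so it remains valid in the degenerate case $n=2$ where $\tilde{J}$ is affine.

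I expect the last comparison to be the main obstacle. The naive approach -- bounding $\tilde{J}(\bar{v})-\tilde{J}(v^*)\le\tilde{J}'(v^*)(\bar{v}-v^*)$ and trying to conclude from $\tilde{J}'(v^*)<\delta/2$ -- does not work, because Lemma~\ref{jprime} only gives $\tilde{J}'(v^*)<\delta$, and the sharper bound that follows from the equilibrium's first-order condition $\partial\Gamma(v^*)/\partial v^*<0$ is $\tilde{J}'(v^*)<\delta(2-\delta)/(2(1-\delta))$, still exceeding $\delta/2$. The resolution is to work with the \emph{average} slope of $\tilde{J}$ over $[v^*,\bar{v}]$ instead of its slope at $v^*$: concavity lets us further enlarge the interval to $[0,\bar{v}]$, whose average slope $(L-\ln\omega_0)/\bar{v}$ lies safely below $\hat{\delta}/2\le\delta/2$.
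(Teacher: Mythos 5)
Your proof is correct, and it takes a genuinely different and substantially shorter route than the paper's. The paper fixes $\delta$ and varies $\bar{v}$: it defines $\Lambda(\bar{v})=(1+\delta)\bar{v}/2-p_0^*(\bar{v})-\ln(n-1+\omega_0)$, observes $\Lambda=0$ at the boundary $\bar{v}=2\ln(n-1+\omega_0)/\delta$, and proves $\partial\Lambda/\partial\bar{v}>0$ via the comparative static $\partial v^*/\partial\bar{v}>\delta$ obtained from the implicit function theorem; that comparative static in turn rests on the pointwise bound $\tilde{J}'(v^*)<\delta/2$, whose proof (Lemmas \ref{lemma_1/2} and \ref{lemma_1/3}) requires a lengthy combinatorial manipulation of binomial sums. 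You instead stay at a fixed $\bar{v}$, substitute the equilibrium identity \eqref{vstar_PBE} into the reduced target $\delta(v^*-\delta\bar{v})>2(1-\delta)\tilde{J}(\bar{v})$ to arrive at $\delta(\bar{v}-v^*)>2\bigl(\tilde{J}(\bar{v})-\tilde{J}(v^*)\bigr)$ (I checked this algebra; it is right), and close with the chord inequality for the concave $\tilde{J}$: the average slope over $[v^*,\bar{v}]$ is at most that over $[0,\bar{v}]$, namely $(\ln(n-1+\omega_0)-\ln\omega_0)/\bar{v}<\hat{\delta}/2<\delta/2$, with strictness supplied by $\omega_0>1$. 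Your argument needs only Lemma \ref{claim:J_concave} and bypasses Lemmas \ref{lemma_1/2}--\ref{lemma_1/3} and the implicit-function-theorem step entirely; your closing remark correctly identifies why the tangent bound from Lemma \ref{jprime} is too weak and why the secant over the full interval is the right object. What the paper's longer route buys is the reusable byproducts $\tilde{J}'(v^*)<\delta/2$ and $\partial v^*/\partial\bar{v}>\delta$, though neither appears to be invoked elsewhere, so your argument could replace the paper's with a net simplification.
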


\begin{proof}
According to Lemma \ref{claim:J_concave} and Lemma \ref{lemma_1/2} later, we have $0<\tilde{J}'(v^*)<\delta/2$, $\forall \delta\in(\hat{\delta},1)$. Then we have
\begin{align}
&\tilde{J}'(v^*)<\frac{\delta}{2}\cdot\frac{(1-\delta)\bar{v}}{v^*-\delta\bar{v}},\tag{Line 1}\\
\Longleftrightarrow\quad& 2(v^*-\delta\bar{v})\tilde{J}'(v^*)<\delta(1-\delta)\bar{v},\tag{Line 2}\\
\Longleftrightarrow\quad& 2(1-\delta)\left(\frac{v^*}{\delta\bar{v}}-1\right)\tilde{J}'(v^*)<(1-\delta)^2,\tag{Line 3}\\
\Longleftrightarrow\quad& 2(1-\delta)\frac{v^*}{\delta\bar{v}}\tilde{J}'(v^*)<(1-\delta)^2+2(1-\delta)\tilde{J}'(v^*),\tag{Line 4}\\
\Longleftrightarrow\quad& 1-2(1-\delta)\frac{v^*}{\delta\bar{v}}\tilde{J}'(v^*)>2\delta-\delta^2-2(1-\delta)\tilde{J}'(v^*),\label{inequality}
\end{align}
where Line 1 follows from $(1-\delta)\bar{v}>v^*-\delta\bar{v}$ where $v^*<\bar{v}$ if $\hat{\delta}<\delta<1$. Moreover, $v^*-\delta \bar{v}>0$ since $v^*/\bar{v}>1/(2-\delta)>\delta$. To see the meaning of both sides of \eqref{inequality}, recall that $v^*$ is the unique solution to \eqref{vstar_PBE} or \eqref{case2_equ}, equivalent to $\Gamma(v^*)=0$ with 
\begin{align}
	\Gamma(v^*)\triangleq\frac{\bar{v}-\delta v^*}{2(1-\delta)}-v^*+\frac{1}{\delta}\tilde{J}(v^*).
\end{align}
Since we will hereafter focus on the impact of $\bar{v}$ on $v^*$, let us rewrite $\Gamma(v^*)\equiv \Gamma(v^*,\bar{v})$, and notice that $\tilde{J}(v^*)$ also hinges on $\bar{v}$ (alternatively, $\tilde{J}(v^*,\bar{v})$). We now take the following first-order partial derivatives:
\begin{equation}
    \frac{\partial \Gamma(v^*,\bar{v})}{\partial v^*}=-\frac{\delta}{2(1-\delta)}-1+\frac{1}{\delta}\tilde{J}'(v^*)=-\frac{2\delta-\delta^2-2(1-\delta)\tilde{J}'(v^*)}{2(1-\delta)\delta}
\end{equation}
and
\begin{equation}\label{gamma_vbar}
    \frac{\partial \Gamma(v^*,\bar{v})}{\partial \bar{v}}=\frac{1}{2(1-\delta)}+\frac{1}{\delta}\cdot\frac{\partial \tilde{J}(v^*)}{\partial\bar{v}}=\frac{1}{2(1-\delta)}-\frac{1}{\delta}\cdot\frac{v^*}{\bar{v}}\tilde{J}'(v^*)=\frac{1-v^*/\bar{v}\cdot1/\delta\cdot2(1-\delta)\tilde{J}'(v^*)}{2(1-\delta)},
\end{equation}
where $\tilde{J}'(v^*)$ is given in \eqref{FO_J}. In particular, the second equality of \eqref{gamma_vbar} follows from
\begin{align}
\tilde{J}'(v^*)=\frac{\partial \tilde{J}(v^*)}{\partial v^*}   \equiv\frac{\partial \tilde{J}(v^*,\bar{v})}{\partial v^*}=\frac{\partial \tilde{J}(v^*,\bar{v})}{\partial (v^*/\bar{v})}\cdot\frac{1}{\bar{v}}=-\frac{\bar{v}}{v^*}\left(-\frac{\partial \tilde{J}(v^*,\bar{v})}{\partial (v^*/\bar{v})}\cdot\frac{v^*}{\bar{v}^2}\right)=-\frac{\bar{v}}{v^*}\cdot \frac{\partial \tilde{J}(v^*,\bar{v})}{\partial \bar{v}}.
\end{align}
Then \eqref{inequality} indicates that
\begin{align}
    \frac{\partial \Gamma(v^*,\bar{v})}{\partial \bar{v}}>-\delta\frac{\partial \Gamma(v^*,\bar{v})}{\partial v^*}
    \quad\Longleftrightarrow\quad-\frac{\partial \Gamma(v^*,\bar{v})/\partial \bar{v}}{\partial \Gamma(v^*,\bar{v})/\partial v^*}>\delta,
\end{align}
with $\partial \Gamma(v^*,\bar{v})/\partial v^*<0$ (see Appendix \ref{proof_TheoremPBE}.(3)). According to the implicit function theorem, we have $\partial v^*/\partial \bar{v}>\delta$, $\forall \delta\in(\hat{\delta},1)$. It follows that
\begin{align}
1-\delta\frac{\partial v^*}{\partial \bar{v}}<1-\delta^2
\quad\Longleftrightarrow\quad& \frac{1}{2(1-\delta)}-\frac{\delta}{2(1-\delta)}\cdot\frac{\partial v^*}{\partial \bar{v}}<\frac{1+\delta}{2}\label{foc_lambda}.
\end{align}
Denote $\Lambda(\bar{v})\triangleq(1+\delta) \bar{v}/2-p_0^*(\bar{v})-\ln(n-1+\omega_0)$, where $p_0^*$ in the PBE is a function of $\bar{v}$. Recall that $p_0^*=\left(\bar{v}-\delta v^*\right)/2(1-\delta)$ in \eqref{base_price_equation3}; thus we have
\begin{align}
    \Lambda(\bar{v})=\frac{(1+\delta) \bar{v}}{2}-\frac{\bar{v}-\delta v^*}{2(1-\delta)}-\ln(n-1+\omega_0),
\end{align}
and \eqref{foc_lambda} implies that its first-order derivative is positive, i.e.,
\begin{align}
    \frac{\partial\Lambda(\bar{v})}{\partial \bar{v}}=\frac{(1+\delta)}{2}-\frac{1}{2(1-\delta)}+\frac{\delta}{2(1-\delta)}\cdot\frac{\partial v^*}{\partial \bar{v}}>0.
\end{align}
Since $\delta>\hat{\delta}$, we have $\bar{v}>2\ln(n-1+\omega_0)/\delta$. Furthermore, at $\bar{v}=2\ln(n-1+\omega_0)/\delta$, we have $v^*=\bar{v}$ from $\delta=\hat{\delta}$, and thus $\Lambda(\bar{v})=0$. As such, we have $\Lambda(\bar{v})>0$ for $\bar{v}>2\ln(n-1+\omega_0)/\delta$. This then completes the proof.
\end{proof}

\begin{lemma}\label{lemma_1/2}
$\tilde{J}'(v^*)<\delta/2$, $\forall \delta\in(\hat{\delta},1)$.
\end{lemma}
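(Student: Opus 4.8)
The plan is to sharpen Lemma~\ref{jprime} by a factor of two, using the concavity of $\tilde{J}$ together with the equilibrium pricing identity and a lower bound on $p_0^*$ that is special to Case~II.

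First I would invoke Lemma~\ref{claim:J_concave}: $\tilde{J}$ is strictly concave and increasing on $[0,\bar{v}]$, and moreover $\tilde{J}(0) = \ln\omega_0 > 0$ because the binomial $B(n-1,F(0))$ appearing in \eqref{expected_social_network} is degenerate at $0$ and $\omega_0 > 1$. Applying the supporting-line inequality for the concave function $\tilde{J}$ at the point $v^*$ and evaluating it at $0$ gives $\tilde{J}(0) \le \tilde{J}(v^*) - v^*\,\tilde{J}'(v^*)$, hence $\tilde{J}'(v^*) < \tilde{J}(v^*)/v^*$; note that $v^* > 0$ since in Case~II of Theorem~\ref{Theorem_PBE} we have $v^* > \bar{v}/(2-\delta)$.

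Next I would substitute the Case~II equilibrium identity $p_0^* = v^* - \tilde{J}(v^*)/\delta$, i.e.\ $\tilde{J}(v^*) = \delta(v^* - p_0^*)$, into the bound above to obtain $\tilde{J}'(v^*) < \delta\bigl(1 - p_0^*/v^*\bigr)$. It then remains only to show $p_0^* > v^*/2$. Using the closed form $p_0^* = (\bar{v} - \delta v^*)/\bigl(2(1-\delta)\bigr)$ from \eqref{base_price_equation3}, the inequality $p_0^* > v^*/2$ is equivalent to $\bar{v} - \delta v^* > (1-\delta)v^*$, i.e.\ to $\bar{v} > v^*$, which holds strictly in Case~II (partially active users, $v^* < \bar{v}$). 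Combining the three steps yields $\tilde{J}'(v^*) < \delta\bigl(1 - \tfrac{1}{2}\bigr) = \delta/2$ for every $\delta \in (\hat{\delta},1)$, as claimed.

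The only delicate point is the strict concavity bound $v^*\,\tilde{J}'(v^*) < \tilde{J}(v^*)$, which hinges on $\tilde{J}(0) > 0$, i.e.\ on the standing assumption $\omega_0 > 1$; the remaining manipulations are elementary algebra together with the already-established closed forms for $p_0^*$ and $v^*$, so I do not expect a genuine obstacle here.
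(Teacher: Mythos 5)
Your proof is correct, but it takes a genuinely different and substantially more economical route than the paper's. The paper argues via monotonicity of $\tilde{J}'$ (which is decreasing by Lemma~\ref{claim:J_concave}), bounds $v^*>\bar{v}/2$ so that $\tilde{J}'(v^*)<\tilde{J}'(\bar{v}/2)$, evaluates $\tilde{J}'(\bar{v}/2)$ explicitly for the uniform distribution, substitutes the worst case $\bar{v}=2\ln(n-1+\omega_0)/\delta$ permitted by $\delta>\hat{\delta}$, and then closes the argument with a lengthy combinatorial inequality (Lemma~\ref{lemma_1/3}) that exploits symmetry properties of the binomial coefficients and the increments $\ln(m+\omega_0)-\ln(m-1+\omega_0)$. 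You instead use the tangent-line bound for the concave function $\tilde{J}$ at $v^*$ evaluated at $0$, giving $\tilde{J}'(v^*)<\tilde{J}(v^*)/v^*$ (this is exactly the content of the paper's Lemma~\ref{J222}, but your derivation from concavity plus $\tilde{J}(0)=\ln\omega_0>0$ is cleaner than the paper's term-by-term computation), then feed in the Case~II equilibrium identity $\tilde{J}(v^*)=\delta(v^*-p_0^*)$ and the observation that $p_0^*=(\bar{v}-\delta v^*)/(2(1-\delta))>v^*/2$ is equivalent to $v^*<\bar{v}$, which holds strictly in Case~II of Theorem~\ref{Theorem_PBE}. Every step checks out: $\delta\in(\hat{\delta},1)$ does place the PBE in Case~II, $v^*>\bar{v}/(2-\delta)>0$ so the division is legitimate, and the chain $\tilde{J}'(v^*)<\delta(1-p_0^*/v^*)<\delta/2$ is airtight. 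What your approach buys is the complete elimination of the combinatorial Lemma~\ref{lemma_1/3}; what the paper's approach buys is a bound on $\tilde{J}'$ at a fixed point $\bar{v}/2$ that does not invoke the equilibrium pricing identities, so it would survive in contexts where only the location of $v^*$ (not the fixed-point equation it satisfies) is available. Since the lemma is only ever used at the PBE, your shortcut is fully legitimate here.
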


\begin{proof}
According to Lemma \ref{claim:J_concave}, $\tilde{J}'(v^*)$ in \eqref{FO_J} decreases with $v^*$. Recall that $v^*$ lies within $(\bar{v}/(2-\delta),\bar{v})$ in the PBE (see Appendix \ref{proof_TheoremPBE}.(3)), and we further relax this feasible range into $(\bar{v}/2,\bar{v})$. We thus have
\begin{align}\label{1_delta/2}
\tilde{J}'(v^*)<\tilde{J}'\left(\frac{\bar{v}}{2}\right)=\frac{1}{\bar{v}}\left(\frac{1}{2}\right)^{n-2}\cdot\sum_{m=1}^{n-1}\frac{(n-1)!}{(n-1-m)!(m-1)!}\left(\ln(m+\omega_0)-\ln(m-1+\omega_0)\right),
\end{align}
where $\tilde{J}'(\bar{v}/2)$ decreases with $\bar{v}$. Since $\delta>\hat{\delta}$, we have $\bar{v}>2\ln(n-1+\omega_0)/\delta$. Thus, we have
\begin{align}\label{2_delta/2}
\tilde{J}'\left(\frac{\bar{v}}{2}\right)&<\tilde{J}'\left(\frac{\bar{v}}{2}\right)|_{\bar{v}=\frac{2\ln(n-1+\omega_0)}{\delta}},\nonumber\\&=\frac{\delta}{\ln(n-1+\omega_0)}\left(\frac{1}{2}\right)^{n-1}\sum_{m=1}^{n-1}\frac{(n-1)!}{(n-1-m)!(m-1)!}\left(\ln(m+\omega_0)-\ln(m-1+\omega_0)\right)<\frac{\delta}{2},
\end{align}
where the last inequality is proved in Lemma \ref{lemma_1/3} below. Combining \eqref{1_delta/2} and \eqref{2_delta/2} completes the proof.
\end{proof}

\begin{landscape}
\begin{lemma}\label{lemma_1/3}
The following inequality holds:
\begin{align}
    \frac{\delta}{\ln(n-1+\omega_0)}\left(\frac{1}{2}\right)^{n-1}\sum_{m=1}^{n-1}\frac{(n-1)!}{(n-1-m)!(m-1)!}\left(\ln(m+\omega_0)-\ln(m-1+\omega_0)\right)<\frac{\delta}{2},\quad\forall \delta\in(\hat{\delta},1).
\end{align}
\end{lemma}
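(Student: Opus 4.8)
The plan is to strip away the factor $\delta$ and reduce the statement to an elementary inequality about a convex sequence weighted by binomial coefficients, which I then establish by a symmetrization trick. Since $\delta>0$ and $n-1+\omega_0>\omega_0>1$, dividing both sides by $\delta$ and multiplying by $\ln(n-1+\omega_0)$ shows the asserted inequality is equivalent to
\[
\Big(\tfrac{1}{2}\Big)^{n-1}\sum_{m=1}^{n-1}\frac{(n-1)!}{(n-1-m)!\,(m-1)!}\,\Delta_m<\tfrac{1}{2}\ln(n-1+\omega_0),
\]
where $\Delta_m:=\ln(m+\omega_0)-\ln(m-1+\omega_0)=\ln\!\big(1+\tfrac{1}{m-1+\omega_0}\big)$; note the reduced inequality no longer involves $\delta$, so it suffices to prove it for all relevant $\omega_0,n$. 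Writing $N:=n-1$ and using $\frac{N!}{(N-m)!(m-1)!}=N\binom{N-1}{m-1}$, after the substitution $j=m-1$ the left-hand side equals $N\,2^{-N}\sum_{j=0}^{N-1}\binom{N-1}{j}\Delta_{j+1}$. Multiplying by $2$, it therefore suffices to show
\[
N\,2^{-(N-1)}\sum_{j=0}^{N-1}\binom{N-1}{j}\Delta_{j+1}\;\le\;\sum_{j=0}^{N-1}\Delta_{j+1}\;=\;\ln(N+\omega_0)-\ln\omega_0\;<\;\ln(N+\omega_0),
\]
where the middle identity is a telescoping sum and the last step uses $\omega_0>1$. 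Equivalently, putting $c_j:=N\,2^{-(N-1)}\binom{N-1}{j}-1$, which satisfies $\sum_{j=0}^{N-1}c_j=0$, it remains to prove $\sum_{j=0}^{N-1}c_j\Delta_{j+1}\le 0$.

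For this I would record two structural facts. First, $(\Delta_m)$ is a convex sequence: since $\tfrac{d^2}{du^2}\ln(1+1/u)=\tfrac{2u+1}{(u^2+u)^2}>0$, the sequence $\Delta_m$ is the restriction of a convex function, hence convex (it is also decreasing, but only convexity is used). Second, $c_j$ is symmetric, $c_j=c_{N-1-j}$, and $\{j:c_j\ge 0\}=\{j:\binom{N-1}{j}\ge 2^{N-1}/N\}$ is a centered interval $[m_0,\,N-1-m_0]$, because $\binom{N-1}{j}$ is a symmetric log-concave (hence unimodal) sequence; when $N\le 2$ one has $c_j\equiv 0$ and there is nothing to prove, so I may assume $m_0\ge 1$, and necessarily $m_0\le (N-1)/2$.

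Now I would symmetrize. Using $c_j=c_{N-1-j}$ gives $\sum_j c_j\Delta_{j+1}=\sum_j c_j\Delta_{N-j}$, hence
\[
\sum_{j=0}^{N-1}c_j\Delta_{j+1}=\sum_{j=0}^{N-1}c_j\,h(j),\qquad h(j):=\tfrac{1}{2}\big(\Delta_{j+1}+\Delta_{N-j}\big).
\]
The two indices $j+1$ and $N-j$ always sum to $N+1$, so by convexity of $(\Delta_m)$ the quantity $h(j)$ is smaller the closer $j$ is to the center; concretely $h$ is symmetric about $(N-1)/2$ and nonincreasing on $\{0,1,\dots,\lfloor (N-1)/2\rfloor\}$. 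Consequently, for every $j\in[m_0,N-1-m_0]$ (those are at least as central as $m_0$) we get $h(j)\le h(m_0)$, while for every $j$ outside that interval we get $h(j)\ge h(m_0-1)\ge h(m_0)$. Writing $S:=\sum_{j:c_j>0}c_j=\sum_{j:c_j<0}|c_j|$, these bounds give $\sum_{j:c_j>0}c_j h(j)\le h(m_0)S$ and $\sum_{j:c_j<0}|c_j|h(j)\ge h(m_0)S$, so $\sum_j c_j\Delta_{j+1}=\sum_j c_j h(j)\le 0$. Combining the displays finishes the proof, and in fact yields the slightly sharper bound $\tfrac{\delta}{2}\big(1-\ln\omega_0/\ln(N+\omega_0)\big)$ for the left-hand side.

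The main obstacle is exactly the inequality $\sum_j c_j\Delta_{j+1}\le 0$: a direct estimate that replaces $\Delta$ by its extreme values on the "positive" and "negative" index sets separately is far too lossy (it produces a positive right-hand side), because the negative mass of $c_j$ sits in both tails, one of which carries small values of $\Delta$. The pairing $j\leftrightarrow N-1-j$ is what couples each tail with the convexity of $\Delta$ and makes the cancellation work. The remaining ingredients — the telescoping identity, $\ln\omega_0>0$, convexity of $\ln(1+1/u)$, and unimodality of the binomial coefficients — are routine.
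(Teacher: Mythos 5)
Your proposal is correct. The overall skeleton is the same as the paper's: you cancel $\delta$, use the telescoping identity $\sum_m \bigl(\ln(m+\omega_0)-\ln(m-1+\omega_0)\bigr)=\ln(n-1+\omega_0)-\ln\omega_0$ together with $\ln\omega_0>0$, and reduce everything to the single covariance-type inequality that the binomially weighted average of the convex sequence $\Delta_m$ does not exceed its unweighted average --- this is exactly the paper's inequality $\sum_m\sum_k \mathcal{C}_m\mathcal{L}_m\le\sum_m\sum_k\mathcal{C}_k\mathcal{L}_m$, rewritten in your notation as $\sum_j c_j\Delta_{j+1}\le 0$. Where you genuinely diverge is in how that central inequality is proved. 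The paper symmetrizes via $\mathcal{C}_k=\mathcal{C}_{n-k}$ and then expands the difference into pairwise terms $\left(\mathcal{C}_m-\mathcal{C}_k\right)\bigl((\mathcal{L}_m+\mathcal{L}_{n-m})-(\mathcal{L}_k+\mathcal{L}_{n-k})\bigr)$, a Chebyshev-sum argument that forces a separate (and somewhat heavy) case analysis on the parity of $n-1$ with an extra ``centric'' term when $n-1$ is odd. You instead exploit the same two structural facts (symmetry and unimodality of the binomial coefficients, convexity of $\Delta$) through a level-set argument: $c_j\ge 0$ exactly on a centered interval, the symmetrized sequence $h(j)=\tfrac12(\Delta_{j+1}+\Delta_{N-j})$ is minimized at the center, and splitting the zero-sum weights $c_j$ into positive and negative mass bounds both parts by $h(m_0)S$. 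This avoids the parity cases entirely and, as you note, yields the slightly sharper bound $\tfrac{\delta}{2}\bigl(1-\ln\omega_0/\ln(n-1+\omega_0)\bigr)$; the paper's pairwise expansion, by contrast, makes the sign of each individual summand explicit. Both routes are valid; yours is the cleaner execution of the same underlying idea.
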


\begin{proof}
This is equivalent to prove
\begin{align}
&\frac{1}{\ln(n-1+\omega_0)}\left(\frac{1}{2}\right)^{n-1}\sum_{m=1}^{n-1}\frac{(n-1)!}{(n-1-m)!(m-1)!}\left(\ln(m+\omega_0)-\ln(m-1+\omega_0)\right)<\frac{1}{2},\nonumber\\
\Longleftrightarrow\ &\sum_{m=1}^{n-1}\frac{(n-1)!}{(n-1-m)!(m-1)!}\left(\ln(m+\omega_0)-\ln(m-1+\omega_0)\right)< 2^{n-2}\ln(n-1+\omega_0),\nonumber\\
\Longleftrightarrow\ &\sum_{m=1}^{n-1}\frac{(n-1)!}{(n-1-m)!(m-1)!}\left(\ln(m+\omega_0)-\ln(m-1+\omega_0)\right)<2^{n-2}\ln\omega_0+2^{n-2}\sum_{m=1}^{n-1}\left(\ln(m+\omega_0)-\ln(m-1+\omega_0)\right),\nonumber\\
\Longleftrightarrow\ &\sum_{m=1}^{n-1}\sum_{k=1}^{n-1}\frac{(n-2)!}{(n-1-m)!(m-1)!}\left(\ln(m+\omega_0)-\ln(m-1+\omega_0)\right)<2^{n-2}\ln\omega_0+\sum_{k=0}^{n-2}\frac{(n-2)!}{(n-2-k)!k!}\cdot\sum_{m=1}^{n-1}\left(\ln(m+\omega_0)-\ln(m-1+\omega_0)\right),\nonumber\\
\Longleftrightarrow\ &\sum_{m=1}^{n-1}\sum_{k=1}^{n-1}\frac{(n-2)!}{(n-1-m)!(m-1)!}\left(\ln(m+\omega_0)-\ln(m-1+\omega_0)\right)<2^{n-2}\ln\omega_0+\sum_{m=1}^{n-1}\sum_{k=1}^{n-1}\frac{(n-2)!}{(n-1-k)!(k-1)!}\left(\ln(m+\omega_0)-\ln(m-1+\omega_0)\right),\label{long}
\end{align}
where the penultimate line applies the binomial expansion of $2^{n-2}=(1+1)^{n-2}$, and the last line transforms the summation index from $k$ to $k-1$. Since $2^{n-2}\ln\omega_0>0$, in what follows, we will show
\begin{equation}\label{short}
\sum_{m=1}^{n-1}\sum_{k=1}^{n-1}\frac{(n-2)!}{(n-1-m)!(m-1)!}\left(\ln(m+\omega_0)-\ln(m-1+\omega_0)\right)\le\sum_{m=1}^{n-1}\sum_{k=1}^{n-1}\frac{(n-2)!}{(n-1-k)!(k-1)!}\left(\ln(m+\omega_0)-\ln(m-1+\omega_0)\right)    
\end{equation}
to complete the proof for \eqref{long}. For ease of illustration, we hereafter denote 
\begin{align}\label{mathcal_L_C}
    \mathcal{L}_m\triangleq\ln(m+\omega_0)-\ln(m-1+\omega_0),\text{ and } \mathcal{C}_k=\frac{(n-2)!}{(n-1-k)!(k-1)!}.
\end{align}
Before we continue to prove the inequality in \eqref{short}, we first present two symmetry properties regarding $\mathcal{L}_m$ and $\mathcal{C}_k$:
\begin{align}\label{properties}
    \mathcal{C}_k=\mathcal{C}_{n-k}, \forall k\in\{1,\dots,n-1\},\text{ and }\mathcal{L}_m+\mathcal{L}_{m+1}< \mathcal{L}_{m-1}+\mathcal{L}_{m+2};
\end{align}
the former one could be easily verified, and the later one follows from the convexity of the first-order derivative of the logarithm function (i.e., $\mathcal{L}_{m+2}-\mathcal{L}_{m+1}>\mathcal{L}_{m}-\mathcal{L}_{m-1}$). Now \eqref{short} is rewritten as
\begin{align}\label{shortshort}
\sum_{m=1}^{n-1}\sum_{k=1}^{n-1}\mathcal{C}_m\mathcal{L}_{m}-\sum_{m=1}^{n-1}\sum_{k=1}^{n-1}\mathcal{C}_k\mathcal{L}_{m} \le0,
\end{align}
and we consider all possible cases concerning $n-1\ge1$ below:
\begin{itemize}
\item[(1)] When $n-1=2$ (i.e., $n=3$), we have
\begin{align}
\sum_{m=1}^{2}\sum_{k=1}^{2}\mathcal{C}_m\mathcal{L}_{m}-\sum_{m=1}^{2}\sum_{k=1}^{2}\mathcal{C}_k\mathcal{L}_{m}=2\left(\mathcal{C}_1\mathcal{L}_{1}+\mathcal{C}_2\mathcal{L}_{2}\right)-\left(\mathcal{C}_1\mathcal{L}_{1}+\mathcal{C}_2\mathcal{L}_{1}+\mathcal{C}_1\mathcal{L}_{2}+\mathcal{C}_2\mathcal{L}_{2}\right)=0,
\end{align}
with $\mathcal{C}_1=\mathcal{C}_2$. This then indicates that \eqref{shortshort} holds true.

\item[(2)] When $n-1=1$ (i.e., $n=2$), we have $\mathcal{C}_1\mathcal{L}_{1}-\mathcal{C}_1\mathcal{L}_{1}=0$, which indicates that \eqref{shortshort} holds true.

\item [(3)] When $n-1$ is even and $n-1>2$, we have
\begin{align}
    \sum_{m=1}^{n-1}\sum_{k=1}^{n-1}\mathcal{C}_m\mathcal{L}_{m}-\sum_{m=1}^{n-1}\sum_{k=1}^{n-1}\mathcal{C}_k\mathcal{L}_{m}
    =&2\sum_{k=1}^{\frac{n-1}{2}}\left(\sum_{m=1}^{n-1}\mathcal{C}_m\mathcal{L}_{m}\right)-2\sum_{k=1}^{\frac{n-1}{2}}\left(\sum_{m=1}^{n-1}\mathcal{C}_k\mathcal{L}_{m}\right),\tag{Line 1}\\
    =&2\sum_{k=1}^{\frac{n-1}{2}}\left(\left(\sum_{m=1}^{\frac{n-1}{2}}\mathcal{C}_m\mathcal{L}_{m}+\sum_{m=\frac{n+1}{2}}^{n-1}\mathcal{C}_m\mathcal{L}_{m}\right)-\left(\sum_{m=1}^{\frac{n-1}{2}}\mathcal{C}_k\mathcal{L}_{m}+\sum_{m=\frac{n+1}{2}}^{n-1}\mathcal{C}_k\mathcal{L}_{m}\right)\right),\nonumber\\ 
    =&2\sum_{k=1}^{\frac{n-1}{2}}\left(\left(\sum_{m=1}^{\frac{n-1}{2}}\mathcal{C}_m\mathcal{L}_{m}+\sum_{m=1}^{\frac{n-1}{2}}\mathcal{C}_m\mathcal{L}_{n-m}\right)-\left(\sum_{m=1}^{\frac{n-1}{2}}\mathcal{C}_k\mathcal{L}_{m}+\sum_{m=1}^{\frac{n-1}{2}}\mathcal{C}_k\mathcal{L}_{n-m}\right)\right),\tag{Line 3}\\ 
    =&2\sum_{k=1}^{\frac{n-1}{2}}\sum_{m=1}^{\frac{n-1}{2}}\left(\mathcal{C}_m-\mathcal{C}_k\right)(\mathcal{L}_m+\mathcal{L}_{n-m}),\nonumber\\ 
    =&2\sum_{k=1}^{\frac{n-1}{2}}\left(\sum_{m=1}^{k-1}\left(\mathcal{C}_m-\mathcal{C}_k\right)(\mathcal{L}_m+\mathcal{L}_{n-m})+\sum_{m=k+1}^{\frac{n-1}{2}}\left(\mathcal{C}_m-\mathcal{C}_k\right)(\mathcal{L}_m+\mathcal{L}_{n-m})\right),\nonumber\\ 
    =&2\left(\sum_{m=1}^{\frac{n-3}{2}}\sum_{k=m+1}^{\frac{n-1}{2}}\left(\mathcal{C}_m-\mathcal{C}_k\right)(\mathcal{L}_m+\mathcal{L}_{n-m})+\sum_{k=1}^{\frac{n-3}{2}}\sum_{m=k+1}^{\frac{n-1}{2}}\left(\mathcal{C}_m-\mathcal{C}_k\right)(\mathcal{L}_m+\mathcal{L}_{n-m})\right),\nonumber\\ 
    =&2\left(\sum_{k=1}^{\frac{n-3}{2}}\sum_{m=k+1}^{\frac{n-1}{2}}\left(\mathcal{C}_k-\mathcal{C}_m\right)(\mathcal{L}_k+\mathcal{L}_{n-k})+\sum_{k=1}^{\frac{n-3}{2}}\sum_{m=k+1}^{\frac{n-1}{2}}\left(\mathcal{C}_m-\mathcal{C}_k\right)(\mathcal{L}_m+\mathcal{L}_{n-m})\right),\tag{Line 7}\\ 
    =&2\sum_{k=1}^{\frac{n-3}{2}}\sum_{m=k+1}^{\frac{n-1}{2}}\left(\mathcal{C}_m-\mathcal{C}_k\right)\left((\mathcal{L}_m+\mathcal{L}_{n-m})-(\mathcal{L}_k+\mathcal{L}_{n-k})\right)<0,\label{even}
\end{align}
where Line 1 follows from $\mathcal{C}_k=\mathcal{C}_{n-k}$ in \eqref{properties}. While applying this fact, Line 3 transforms the summation index from $m$ to $n-m$. Line 7 exchanges the summation index between $m$ and $k$. The last line follows from $k<m<(n-1)/2$, which leads to $\mathcal{C}_m>\mathcal{C}_k$ (since $\mathcal{C}_k<\mathcal{C}_{k+1}$ for any $k<(n-1)/2$, as one could easily verified) and $\mathcal{L}_m-\mathcal{L}_k<\mathcal{L}_{n-k}-\mathcal{L}_{n-m}$ (because of the convexity of the first-order derivative of the logarithm function and $k<m<n-m<n-k$). Therefore, \eqref{shortshort} holds true in this case.

\item [(4)] When $n-1$ is odd and $n-1>1$, we have
\begin{align}
    &\sum_{m=1}^{n-1}\sum_{k=1}^{n-1}\mathcal{C}_m\mathcal{L}_{m}-\sum_{m=1}^{n-1}\sum_{k=1}^{n-1}\mathcal{C}_k\mathcal{L}_{m}\nonumber\\
    =&\left(\sum_{k=1\text{ and }k\neq \frac{n}{2}}^{n-1}\sum_{m=1\text{ and }m\neq \frac{n}{2}}^{n-1}\mathcal{C}_m\mathcal{L}_{m}+\sum_{m=1}^{n-1}\mathcal{C}_m\mathcal{L}_{m}+\sum_{k=1}^{n-1}\mathcal{C}_{\frac{n}{2}}\mathcal{L}_{\frac{n}{2}}-\mathcal{C}_{\frac{n}{2}}\mathcal{L}_{\frac{n}{2}}\right)\nonumber\\&\qquad-\left(\sum_{k=1\text{ and }k\neq \frac{n}{2}}^{n-1}\sum_{m=1\text{ and }m\neq \frac{n}{2}}^{n-1}\mathcal{C}_k\mathcal{L}_{m}+\sum_{m=1}^{n-1}\mathcal{C}_{\frac{n}{2}}\mathcal{L}_{m}+\sum_{k=1}^{n-1}\mathcal{C}_{k}\mathcal{L}_{\frac{n}{2}}-\mathcal{C}_{\frac{n}{2}}\mathcal{L}_{\frac{n}{2}}\right),\nonumber\\
    =&\underbrace{\left(\sum_{k=1\text{ and }k\neq \frac{n}{2}}^{n-1}\sum_{m=1\text{ and }m\neq \frac{n}{2}}^{n-1}\mathcal{C}_m\mathcal{L}_{m}-\sum_{k=1\text{ and }k\neq \frac{n}{2}}^{n-1}\sum_{m=1\text{ and }m\neq \frac{n}{2}}^{n-1}\mathcal{C}_k\mathcal{L}_{m}\right)}_{\text{(Symmetric) Part A }}+\underbrace{\left(\sum_{m=1}^{n-1}\mathcal{C}_{m}\mathcal{L}_{m}+\sum_{k=1}^{n-1}\mathcal{C}_{\frac{n}{2}}\mathcal{L}_{\frac{n}{2}}-\sum_{m=1}^{n-1}\mathcal{C}_{\frac{n}{2}}\mathcal{L}_{m}-\sum_{k=1}^{n-1}\mathcal{C}_{k}\mathcal{L}_{\frac{n}{2}}\right)}_{\text{(Centric) Part B}}.\label{odd}
\end{align}

Notice that Part A is symmetric such that we can analyze it similarly to \eqref{even}. In particular, for those summation index bound of $(n-1)/2$ and $(n+1)/2$ in \eqref{even}, we will substitute them  with $n/2-1$ and $n/2+1$ here. As such, we can show that Part A of \eqref{odd} is also negative. We now focus on the centric part, Part B, of \eqref{odd} in what follows.
\begin{align}
&\sum_{m=1}^{n-1}\mathcal{C}_{m}\mathcal{L}_{m}+\sum_{k=1}^{n-1}\mathcal{C}_{\frac{n}{2}}\mathcal{L}_{\frac{n}{2}}-\sum_{m=1}^{n-1}\mathcal{C}_{\frac{n}{2}}\mathcal{L}_{m}-\sum_{k=1}^{n-1}\mathcal{C}_{k}\mathcal{L}_{\frac{n}{2}}\nonumber\\
=&\left(\sum_{m=1}^{\frac{n}{2}-1}\mathcal{C}_{m}\mathcal{L}_{m}+\sum_{m=1}^{\frac{n}{2}-1}\mathcal{C}_{m}\mathcal{L}_{n-m}+\mathcal{C}_{\frac{n}{2}}\mathcal{L}_{\frac{n}{2}}\right)+\left(n-1\right)\mathcal{C}_{\frac{n}{2}}\mathcal{L}_{\frac{n}{2}}-\left(\sum_{m=1}^{\frac{n}{2}-1}\mathcal{C}_{\frac{n}{2}}\mathcal{L}_{m}+\sum_{m=1}^{\frac{n}{2}-1}\mathcal{C}_{\frac{n}{2}}\mathcal{L}_{n-m}+\mathcal{C}_{\frac{n}{2}}\mathcal{L}_{\frac{n}{2}}\right)-\left(2\sum_{k=1}^{\frac{n}{2}-1}\mathcal{C}_k\mathcal{L}_{\frac{n}{2}}+\mathcal{C}_{\frac{n}{2}}\mathcal{L}_{\frac{n}{2}}\right),\tag{Line 2}\\
=&\sum_{m=1}^{\frac{n}{2}-1}\left(\mathcal{C}_{m}-\mathcal{C}_{\frac{n}{2}}\right)\left(\mathcal{L}_{m}+\mathcal{L}_{n-m}\right)-2\sum_{m=1}^{\frac{n}{2}-1}\mathcal{C}_m\mathcal{L}_{\frac{n}{2}}+2\sum_{m=1}^{\frac{n}{2}-1}\mathcal{C}_{\frac{n}{2}}\mathcal{L}_{\frac{n}{2}},\tag{Line 3}\\
=&\sum_{m=1}^{\frac{n}{2}-1}\left(\mathcal{C}_{m}-\mathcal{C}_{\frac{n}{2}}\right)\left(\mathcal{L}_{m}+\mathcal{L}_{n-m}-2\mathcal{L}_{\frac{n}{2}}\right)<0,
\end{align}
where Line 2 transforms the summation index from $m$ to $n-m$ while following $\mathcal{C}_m=\mathcal{C}_{n-m}$ in \eqref{properties}.  Line 3 changes the summation index from $k$ to $m$. The last line follows from $m<n/2$, which leads to $\mathcal{C}_m<\mathcal{C}_{n/2}$ and $\mathcal{L}_{n/2}-\mathcal{L}_m<\mathcal{L}_{n-m}-\mathcal{L}_{n/2}$ (because of the convexity of the first-order derivative of the logarithm function in \eqref{properties} and $m<n/2<n-m$). Therefore, \eqref{shortshort} holds true in this case.
\end{itemize}
In conclusion, \eqref{shortshort} always holds true, and this completes the proof for \eqref{short} and thus \eqref{long}.
\end{proof}
\end{landscape}

\section{Proof for Proposition \ref{equilibrium_extension_model}}\label{Proof:PBE_extension}

In this appendix, we focus on the seller's uniform pricing in Stage I and the users' social activity decisions in Stage II, and the analysis is conducted through backward induction (see Section \ref{flexibility_new_equilibrium}).

\subsection{Users' Social Activity Decisions in Stage II}

In Stage II, each user $i\in\mathcal{N}$ decides his social activity level $x_i$ given the seller's announced uniform price $p_0$ in Stage I. At a high level, this differs from the analysis of the original three-stage model in Section~\ref{Sec:Model}, where users make their social activity decisions by predicting the seller's uniform price $p_0$. However, the analysis is similar on the technical level, and thus Belief \ref{threshold_structure}, Theorem \ref{Theorem:p0-vstar}, and Proposition \ref{unique_user} remain true in this extended four-stage model.

\subsection{Seller's Uniform Pricing in Stage I}

In Stage I, the seller determines the uniform price $p_0$ to maximize her total expected revenue $\tilde{\Pi}$, including both the expected profiled revenue $\tilde{\Pi}_1$ and the expected non-profiled revenue $\tilde{\Pi}_0$. Considering the users' social activity strategies in Stage II, we now calculate the expected revenue based on \eqref{platform_sale_revenue}. Specifically, for the expected profiled revenue $\tilde{\Pi}_1$, we have
\begin{align}\label{extension_1}
\tilde{\Pi}_1\triangleq \mathbb{E}_{i\sim \mathcal{N}_1}\left(\sum_{i\in\mathcal{N}_1}v_i\right) =|\mathcal{N}|\text{Prob}\left[i\in\mathcal{N}_1\right]\cdot\mathbb{E}_{v_i\sim f(\cdot|i\in\mathcal{N}_1)}v_i=|\mathcal{N}|\frac{\delta(v^*)^2}{2\bar{v}},
\end{align}
where we take expectations over all possible profiled users $\mathcal{N}_1$; for the expected non-profiled revenue $\tilde{\Pi}_0$, we have
\begin{align}\label{extension_0}
\tilde{\Pi}_0
\triangleq \mathbb{E}_{i\sim \mathcal{N}_0}\left(\sum_{i\in\mathcal{N}_0}p_0d_i\right)=&|\mathcal{N}|\text{Prob}\left[i\in\mathcal{N}_0\right]\cdot\mathbb{E}_{v_i\sim f(\cdot|i\in\mathcal{N}_0)}\left(p_0\mathbbm{1}(v_i\ge p_0)\right),\nonumber\\
=&|\mathcal{N}|\left(1-\delta\frac{v^*}{\bar{v}}\right)p_0\left(1-\frac{1-\delta}{\bar{v}-\delta v^*}p_0\right),
\end{align}
where we take expectations over all possible non-profiled users $\mathcal{N}_0$. Notice that we will restrict our attention to the case when $p_0<v^*$ since this always holds (see \eqref{vstar_p0_summation} in Theorem \ref{Theorem:p0-vstar}). Therefore, the seller's total expected revenue $\tilde{\Pi}$ is
\begin{align}\label{object_flexibility}
\tilde{\Pi}\left(p_0,v^*(p_0)\right)\triangleq\tilde{\Pi}_1+\tilde{\Pi}_0=|\mathcal{N}|\left(\delta\frac{(v^*(p_0))^2}{2\bar{v}}+\left(1-\delta\frac{v^*(p_0)}{\bar{v}}\right)p_0\left(1-\frac{1-\delta}{\bar{v}-\delta v^*(p_0)}p_0\right)\right),
\end{align}
and the optimal uniform price is given by 
\begin{align}\label{opt_p0_extension}
p_0^*=\arg\max_{p_0\in[0,\bar{v}]}\tilde{\Pi}(p_0,v^*(p_0)).    
\end{align}
The remaining analysis proceeds in three steps.

\subsubsection{Step 1} First, we (partially) reformulate the seller's optimization objective $\tilde{\Pi}(p_0,v^*(p_0))$ with variable $p_0$ into one with variable~$v^*$, i.e., $\tilde{\Pi}(v^*)$. More specifically, we develop our analysis based on the following two possible cases concerning $p_0$.

\begin{itemize}
    \item [(i)] For $p_0\ge\bar{v}-1/\delta\cdot\ln\left(n-1+\omega_0\right)$, we have $v^*=\bar{v}$ (see the final part of Appendix \ref{proof_v^*}). The seller's total expected revenue in \eqref{object_flexibility} purely depends on $p_0$ as below:
    \begin{align}
       \tilde{\Pi}(p_0)=|\mathcal{N}|\left(\delta\frac{\bar{v}}{2}+\left(1-\delta\right)p_0\left(1-\frac{p_0}{\bar{v}}\right)\right), 
    \end{align}
    which is concave in $p_0$ over $[\bar{v}-1/\delta\cdot\ln\left(n-1+\omega_0\right),\bar{v}]$. Then the seller's sub-optimal uniform price over the closed interval $[\bar{v}-1/\delta\cdot\ln\left(n-1+\omega_0\right),\bar{v}]$ is given by
    \begin{align}
        p_0^\dagger=\max\left\{\frac{\bar{v}}{2},\bar{v}-\frac{1}{\delta}\cdot\ln\left(n-1+\omega_0\right)\right\},
    \end{align}
    and more specifically, we have:
    \begin{itemize}
        \item \mbox{\textbf{Condition I} with $\bar{v}/2>1/\delta\cdot\ln\left(n-1+\omega_0\right)$:} We have $p_0^\dagger=\bar{v}-1/\delta\cdot\ln\left(n-1+\omega_0\right)$. In addition, $\tilde{\Pi}(p_0)$ decreases with $p_0$ over $[\bar{v}-1/\delta\cdot\ln\left(n-1+\omega_0\right),\bar{v}]$.
        \item \mbox{\textbf{Condition II} with $\bar{v}/2\le1/\delta\cdot\ln\left(n-1+\omega_0\right)$:} We have $p_0^\dagger=\bar{v}/2$. In addition, $\tilde{\Pi}(p_0)$ increases with $p_0$ over $[\bar{v}-1/\delta\cdot\ln\left(n-1+\omega_0\right),\bar{v}/2]$ and decreases over $[\bar{v}/2,\bar{v}]$.
    \end{itemize}
    \item [(ii)] For $p_0<\bar{v}-1/\delta\cdot\ln\left(n-1+\omega_0\right)$, we have $v^*<\bar{v}$ (see the final part of Appendix \ref{proof_v^*}). In this case, we will reformulate the seller's optimization objective $\tilde{\Pi}(p_0,v^*(p_0))$ in \eqref{object_flexibility} with variable $p_0$ into $v^*$, i.e., $\tilde{\Pi}(v^*)$. The main reason for this is that $\tilde{\Pi}(p_0,v^*(p_0))$ in \eqref{object_flexibility} hinges on both $p_0$ and $v^*$ as well as their relationship outlined in Theorem \ref{Theorem:p0-vstar}. Notice that we cannot figure out the closed-form characterization of $v^*$ as a function of $p_0$, i.e., $v^*(p_0)$, but we can have $p_0=v^*-\tilde{J}(v^*)/\delta$ for $v^*<\bar{v}$. Moreover, we have proved that $v^*$ is uniquely determined given any $p_0$ (see Proposition \ref{unique_user}). One could also verify that $\partial v^*/\partial p_0> 0$ for $v^*<\bar{v}$ through the implicit function theorem (see the final part of Appendix \ref{proof_v^*}: $\partial \Phi (v^*,p_0)/\partial v^*<0$ and we also have $\partial \Phi (v^*,p_0)/\partial p_0>0$ in \eqref{Phi}). Therefore, a one-to-one mapping exists between $v^*$ and $p_0$ in this case according to the property of inverse functions. This then allows for our reformulation, and henceforth we can focus on finding $v^*=\arg\max \tilde{\Pi}(v^*)$, where $\tilde{\Pi}(v^*)$ is derived by combining \eqref{object_flexibility} with \eqref{vstar_p0_summation} in Theorem \ref{Theorem:p0-vstar}:
    \begin{align}\label{vstart_reformulate}
        \tilde{\Pi}\left(v^*\right)=|\mathcal{N}|\left(\delta\frac{(v^*)^2}{2\bar{v}}+\left(1-\delta\frac{v^*}{\bar{v}}\right)\left(v^*-\frac{1}{\delta}\tilde{J}(v^*)\right)-\frac{1-\delta}{\bar{v}}\left(v^*-\frac{1}{\delta}\tilde{J}(v^*)\right)^2\right).
    \end{align}
    Moreover, we will narrow the feasible range of $v^*$ to facilitate our analysis in subsequent steps. Since $\partial p_0/\partial v^*> 0$ and the seller always  sets the uniform price non-negative, i.e., $p_0\ge 0$, we then have: 
    \begin{align}\label{v_0}
    v^*\in[v^o,\bar{v}) \text{ with } v^o=\frac{\tilde{J}(v^o)}{\delta}.
    \end{align}
    Now the optimization problem for the seller in this case is $v^*=\arg\max_{v^*\in[v^o,\bar{v}]} \tilde{\Pi}(v^*)$
    However, this new optimization problem is still challenging to analyze because $\tilde{\Pi}\left(v^*\right)$ in \eqref{vstart_reformulate} may not be concave or convex. We will focus on dealing with this non-convex optimization problem in the next step.
\end{itemize}

\begin{landscape}
\subsubsection{Step 2}\label{step2-extension} This step focuses on the non-convex optimization problem $v^*=\arg\max\tilde{\Pi}(v^*)$ with $v^o\le v^*<\bar{v}$, reformulated for the case when $p_0<\bar{v}-1/\delta\cdot\ln\left(n-1+\omega_0\right)$. The main idea in this step is to show the unimodality of $\tilde{\Pi}(v^*)$ in \eqref{vstart_reformulate}. 

We first calculate the first-order derivative of $\tilde{\Pi}(v^*)$:
\begin{align}\label{FOC_Exten}
\frac{\partial \tilde{\Pi}\left(v^*\right)}{\partial v^*}=|\mathcal{N}|\left(1-\frac{1}{\delta}\tilde{J}'(v^*)+\frac{\delta-2}{\bar{v}}v^*+\frac{2-\delta}{\delta\bar{v}}\tilde{J}(v^*)+\frac{2-\delta}{\delta\bar{v}}v^*\tilde{J}'(v^*)-2\frac{1-\delta}{\delta^2\bar{v}}\tilde{J}(v^*)\tilde{J}'(v^*)\right),
\end{align}
and then the third-order derivative:
\begin{align}
\frac{\partial^3 \tilde{\Pi}\left(v^*\right)}{\partial (v^*)^3} 
=&|\mathcal{N}|\left(-\frac{1}{\delta}\tilde{J}'''(v^*)+3\frac{2-\delta}{\delta\bar{v}}\tilde{J}''(v^*)+\frac{2-\delta}{\delta\bar{v}}v^*\tilde{J}'''(v^*)-6\frac{1-\delta}{\delta^2\bar{v}}\tilde{J}'(v^*)\tilde{J}''(v^*)-2\frac{1-\delta}{\delta^2\bar{v}}\tilde{J}(v^*)\tilde{J}'''(v^*)\right),\nonumber\\
=&|\mathcal{N}|\left(-\frac{1}{\delta}\tilde{J}'''(v^*)+\frac{2-2\delta}{\delta\bar{v}}\left(\left(v^*\tilde{J}'''(v^*)-\frac{\tilde{J}(v^*)}{\delta}\tilde{J}'''(v^*)\right)-3\left(\frac{\tilde{J}'(v^*)}{\delta}\tilde{J}''(v^*)-\tilde{J}''(v^*)\right)\right)+\frac{1}{\bar{v}}\left(3\tilde{J}''(v^*)+v^*\tilde{J}'''(v^*)\right)\right),\nonumber\\
=&|\mathcal{N}|\left(-\frac{1}{\delta}\tilde{J}'''(v^*)+\frac{2-2\delta}{\delta\bar{v}}\left(-v^*\tilde{J}'''(v^*)(\frac{\tilde{J}(v^*)}{\delta v^*}-1)-3\tilde{J}''(v^*)(\frac{\tilde{J}'(v^*)}{\delta}-1)\right)+\frac{1}{\bar{v}}\left(3\tilde{J}''(v^*)+v^*\tilde{J}'''(v^*)\right)\right),
\end{align}
where $\partial^2 \tilde{J}(v^*)/\partial (v^*)^2\equiv \tilde{J}''(v^*)$ and $\partial^3 \tilde{J}(v^*)/\partial (v^*)^3\equiv \tilde{J}'''(v^*)$. Lemma \ref{3_negative} below shows that $\partial^3 \tilde{\Pi}\left(v^*\right)/\partial {v^*}^3<0$ always holds. Thus, $\partial \tilde{\Pi}\left(v^*\right)/\partial v^*$ is concave in $v^*$ over $[v^o,\bar{v})$. Moreover, we have that at $v^*=v^o$:
\begin{align}
\frac{\partial \tilde{\Pi}\left(v^*\right)}{\partial v^*}|_{v^*=v^o}&=|\mathcal{N}|\left(1-\frac{1}{\delta}\tilde{J}'(v^o)+\frac{2-\delta}{\delta^2\bar{v}}\tilde{J}(v^o)\tilde{J}'(v^o)-2\frac{1-\delta}{\delta^2\bar{v}}\tilde{J}(v^o)\tilde{J}'(v^o)\right),\tag{Line 1}\\
&=|\mathcal{N}|\left(1-\frac{1}{\delta}\tilde{J}'(v^o)+\frac{1}{\delta\bar{v}}\tilde{J}(v^o)\tilde{J}'(v^o)\right)>0,
\end{align}
where Line 1 applies the fact in \eqref{v_0}. The last inequality follows from Lemma \ref{claim:J_concave} and Lemma \ref{jprime}. At $v^*=\bar{v}$, we have
\begin{align}\label{bar_bar}
\frac{\partial \tilde{\Pi}\left(v^*\right)}{\partial v^*}|_{v^*=\bar{v}}=&|\mathcal{N}|\left(1-\frac{1}{\delta}\tilde{J}'(\bar{v})+\delta-2+\frac{2-\delta}{\delta\bar{v}}\tilde{J}(\bar{v})+\frac{2-\delta}{\delta}\tilde{J}'(\bar{v})-2\frac{1-\delta}{\delta^2\bar{v}}\tilde{J}(\bar{v})\tilde{J}'(\bar{v})\right),   
\end{align}
where $\tilde{J}(\bar{v})=\ln(n-1+\omega_0)$ independent of $\bar{v}$ and $\tilde{J}'(\bar{v})=(n-1)/\bar{v}\cdot\left(\ln(n-1+\omega_0)-\ln(n-2+\omega_0)\right)$ according to \eqref{FO_J} and \eqref{SO_J}. According to Fact \ref{Fact1}, we have:
\begin{itemize}
	\item \mbox{\textbf{Condition A} with $\partial \tilde{\Pi}\left(v^*\right)/\partial v^*|_{v^*=\bar{v}}\ge0$: }We have $\partial \tilde{\Pi}\left(v^*\right)/\partial v^*>0$ over $[v^o,\bar{v})$. Thus, $\tilde{\Pi}\left(v^*\right)$ increases with $v^*$ over $[v^o,\bar{v})$.
	\item {\textbf{Condition B} with $\partial \tilde{\Pi}\left(v^*\right)/\partial v^*|_{v^*=\bar{v}}<0$: } There exists a unique solution $v^\ddag$ to $\partial \tilde{\Pi}\left(v^\ddag\right)/\partial v^\ddag=0$, Then, $\tilde{\Pi}\left(v^*\right)$ increases with $v^*$ over $[v^o, v^\ddag]$ and decreases over $[v^\ddag,\bar{v})$.
\end{itemize}

\begin{lemma}\label{3_negative}
For $v^*\in[0,\bar{v}]$, $\partial^3 \tilde{\Pi}\left(v^*\right)/\partial {v^*}^3<0$.
\end{lemma}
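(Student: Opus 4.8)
The plan is to finish the differentiation and then reduce the sign question to a short list of monotonicity facts about $\tilde J,\tilde J',\tilde J'',\tilde J'''$ together with one quantitative finite–difference estimate. Starting from \eqref{vstart_reformulate} and writing $h(v^*)=v^*-\tilde J(v^*)/\delta$ (so $h=p_0$, $h'=1-\tfrac{\tilde J'}{\delta}$, $h''=-\tfrac{\tilde J''}{\delta}$, $h'''=-\tfrac{\tilde J'''}{\delta}$), a short computation collapses the third derivative into $\partial^3\tilde\Pi/\partial(v^*)^3=|\mathcal N|\bigl(h'''M-\tfrac{3h''}{\bar v}\bigl(\delta+2(1-\delta)h'\bigr)\bigr)$, where $M\triangleq 1-\tfrac{(2-\delta)v^*}{\bar v}+\tfrac{2(1-\delta)\tilde J(v^*)}{\delta\bar v}$; equivalently, in terms of $\tilde J$, it equals $|\mathcal N|$ times $\tfrac{3\tilde J''}{\bar v}\bigl(1+\tfrac{2(1-\delta)}{\delta}(1-\tfrac{\tilde J'}{\delta})\bigr)$ plus $\tilde J'''\bigl(-\tfrac1\delta+\tfrac{v^*}{\bar v}+\tfrac{2(1-\delta)}{\delta\bar v}(v^*-\tfrac{\tilde J}{\delta})\bigr)$. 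This isolates a term I will show is $\le 0$ (the one carrying $\tilde J''$) from one whose sign is not fixed (the one carrying $\tilde J'''$).

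Next I would collect the sign facts. Lemma~\ref{claim:J_concave} already gives $\tilde J'>0$ and $\tilde J''<0$ on $[0,\bar v]$; I would add $\tilde J'''\ge 0$ on $[0,\bar v]$, proved by differentiating the telescoped form \eqref{SO_J} once more and shifting the summation index from $m$ to $m-1$ a third time — exactly the manoeuvre used for $\tilde J'$ and $\tilde J''$ — which writes $\tilde J'''$ as a nonnegative combination of the third forward differences $\ln(m+\omega_0)-3\ln(m-1+\omega_0)+3\ln(m-2+\omega_0)-\ln(m-3+\omega_0)$, each positive since $\tfrac{d^3}{dx^3}\ln(x+\omega_0)=2/(x+\omega_0)^3>0$ (integral representation of finite differences); for $n\le 3$ this combination is empty, $\tilde J'''\equiv 0$, and the lemma reduces to $\tilde J''<0$. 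I would also record, on the range actually used in Step~2, namely $v^*\in[v^o,\bar v)$ with $v^o$ from \eqref{v_0} (for $v^*<v^o$ the expression \eqref{vstart_reformulate} corresponds to $p_0<0$ and is of no consequence): that $\tilde J'(v^*)<\delta$, since these $v^*$ are equilibrium thresholds for feasible $p_0$ (Lemma~\ref{jprime}), so $h'>0$; that $h(v^*)\ge 0$, because $\delta v^*-\tilde J(v^*)$ increases and vanishes at $v^o$; that $\tilde J(v^*)\ge v^*\tilde J'(v^*)$, because $\tilde J-v^*\tilde J'$ has derivative $-v^*\tilde J''\ge 0$ and equals $\ln\omega_0>0$ at $v^*=0$; and that $M$ decreases in $v^*$ with $M(v^o)=1-\delta v^o/\bar v>0$, so $M<0$ only on a top subinterval $(v^{**},\bar v)$, where moreover $v^*/\bar v>1/(2-\delta)>1/2$ and $|M|<|M(\bar v)|=(1-\delta)(1-\hat\delta/\delta)$.

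I would then finish by sign accounting. The $\tilde J''$–term is $\le 0$, of magnitude at least $\tfrac{3}{\bar v}|\tilde J''|$ and also at least $\tfrac{6(1-\delta)}{\delta\bar v}|\tilde J''|(1-\tfrac{\tilde J'}{\delta})$; the $\tilde J'''$–term is $h'''M$, which is $\le 0$ wherever $M\ge 0$, so the only genuinely adverse contribution sits on $(v^{**},\bar v)$ and is there at most $\tilde J'''|M|/\delta$. Dominance of the former over the latter then reduces to a finite–difference inequality of the form $\bar v\,\tilde J'''(v^*)\,|M|<\tfrac{6(1-\delta)}{\delta}|\tilde J''(v^*)|(1-\tfrac{\tilde J'}{\delta})$ on $v^*/\bar v>1/2$; after the $1/\bar v$–scaling cancels, both sides are binomial moments of finite differences of $\ln(\cdot+\omega_0)$, and I would prove it by a term–by–term comparison of binomial sums in the spirit of the proof of Lemma~\ref{lemma_1/3}, exploiting that on $v^*/\bar v>1/2$ the binomial weight concentrates where $(n-3)\bigl(|\Delta^2\ell(k)|/|\Delta^2\ell(k+1)|-1\bigr)$, with $\ell(\cdot)=\ln(\cdot+\omega_0)$, is of order $2(n-3)/(n+\omega_0)<2$ rather than of order $n/\omega_0$.

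I expect the main obstacle to be exactly this last domination: bounding the positive $\tilde J'''$–contribution — which, unlike the $\tilde J'$– and $\tilde J''$–pieces, does not assemble into anything of an a~priori favourable sign — by the negative $\tilde J''$–contribution. The structural reason it works is that $\tilde J''$ is $O(1/\bar v^2)$ while $\tilde J'''$ is $O(1/\bar v^3)$, so on the scale $v^*\le\bar v$ the third–derivative terms are lower order; but turning this into a proof needs the hands–on binomial/finite–difference estimate above, together with the observation that the region where the $\tilde J'''$–term is adverse is confined to $v^*$ near $\bar v$, where $|M|$ is small (it is $O(1-\hat\delta/\delta)$ at $\bar v$) and the binomial mass is favourably concentrated.
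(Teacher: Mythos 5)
Your expansion of the third derivative is algebraically correct (it agrees term-by-term with the paper's), and your preliminary sign facts all check out: $\tilde J'''>0$ is the paper's Lemma \ref{TO_J}, your monotonicity argument for $\tilde J(v^*)\ge v^*\tilde J'(v^*)$ is a slicker route to the paper's Lemma \ref{J222}, and your observations that $\tilde J'<\delta$, $\tilde J\le\delta v^*$ on the relevant range, and that $M$ decreases with $M(\bar v)=(1-\delta)(\hat\delta/\delta-1)<0$ when $\delta>\hat\delta$ are all accurate. But the proof is not complete. Your grouping isolates $h'''M$ as a single block, and since $M<0$ on a nonempty top subinterval in exactly the cases where the lemma is invoked (the four-stage model's Cases II--III require $\delta>\hat\delta$), you are left with a genuinely positive contribution that must be quantitatively dominated by the $\tilde J''$-term. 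You state the required finite-difference inequality but only gesture at proving it ``in the spirit of Lemma \ref{lemma_1/3}'', and you yourself flag it as the main obstacle. That domination is not routine: a crude bound using $|M|\le 1-\delta$, $\bar v<(2-\delta)v^*$ and Lemma \ref{J333} fails for moderate $\delta$ when $\tilde J'$ is close to $\delta$, so a successful argument would have to exploit the sharper bound $|M|\le(1-\delta)(1-\hat\delta/\delta)$ together with the binomial structure --- substantial unfinished work whose outcome is not guaranteed by what you have written.

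The paper avoids this entirely by a different grouping of the \emph{same} expression. Writing the coefficient of $\tilde J'''$ as $-\tfrac1\delta+\tfrac{v^*}{\bar v}+\tfrac{2(1-\delta)}{\delta\bar v}\bigl(v^*-\tfrac{\tilde J}{\delta}\bigr)$ and pairing its pieces with the $\tilde J''$ pieces, the third derivative splits into three groups, each separately nonpositive: (i) $-\tilde J'''/\delta$; (ii) $\tfrac{2(1-\delta)}{\delta\bar v}\bigl[(-v^*\tilde J''')(\tfrac{\tilde J}{\delta v^*}-1)-3\tilde J''(\tfrac{\tilde J'}{\delta}-1)\bigr]$, which is $\le 0$ because $0\ge\tfrac{\tilde J}{\delta v^*}-1>\tfrac{\tilde J'}{\delta}-1$ (Lemmas \ref{J222} and \ref{jprime}) and $0<v^*\tilde J'''<-3\tilde J''$; and (iii) $\tfrac1{\bar v}(3\tilde J''+v^*\tilde J''')$, which is $<0$ by Lemma \ref{J333}. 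The ingredient your proposal is missing is precisely the pairing $3\tilde J''+v^*\tilde J'''<0$ used \emph{inside} the decomposition (rather than only as an a posteriori magnitude bound), which lets the positive $\tilde J'''$ contributions be absorbed by the negative $\tilde J''$ ones with no estimate on $M$ at all. I would recommend adopting that regrouping rather than pursuing the quantitative domination you outline.
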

\begin{proof}
Consider the following facts:
\begin{itemize}
    \item [(i)] Lemma \ref{TO_J} indicates that $\tilde{J}'''(v^*)>0$. 
    \item [(ii)] Lemma \ref{J222} indicates that $\tilde{J}(v^*)/v^*>\tilde{J}'(v^*)$, and thus $\tilde{J}(v^*)/\delta v^*-1>\tilde{J}'(v^*)/\delta -1$. In addition, since $\tilde{J}'(v^*)<\delta$ (see Lemma \ref{jprime}) and $0<\tilde{J}(v^*)\le\delta v^*$ (i.e., $p_0^*\ge0$), we have $0\ge \tilde{J}(v^*)/\delta v^*-1>\tilde{J}'(v^*)/\delta -1$.
    \item [(iii)] Lemma \ref{J333} indicates that $3\tilde{J}''(v^*)+v^*\tilde{J}'''(v^*)<0$. In addition, since $\tilde{J}''(v^*)<0$ and $\tilde{J}'''(v^*)>0$, we have $3\tilde{J}''(v^*)<-v^*\tilde{J}'''(v^*)<0$.
\end{itemize}
To sum up, we have $\tilde{J}'''(v^*)>0$, $(-v^*\tilde{J}'''(v^*))(\tilde{J}(v^*)/\delta v^*-1)<3\tilde{J}''(v^*)(\tilde{J}'(v^*)/\delta -1)$, and $3\tilde{J}''(v^*)+v^*\tilde{J}'''(v^*)<0$, which result in $\partial^3 \tilde{\Pi}\left(v^*\right)/\partial {v^*}^3<0$.


\end{proof}

\begin{lemma}\label{TO_J}
For $v^*\in[0,\bar{v}]$, $\tilde{J}'''(v^*)>0$.   
\end{lemma}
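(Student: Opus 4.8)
The plan is to push the telescoping computation of the proofs of Lemma~\ref{claim:J_concave} (equations \eqref{FO_J} and \eqref{SO_J}) one step further. Recall that $\tilde{J}(v^*)=\sum_{m=0}^{n-1}\tbinom{n-1}{m}\ln(m+\omega_0)F(v^*)^m(1-F(v^*))^{n-1-m}$, and for the uniform distribution $F(v^*)=v^*/\bar{v}$ with $f(v^*)=1/\bar{v}$ constant, so all derivatives of $F$ past the first vanish. I would differentiate the expression \eqref{SO_J} for $\tilde{J}''(v^*)$ once more in $v^*$ and repeat the summation-index shift used there (replace $m$ by $m-1$ in the second sum), which telescopes the result into the single compact sum
\[
\tilde{J}'''(v^*)=\left(\tfrac{1}{\bar{v}}\right)^3\sum_{m=3}^{n-1}\frac{(n-1)!}{(n-1-m)!(m-3)!}\,\Delta_m\, F(v^*)^{m-3}\left(1-F(v^*)\right)^{n-1-m},
\]
where $\Delta_m\triangleq\ln(m+\omega_0)-3\ln(m-1+\omega_0)+3\ln(m-2+\omega_0)-\ln(m-3+\omega_0)$ is the third finite difference of $\phi(j)\triangleq\ln(j+\omega_0)$. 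Equivalently, one can simply invoke the classical Bernstein-polynomial derivative identity $\frac{d^j}{dp^j}\sum_k\tbinom{N}{k}p^k(1-p)^{N-k}\phi(k)=\frac{N!}{(N-j)!}\sum_k\tbinom{N-j}{k}p^k(1-p)^{N-j-k}(\Delta^j\phi)(k)$ with $N=n-1$, $j=3$, $p=F(v^*)$, and then apply the chain rule using $dF/dv^*=1/\bar{v}$.

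The second step is to sign $\Delta_m$. Since $\phi(x)=\ln(x+\omega_0)$ is smooth on $[0,\infty)$ with $\phi'''(x)=2(x+\omega_0)^{-3}>0$, the mean-value theorem for divided differences yields $\Delta_m=\phi'''(\xi_m)>0$ for some $\xi_m\in(m-3,m)$; alternatively, writing $\mathcal{L}_j\triangleq\ln(j+\omega_0)-\ln(j-1+\omega_0)$ one has $\Delta_m=\mathcal{L}_m-2\mathcal{L}_{m-1}+\mathcal{L}_{m-2}$, which is positive because $g(x)\triangleq\ln(x+\omega_0)-\ln(x-1+\omega_0)$ is strictly convex ($g''(x)=(x-1+\omega_0)^{-2}-(x+\omega_0)^{-2}>0$, using $\omega_0>1$ so $x-1+\omega_0>0$). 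Every remaining factor in the sum — the multinomial coefficient, $F(v^*)^{m-3}$, $(1-F(v^*))^{n-1-m}$, and $1/\bar{v}^3$ — is non-negative, so $\tilde{J}'''(v^*)\ge 0$. For strictness I would observe that for $v^*\in(0,\bar{v})$ every $m$-term is strictly positive, while at $v^*=0$ only the $m=3$ term survives (equal to $(n-1)(n-2)(n-3)\Delta_3/\bar{v}^3>0$) and at $v^*=\bar{v}$ only the $m=n-1$ term survives (again positive); hence $\tilde{J}'''(v^*)>0$ on all of $[0,\bar{v}]$.

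The genuinely delicate points here are bookkeeping rather than conceptual: the index shifts must be carried out exactly as in \eqref{FO_J}–\eqref{SO_J}, with the lower summation limit moving from $1$ to $3$ and the exponent $f^3=\bar{v}^{-3}$ appearing, and one must be careful that the third-finite-difference sign argument is stated cleanly since everything downstream (Lemma~\ref{3_negative} and the unimodality of $\tilde{\Pi}$) relies on it. There is also a small-$n$ degeneracy to flag: when $n\le 3$ the sum is empty and $\tilde{J}$ is a polynomial of degree $\le 2$ in $v^*$, so $\tilde{J}'''\equiv 0$ rather than being strictly positive; throughout the PBE analysis $n$ is implicitly taken large enough (e.g. $n\ge 4$) that this case does not occur, and I would state this restriction explicitly. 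I expect the finite-difference sign step to be the one that must be written with the most care.
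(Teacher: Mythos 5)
Your proof matches the paper's: the same index-shift telescoping applied once more to \eqref{SO_J} reduces $\tilde{J}'''(v^*)$ to a single sum over $m\ge 3$ weighted by $\mathcal{L}_m-2\mathcal{L}_{m-1}+\mathcal{L}_{m-2}$, the third finite difference of $\ln(\cdot+\omega_0)$, whose positivity the paper likewise deduces from the convexity of the first derivative of the logarithm (equivalent to your $\phi'''>0$ argument). Your flag that the sum is empty, hence $\tilde{J}'''\equiv 0$ rather than strictly positive, when $n\le 3$ is a legitimate corner case the paper silently ignores, but it does not affect the argument in the large-$n$ regimes the paper considers.
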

\begin{proof}
Calculating the third-order derivative of $\tilde{J}(v^*)$ in \eqref{FO_J} yields
\begin{align}
\frac{\partial^3 \tilde{J}(v^*)}{\partial (v^*)^3}	
=&\sum_{m=3}^{n-1}\frac{(n-1)!}{(n-1-m)!(m-3)!}F(v^*)^{m-3}\left(1-F(v^*)\right)^{n-1-m}\left(f(v^*)\right)^3\left(\ln(m+\omega_0)-2\ln(m-1+\omega_0)+\ln(m-2+\omega_0)\right)\tag{Line 1-1}\\&\quad-\sum_{m=2}^{n-2}\frac{(n-1)!}{(n-2-m)!(m-2)!}F(v^*)^{m-2}\left(1-F(v^*)\right)^{n-2-m}\left(f(v^*)\right)^3\left(\ln(m+\omega_0)-2\ln(m-1+\omega_0)+\ln(m-2+\omega_0)\right),\tag{Line 1-2}\\
=&\sum_{m=3}^{n-1}\frac{(n-1)!}{(n-1-m)!(m-3)!}F(v^*)^{m-3}\left(1-F(v^*)\right)^{n-1-m}\left(f(v^*)\right)^3\left(\ln(m+\omega_0)-2\ln(m-1+\omega_0)+\ln(m-2+\omega_0)\right)\tag{Line 2-1}\\&\quad-\sum_{m=3}^{n-1}\frac{(n-1)!}{(n-1-m)!(m-3)!}F(v^*)^{m-3}\left(1-F(v^*)\right)^{n-1-m}\left(f(v^*)\right)^3\left(\ln(m-1+\omega_0)-2\ln(m-2+\omega_0)+\ln(m-3+\omega_0)\right),\tag{Line 2-2}\\
=&\sum_{m=3}^{n-1}\frac{(n-1)!}{(n-1-m)!(m-3)!}F(v^*)^{m-3}\left(1-F(v^*)\right)^{n-1-m}\left(f(v^*)\right)^3\left(\left(\mathcal{L}_m-\mathcal{L}_{m-1}\right)-\left(\mathcal{L}_{m-1}-\mathcal{L}_{m-2}\right)\right)>0.
\end{align}
Notice that $\partial f(v^*)/\partial v^* = 0$ given $f(v^*) = 1/\bar{v}$ for the uniform valuation distribution. This fact is applied in Line 1. Line 2-2 transforms the summation index from $m$ to $m-1$, and $\mathcal{L}_m$ is defined in \eqref{mathcal_L_C}. The last inequality follows from the convexity of the first-order derivative of the logarithm function. This completes the proof.
\end{proof}

\begin{lemma}\label{J222}
For $v^*\in[0,\bar{v}]$, $\tilde{J}(v^*)/v^*>\tilde{J}'(v^*)$.   
\end{lemma}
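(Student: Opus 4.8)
The plan is to deduce this from the concavity of $\tilde{J}$ (Lemma \ref{claim:J_concave}) together with the strict positivity $\tilde{J}(0)=\ln\omega_0>0$, which holds since $\omega_0>1$ in our model. For $v^*>0$ the claimed inequality $\tilde{J}(v^*)/v^*>\tilde{J}'(v^*)$ is equivalent to $\tilde{J}(v^*)-v^*\tilde{J}'(v^*)>0$, so it suffices to lower-bound the auxiliary function $h(v^*)\triangleq\tilde{J}(v^*)-v^*\tilde{J}'(v^*)$ on $[0,\bar{v}]$.

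First I would differentiate: $h'(v^*)=\tilde{J}'(v^*)-\bigl(\tilde{J}'(v^*)+v^*\tilde{J}''(v^*)\bigr)=-v^*\tilde{J}''(v^*)$. By Lemma \ref{claim:J_concave} we have $\tilde{J}''(v^*)<0$ over $[0,\bar{v}]$, hence $h'(v^*)\ge 0$ there, so $h$ is non-decreasing on $[0,\bar{v}]$. Evaluating at the left endpoint gives $h(0)=\tilde{J}(0)=\ln(0+\omega_0)=\ln\omega_0>0$. Therefore $h(v^*)\ge h(0)=\ln\omega_0>0$ for every $v^*\in[0,\bar{v}]$, which is exactly $\tilde{J}(v^*)>v^*\tilde{J}'(v^*)$, i.e., the assertion of the lemma.

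There is no substantive obstacle here: the only fine points are that the ratio $\tilde{J}(v^*)/v^*$ is read for $v^*>0$ (at $v^*=0$ the inequality $\tilde{J}(0)>0=0\cdot\tilde{J}'(0)$ is immediate, and the weak form $\tilde{J}''\le 0$ suffices in the degenerate cases $n\le 2$), and that the uniform-valuation assumption enters only through Lemma \ref{claim:J_concave}; the explicit binomial expressions for $\tilde{J}$ and $\tilde{J}'$ in \eqref{FO_J} are not needed beyond that.
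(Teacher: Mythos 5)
Your proof is correct, but it takes a genuinely different route from the paper's. The paper proves the lemma by brute force: it writes $\tilde{J}(v^*)/v^*-\tilde{J}'(v^*)$ as a single binomial sum (re-indexing so that the $m$-th coefficient becomes $m\ln(m-1+\omega_0)-(m-1)\ln(m+\omega_0)$) and shows every term is positive because $\ln(x+\omega_0)/x$ is decreasing for $\omega_0>1$. You instead observe that the claim is the statement that the tangent line of $\tilde{J}$ at $v^*$ meets the vertical axis above the origin, and you get this from concavity (Lemma \ref{claim:J_concave}) plus $\tilde{J}(0)=\ln\omega_0>0$ by monotonicity of $h(v^*)=\tilde{J}(v^*)-v^*\tilde{J}'(v^*)$, whose derivative is $-v^*\tilde{J}''(v^*)\ge 0$. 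Both arguments are sound; your side conditions (reading the ratio only for $v^*>0$, and the weak inequality $\tilde{J}''\le 0$ covering $n\le 2$, where the sum in the paper's second-derivative formula is empty) are handled appropriately. Your approach buys two things: it is shorter and avoids any further manipulation of the binomial expressions beyond what Lemma \ref{claim:J_concave} already delivers, and it is more portable — since the concavity of $\tilde{J}$ is re-established for general distributions under Assumption \ref{ass:general_cdf} (Lemma \ref{last_lemma}), your argument extends verbatim beyond the uniform case, whereas the paper's computation leans on $f\equiv 1/\bar{v}$. What the paper's term-by-term computation buys in exchange is a quantitative, self-contained verification that does not route through the (itself computational) concavity lemma, and its intermediate identity is stylistically consistent with the neighboring Lemmas on $\tilde{J}'''$ and $3\tilde{J}''+v^*\tilde{J}'''$, which genuinely need the explicit sums.
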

\begin{proof}
\begin{align}
\frac{\tilde{J}(v^*)}{v^*}-\tilde{J}'(v^*)=&\sum_{m=0}^{n-1}\frac{(n-1)!}{(n-1-m)!m!}\ln(m+\omega_0)\frac{(v^*)^{m-1}}{\bar{v}^m}\left(1-\frac{v^*}{\bar{v}}\right)^{n-1-m}\nonumber\\&\qquad-\sum_{m=1}^{n-1}\frac{(n-1)!}{(n-1-m)!(m-1)!}\left(\ln(m+\omega_0)-\ln(m-1+\omega_0)\right)\frac{(v^*)^{m-1}}{\bar{v}^m}\left(1-\frac{v^*}{\bar{v}}\right)^{n-1-m},\nonumber\\
=&\sum_{m=0}^{n-1}\frac{(n-1)!}{(n-1-m)!m!}\ln(m+\omega_0)\frac{(v^*)^{m-1}}{\bar{v}^m}\left(1-\frac{v^*}{\bar{v}}\right)^{n-1-m}\nonumber\\&\qquad-\sum_{m=0}^{n-1}\frac{(n-1)!}{(n-1-m)!m!}m\left(\ln(m+\omega_0)-\ln(m-1+\omega_0)\right)\frac{(v^*)^{m-1}}{\bar{v}^m}\left(1-\frac{v^*}{\bar{v}}\right)^{n-1-m},\nonumber\\
=&\sum_{m=0}^{n-1}\frac{(n-1)!}{(n-1-m)!m!}\frac{(v^*)^{m-1}}{\bar{v}^m}\left(1-\frac{v^*}{\bar{v}}\right)^{n-1-m}\left(\ln(m+\omega_0)-m\left(\ln(m+\omega_0)-\ln(m-1+\omega_0)\right)\right),\nonumber\\
=&\sum_{m=0}^{n-1}\frac{(n-1)!}{(n-1-m)!m!}\frac{(v^*)^{m-1}}{\bar{v}^m}\left(1-\frac{v^*}{\bar{v}}\right)^{n-1-m}\left(m\ln(m-1+\omega_0)-(m-1)\ln(m+\omega_0)\right)>0.
\end{align}
The last inequality follows from the fact that $m\ln(m-1+\omega_0)>(m-1)\ln(m+\omega_0)$ holds for any $m\in\{0,\dots,n-1\}$ with $\omega_0>1$. In particular, this is easily satisfied for $m=0$ and $m=1$. For $m>1$, this fact could be verified by showing that the function $\ln(x+\omega_0)/x$ decreases in $x$ over $[0,\infty)$.
\end{proof}

\begin{lemma}\label{J333}
For $v^*\in[0,\bar{v}]$, $3\tilde{J}''(v^*)+v^*\tilde{J}'''(v^*)<0$.   
\end{lemma}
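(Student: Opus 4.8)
The plan is to pass to the binomial ``mixing'' variable already used for $\tilde J$ and reduce the claim to an elementary inequality on consecutive second differences of $\ln(\cdot+\omega_0)$. Set $p\triangleq F(v^*)=v^*/\bar v$ and write $\tilde J(v^*)=h(p)$ with $h(p)\triangleq\sum_{m=0}^{n-1}\binom{n-1}{m}\ln(m+\omega_0)\,p^m(1-p)^{n-1-m}$. Since $F$ is uniform, $f(v^*)=1/\bar v$ and $v^*=\bar v\,p$, so $\tilde J''(v^*)=\bar v^{-2}h''(p)$, $\tilde J'''(v^*)=\bar v^{-3}h'''(p)$, and hence $3\tilde J''(v^*)+v^*\tilde J'''(v^*)=\bar v^{-2}\big(3h''(p)+p\,h'''(p)\big)$. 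Thus it suffices to prove $3h''(p)+p\,h'''(p)<0$ for all $p\in[0,1]$ (for $n\ge 3$; for $n\le 2$, $h$ is affine in $p$ and the statement degenerates).

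\textbf{Reducing to a coefficient check.} Re-indexing the sum in \eqref{SO_J} by $l=m-2$ gives $h''(p)=(n-1)(n-2)\sum_{l=0}^{n-3}\binom{n-3}{l}D_l\,p^l(1-p)^{n-3-l}$, where $D_l\triangleq\ln(l+2+\omega_0)-2\ln(l+1+\omega_0)+\ln(l+\omega_0)$ is the second forward difference of $m\mapsto\ln(m+\omega_0)$, and $D_l<0$ by strict concavity of $\ln$. I then use the identity $3h''(p)+p\,h'''(p)=p^{-2}\frac{d}{dp}\big(p^3h''(p)\big)$ for $p>0$ (the case $p=0$ being immediate since there $3h''(0)=3(n-1)(n-2)D_0<0$). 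Differentiating $p^3h''(p)$ term by term, shifting one of the two resulting sums by $l\mapsto l-1$, and using $\binom{n-3}{l-1}(n-2-l)=l\binom{n-3}{l}$ to recombine, yields $\frac{d}{dp}\big(p^3h''(p)\big)=(n-1)(n-2)\sum_{l=0}^{n-3}\binom{n-3}{l}p^{l+2}(1-p)^{n-3-l}\big[(l+3)D_l-l\,D_{l-1}\big]$, with the convention $l\,D_{l-1}=0$ at $l=0$. Since the Bernstein factors $\binom{n-3}{l}p^{l+2}(1-p)^{n-3-l}$ are strictly positive for every $l$ when $p\in(0,1)$, it remains to show $(l+3)D_l-l\,D_{l-1}<0$ for every $l\in\{0,\dots,n-3\}$.

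\textbf{The elementary core.} For $l=0$ the bracket equals $3D_0<0$. For $l\ge 1$, telescoping the logarithms gives $D_l=\ln\!\big(1-(l+1+\omega_0)^{-2}\big)$; put $\psi_l\triangleq-D_l=-\ln\!\big(1-(l+1+\omega_0)^{-2}\big)>0$, so the claim is exactly $(l+3)\psi_l>l\,\psi_{l-1}$. Applying the two-sided bound $x\le -\ln(1-x)\le x/(1-x)$ on $x\in[0,1)$ gives $\psi_l\ge(l+1+\omega_0)^{-2}$ and $\psi_{l-1}\le\big((l+\omega_0)^2-1\big)^{-1}=\big((l+\omega_0-1)(l+\omega_0+1)\big)^{-1}$, so it is enough to verify $(l+3)(l+\omega_0-1)(l+\omega_0+1)>l(l+1+\omega_0)^2$. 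Writing $a\triangleq l+\omega_0>l+1$, the left side minus the right side equals $3a^2-2l(a+1)-3$; and since $l<a-1$ this exceeds $3a^2-2(a^2-1)-3=a^2-1>0$. This closes the argument, so $3\tilde J''(v^*)+v^*\tilde J'''(v^*)<0$ on $[0,\bar v]$.

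\textbf{Main obstacle.} Everything after the change of variables is bookkeeping except the inequality $(l+3)D_l-l\,D_{l-1}<0$, which carries the real content: it says the magnitudes $\psi_l$ of the second differences cannot shrink too quickly relative to the weight $l/(l+3)$. It genuinely uses $\omega_0>1$ (this is what gives $a=l+\omega_0>l+1$, hence $l<a-1$), and the efficient device is the rational envelope $x\le-\ln(1-x)\le x/(1-x)$, which converts the transcendental ratio $\psi_l/\psi_{l-1}$ into the clean polynomial comparison $a^2-1>0$.
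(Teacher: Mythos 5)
Your proof is correct, and its skeleton ultimately coincides with the paper's: after reindexing, both arguments reduce the claim to showing that the per-term coefficient $(l+3)D_l-lD_{l-1}$ is negative, where your $D_l$ is exactly the paper's second difference $\mathcal{L}_{l+2}-\mathcal{L}_{l+1}$ (so your bracket is the paper's $(m+1)(\mathcal{L}_m-\mathcal{L}_{m-1})-(m-2)(\mathcal{L}_{m-1}-\mathcal{L}_{m-2})$ with $m=l+2$). Where you genuinely diverge is in two places. First, the bookkeeping: you obtain the weighted bracket via the identity $3h''(p)+p\,h'''(p)=p^{-2}\tfrac{d}{dp}\bigl(p^{3}h''(p)\bigr)$ applied to the Bernstein form of $h''$, whereas the paper manipulates the two sums for $3\tilde{J}''$ and $v^*\tilde{J}'''$ directly by index shifting; your route is tidier but equivalent. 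Second, and more substantively, the elementary core: the paper disposes of the coefficient inequality by appealing to ``the concavity of the first-order derivative of $x\ln(x-1+\omega_0)$,'' a rather terse justification, while you exploit the closed form $D_l=\ln\bigl(1-(l+1+\omega_0)^{-2}\bigr)$ together with the envelope $x\le-\ln(1-x)\le x/(1-x)$ to reduce everything to the polynomial inequality $3a^2-2l(a+1)-3>0$ with $a=l+\omega_0>l+1$. This buys a fully explicit, checkable argument that makes the role of $\omega_0>1$ transparent, at the cost of a slightly longer computation. One shared caveat: for $n\le 2$ the sums are empty and the quantity is $0$ rather than strictly negative; you at least flag this degeneracy, which the paper's proof silently shares.
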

\begin{proof}
\begin{align}
3\tilde{J}''(v^*)+v^*\tilde{J}'''(v^*)=&\sum_{m=2}^{n-1}\frac{(n-1)!}{(n-1-m)!(m-2)!}\frac{(v^*)^{m-2}}{\bar{v}^{m}}\left(1-\frac{v^*}{\bar{v}}\right)^{n-1-m}\left(3\mathcal{L}_m-3\mathcal{L}_{m-1}\right)\nonumber\\&\qquad+\sum_{m=3}^{n-1}\frac{(n-1)!}{(n-1-m)!(m-3)!}\frac{(v^*)^{m-2}}{\bar{v}^{m}}\left(1-\frac{v^*}{\bar{v}}\right)^{n-1-m}\left(\left(\mathcal{L}_m-\mathcal{L}_{m-1}\right)-\left(\mathcal{L}_{m-1}-\mathcal{L}_{m-2}\right)\right),\nonumber\\
=&\sum_{m=2}^{n-1}\frac{(n-1)!}{(n-1-m)!(m-2)!}\frac{(v^*)^{m-2}}{\bar{v}^{m}}\left(1-\frac{v^*}{\bar{v}}\right)^{n-1-m}\left(3\mathcal{L}_m-3\mathcal{L}_{m-1}\right)\nonumber\\&\qquad+\sum_{m=2}^{n-1}\frac{(n-1)!}{(n-1-m)!(m-2)!}\frac{(v^*)^{m-2}}{\bar{v}^{m}}\left(1-\frac{v^*}{\bar{v}}\right)^{n-1-m}(m-2)\left(\left(\mathcal{L}_m-\mathcal{L}_{m-1}\right)-\left(\mathcal{L}_{m-1}-\mathcal{L}_{m-2}\right)\right),\nonumber\\
=&\sum_{m=2}^{n-1}\frac{(n-1)!}{(n-1-m)!(m-2)!}\frac{(v^*)^{m-2}}{\bar{v}^{m}}\left(1-\frac{v^*}{\bar{v}}\right)^{n-1-m}\left((m+1)\left(\mathcal{L}_m-\mathcal{L}_{m-1}\right)-(m-2)\left(\mathcal{L}_{m-1}-\mathcal{L}_{m-2}\right)\right),\nonumber\\
=&\sum_{m=2}^{n-1}\frac{(n-1)!}{(n-1-m)!(m-2)!}\frac{(v^*)^{m-2}}{\bar{v}^{m}}\left(1-\frac{v^*}{\bar{v}}\right)^{n-1-m}\nonumber\\&\qquad\cdot\left(\left((m+1)\ln(m+\omega_0)-(m-2)\ln(m-3+\omega_0)\right)-3\left(m\ln(m-1+\omega_0)-(m-1)\ln(m-2+\omega_0)\right)\right)<0.
\end{align}
The last inequality follows from the fact that
\begin{align*}
(m+1)\ln(m+\omega_0)-(m-2)\ln(m-3+\omega_0)<3\left(m\ln(m-1+\omega_0)-(m-1)\ln(m-2+\omega_0)\right)    
\end{align*}
holds for any $m\in\{2,\dots,n-1\}$ with $\omega_0>1$. In particular, this is satisfied for $m=2$. For $m>2$, this fact follows from the concavity of the first-order derivative of the function $x\ln(x-1+\omega_0)$ over $[0,\infty)$.
\end{proof}
				
\end{landscape}
\subsubsection{Step 3}\label{step3-extension} Finally, we jointly consider Conditions I and II for $p_0\ge\bar{v}-1/\delta\cdot\ln\left(n-1+\omega_0\right)$ and Conditions A and B for $p_0<\bar{v}-1/\delta\cdot\ln\left(n-1+\omega_0\right)$ to characterize the seller's optimal uniform price $p_0^*$ in \eqref{opt_p0_extension}.

We start with checking the contradiction between Condition II and Condition B. Specifically, when Condition II holds, i.e. $\bar{v}/2\le1/\delta\cdot\ln\left(n-1+\omega_0\right)=\tilde{J}(\bar{v})/\delta$, we have $\partial \tilde{\Pi}\left(v^*\right)/\partial v^*|_{v^*=\bar{v}}$ in \eqref{bar_bar}:
\begin{align}
\frac{\partial \tilde{\Pi}\left(v^*\right)}{\partial v^*}|_{v^*=\bar{v}}=&|\mathcal{N}|\left(1-\frac{1}{\delta}\tilde{J}'(\bar{v})+\delta-2+\frac{2-\delta}{\delta\bar{v}}\tilde{J}(\bar{v})+\frac{2-\delta}{\delta}\tilde{J}'(\bar{v})-2\frac{1-\delta}{\delta^2\bar{v}}\tilde{J}(\bar{v})\tilde{J}'(\bar{v})\right),\nonumber\\
=&|\mathcal{N}|\left((\frac{1}{\delta}\tilde{J}'(\bar{v})-1)\left((2-\delta)\left(1-\frac{\tilde{J}(\bar{v})}{\delta\bar{v}}\right)-1\right)+\frac{1}{\delta\bar{v}}\tilde{J}(\bar{v})\tilde{J}'(\bar{v})\right)>0.
\end{align}
The last inequality holds because $\tilde{J}'(\bar{v})<\tilde{J}'(v^*)<\delta$ (see Lemmas \ref{claim:J_concave} and \ref{jprime})
and $(2-\delta)(1-\tilde{J}(\bar{v})/\delta\bar{v})<(2-\delta)/2<1$ following Condition II. Therefore, only Condition A (but not Condition B) can be satisfied when Condition II holds. Thus, we will have the following three possible cases:
\begin{itemize}
    \item \mbox{Case I in Proposition \ref{equilibrium_extension_model}:} This is the case when only Condition II holds, i.e. $\bar{v}/2\le1/\delta\cdot\ln\left(n-1+\omega_0\right)$; then Condition A is naturally satisfied. Recall that $\partial v^*/\partial p_0> 0$; therefore, we have $\tilde{\Pi}(p_0)$ in \eqref{object_flexibility} increases with $p_0$ over $[0,\bar{v}-1/\delta\cdot\ln\left(n-1+\omega_0\right))$, and continues to increase over $[\bar{v}-1/\delta\cdot\ln\left(n-1+\omega_0\right),\bar{v}/2]$ and decreases over $[\bar{v}/2,\bar{v}]$. Accordingly, we have the unique optimal uniform price $p_0^*=\bar{v}/2$.
    \item \mbox{Case II in Proposition \ref{equilibrium_extension_model}:} This is the case when both Condition I and Condition A hold. Then we have $\tilde{\Pi}(p_0)$ in \eqref{object_flexibility} increases with $p_0$ over $[0,\bar{v}-1/\delta\cdot\ln\left(n-1+\omega_0\right))$ and decreases over $[\bar{v}-1/\delta\cdot\ln\left(n-1+\omega_0\right),\bar{v}]$. Accordingly, we have the unique optimal uniform price $p_0^*=\bar{v}-1/\delta\cdot\ln\left(n-1+\omega_0\right)$, which is higher than $\bar{v}/2$ (see Condition I).
    \item \mbox{Case III in Proposition \ref{equilibrium_extension_model}:} This is the case when both Condition I and Condition B hold. Then we have $\tilde{\Pi}(p_0)$ in \eqref{object_flexibility} increases with $p_0$ over $[0,p_0(v^\ddag)]$ and decreases over $[p_0(v^\ddag),\bar{v}-1/\delta\cdot\ln\left(n-1+\omega_0\right))$, and continues to decrease over $[\bar{v}-1/\delta\cdot\ln\left(n-1+\omega_0\right),\bar{v}]$. Accordingly, we have the unique optimal uniform price $p_0(v^\ddag)=v^\ddag-\tilde{J}(v^\ddag)/\delta$, and $v^\ddag<\bar{v}$ is the unique solution to $\partial \tilde{\Pi}\left(v^\ddag\right)/\partial v^\ddag=0$ (see Condition B).
\end{itemize}
In the end, we will characterize the mean valuation threshold $\tilde{v}(\delta)$ indifferent between Case II and Case III in Proposition \ref{equilibrium_extension_model}.\footref{footnote6} Equivalently, we will focus on $\partial \tilde{\Pi}\left(v^*\right)/\partial v^*|_{v^*=\bar{v}}$ for Conditions A and B when $\bar{v}/2>1/\delta\cdot\ln\left(n-1+\omega_0\right)$ (see Condition I). Notice that $\partial \tilde{\Pi}\left(v^*\right)/\partial v^*|_{v^*=\bar{v}}$ in \eqref{bar_bar} is indeed a function of $\bar{v}$; for ease of exposition, denote $M(\bar{v})\triangleq\partial \tilde{\Pi}\left(v^*\right)/\partial v^*|_{v^*=\bar{v}}$, i.e.,
\begin{align}\label{M_vbar}
    M(\bar{v})&=|\mathcal{N}|\left(1-\frac{1}{\delta}\tilde{J}'(\bar{v})+\delta-2+\frac{2-\delta}{\delta\bar{v}}\tilde{J}(\bar{v})+\frac{2-\delta}{\delta}\tilde{J}'(\bar{v})-2\frac{1-\delta}{\delta^2\bar{v}}\tilde{J}(\bar{v})\tilde{J}'(\bar{v})\right),\nonumber\\
    &=|\mathcal{N}|\left(\delta-1+\frac{2-\delta}{\delta\bar{v}}\tilde{J}+\frac{1-\delta}{\delta\bar{v}}\tilde{L}'-2\frac{1-\delta}{\delta^2\bar{v}^2}\tilde{J}\tilde{L}'\right),
\end{align}
where both $\tilde{J}\equiv\tilde{J}(\bar{v})$ and $\tilde{L}'\equiv (n-1)\left(\ln(n-1+\omega_0)-\ln(n-2+\omega_0)\right)$ are constants independent of $\bar{v}$. Lemma \ref{M_decreasing} shows that $M(\bar{v})$ decreases with $\bar{v}$ over $[2\tilde{J}/\delta,+\infty)$. In addition, we have $M(2\tilde{J}/\delta)=|\mathcal{N}|\delta/2>0$ and $\lim_{\bar{v}\to\infty}=|\mathcal{N}|(\delta-1)<0$. Hence, there exists a unique solution $\tilde{v}'$ to $M(\bar{v})=0$ over $[2\tilde{J}/\delta,+\infty)$ such that $M(\bar{v})\ge 0$ (Condition A) for $2\tilde{J}/\delta\le \bar{v}\le \tilde{v}'$ whereas $M(\bar{v})<0$ (Condition B) for $\bar{v}>\tilde{v}'$. In other words, the mean valuation threshold $\tilde{v}(\delta)$ in Proposition \ref{equilibrium_extension_model} is given by $\tilde{v}(\delta)=\tilde{v}'/2$.

This completes the proof for Proposition \ref{equilibrium_extension_model}.

\begin{lemma}\label{M_decreasing}
$M(\bar{v})$ in \eqref{M_vbar} decreases with $\bar{v}$ over $[2\tilde{J}/\delta,+\infty)$. 
\end{lemma}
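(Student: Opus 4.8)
\textbf{Proof proposal for Lemma \ref{M_decreasing}.}

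The plan is to exploit that $M(\bar{v})$ in \eqref{M_vbar} is, as a function of $\bar{v}$, simply a quadratic polynomial in the variable $1/\bar{v}$. Indeed, since $\tilde{J}\equiv\tilde{J}(\bar{v})=\ln(n-1+\omega_0)$ and $\tilde{L}'\equiv(n-1)\bigl(\ln(n-1+\omega_0)-\ln(n-2+\omega_0)\bigr)$ are constants not depending on $\bar{v}$, we may write $M(\bar{v})=|\mathcal{N}|\bigl(C_0+C_1/\bar{v}-C_2/\bar{v}^2\bigr)$, where $C_0=\delta-1<0$, $C_1=\bigl((2-\delta)\tilde{J}+(1-\delta)\tilde{L}'\bigr)/\delta>0$, and $C_2=2(1-\delta)\tilde{J}\tilde{L}'/\delta^2>0$ (positivity of $C_1,C_2$ using $0<\delta<1$, $\omega_0>1$, $n\ge 2$). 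Differentiating, $\partial M/\partial\bar{v}=\dfrac{|\mathcal{N}|}{\bar{v}^{3}}\bigl(2C_2-C_1\bar{v}\bigr)$, so $M$ is decreasing on the interval $[\,2C_2/C_1,\ \infty)$. It therefore suffices to check that the left endpoint of our domain dominates this critical value, i.e. $2\tilde{J}/\delta\ge 2C_2/C_1$.

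Next I would reduce this inequality to an elementary one. Substituting the expressions for $C_1,C_2$ and cancelling the common positive factor $2\tilde{J}/\delta$, the condition $2\tilde{J}/\delta\ge 2C_2/C_1$ becomes $(2-\delta)\tilde{J}+(1-\delta)\tilde{L}'\ge 2(1-\delta)\tilde{L}'$, that is, $(2-\delta)\tilde{J}\ge(1-\delta)\tilde{L}'$. Since $2-\delta>1-\delta\ge 0$, it is enough to prove the stronger statement $\tilde{J}\ge\tilde{L}'$, namely
\begin{equation*}
\ln(n-1+\omega_0)\ \ge\ (n-1)\bigl(\ln(n-1+\omega_0)-\ln(n-2+\omega_0)\bigr).
\end{equation*}

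Finally I would establish this last inequality by a telescoping argument combined with the concavity of $x\mapsto\ln(x+\omega_0)$. Writing $\ln(n-1+\omega_0)=\ln\omega_0+\sum_{k=1}^{n-1}\bigl(\ln(k+\omega_0)-\ln(k-1+\omega_0)\bigr)$ and noting that each summand satisfies $\ln(k+\omega_0)-\ln(k-1+\omega_0)\ge\ln(n-1+\omega_0)-\ln(n-2+\omega_0)$ for $k\le n-1$ (decreasing differences of a concave function), the sum is at least $(n-1)\bigl(\ln(n-1+\omega_0)-\ln(n-2+\omega_0)\bigr)$; adding $\ln\omega_0>0$ finishes the chain. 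Assembling these steps shows $2\tilde{J}/\delta\ge 2C_2/C_1$, hence $\partial M/\partial\bar{v}\le 0$ on $[2\tilde{J}/\delta,+\infty)$, as claimed. I do not expect a genuine obstacle here; the only point requiring a little care is the bookkeeping that turns $2\tilde{J}/\delta\ge 2C_2/C_1$ into the clean inequality $(2-\delta)\tilde{J}\ge(1-\delta)\tilde{L}'$, after which the telescoping estimate is routine.
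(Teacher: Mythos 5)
Your proposal is correct and follows essentially the same route as the paper: both compute $\partial M/\partial\bar{v}=\frac{|\mathcal{N}|}{\bar{v}^3}(2C_2-C_1\bar{v})$, reduce the sign condition at $\bar{v}\ge 2\tilde{J}/\delta$ to the inequality $(2-\delta)\tilde{J}\ge(1-\delta)\tilde{L}'$, and settle it via $\tilde{J}>\tilde{L}'$ (the paper evaluates the linear factor at the left endpoint rather than locating the critical point $2C_2/C_1$, which is the same computation). The only cosmetic difference is that the paper justifies $\tilde{J}>\tilde{L}'$ by the monotonic decrease of $\ln(x+\omega_0)/x$, whereas you use an equivalent telescoping-plus-concavity argument.
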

\begin{proof}
Calculating the first-order derivative of $M(\bar{v})$ yields
\begin{align}
    \frac{\partial M(\bar{v})}{\partial \bar{v}}=\frac{|\mathcal{N}|}{\bar{v}^3}\left(\left((\delta-1)\tilde{L}'-(2-\delta)\tilde{J}\right)\frac{\bar{v}}{\delta}+4\frac{1-\delta}{\delta^2}\tilde{J}\tilde{L}'\right),
\end{align}
where we have
\begin{align}
\left((\delta-1)\tilde{L}'-(2-\delta)\tilde{J}\right)\frac{\bar{v}}{\delta}+4\frac{1-\delta}{\delta^2}\tilde{J}\tilde{L}'\le&\left(\left((\delta-1)\tilde{L}'-(2-\delta)\tilde{J}\right)\frac{\bar{v}}{\delta}+4\frac{1-\delta}{\delta^2}\tilde{J}\tilde{L}'\right)|_{\bar{v}=2\frac{\tilde{J}}{\delta}}
,\nonumber\\=&\frac{2\tilde{J}\tilde{L}'}{\delta^2}\left((1-\delta)-(2-\delta)\frac{\tilde{J}}{\tilde{L}'}\right)<0
\end{align}
always holds over $[2\tilde{J}/\delta,+\infty)$. The last inequality follows from $\tilde{J}>\tilde{L}'$ (i.e., verified by showing that the function $\ln(x+\omega_0)/x$ decreases in $x$). Therefore, we have $\partial M(\bar{v})/\partial\bar{v}<0$ always holds to complete our proof.
\end{proof}

\section{Proof for Corollary \ref{extension_compar}}
We develop our analysis based on the three cases in Proposition \ref{equilibrium_extension_model}.
\begin{itemize}
    \item \mbox{Case I with $\bar{v}/2\le1/\delta\cdot\ln\left(n-1+\omega_0\right)$:} We have $v^*=v^e=\bar{v}$ and $p_0^*=p_0^e=\bar{v}/2$ (see Proposition \ref{equilibrium_extension_model} and Theorem \ref{Theorem_PBE}).

    \item \mbox{Case II with $1/\delta\cdot\ln\left(n-1+\omega_0\right)<\bar{v}/2\le\tilde{v}(\delta)$:} We have $v^*<\bar{v}=v^e$ and
    \begin{align}
    p_0^*=v^*-\frac{1}{\delta}\tilde{J}(v^*)<\bar{v}-\frac{1}{\delta}\tilde{J}(\bar{v})=p_0^e,   
    \end{align}
    which follows from the fact that $\partial p_0^*(v^*)/\partial v^*> 0$ for $v^*<\bar{v}$.

    \item \mbox{Case III with $\bar{v}/2>\tilde{v}(\delta)$:} Theorem \ref{Theorem_PBE} suggests that $v^*$ satisfies \eqref{vstar_PBE} or equivalently, \eqref{case2_equ} in this case. We then substitute \eqref{case2_equ} into \eqref{FOC_Exten} and simplify it to yield
    \begin{align}
        \frac{\partial \tilde{\Pi}(v)}{\partial v}|_{v=v^*}=|\mathcal{N}|\left(1+\frac{\delta-2}{\bar{v}}v^*+\frac{2-\delta}{\delta\bar{v}}\tilde{J}(v^*)\right)=|\mathcal{N}|\frac{\tilde{J}(v^*)}{\bar{v}}>0.
    \end{align}
    Furthermore, Proposition \ref{equilibrium_extension_model} indicates that $\partial \tilde{\Pi}(v)/\partial v|_{v=v^e}=0$ (see Condition B in Appendix \ref{Proof:PBE_extension}). Recall the concavity of $\partial \tilde{\Pi}(v)/\partial v$ and that $\partial \tilde{\Pi}(v)/\partial v|_{v=\bar{v}}<0<\partial \tilde{\Pi}(v)/\partial v|_{v=v^o}$ (see Appendix \ref{Proof:PBE_extension}), we thus have $v^*<v^e$ according to Fact~\ref{Fact1}. Then $p_0^*<p_0^e$ follows from the fact that $\partial p_0(v)/\partial v> 0$ for $v<\bar{v}$.
\end{itemize}
This then completes the proof.

\section{Proof for Proposition \ref{sale_revenue_compare}}
In the extended four-stage model, the seller sets the uniform price $p_0^e$ to maximize the total expected revenue $\tilde{\Pi}$ in \eqref{object_flexibility}, i.e., $p_0^e=\arg\max_{p_0\in[0,\bar{v}]}\tilde{\Pi}(p_0,v^*(p_0))$. Also, notice that $p_0^e$ is uniquely determined (see Appendix \ref{step3-extension}). It then follows that $\tilde{\Pi}^e\ge \tilde{\Pi}^*$ since $p_0^e\ge p_0^*$. Hereafter, we use the superscript $e$ to refer to the extended four-stage model and $*$ for the original three-stage model. Then we turn to the expected profiled revenue $\tilde{\Pi}_1$ in \eqref{extension_1}, and we will have $\tilde{\Pi}_1^e\ge\tilde{\Pi}_1^*$ since $v^e\ge v^*$. For the expected non-profiled revenue $\tilde{\Pi}_0$ in \eqref{extension_0}, we have $\tilde{\Pi}_0^e=\tilde{\Pi}_0^*$ due to $v^e=v^*$ and $p_0^e=p_0^*$ if $\bar{v}/2\le1/\delta\cdot\ln\left(n-1+\omega_0\right)$ (Case I of Proposition \ref{equilibrium_extension_model}). Otherwise, we have $\partial^3 \tilde{\Pi}_0(v)/\partial v^3=\partial^3 \tilde{\Pi}(v)/\partial v^3<0$ according to Lemma \ref{3_negative}. Thus, $\partial \tilde{\Pi}_0(v)/\partial v$ is concave in $v$ over $[v^o,\bar{v})$ and $\partial \tilde{\Pi}_0(v)/\partial v=\partial \tilde{\Pi}(v)/\partial v-\partial \tilde{\Pi}_1(v)/\partial v=\partial \tilde{\Pi}(v)/\partial v-|\mathcal{N}|\delta v/\bar{v}$. Moreover, we have that at $v=v^o$:
\begin{align}
\frac{\partial \tilde{\Pi}_0(v)}{\partial v}|_{v=v^o}=|\mathcal{N}|\left(1-\frac{1}{\delta}\tilde{J}'(v^o)+\frac{1}{\delta\bar{v}}\tilde{J}(v^o)\tilde{J}'(v^o)-\delta\frac{v^o}{\bar{v}}\right)=|\mathcal{N}|\left(1-\frac{1}{\delta}\tilde{J}'(v^o)\right)\left(1-\delta\frac{v^o}{\bar{v}}\right)>0,
\end{align}
where the first equality applies the fact in \eqref{v_0}. We now further analyse based on the conditions for Case II and Case III in Proposition \ref{equilibrium_extension_model}, respectively.
\begin{itemize}
    \item Consider Condition B (Case III of Proposition \ref{equilibrium_extension_model}), we have $\partial \tilde{\Pi}_0(v)/\partial v|_{v=v^e}=0-|\mathcal{N}|\delta v^e/\bar{v}<0$ at $v=v^e$. According to Fact \ref{Fact1}, there exists a unique solution $v^m$ to $\partial \tilde{\Pi}_0(v^m)/\partial v^m=0$ such that if and only if $v\in(v^m,v^e]$ we will have $\partial \tilde{\Pi}_0(v)/\partial v<0$ and thus $\tilde{\Pi}_0(v)$ decreases with $v$. We also have for $v^*<v^e<\bar{v}$:
    \begin{align}\label{vstar_FOC_0}
    \frac{\partial \tilde{\Pi}_0(v)}{\partial v}|_{v=v^*}= |\mathcal{N}|\left(1-\frac{2}{\bar{v}}p_0^*-\frac{1}{\bar{v}}\tilde{J}(v^*)\right)<0,
    \end{align}
    where the first equality applies \eqref{vstar_PBE} and the last inequality follows from $p_0^*>\bar{v}/2$. Thus, we have $v^m<v^*<v^e$ and thus $\tilde{\Pi}_0^*>\tilde{\Pi}_0^e$.
    \item Consider Condition A (Case II of Proposition \ref{equilibrium_extension_model}), since \eqref{vstar_FOC_0} holds for $v^*<\bar{v}=v^e$, we then similarly have $\partial \tilde{\Pi}_0(v)/\partial v<0$ if and only if $v\in(v^m,v^*]$. Furthermore, due to the concavity of $\partial \tilde{\Pi}_0(v)/\partial v$ over $[v^o,\bar{v})$, we have $\partial \tilde{\Pi}_0(v)/\partial v<0$ for $v\in(v^m,\bar{v}]$. Namely, $\tilde{\Pi}_0(v)$ decreases with $v$ over $(v^m,\bar{v}]$. Therefore, we have $\tilde{\Pi}_0^*>\tilde{\Pi}_0^e$ with $v^*<\bar{v}=v^e$.
\end{itemize}

Finally, we derive the upper bound of the improvement ratio of the total expected revenue $\tilde{\Pi}$ in \eqref{object_flexibility}. We hereafter focus on the case when $\bar{v}/2>1/\delta\cdot\ln\left(n-1+\omega_0\right)$, since the improvement ratio will be $0$ otherwise (i.e., $\tilde{\Pi}^e=\tilde{\Pi}^*$). Recall that in the original three-stage model, we have $p_0^*=(\bar{v}-\delta v^*)/2(1-\delta)\equiv v^*-\tilde{J}(v^*)/\delta$ (see Appendix \ref{proof_TheoremPBE}.(3)). Thus, we substitute $p_0^*=(\bar{v}-\delta v^*)/2(1-\delta)$ into $\tilde{\Pi}$ in \eqref{object_flexibility} to yield:
\begin{align}\label{bound}
\tilde{\Pi}^*(v^*)=|\mathcal{N}|\frac{\bar{v}^2-2\delta \bar{v}v^*+(2-\delta)\delta (v^*)^2}{4(1-\delta)\bar{v}}=\frac{|\mathcal{N}|}{4(1-\delta)\bar{v}}\cdot\left((2-\delta)\delta\left(v^*-\frac{\bar{v}}{2-\delta}\right)^2+\frac{2(1-\delta)}{2-\delta}\bar{v}^2\right).
\end{align}
Since $\bar{v}/(2-\delta)<v^*<\bar{v}$ (see Appendix \ref{proof_TheoremPBE}.(3)), we then have
\begin{align}
\tilde{\Pi}^*(v^*)\ge\min_{v^*\in\left(\frac{\bar{v}}{2-\delta},\bar{v}\right)}\tilde{\Pi}^*(v^*)>\tilde{\Pi}^*\left(\frac{\bar{v}}{2-\delta}\right)=\frac{|\mathcal{N}|\bar{v}}{2(2-\delta)}.
\end{align}
For the extended four-stage model, we have
\begin{align}
\tilde{\Pi}^e\left(p_0^e,v^e\right)=&|\mathcal{N}|\left(\delta\frac{(v^e)^2}{2\bar{v}}+\frac{1-\delta}{\bar{v}}p_0^e\left(\frac{\bar{v}-\delta v^e}{1-\delta}-p_0^e\right)\right)
<\max_{p_0^e}\tilde{\Pi}^e\left(p_0^e;v^e\right),\nonumber\\
=&\tilde{\Pi}^e\left(\frac{\bar{v}-\delta v^e}{2(1-\delta)},v^e\right)
=\tilde{\Pi}^*(v^e)
\le\max_{v^e\in\left(\frac{\bar{v}}{2-\delta},\bar{v}\right]}\tilde{\Pi}^*(v^e)
=\tilde{\Pi}^*(\bar{v})
=\frac{|\mathcal{N}|\bar{v}(1+\delta)}{4},
\end{align}
where the first equality follows from \eqref{object_flexibility}, and $\tilde{\Pi}^*(\cdot)$ is given by \eqref{bound} where we have substituted $p_0^e=(\bar{v}-\delta v^e)/2(1-\delta)$ into \eqref{object_flexibility}. Also one can show that $v^e>\bar{v}/(2-\delta)$ by verifying that
\begin{align}
    \frac{\partial \tilde{\Pi}(v)}{\partial v}|_{v=\frac{\bar{v}}{2-\delta}}=|\mathcal{N}|\frac{1}{\delta\bar{v}}\tilde{J}(\frac{\bar{v}}{2-\delta})\left((2-\delta)\left(1-\frac{1}{\delta}\tilde{J}'(\frac{\bar{v}}{2-\delta})\right)+\tilde{J}'(\frac{\bar{v}}{2-\delta})\right)>0.
\end{align}

Therefore, the improvement ratio is upper bounded by $\left(\tilde{\Pi}^e/\tilde{\Pi}^*-1\right)<(\delta-\delta^2)/2$. This then completes the proof.

\section{Proof for Proposition \ref{Theorem:PBE_general}}\label{Proposition_9_proof}
To start with, we first analyze the condition in Assumption \ref{ass:general_cdf}.(ii), i.e., $\Omega(p_0)\triangleq p_0(1-F(p_0))$ is concave in $p_0$ over $[0,\bar{v}]$. We thus have the following derivatives regarding $p_0$:
\begin{align}\label{concavity_revenue}
    \frac{\partial\Omega(p_0)}{\partial p_0}=1-F(p_0)-p_0f(p_0) \text{ and }\frac{\partial^2\Omega(p_0)}{\partial (p_0)^2}=-2f(p_0)-p_0f'(p_0)\le0,
\end{align}
where $f'(p_0)\equiv\partial f(p_0)/\partial p_0$. Now we turn back to the dynamic Bayesian game analysis under Assumption \ref{ass:general_cdf}. Notice that we only need to analyze the users' social activity decisions in Stage I and the seller's uniform price in Stage II. For the seller's personalized pricing scheme in Stage II and the users' purchase decisions in Stage III, they are the same as in Section \ref{subsection:game}. More specifically, we will alternate backward analysis with forward analysis in Stages I and II as in Section \ref{Sec:alternate}. Moreover, the users' equilibrium valuation threshold $v^*$ (see Theorem \ref{Theorem:p0-vstar}), which is established through the forward analysis of users' social activity decisions in Section \ref{forward}, remains valid for any continuous distribution of user valuations. Hence, we next focus on the backward analysis of the seller's uniform pricing in Appendix \ref{general_backward}. Finally, we characterize PBE and establish its uniqueness in Appendix \ref{general_PBE_appendix}.

\subsection{Backward Analysis of the Seller's Uniform Pricing in Stage II}\label{general_backward}
After user profiling, the seller's posterior belief for a non-profiled user's valuation $f(v_i|i\in\mathcal{N}_{0})$ in Stage II is given by
\begin{equation}
f(v_i|i\in\mathcal{N}_{0})=\left\{
	\begin{aligned}
		&\frac{(1-\delta)f(v_i)}{1-\delta F(v^*)}, &\quad\textrm{if $0\le v_i\le v^*$,}\\
		&\frac{f(v_i)}{1-\delta F(v^*)},  &\quad\textrm{if $v^*< v_i\le \bar{v}$,}
	\end{aligned}
	\right.
\end{equation}
which could be obtained through Bayes' theorem. Then we calculate the seller's expected non-profiled revenue $\tilde{\Pi}_0$ below: 
\begin{align}
\tilde{\Pi}_0\triangleq|\mathcal{N}_0|\int_{p_0}^{\bar{v}}p_0f(v_i|i\in\mathcal{N}_0)dv_i=|\mathcal{N}_0|p_0\left(1-\frac{1-\delta}{1-\delta F(v^*)}F(p_0)\right).
\end{align}
Notice that we will restrict our attention to the case when $p_0<v^*$ as this always holds in the PBE (see \eqref{vstar_p0_summation} in Theorem \ref{Theorem:p0-vstar}). Hence, the seller's optimal uniform pricing scheme (if existing at all) is given by
\begin{align}\label{opt_p0}
    p_0^*(v^*)=\arg\max_{p_0<v^*}p_0\left(1-\frac{1-\delta}{1-\delta F(v^*)}F(p_0)\right);
\end{align}
for ease of exposition, we hereafter denote $\Xi(p_0)\triangleq p_0\left(1-aF(p_0)\right)$, where $a\triangleq(1-\delta)/(1-\delta F(v^*))\in(0,1]$. Then we have the following derivatives regarding $p_0$:
\begin{align}
    \frac{\partial\Xi(p_0)}{\partial p_0}=1-aF(p_0)-ap_0f(p_0) \text{ and }\frac{\partial^2\Xi(p_0)}{\partial (p_0)^2}=-2af(p_0)-ap_0f'(p_0)\le0,
\end{align}
where the last inequality follows from \eqref{concavity_revenue}. So $\Xi(p_0)$ is concave on $[0,\bar{v}]$ and $\partial \Xi(p_0)/\partial p_0$ decreases with $p_0$ on $[0,\bar{v}]$. To ensure the existence of the optimal solution $p_0^*$ in \eqref{opt_p0}, where $p_0^*<v^*$ should be strictly satisfied in the PBE, the maximum of $\Xi(p_0)$ must be achieved within the semi-open interval of $[0,v^*)$. Since $\partial \Xi(p_0)/\partial p_0|_{p_0=0}=1>0$ and $\partial \Xi(p_0)/\partial p_0$ decreases with $p_0$, we must have $\partial \Xi(p_0)/\partial p_0|_{p_0=v^*}<0$, i.e., 
\begin{align}\label{condition_p0}
1-F(v^*)-(1-\delta)v^*f(v^*)<0,
\end{align}
to ensure the existence of $p_0^*(v^*)$ in \eqref{opt_p0} and the existence of PBE as well. More specifically, $p_0^*(v^*)$ is the unique solution to $\partial \Xi(p_0^*)/\partial p_0^* =0$ with $p_0^*\in(0,v^*)$. Alternatively, we can rewrite this necessary and sufficient condition in \eqref{condition_p0} as $v^*>\check{v}$, where $\check{v}\in(0,\bar{v})$ is the unique valuation such that $1-F(\check{v})-(1-\delta)\check{v}f(\check{v})=0$ (see Lemma \ref{v_check} at the end of this subsection). 

The proposition below concludes the seller's optimal uniform pricing scheme in Stage II.

\begin{proposition}\label{appendix_price}
Given an arbitrary users' valuation threshold $v^*$ in Stage I, if and only if $v^*>\check{v}$, the seller's optimal uniform pricing scheme $p_0^*(v^*)$ in Stage II exists and is the unique solution to
\begin{align}\label{app_0}
    1-aF(p_0^*)-ap_0^*f(p_0^*) =0
\end{align}
with $p_0^*\in(0,v^*)$. In addition, $\partial p_0^*/\partial v^*\le 0$ (see Lemma \ref{p0_v_decrease} in the final of this subsection).
\end{proposition}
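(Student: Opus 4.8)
The plan is to reduce the statement to an elementary one–variable analysis of the objective $\Xi(p_0)=p_0\big(1-aF(p_0)\big)$ already introduced just before the proposition, where $a\triangleq(1-\delta)/(1-\delta F(v^*))\in(0,1]$ is a constant determined by the fixed threshold $v^*$. First I would collect what is in hand: Assumption~\ref{ass:general_cdf}.(ii) yields $\partial^2\Xi/\partial p_0^2=-a\big(2f(p_0)+p_0f'(p_0)\big)\le0$ on $[0,\bar v]$ (as derived around \eqref{concavity_revenue}), so $\Xi'(p_0)=1-aF(p_0)-ap_0f(p_0)$ is non-increasing and $\Xi'(0)=1>0$. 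The decisive structural fact is that Theorem~\ref{Theorem:p0-vstar} (through \eqref{vstar_p0_summation}) forces $p_0<v^*$ \emph{strictly} at any PBE; hence the seller's constrained problem $\max_{p_0\in[0,v^*)}\Xi(p_0)$ must attain its maximum in the open interior, which—since $\Xi'$ is non-increasing and $\Xi'(0)>0$—happens if and only if $\Xi'(v^*)<0$.

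Next I would convert $\Xi'(v^*)<0$ into the threshold condition. Substituting $a=(1-\delta)/(1-\delta F(v^*))$ and clearing the positive denominator, $\Xi'(v^*)<0$ is equivalent to $1-\delta F(v^*)-(1-\delta)F(v^*)-(1-\delta)v^*f(v^*)<0$, i.e.\ to $h(v^*)<0$, where $h(v)\triangleq 1-F(v)-(1-\delta)vf(v)$. I then establish Lemma~\ref{v_check}: $h(0)=1>0$, $h(\bar v)=-(1-\delta)\bar v f(\bar v)<0$ (since $F(\bar v)=1$), and $h'(v)=-\delta f(v)-(1-\delta)\big(2f(v)+vf'(v)\big)\le-\delta f(v)\le0$, using the concavity of $\Omega(p_0)=p_0(1-F(p_0))$ in the form $2f+vf'\ge0$. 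So $h$ is strictly decreasing wherever $f>0$ and has a unique zero $\check v\in(0,\bar v)$, giving $\Xi'(v^*)<0\iff h(v^*)<0\iff v^*>\check v$; this proves the ``if and only if''.

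For the remaining content, suppose $v^*>\check v$: $\Xi'$ is non-increasing with $\Xi'(0)>0>\Xi'(v^*)$, so the intermediate value theorem produces a unique $p_0^*\in(0,v^*)$ with $\Xi'(p_0^*)=0$; by concavity this is the unique maximizer of $\Xi$ on $[0,v^*)$, and its first-order condition is exactly \eqref{app_0}. Conversely, if $v^*\le\check v$ then $\Xi'\ge0$ on $[0,v^*]$, so $\Xi$ is non-decreasing there and $\sup_{[0,v^*)}\Xi$ is not attained, leaving no admissible uniform price and hence no PBE. Finally, for $\partial p_0^*/\partial v^*\le0$ (Lemma~\ref{p0_v_decrease}) I would apply the implicit function theorem to $\Psi(p_0,v^*)\triangleq 1-a(v^*)\big(F(p_0)+p_0f(p_0)\big)=0$: one computes $a'(v^*)=(1-\delta)\delta f(v^*)/(1-\delta F(v^*))^2\ge0$, hence $\partial\Psi/\partial v^*=-a'(v^*)\big(F(p_0^*)+p_0^*f(p_0^*)\big)\le0$, while $\partial\Psi/\partial p_0=-a(v^*)\big(2f(p_0^*)+p_0^*f'(p_0^*)\big)<0$ at the maximizer, so $\partial p_0^*/\partial v^*=-(\partial\Psi/\partial v^*)/(\partial\Psi/\partial p_0)\le0$.

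The main obstacle I anticipate is the fine print behind every ``unique'' and ``strict'' in the above: I need $\Xi'$ to cross its zero strictly (so the maximizer is unique and the IFT denominator is nonzero), which means ruling out flat stretches of $\Xi'$, i.e.\ controlling points where $f$—equivalently $2f+p_0f'$—vanishes, and similarly I need the companion strict monotonicity of $h$ so that $\check v$ is unambiguously defined. Under Assumption~\ref{ass:general_cdf} this is mild ($f>0$ on $(0,\bar v)$ for the distributions of interest), but it is the place where the argument must be phrased with care rather than waved through.
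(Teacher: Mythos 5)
Your proposal is correct and follows essentially the same route as the paper: the reduction to the concave one-variable objective $\Xi(p_0)=p_0(1-aF(p_0))$, the equivalence $\Xi'(v^*)<0\iff 1-F(v^*)-(1-\delta)v^*f(v^*)<0\iff v^*>\check v$ via the monotone auxiliary function $h$ (the paper's Lemma \ref{v_check}), and the implicit-function-theorem argument for $\partial p_0^*/\partial v^*\le 0$ (the paper clears the denominator to work with $H(p_0^*,v^*)=1-\delta F(v^*)-(1-\delta)(F(p_0^*)+p_0^*f(p_0^*))$ rather than differentiating $a(v^*)$, but the computation is equivalent). Your closing caveat about strictness of the crossing is a fair point that the paper also passes over lightly, but it does not change the argument.
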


\begin{lemma}\label{v_check}
The condition in \eqref{condition_p0} is equivalent to $v^*>\check{v}$, where $\check{v}\in(0,\bar{v})$ is the unique solution to $1-F(\check{v})-(1-\delta)\check{v}f(\check{v})=0$.
\end{lemma}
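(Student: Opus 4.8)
The plan is to recognize the inequality in \eqref{condition_p0} as a sign condition on a single auxiliary function of one variable. Define $h(v)\triangleq 1-F(v)-(1-\delta)vf(v)$ on $[0,\bar v]$, so that condition \eqref{condition_p0} reads $h(v^*)<0$ and $\check v$ is by definition a zero of $h$. The statement to prove is then the purely one-dimensional claim that $h$ has a unique zero $\check v\in(0,\bar v)$ and that $h(v)<0$ holds precisely for $v>\check v$. Since $h$ is continuous on $[0,\bar v]$, both claims follow at once once I establish (a) $h$ is strictly decreasing, (b) $h(0)>0$, and (c) $h(\bar v)<0$: the existence and uniqueness of $\check v$ is then the intermediate value theorem together with strict monotonicity, and the desired equivalence is immediate from the sign pattern of $h$.

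For step (a) I would differentiate, obtaining $h'(v)=-(2-\delta)f(v)-(1-\delta)vf'(v)$. Assumption \ref{ass:general_cdf}.(ii) --- the concavity of $p_0(1-F(p_0))$ --- gives, via \eqref{concavity_revenue}, that $2f(v)+vf'(v)\ge 0$ on $[0,\bar v]$, hence $-(1-\delta)vf'(v)\le 2(1-\delta)f(v)$ since $1-\delta\ge 0$. Substituting, $h'(v)\le \bigl(-(2-\delta)+2(1-\delta)\bigr)f(v)=-\delta f(v)$. It then remains to note $f(v)>0$ on $[0,\bar v)$: by Assumption \ref{ass:general_cdf}.(i) the density $f$ is non-increasing, so if it vanished at some interior point it would vanish on the whole tail, contradicting $F(\bar v)=1$; with $\delta\in(0,1)$ this yields $h'(v)<0$ on $[0,\bar v)$. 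Step (b) is trivial since $h(0)=1-F(0)=1>0$. For step (c), $h(\bar v)=1-F(\bar v)-(1-\delta)\bar v f(\bar v)=-(1-\delta)\bar v f(\bar v)\le 0$, strictly negative whenever $f(\bar v)>0$.

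The one point requiring care --- the main (though minor) obstacle --- is the borderline case $f(\bar v)=0$, in which $h(\bar v)=0$ and the unique zero of $h$ sits exactly at the right endpoint, so $\check v=\bar v\notin(0,\bar v)$. In that situation, however, $h(v^*)<0$ is impossible for any $v^*\in[0,\bar v]$ by strict monotonicity, and likewise $v^*>\check v=\bar v$ is impossible, so the claimed equivalence still holds (both sides never being true); moreover this is exactly the regime in which the PBE fails to exist, which is consistent with the conditional phrasing of Proposition \ref{appendix_price}. In the nondegenerate case $f(\bar v)>0$ covered by the statement, (a)--(c) give a unique $\check v\in(0,\bar v)$ with $h>0$ on $[0,\check v)$ and $h<0$ on $(\check v,\bar v]$; evaluating at $v=v^*$ turns this into the equivalence ``\eqref{condition_p0} $\Leftrightarrow v^*>\check v$'', which completes the proof.
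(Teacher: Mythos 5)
Your proof is correct and follows essentially the same route as the paper's: define $h(v)=1-F(v)-(1-\delta)vf(v)$, show it is decreasing by isolating the term $2f(v)+vf'(v)\ge 0$ supplied by Assumption \ref{ass:general_cdf}.(ii) via \eqref{concavity_revenue}, and check the boundary signs $h(0)=1>0$ and $h(\bar v)=-(1-\delta)\bar v f(\bar v)<0$. You are in fact slightly more careful than the paper, which only asserts $h'\le 0$ and ignores the degenerate case $f(\bar v)=0$; your strict-monotonicity bound $h'\le-\delta f<0$ and your handling of that boundary case tighten the argument but do not constitute a different approach.
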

\begin{proof}
Denote $h(v^*)\triangleq1-F(v^*)-(1-\delta)v^*f(v^*)$. We have its first-order derivative below:
\begin{align}
    \frac{\partial h(v^*)}{\partial v^*}&=-f(v^*)-(1-\delta)\left(f(v^*)+v^*f'(v^*)\right),\nonumber\\
    &=-(1-\delta)\left(2f(v^*)+v^*f'(v^*)\right)-\delta f(v^*)\le 0, 
\end{align}
where the last inequality follows from \eqref{concavity_revenue}. This indicates that $h(v^*)$ decreases with $v^*$ over $[0,\bar{v}]$. Furthermore, we have $h(0)=1>0$ and $h(\bar{v})=-(1-\delta)\bar{v}f(\bar{v})<0$. This then completes the proof.
\end{proof}

\begin{lemma}\label{p0_v_decrease}
For $p_0^*(v^*)$ of \eqref{opt_p0}, $\partial p_0^*/\partial v^*\le 0$.
\end{lemma}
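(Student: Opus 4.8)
The plan is to differentiate the first-order characterization of $p_0^*(v^*)$ furnished by Proposition \ref{appendix_price} and read off the sign via the implicit function theorem. Throughout we work in the regime $v^*>\check v$ in which $p_0^*(v^*)$ is well defined, and we abbreviate $a(v^*)\triangleq(1-\delta)/(1-\delta F(v^*))$, so that \eqref{app_0} reads $\Psi\big(p_0^*,v^*\big)=0$ with $\Psi(p_0,v^*)\triangleq 1-a(v^*)\big(F(p_0)+p_0 f(p_0)\big)$. Since $p_0^*$ is the unique \emph{interior} root of this equation (Proposition \ref{appendix_price}), and since $p_0\mapsto p_0\big(1-a(v^*)F(p_0)\big)$ is concave on $[0,\bar v]$ with derivative positive at $0$ and negative at $v^*$, the map $\Psi(\cdot,v^*)$ changes sign from $+$ to $-$ at its unique zero $p_0^*$; hence that zero is simple and the implicit function theorem gives $\partial p_0^*/\partial v^*=-(\partial\Psi/\partial v^*)\big/(\partial\Psi/\partial p_0)$.

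Next I would compute the two partials and track their signs. For the numerator, $a'(v^*)=(1-\delta)\delta f(v^*)\big/\big(1-\delta F(v^*)\big)^2\ge 0$, so $a(\cdot)$ is non-decreasing and $\partial\Psi/\partial v^*=-a'(v^*)\big(F(p_0)+p_0 f(p_0)\big)\le 0$, because $F$, $f$, $p_0$, and $a'$ are all non-negative. For the denominator, $\partial\Psi/\partial p_0=-a(v^*)\big(2f(p_0)+p_0 f'(p_0)\big)=a(v^*)\,\Omega''(p_0)$, where $\Omega(p_0)=p_0(1-F(p_0))$; by Assumption \ref{ass:general_cdf}(ii) and \eqref{concavity_revenue} this is $\le 0$, and it is strictly negative at $p_0^*$ by the simplicity of the zero established above. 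Combining, $\partial p_0^*/\partial v^*$ is a non-positive quantity divided by a negative quantity and then negated, so $\partial p_0^*/\partial v^*\le 0$, which is the claim.

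The one delicate point is the strict negativity of the denominator $a(v^*)\Omega''(p_0^*)$ (Assumption \ref{ass:general_cdf}(ii) only gives $\Omega''\le 0$): this is exactly where I would invoke the uniqueness in Proposition \ref{appendix_price}, since if $\Omega''$ vanished on a neighbourhood of $p_0^*$ then $\Psi(\cdot,v^*)$ would be locally constant at its root, contradicting that \eqref{app_0} has a single solution in $(0,v^*)$. As a robust alternative that avoids the differentiability bookkeeping altogether, I would observe that $g_a(p)\triangleq p\big(1-aF(p)\big)$ has cross-partial $\partial^2 g_a/\partial p\,\partial a=-\big(F(p)+p f(p)\big)\le 0$, so $g_a$ is submodular in $(p,a)$; by Topkis' monotonicity theorem $\arg\max_p g_a(p)$ is non-increasing in $a$, and composing with the non-decreasing map $v^*\mapsto a(v^*)$ gives that $p_0^*(v^*)$ is non-increasing in $v^*$ directly. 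Either route completes the proof; the implicit-function argument is shorter and matches the style of the earlier appendices.
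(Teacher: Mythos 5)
Your proposal is correct and takes essentially the same route as the paper: the paper's proof of this lemma also rewrites the first-order condition \eqref{app_0} as an implicit equation (clearing the denominator to get $H(p_0^*,v^*)\triangleq 1-\delta F(v^*)-(1-\delta)\left(F(p_0^*)+p_0^*f(p_0^*)\right)=0$, which is your $\Psi$ multiplied by the positive factor $1-\delta F(v^*)$), computes the same two sign facts $\partial H/\partial v^*=-\delta f(v^*)<0$ and $\partial H/\partial p_0^*=-(1-\delta)\left(2f(p_0^*)+p_0^*f'(p_0^*)\right)\le 0$, and invokes the implicit function theorem. Your additional care about the denominator possibly vanishing, and the Topkis/submodularity alternative, go beyond what the paper records but do not change the argument.
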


\begin{proof}
Recall that $p_0^*(v^*)$ is the unique solution to \eqref{app_0} with $p_0^*\in(0,v^*)$. Rewrite \eqref{app_0} as 
\begin{align}
H(p_0^*,v^*)\triangleq1-\delta F(v^*)-(1-\delta)\left(F(p_0^*)+p_0^* f(p_0^*)\right)=0,
\end{align}
and we then have the following first-order partial derivative:
\begin{align}
    \frac{\partial H(p_0^*,v^*)}{\partial v^*}=-\delta f(v^*)<0 \text{ and } \frac{\partial H(p_0^*,v^*)}{\partial p_0^*}=-(1-\delta)(2f(p_0^*)+p_0^*f'(p_0^*))\le 0.
\end{align}
Applying the implicit function theorem then completes the proof.
\end{proof}

\subsection{Analysis of PBE}\label{general_PBE_appendix}
To derive the PBE for the whole dynamic Bayesian game, we now combine the users' valuation threshold $v^*$ (see Theorem~\ref{Theorem:p0-vstar}) and the seller's optimal uniform pricing scheme $p_0^*$ (see Proposition \ref{appendix_price}) together. Since the feasible range of $v^*$ is $(\check{v},\bar{v}]$ in the PBE (see Lemma \ref{v_check}), we develop our analysis based on the following two cases concerning $v^*$.
\begin{itemize}
    \item [(1)] If $v^*=\bar{v}$, the seller's uniform pricing scheme is the unique solution to $1-F(p_0^*)-p_0^*f(p_0^*) =0$ (see Proposition~\ref{appendix_price}). Namely, $p_0^*=\hat{p}_0\triangleq \arg\max _{p_0}p_0(1-F(p_0))$, which has been defined in Section \ref{appendix:general}. On the other hand, $v^*=\bar{v}$ if and only if $p_0\ge\bar{v}-1/\delta\cdot\ln\left(n-1+\omega_0\right)$ (see the final part of Appendix \ref{proof_v^*} together with Lemma \ref{last_lemma}). Therefore, if and only if $\bar{v}-\hat{p}_0\le1/\delta\cdot \ln(n-1+\omega_0)$, a unique PBE valuation threshold $v^*=\bar{v}$ exists. This is the Case I in Proposition \ref{Theorem:PBE_general}.
    
    \item [(2)] If $\check{v}<v^*<\bar{v}$, the seller's uniform pricing scheme $p_0^*(v^*)$ is the unique solution to \eqref{app_0}. Combining it with \eqref{vstar_p0_summation} yields
    \begin{equation}\label{general_a}
		p_0^*(v^*)=v^*-\frac{1}{\delta}\tilde{J}(v^*).
	\end{equation}
    We now denote $K(v^*)\triangleq p_0^*(v^*)-v^*+\tilde{J}(v^*)/\delta$, and its first-order derivative over $(\check{v}, \bar{v})$ is given by
    \begin{align}
        \frac{\partial K(v^*)}{\partial v^*}=\frac{\partial p_0^*(v^*)}{\partial v^*}-\left(1-\frac{1}{\delta}\tilde{J}'(v^*)\right)<0,
    \end{align}
    where the inequality holds due to Lemma \ref{p0_v_decrease} and Lemma \ref{jprime} (together with Lemma \ref{last_lemma}). Hence, $K(v^*)$ is decreasing over $(\check{v}, \bar{v})$. In particular, when $v^*=\check{v}$, we have $\partial \Xi(p_0)/\partial p_0|_{p_0=v^*}=0$ and then $\partial \Xi(p_0)/\partial p_0>0$ always holds across $[0,v^*]$, which indicates that $\Xi(p_0)$ increases with $p_0$ over $[0,v^*]$; so we will have $p_0^*(v^*)\to v^*$ as $v^*$ approaches $\check{v}$. Hence, $\lim_{v^*\to \check{v}}K(v^*)=\tilde{J}(\check{v})/\delta>0$ and we also have 
    \begin{align}
        K(\bar{v})=\hat{p}_0-\bar{v}+\frac{1}{\delta}\ln\left(n-1+\omega_0\right).
    \end{align}
    If and only if $K(\bar{v})<0$, there exists a unique solution to $K(v^*)=0$ over $(\check{v},\bar{v})$. Equivalently, if and only  if $\bar{v}-\hat{p}_0>1/\delta\cdot \ln(n-1+\omega_0)$, a unique PBE valuation threshold $v^*$ exists on $(\check{v},\bar{v})$. This is Case II in Proposition \ref{Theorem:PBE_general}, where $v^*<\bar{v}$ is the unique solution to \eqref{general_a}. 
\end{itemize}
This completes the proof for Proposition \ref{Theorem:PBE_general}.

\begin{landscape}
\begin{lemma}\label{last_lemma}
Under Assumption \ref{ass:general_cdf}, Proposition \ref{unique_user} and Lemma \ref{jprime} continue to hold.
\end{lemma}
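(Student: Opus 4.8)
The plan is to notice that the assumption of a uniform valuation distribution entered the proofs of Proposition~\ref{unique_user} and Lemma~\ref{jprime} in only one place: Lemma~\ref{claim:J_concave}, the strict concavity and monotonicity of the expected social-benefit function $\tilde{J}$. Everything else in those two proofs — the reduction of the Bayesian Nash equilibrium to the scalar fixed-point equation $\Phi(v^\dagger)=p_0+\tilde{J}(v^\dagger)/\delta-v^\dagger=0$ through Belief~\ref{threshold_structure} and Theorem~\ref{Theorem:p0-vstar}, the evaluation of $\Phi$ at the endpoints, the linearity of $\Phi$ beyond $\bar v$, and the appeal to Fact~\ref{Fact1} — is already distribution-free. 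So I would first re-establish Lemma~\ref{claim:J_concave} under Assumption~\ref{ass:general_cdf}(i), and then simply invoke the existing arguments of Appendix~\ref{proof_v^*} verbatim.

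For the first step, write $\tilde{J}(v^*)=\psi\bigl(F(v^*)\bigr)$ with $\psi(q)\triangleq\sum_{m=0}^{n-1}\binom{n-1}{m}\ln(m+\omega_0)\,q^{m}(1-q)^{n-1-m}$ on $q\in[0,1]$. The telescoping identities already carried out in \eqref{FO_J} and \eqref{SO_J} are purely algebraic rearrangements of this binomial sum, and they show $\psi'(q)>0$ and $\psi''(q)<0$ on $(0,1)$. Differentiating by the chain rule gives $\tilde{J}'(v^*)=\psi'\bigl(F(v^*)\bigr)f(v^*)>0$ and $\tilde{J}''(v^*)=\psi''\bigl(F(v^*)\bigr)f(v^*)^{2}+\psi'\bigl(F(v^*)\bigr)f'(v^*)$. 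In the uniform case the last term was absent because $f'\equiv0$; in general it is exactly here that Assumption~\ref{ass:general_cdf}(i) is used, since concavity of $F$ means $f'\le0$, so $\psi'(F)\,f'\le0$, while $\psi''(F)f^2<0$ wherever $f>0$. Hence $\tilde{J}$ is increasing and strictly concave on $[0,\bar v]$, which is precisely Lemma~\ref{claim:J_concave}.

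Given Lemma~\ref{claim:J_concave}, the rest is a repetition of the uniform-case proof: $\Phi$ of \eqref{Phi} obeys $\partial^2\Phi/\partial(v^\dagger)^2=\tilde{J}''(v^\dagger)/\delta<0$ on $[0,\bar v]$ and is linear and strictly decreasing for $v^\dagger>\bar v$; moreover $\Phi(0)=p_0+(\ln\omega_0)/\delta>0$ and $\Phi(\bar v)=p_0+\tfrac1\delta\ln(n-1+\omega_0)-\bar v$, neither of which relied on uniformity. The three-case split on the sign of $\Phi(\bar v)$, combined with Fact~\ref{Fact1}(A)--(C) and the fact that $v^*\ge p_0$ from Theorem~\ref{Theorem:p0-vstar}, then yields a unique equilibrium threshold $v^*$ and hence a unique BNE — this is Proposition~\ref{unique_user}. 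In the first of those cases Fact~\ref{Fact1}(A) also delivers $\partial\Phi(v^*)/\partial v^*=\tilde{J}'(v^*)/\delta-1<0$, i.e., $\tilde{J}'(v^*)<\delta$, which is Lemma~\ref{jprime}.

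The only real work is the concavity transfer in the second paragraph: isolating the new second-derivative term $\psi'(F)\,f'$ that did not appear in the uniform computation and checking that it is pinned down by the concavity of $F$ assumed in Assumption~\ref{ass:general_cdf}(i). Once that is in hand, nothing else in the two original proofs needs to change.
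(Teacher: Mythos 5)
Your proposal is correct and follows essentially the same route as the paper: the paper likewise recomputes $\partial^2\tilde{J}/\partial(v^\dagger)^2$ for a general density, identifies the extra term (your $\psi'(F)f'$) that vanished under uniformity, signs it using the concavity of $F$ from Assumption~\ref{ass:general_cdf}(i), and then observes that the remainder of Appendix~\ref{proof_v^*} and the proof of Lemma~\ref{jprime} depend only on the concavity of $\tilde{J}$. Your $\tilde{J}=\psi\circ F$ chain-rule packaging is just a cleaner notation for the identical computation.
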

\begin{proof}
Recall that in Appendix \ref{proof_v^*} for Proposition \ref{unique_user}, the condition of a uniform valuation distribution is applied in Lemma \ref{claim:J_concave}, more specifically, in Line 1 of \eqref{SO_J} where we calculated $\partial^2\tilde{J}(v^\dagger)/\partial (v^\dagger)^2$. We now recalculate the second-order derivative of $\tilde{J}(v^\dagger)$ based on \eqref{FO_J} below:
\begin{align}
\frac{\partial^2 \tilde{J}(v^\dagger)}{\partial (v^\dagger)^2}	
&=\sum_{m=1}^{n-1}\frac{(n-1)!}{(n-1-m)!(m-1)!}\left(\ln(m+\omega_0)-\ln(m-1+\omega_0)\right)(m-1)F(v^\dagger)^{m-2}\left(1-F(v^\dagger)\right)^{n-1-m}\left(f(v^\dagger)\right)^2\tag{Line 1-1}\\&\quad-\sum_{m=1}^{n-1}\frac{(n-1)!}{(n-1-m)!(m-1)!}\left(\ln(m+\omega_0)-\ln(m-1+\omega_0)\right)(n-1-m)F(v^\dagger)^{m-1}\left(1-F(v^\dagger)\right)^{n-2-m}\left(f(v^\dagger)\right)^2\tag{Line 1-2}\\
&\quad+\sum_{m=1}^{n-1}\frac{(n-1)!}{(n-1-m)!(m-1)!}\left(\ln(m+\omega_0)-\ln(m-1+\omega_0)\right)F(v^\dagger)^{m-1}\left(1-F(v^\dagger)\right)^{n-1-m}f'(v^\dagger),\tag{Line 1-3}\\
&=\eqref{SO_J}+\sum_{m=1}^{n-1}\frac{(n-1)!}{(n-1-m)!(m-1)!}\left(\ln(m+\omega_0)-\ln(m-1+\omega_0)\right)F(v^\dagger)^{m-1}\left(1-F(v^\dagger)\right)^{n-1-m}\frac{\partial f(v^\dagger)}{\partial v^\dagger},
\end{align}
Since \eqref{SO_J} $<0$ and $\partial f(v^\dagger)/\partial v^\dagger<0$ due to the concavity of $F(v^\dagger)$ in Assumption \ref{ass:general_cdf}.(i). Therefore, $\partial ^2\tilde{J}(v^\dagger)/\partial (v^\dagger)^2<0$, i.e., the concavity of $\tilde{J}(v^\dagger)$, continues to hold under Assumption \ref{ass:general_cdf}. Note that the remaining analysis in Appendix \ref{proof_v^*} only relies on the concavity of $\tilde{J}(v^\dagger)$ but not the uniform distribution assumption. The remaining analysis in Appendix \ref{proof_v^*} and Proposition \ref{unique_user} continue to hold under Assumption \ref{ass:general_cdf}. Also, the proof for Lemma \ref{jprime} only relies on the concavity of $\tilde{J}(v^\dagger)$, and thus Lemma \ref{jprime} continue to hold under Assumption \ref{ass:general_cdf}.
\end{proof}
\end{landscape}

\section{Performance Evaluation}\label{Appendix:Performance}

In this appendix, we evaluate the performance of our proposed pricing mechanism for the seller, through which we will understand the power of harnessing social network benefits (see Section \ref{subsec:effect_seller}). To evaluate the performance, we introduce two additional benchmarks adapted from the existing literature (e.g., \cite{conitzer2012hide,2017Is,zhang2011perils,2020Voluntary}). Specifically, we outline the two benchmarks for the pricing mechanism along with the one we proposed below.

\begin{itemize}
    \item \textit{Profile-independent Pricing (PIP)} (e.g., \cite{conitzer2012hide,2017Is,zhang2011perils}): The first benchmark involves the seller's optimal pricing without differentiation based on user profiles. Variations of this benchmark are commonly adopted in the literature on behavior-based price discrimination. In this case, the seller charges a uniform price of $\bar{v}/2$ to all users.
        
    \item \textit{Traditional-profiling-based Pricing (TPP)} (e.g., \cite{2017Is,2020Voluntary}): Another benchmark is wherein the seller personalizes price offerings through traditional user profiling. In this case, users voluntarily disclose private information without considering any social network benefits. Detailed equilibrium analyses are provided in Proposition \ref{Benchmark:no-social-network}, also named no-social-network benchmark there.
        
    \item \textit{Our Social-profiling-based Pricing (SPP)}: In our proposed pricing mechanism, the seller personalizes the price offerings through social profiling, which harnesses users' social network benefits as well as the positive network externality among users' information revelation.
\end{itemize}

\begin{figure}[h]
	\centering
	\includegraphics[width=0.4\linewidth]{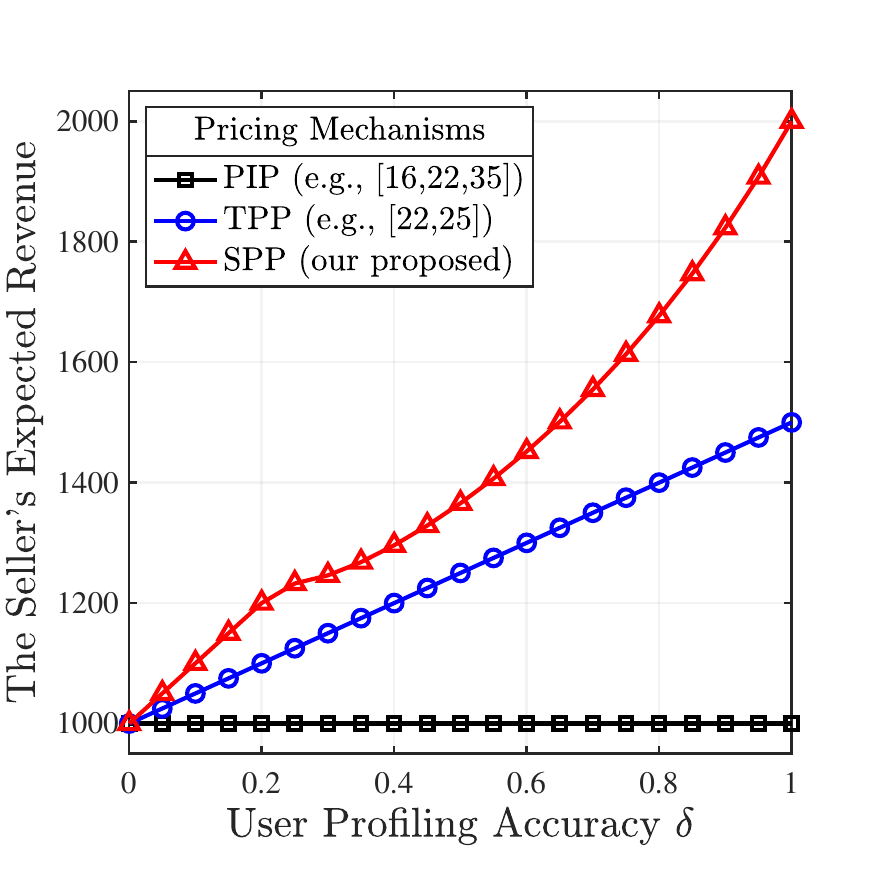}
	\caption{The seller's expected revenue under the three pricing mechanisms~(PIP, TPP, SPP) versus user profiling accuracy $\delta$. Here, we consider $n=100$ users with uniformly distributed valuations and fix the mean valuation $\bar{v}/2$ as $20$.}
    \label{fig_perf_revenue}
\end{figure}

Fig. \ref{fig_perf_revenue} numerically compares the seller's expected revenue in the three pricing mechanisms. In particular, our SPP mechanism always outperforms the PIP and TPP benchmarks, with maximum improvements of $100\%$ and $33.3\%$ achieved at $\delta=1$, respectively. The superiority of our SPP over TPP is attributed to the social network benefits that the seller can harness to harvest more users' disclosed information. 

\section{Extension to Partially Connected Social Networks: Equilibrium Results}\label{Appendix:Partially}

In this appendix, we will present key structural results for the PBE of the new extension to partially connected social networks in Section \ref{subsec:extension:partial}. Similar to the analysis in Section \ref{subsec:extension:benefits}, we can show that \textit{the valuation threshold for users' social activity decisions in Stage I varies among users depending on their different network positions}, which also encode their respective levels of connectivity. We thus denote these valuation thresholds as $\{v_i^*,\forall i\in\mathcal{N}\}$. Furthermore, these valuation thresholds satisfy the following system of equations: 
\begin{equation*}
    v^*_i=p_0^i+\frac{1}{\delta}\mathbb{E}_{v_j\sim F(\cdot)}\ln\left(\sum_{j\neq i }g_{ij}\mathbbm{1}(v_j\le v_j^*)+\omega_0\right),\ \forall i\in\mathcal{N},
\end{equation*}
which extends from the original formulation in \eqref{vstar_p0_summation} for a fully connected social network with a common valuation threshold~$v^*$. Moreover, for the seller's uniform pricing $\{p_0^i,\forall i \in\mathcal{N}\}$ in Stage II, we can easily extend from the original optimal scheme \eqref{base_price_equation3} in Proposition \ref{base_price} and obtain:
\begin{equation*}
    p_0^i=\frac{\bar{v}-\delta v^*_i}{2(1-\delta)},\ \forall i \in\mathcal{N},
\end{equation*}
and further note that we have $p_0^i<v^*_i$ in the PBE. With these structural characterizations and properties of PBE well established, we now proceed with empirical analysis in Section \ref{subsec:extension:partial}, specifically exploring how network connectivity affects users' social activity decisions.

\section{Empirical Studies on Real-world Consumer Profiles}\label{Appendix:Real_Consumers}

In this appendix, we apply our proposed pricing mechanism to real-world consumer profiles, utilizing an annually-updated consumer behavior and shopping habits dataset from \cite{consumer}. Recall that most of our previous analyses have assumed a uniform distribution for user valuations. However, as we will demonstrate later, this empirical experiment using more general distributions further confirms the robustness of our previous pricing methods and major insights.

In what follows, we first use the empirical data to estimate users' valuation distribution, which serves as valuable market information. Subsequently, we derive the equilibrium behaviors of both (sampled) users and the seller based on this estimation. Finally, we examine the impact of profiling accuracy on the seller's empirical revenue gain.

\begin{figure}[h]
    \centering
    \subfigure[User valuation distribution]{
        \begin{minipage}{0.45\linewidth}
            \centering
            \includegraphics[width=0.8\linewidth]{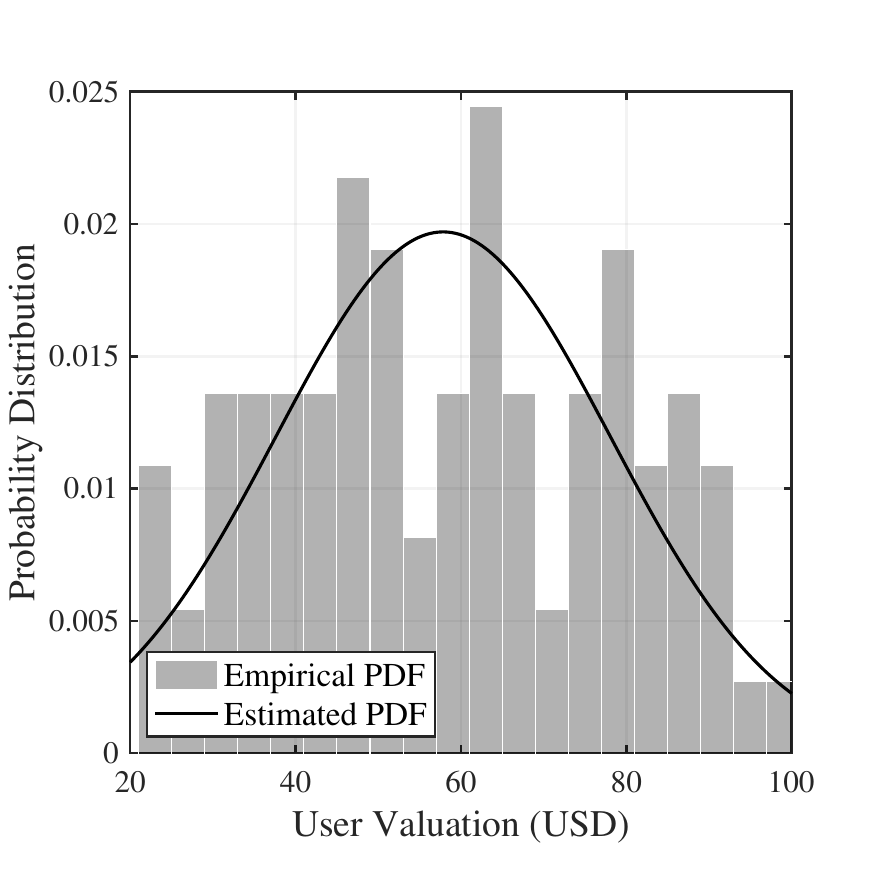}
            \label{fig_empirical_data}
		\end{minipage}
	}
	\subfigure[Fraction of socially active users]{
        \begin{minipage}{0.45\linewidth}
            \centering
            \includegraphics[width=0.8\linewidth]{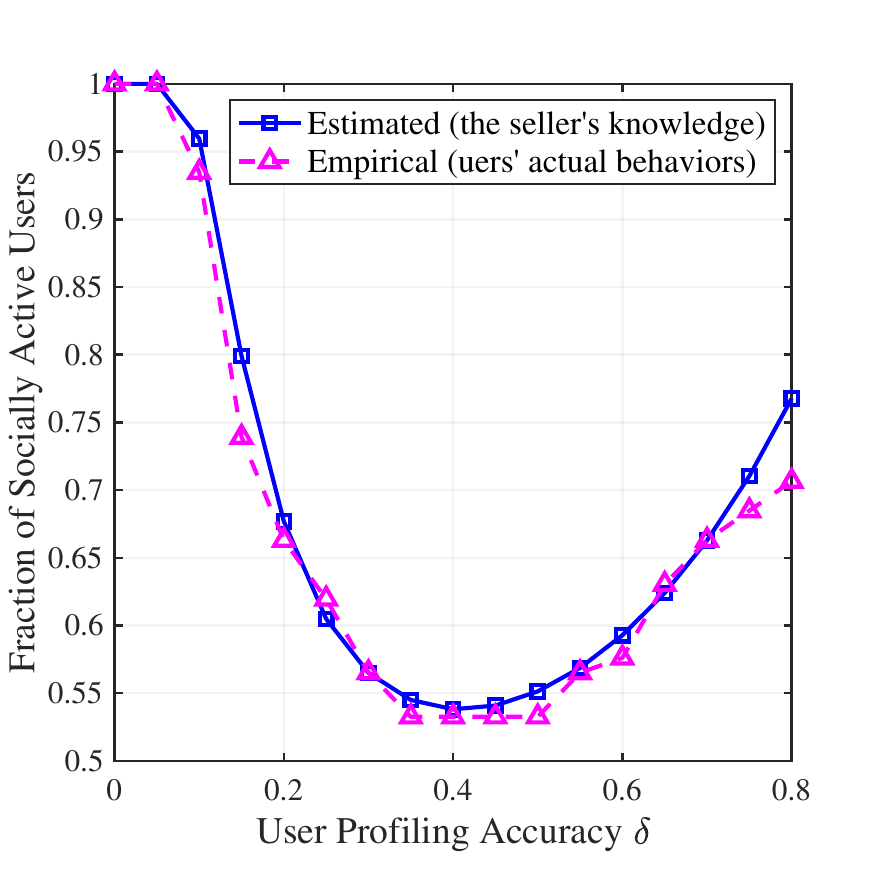}
            \label{fig_empirical_vstar}
		\end{minipage}
	}
    \subfigure[The seller's uniform price]{
        \begin{minipage}{0.45\linewidth}
            \centering
            \includegraphics[width=0.8\linewidth]{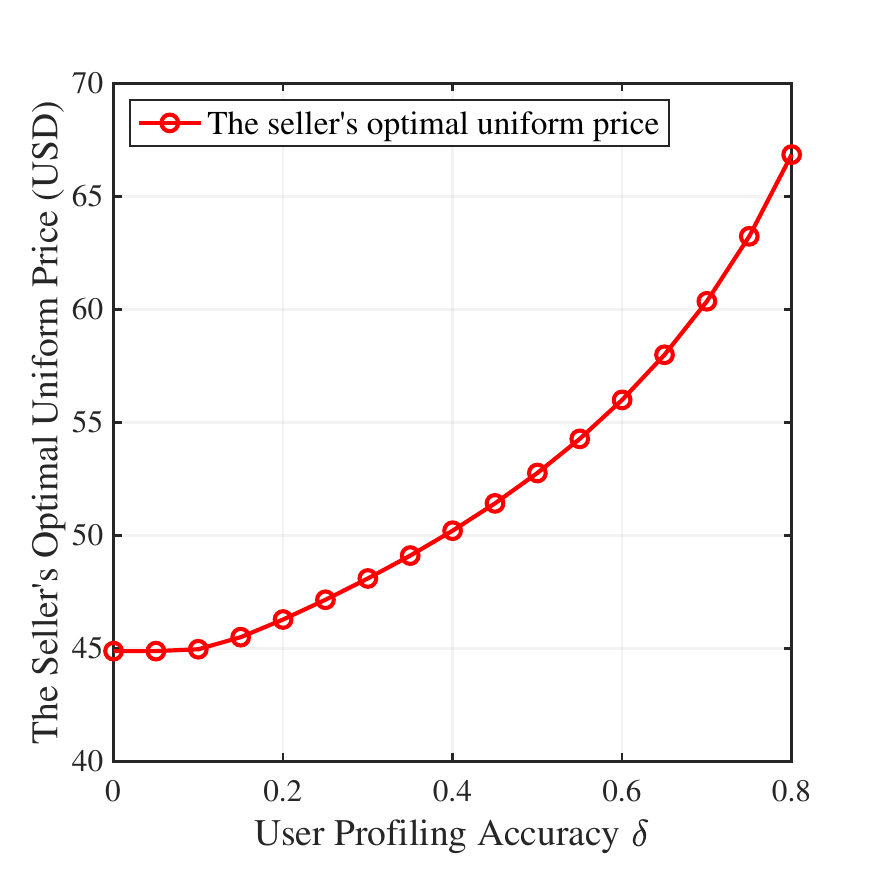}
            \label{fig_empirical_p0}
		\end{minipage}
	}
	\subfigure[The seller's revenue]{
        \begin{minipage}{0.45\linewidth}
            \centering
            \includegraphics[width=0.8\linewidth]{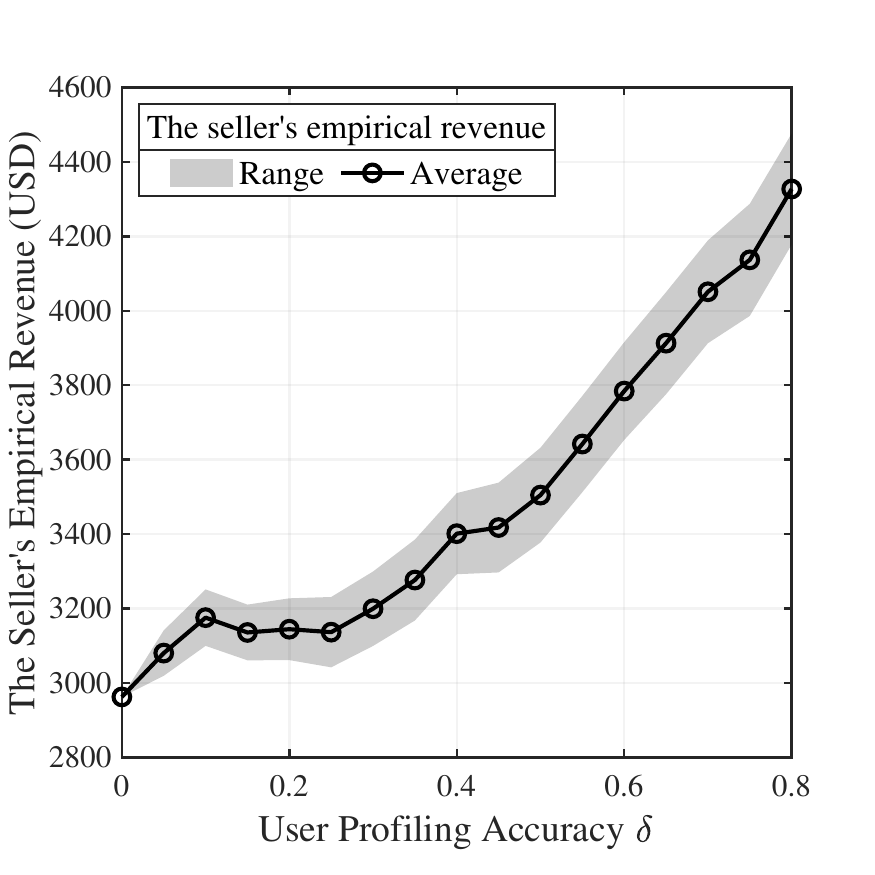}
            \label{fig_empirical_revenue}
		\end{minipage}
	}
\caption{The user valuation distribution, the fraction of socially active users and the seller's optimal uniform price $p_0^*$ in the PBE (whenever it exists), and the seller's empirical revenue based on a real-world dataset from \cite{consumer}.}
\end{figure}

\subsection{Empirical Statistics}\label{Appendix:empirical:statistics}

The dataset from \cite{consumer} offers consumer information such as demographic details, purchased items, purchase amounts, and previous purchase history. Our analysis focuses specifically on handbag purchases without discounts applied and the corresponding $92$ consumer profiles.\footnote{We filter the dataset using the criteria "Item Purchased = Handbag" AND "Discount Applied = No." In particular, we focus on purchases made without any promotional discounts applied. This allows us to exclude the potential psychological effects of price discounts on consumer perceptions and purchase intentions \cite{buyukdaug2020effect,simonson2004anchoring}.} Moreover, we associate each consumer's purchase amount with his product valuation, representing the monetary value of his transaction on the purchased item denoted in United~States Dollars (USD). To incorporate the social network benefit in our modeling framework, we further hypothetically consider these $92$ consumers as social network users within a community.

Fig. \ref{fig_empirical_data} depicts the empirical statistics of these $92$ users' valuation distribution. We estimate the valuation distribution as a normal distribution $N(\mu,\sigma^2)$ truncated over $[20,100]$, and our data fitting results conclude that the mean is $\mu=57.84$ (USD) and the standard deviation is $\sigma = 20.25$ (USD).\footnote{By assuming a normal distribution, the mean $\mu$ and the standard deviation $\sigma$ are the two parameters that we need to estimate. To proceed, we first estimate these two parameters through the maximum likelihood estimation, and we further statistically verify the goodness-of-fit using the Kolmogorov–Smirnov test.} This estimated valuation distribution, represented by the solid curve in Fig. \ref{fig_empirical_data}, serves as the prior market information regarding user valuation and plays a crucial role in guiding the decision-making process for both the users and the seller (in our dynamic Bayesian setting).

\subsection{Robustness of Our PBE Results}
Next, we conduct numerical analyses to examine the PBE outcomes based on the estimated valuation distribution in Fig. \ref{fig_empirical_data}. The PBE analysis under this truncated normal distribution follows a similar rationale as presented at the beginning~of~Section \ref{appendix:general}. In particular, the seller's estimation of the fraction of socially active users in equilibrium~is~depicted by the blue solid curve in Fig. \ref{fig_empirical_vstar}, and this estimation serves as the basis for the~seller~to determine the optimal uniform pricing in Fig. \ref{fig_empirical_p0}. Furthermore, we investigate the actual~social behaviors of the $92$ users based on their individual valuations. The empirical findings, represented by the purple dashed curve in Fig. \ref{fig_empirical_vstar}, have demonstrated a high degree of consistency with~the seller's estimation of the blue solid curve. This consistency further validates the effectiveness of our approximation of the empirical valuation distribution with the estimated one in Fig. \ref{fig_empirical_data}.

Overall, the empirical results in Fig. \ref{fig_empirical_vstar} and Fig. \ref{fig_empirical_p0} closely align with the findings in Fig. \ref{FIG:PBE_delta}, which assumed a uniform valuation distribution over $[0,\bar{v}]$. Specifically,~the fraction of socially active users in Fig. \ref{fig_empirical_vstar} exhibits a decreasing-then-increasing trend as the~profiling accuracy $\delta$ increases. Moreover, Fig. \ref{fig_empirical_p0} illustrates that the seller progressively raises the uniform price $p^*_0$ with higher values of $\delta$, thereby motivating users to engage actively online. These empirical findings are consistent with Proposition~\ref{Prop:PBE_delta} derived under the assumption of a uniform valuation distribution. As a final remark, the key insights derived from the analysis based on a uniform distribution remain robust when considering more general distribution functions.

\subsection{Revenue Evaluation}

Finally, we numerically analyze the seller's actual revenue obtained from the $92$ users. Specifically, we generate random profiling results for the seller based on the user profiling accuracy $\delta$ and perform extensive simulation experiments with $10,000$ repetitions to obtain the seller's revenue results. In Fig. \ref{fig_empirical_revenue}, the black dots represent the seller's average~empirical revenue under different profiling accuracy levels $\delta$, while the transparent gray patch indicates the range of her revenue distribution (spanning from the maximum to the minimum value).

As shown in Fig. \ref{fig_empirical_revenue}, the seller's revenue exhibits a predominantly upward trend with~higher profiling accuracy $\delta$. This trend is consistent with the finding in Fig. \ref{fig_perf_revenue}, which assumed a uniform valuation distribution. Moreover, due to the~irregular distribution of the empirical valuation data, we observe a slight drop in the seller's empirical revenue gain within the range of small $\delta$ values between $0.1$ and $0.25$ in Fig. \ref{fig_empirical_revenue}.

\end{document}